\theoremstyle{plain}
\newtheorem{theorem}{Theorem}[section]
\newtheorem{lemma}[theorem]{Lemma}
\newtheorem{proposition}[theorem]{Proposition}
\theoremstyle{definition}
\newtheorem{definition}[theorem]{Definition}
\newtheorem{corollary}[theorem]{Corollary}
\newtheorem{example}[theorem]{Example}
\theoremstyle{remark}
\newtheorem{remark}{\sc Remark}
\def\namedlabel#1#2{\begingroup
   \def\@currentlabel{#2}%
   \label{#1}\endgroup}
\date{}
\title{\bf The pure spectrum of a residuated lattice}\vspace{.25 in}
\author{ \vspace{.25 in} {\bf Saeed Rasouli$^{1*}$} and {\bf Amin Dehghani $^2$}\\
Department of Mathematics, Persian Gulf University, \\Bushehr, Iran \\
{\tt $^1$srasouli@pgu.ac.ir }\\
{\tt $^2$dehghany.amin@hotmail.com}\\}
\begin{document}
 \maketitle
 \begin{abstract}
This paper studies a fascinating type of filter in residuated lattices, the so-called pure filters. A combination of algebraic and topological methods on the pure filters of a residuated lattice is applied to obtain some new and structural results. The notion of purely-prime filters of a residuated lattice has been investigated, and a Cohen-type theorem has been obtained. It is shown that the pure spectrum of a residuated lattice is a compact sober space, and a Grothendieck-type theorem has been demonstrated. It is proved that the pure spectrum of a Gelfand residuated lattice is a Hausdorff space, and deduced that the pure spectrum of a Gelfand residuated lattice is homeomorphic to its usual maximal spectrum. Finally, the pure spectrum of an mp-residuated lattice is investigated and verified that a given residuated lattice is mp iff its minimal prime spectrum is equipped with the induced dual hull-kernel topology, and its pure spectrum is the same.\footnote{2020 Mathematics Subject Classification: 06F99,06D20,06E15 \\
{\it Key words and phrases}: pure filter; pure spectrum; Gelfand residuated lattice; mp-residuated lattice.\\
$^*$: Corresponding auther}
\end{abstract}
\section{Introduction}

This paper deals with the pure filters of a residuated lattice as an essential tool in studying residuated lattices. Historically, this notion is rooted in the idea of \textit{pure subgroups}, introduced by \cite{prufer1923untersuchungen} for describing conditions of the decomposition of primary abelian groups as direct sums of cyclic groups. A subgroup $H$ of an abelian group $G$ is said to be \textit{pure} provided that $H\cap nG=nH$, for all integers $n$. The work of Pr\"{u}fer complemented by \cite{kulikoff1941theorie}, where many results proved again using pure subgroups systematically. A complete discussion of pure subgroups, their relation to infinite abelian group theory, and a survey of their literature are given in \cite{kaplansky1954infinite}. Pure subgroups were generalized in several ways in the theory of abelian groups and modules. Pure submodules were defined in a variety of ways but eventually settled on the modern definition in terms of tensor products or systems of equations. Let $R$ be a ring and $M$ an $R$-module. A submodule $N$ of $M$ is called a \textit{pure submodule} by \cite{cohn1959free} provided that the sequence $0\rightarrow N\otimes E\rightarrow M\otimes E$ is exact, for all $R$-module $E$. \cite{anderson1974rings} called the submodule $N$ a pure submodule of $M$ provided that $N\cap IM=IN$, for all ideal $I$ of $R$. \cite{ribenboim1972algebraic} defined $N$ to be pure in $M$ if $N\cap aM=aN$, for all $a\in R$. Although the first condition implies the second and the second implies the third, these definitions are not equivalent in general (see, e.g. \citet[\S 4J]{lam1999lectures}). Following \citet[Theorem 4.85 \& Corollary 4.86]{lam1999lectures}, if $M/N$ is flat, then $N$ is pure in $M$, and when $M$ is flat the converse holds. Following \citet[Corollaries 11.21 \& 11.23]{faith1973algebra}, if $M$ is flat, then for any
submodule $N$ of $M$ the following are equivalent:
\begin{itemize}
  \item $M/N$ is flat;
  \item $N\cap IM=IN$, for all ideal $I$ of $R$;
  \item $N\cap aM=aN$, for all $a\in R$.
\end{itemize}

Thus, for example, the definitions coincide for purity of ideals of $R$. The notion of pure ideals has been extensively studied during past decades, see e.g., \cite{bkouche1970purete}, \cite{fieldhouse1974purity}, \cite{de1983projectivity}, \cite{borceux1983algebra}, \cite{borceux1984sheaf}, \cite{al1988pure}, \cite{al1989pure}, \cite{tarizadeh2019flat}, \cite{tarizadeh2021purely}, \cite{tarizadeh2021some}. Also, pure ideals are investigated under the name of \textit{neat ideals} in \citet[p. 188]{johnstone1982stone}. Inspired by the purity notion, many authors have proposed similar notions, under other names, for various structures over the years, see e.g., $\sigma$-ideals in bounded distributive lattices \citep{cornish1972normal,cornish1977ideals,al1991sigma}, pure ideals in MV-algebras \citep{georgescu1997pure}, pure filters in BL-algebras \citep{leustean2003representations}, pure i-filters in residuated lattices \citep{bucsneag2012stable}, pure congruences in universal algebra \cite{georgescu2022reticulation}. Pure ideals in lattices are also the lattice-theoretic version of \textit{virginal ideals} in rings were introduced by \cite{borceux1984sheaf} for obtaining a sheaf representation applicable to all rings and their modules. In the spirit of \cite{borceux1984sheaf}, the notion of virginal elements in a multiplicative ideal structure (mi-structure) is introduced and investigated by \cite{georoescu1989some}.

It is known that residuated lattices play an unshakable role in the theory of fuzzy logic. Lots of logical algebras such as MTL-algebras, divisible residuated lattices, BL-algebras, MV-algebras, Heyting algebras, and Boolean algebras are subvarieties of residuated lattices. Hence, the notion of purity in residuated lattices is indeed a generalization of the notion of purity in these structures.

Given the above discussion, we decided to take a deeper and more extensive look at pure filters in residuated lattices. Therefore, the notion of purity is investigated, and as an outcome, various interesting algebraic and topological results are discovered. Although, this concept in residuated lattices has been reviewed by \cite{rasouli2021rickart}, however, here we present some additional properties for this concept that seem to shed more light on algebraic and topological situations. In this paper, we study the purely-prime filters of a residuated lattice, and obtain various new and engaging results. The notion of purely-prime filters, like prime filters, is fascinating. This paper tries to show interesting aspects of
this topic and some of its spectacular applications. Our findings show that some of the results obtained by some of the above papers can also be reproduced through residuated lattices.

This paper is organized into six sections as follows: In Sect.\ref{sec2}, some definitions and facts about residuated lattices are recalled, and some of their propositions are extracted. We illustrate this section by some examples of residuated lattices, which will be used in the following sections. Sect. \ref{sec3} deals with the notion of pure filters of a residuated lattice as an essential tool in studying residuated lattices, and some algebraic characterizations are given for them. It has been pointed out that the pure filters in a residuated lattice are generalizations of direct summands and shown that the set of pure filters forms a frame.We show that in a residuated lattice every filter is pure if and only if it is hyperarchimedean and give a characterization for a hyperarchimedean residuated lattice using the $\mathscr{D}$-topology. This section is closed by giving some properties of the pure part of a filter of a residuated lattice. Sect. \ref{sec4} investigates the notion of purely-prime and purely-maximal filters of a residuated lattice. Theorem \ref{t1spaspp} gives a necessary and sufficient condition for the pure spectrum of a residuated lattice to be a $T_1$ space. It has been shown that the pure spectrum of a residuated lattice is a compact sober space. Theorem \ref{spsppconti} verifies that The pure part map of a residuated lattice is continuous. We demonstrate that $Spp$ is a contravariant functor from the category of residuated lattices to the category of topological spaces. A Grothendieck-type theorem has been obtained, which states that there is a canonical correspondence between the complemented elements of a residuated lattice and the clopens
of its pure spectrum. In the end, we establish that each purely-maximal filter of a residuated lattice is principal if and only if each pure filter of it is principal. In Sect. \ref{sec5} the notion of purity is used to obtain some new characterizations for Gelfand residuated lattices. The Gelfand residuated lattices have been characterized using the sink of a filter and described by the pure part of a filter. Theorem \ref{equgelchapure} represented the relationship between the pure filters and radicals in a Gelfand residuated lattice. Theorem \ref{gelfmaxpure} describes the purely-maximal filters of a Gelfand residuated lattice, and shows that the pure spectrum of a Gelfand residuated lattice is an antichain. Proposition \ref{gelspphau} results in that the pure spectrum of a Gelfand residuated lattice is a Hausdorff space, and Theorem \ref{sppgelfch} deduces that the pure spectrum of a Gelfand residuated lattice is homeomorphic to its usual maximal spectrum. Finally, Theorem \ref{gelpurefcl} characterizes precisely the pure filters of a Gelfand residuated lattice, and Theorem \ref{gelfhulldmin} gives a characterization for a Gelfand residuated lattice using the $\mathscr{D}$-topology. In Sect. \ref{sec6} the notion of purity is used to obtain some new characterizations for mp-residuated lattices. Theorem \ref{norgammsig} gives some characterizations for mp-residuated lattices in terms of purity, and Theorem \ref{mppurefcl} characterizes precisely the pure filters of an mp-residuated lattice. The main result of this section is Theorem \ref{mp2minspp} which establishes that a residuated lattice is mp if and only if the set of its minimal prime filters and the set of its purely-prime filters coincide. Theorem \ref{equmpflatmin} verifies that a residuated lattice is mp if and only if its pure spectrum and minimal prime spectrum, equipped with the dual hull-kernel topology, are homeomorphic with the identity map. Corollary \ref{mpspphau} results in that the pure spectrum of an mp-residuated lattice is a Hausdorff space. This paper is closed by Theorem \ref{minspprick} which implies that an mp-residuated lattice is Rickart if and only if its pure spectrum and usual minimal prime spectrum are homeomorphic.

\section{Residuated lattices}\label{sec2}

In this section, we recall some definitions, properties and results relative to residuated lattices which will be used
in the following.

An algebra $\mathfrak{A}=(A;\vee,\wedge,\odot,\rightarrow,0,1)$ is called a \textit{residuated lattice} provided that $\ell(\mathfrak{A})=(A;\vee,\wedge,0,1)$ is a bounded lattice, $(A;\odot,1)$ is a commutative monoid, and $(\odot,\rightarrow)$ is an adjoint pair. A residuated lattice $\mathfrak{A}$ is called \textit{non-degenerate} if $0\neq 1$. For a residuated lattice $\mathfrak{A}$, and $a\in A$ we put $\neg a:=a\rightarrow 0$, and $a^n:=a\odot\cdots\odot a$ ($n$ times), for every integer $n$. The class of residuated lattices is equational, and so forms a variety. For a survey of residuated lattices, the interested reader is referred to \cite{galatos2007residuated}.
\begin{remark}\label{resproposition}\citep[Proposition 2.6]{ciungu2006classes}
Let $\mathfrak{A}$ be a residuated lattice. The following conditions are satisfied for any $x,y,z\in A$:
\begin{enumerate}
  \item [$(r_{1})$ \namedlabel{res1}{$(r_{1})$}] $x\odot (y\vee z)=(x\odot y)\vee (x\odot z)$;
  \item [$(r_{2})$ \namedlabel{res2}{$(r_{2})$}] $x\vee (y\odot z)\geq (x\vee y)\odot (x\vee z)$.
  \end{enumerate}
\end{remark}
\begin{example}\label{exa6}
Let $A_6=\{0,a,b,c,d,1\}$ be a lattice whose Hasse diagram is given by Figure \ref{figa6}. Routine calculation shows that $\mathfrak{A}_6=(A_6;\vee,\wedge,\odot,\rightarrow,0,1)$ is a residuated lattice in which the commutative operation $``\odot"$ is given by Table \ref{taba6}, and the operation $``\rightarrow"$ is given by $x\rightarrow y=\bigvee \{z\in A_6|x\odot z\leq y\}$, for any $x,y\in A_6$.
\FloatBarrier
\begin{table}[ht]
\begin{minipage}[b]{0.56\linewidth}
\centering
\begin{tabular}{ccccccc}
\hline
$\odot$ & 0 & a & b & c & d & 1 \\ \hline
0       & 0 & 0 & 0 & 0 & 0 & 0 \\
        & a & a & a & 0 & a & a \\
        &   & b & a & 0 & a & b \\
        &   &   & c & c & c & c \\
        &   &   &   & d & d & d \\
        &   &   &   &   & 1 & 1 \\ \hline
\end{tabular}
\caption{Cayley table for ``$\odot$" of $\mathfrak{A}_6$}
\label{taba6}
\end{minipage}\hfill
\begin{minipage}[b]{0.6\linewidth}
\centering
  \begin{tikzpicture}[>=stealth',semithick,auto]
    \tikzstyle{subj} = [circle, minimum width=6pt, fill, inner sep=0pt]
    \tikzstyle{obj}  = [circle, minimum width=6pt, draw, inner sep=0pt]

    \tikzstyle{every label}=[font=\bfseries]

    \node[subj,  label=below:0] (0) at (0,0) {};
    \node[subj,  label=below:c] (c) at (-1,1) {};
    \node[subj,  label=below:a] (a) at (1,.5) {};
    \node[subj,  label=below right:b] (b) at (1,1.5) {};
    \node[subj,  label=below:d] (d) at (0,2) {};
    \node[subj,  label=below right:1] (1) at (0,3) {};

    \path[-]   (0)    edge                node{}      (a);
    \path[-]   (a)    edge                node{}      (b);
    \path[-]   (0)    edge                node{}      (c);
    \path[-]   (c)    edge                node{}      (d);
    \path[-]   (b)    edge                node{}      (d);
    \path[-]   (d)    edge                node{}      (1);
\end{tikzpicture}
\captionof{figure}{Hasse diagram of $\mathfrak{A}_{6}$}
\label{figa6}
\end{minipage}
\end{table}
\FloatBarrier
\end{example}
\begin{example}\label{exb6}
Let $B_6=\{0,a,b,c,d,1\}$ be a lattice whose Hasse diagram is given by Figure \ref{figb6}. Routine calculation shows that $\mathfrak{B}_6=(B_6;\vee,\wedge,\odot,\rightarrow,0,1)$ is a residuated lattice in which the commutative operation $``\odot"$ is given by Table \ref{tabb6}, and the operation $``\rightarrow"$ is given by $x\rightarrow y=\bigvee \{z\in B_6|x\odot z\leq y\}$, for any $x,y\in B_6$.
\FloatBarrier
\begin{table}[ht]
\begin{minipage}[b]{0.56\linewidth}
\centering
\begin{tabular}{ccccccc}
\hline
$\odot$ & 0 & a & b & c & d & 1 \\ \hline
0       & 0 & 0 & 0 & 0 & 0 & 0 \\
        & a & a & 0 & a & 0 & a \\
        &   & b & 0 & 0 & b & b \\
        &   &   & c & a & b & c \\
        &   &   &   & d & d & d \\
        &   &   &   &   & 1 & 1 \\ \hline
\end{tabular}
\caption{Cayley table for ``$\odot$" of $\mathfrak{B}_6$}
\label{tabb6}
\end{minipage}\hfill
\begin{minipage}[b]{0.6\linewidth}
\centering
  \begin{tikzpicture}[>=stealth',semithick,auto]
    \tikzstyle{subj} = [circle, minimum width=6pt, fill, inner sep=0pt]
    \tikzstyle{obj}  = [circle, minimum width=6pt, draw, inner sep=0pt]

    \tikzstyle{every label}=[font=\bfseries]

    \node[subj,  label=below:0] (0) at (0,0) {};
    \node[subj,  label=below:a] (a) at (-1,1) {};
    \node[subj,  label=below:b] (b) at (1,1) {};
    \node[subj,  label=below:c] (c) at (0,2) {};
    \node[subj,  label=below:d] (d) at (2,2) {};
    \node[subj,  label=below:1] (1) at (1,3) {};

    \path[-]   (0)    edge                node{}      (a);
    \path[-]   (0)    edge                node{}      (b);
    \path[-]   (b)    edge                node{}      (d);
    \path[-]   (d)    edge                node{}      (1);
    \path[-]   (a)    edge                node{}      (c);
    \path[-]   (b)    edge                node{}      (c);
    \path[-]   (c)    edge                node{}      (1);
\end{tikzpicture}
\captionof{figure}{Hasse diagram of $\mathfrak{B}_{6}$}
\label{figb6}
\end{minipage}
\end{table}
\FloatBarrier
\end{example}
\begin{example}\label{exc6}
  Let $C_6=\{0,a,b,c,d,1\}$ be a lattice whose Hasse diagram is given by Figure \ref{figc6}. Routine calculation shows that $\mathfrak{C}_6=(C_6;\vee,\wedge,\odot,\rightarrow,0,1)$ is a residuated lattice in which the commutative operation $``\odot"$ is given by Table \ref{tabc6}, and the operation $``\rightarrow"$ is given by $x\rightarrow y=\bigvee \{z\in C_6|x\odot z\leq y\}$, for any $x,y\in C_6$.
\FloatBarrier
\begin{table}[ht]
\begin{minipage}[b]{0.56\linewidth}
\centering
\begin{tabular}{ccccccc}
\hline
$\odot$ & 0 & a & b & c & d & 1 \\ \hline
0       & 0 & 0 & 0 & 0 & 0 & 0 \\
        & a & 0 & 0 & 0 & 0 & a \\
        &   & b & 0 & 0 & 0 & b \\
        &   &   & c & 0 & 0 & c \\
        &   &   &   & d & 0 & d \\
        &   &   &   &   & 1 & 1 \\ \hline
\end{tabular}
\caption{Cayley table for ``$\odot$" of $\mathfrak{C}_6$}
\label{tabc6}
\end{minipage}\hfill
\begin{minipage}[b]{0.6\linewidth}
\centering
  \begin{tikzpicture}[>=stealth',semithick,auto]
    \tikzstyle{subj} = [circle, minimum width=6pt, fill, inner sep=0pt]
    \tikzstyle{obj}  = [circle, minimum width=6pt, draw, inner sep=0pt]

    \tikzstyle{every label}=[font=\bfseries]

    \node[subj,  label=below:0] (0) at (0,0) {};
    \node[subj,  label=below:c] (c) at (-1,1) {};
    \node[subj,  label=below:a] (a) at (1,.5) {};
    \node[subj,  label=below right:b] (b) at (1,1.5) {};
    \node[subj,  label=below:d] (d) at (0,2) {};
    \node[subj,  label=below right:1] (1) at (0,3) {};

    \path[-]   (0)    edge                node{}      (a);
    \path[-]   (a)    edge                node{}      (b);
    \path[-]   (0)    edge                node{}      (c);
    \path[-]   (c)    edge                node{}      (d);
    \path[-]   (b)    edge                node{}      (d);
    \path[-]   (d)    edge                node{}      (1);
\end{tikzpicture}
\captionof{figure}{Hasse diagram of $\mathfrak{C}_{6}$}
\label{figc6}
\end{minipage}
\end{table}
\FloatBarrier
\end{example}
\begin{example}\label{exa8}
Let $A_8=\{0,a,b,c,d,e,f,1\}$ be a lattice whose Hasse diagram is given by Figure \ref{figa8}. Routine calculation shows that $\mathfrak{A}_8=(A_8;\vee,\wedge,\odot,\rightarrow,0,1)$ is a residuated lattice in which the commutative operation $``\odot"$ is given by Table \ref{taba8}, and the operation $``\rightarrow"$ is given by $x\rightarrow y=\bigvee \{z\in A_8|x\odot z\leq y\}$, for any $x,y\in A_8$.
\FloatBarrier
\begin{table}[ht]
\begin{minipage}[b]{0.56\linewidth}
\centering
\begin{tabular}{ccccccccc}
\hline
$\odot$ & 0 & a & b & c & d & e & f & 1 \\ \hline
0       & 0 & 0 & 0 & 0 & 0 & 0 & 0 & 0 \\
        & a & 0 & a & a & a & a & a & a \\
        &   & b & 0 & 0 & 0 & 0 & b & b \\
        &   &   & c & c & a & c & a & c \\
        &   &   &   & d & a & a & d & d \\
        &   &   &   &   & e & c & d & e \\
        &   &   &   &   &   & f & f & f \\
        &   &   &   &   &   &   & 1 & 1 \\ \hline
\end{tabular}
\caption{Cayley table for ``$\odot$" of $\mathfrak{A}_8$}
\label{taba8}
\end{minipage}\hfill
\begin{minipage}[b]{0.6\linewidth}
\centering
  \begin{tikzpicture}[>=stealth',semithick,auto]
    \tikzstyle{subj} = [circle, minimum width=6pt, fill, inner sep=0pt]
    \tikzstyle{obj}  = [circle, minimum width=6pt, draw, inner sep=0pt]

    \tikzstyle{every label}=[font=\bfseries]

    \node[subj,  label=below:0] (0) at (0,0) {};
    \node[subj,  label=below:a] (a) at (-1,1) {};
    \node[subj,  label=below:b] (b) at (1,1) {};
    \node[subj,  label=below:c] (c) at (-2,2) {};
    \node[subj,  label=below:d] (d) at (0,2) {};
    \node[subj,  label=below:e] (e) at (-1,3) {};
    \node[subj,  label=below:f] (f) at (1,3) {};
    \node[subj,  label=below:1] (1) at (0,4) {};

    \path[-]   (0)    edge                node{}      (a);
    \path[-]   (0)    edge                node{}      (b);
    \path[-]   (b)    edge                node{}      (d);
    \path[-]   (d)    edge                node{}      (f);
    \path[-]   (f)    edge                node{}      (1);
    \path[-]   (a)    edge                node{}      (d);
    \path[-]   (a)    edge                node{}      (c);
    \path[-]   (c)    edge                node{}      (e);
    \path[-]   (d)    edge                node{}      (e);
    \path[-]   (e)    edge                node{}      (1);
\end{tikzpicture}
\captionof{figure}{Hasse diagram of $\mathfrak{A}_{8}$}
\label{figa8}
\end{minipage}
\end{table}
\FloatBarrier
\end{example}

Let $\mathfrak{A}$ be a residuated lattice. A non-void subset $F$ of $A$ is called a \textit{filter} of $\mathfrak{A}$ provided that $x,y\in F$ implies $x\odot y\in F$, and $x\vee y\in F$, for any $x\in F$ and $y\in A$. The set of filters of $\mathfrak{A}$ is denoted by $\mathscr{F}(\mathfrak{A})$. A filter $F$ of $\mathfrak{A}$ is called \textit{proper} if $F\neq A$. Clearly, $F$ is proper iff $0\notin F$. For any subset $X$ of $A$, the \textit{filter of $\mathfrak{A}$ generated by $X$} is denoted by $\mathscr{F}(X)$. For each $x\in A$, the filter generated by $\{x\}$ is denoted by $\mathscr{F}(x)$ and said to be \textit{principal}. The set of principal filters is denoted by $\mathscr{PF}(\mathfrak{A})$. Following \citet[\S 5.7]{gratzer2011lattice}, a join-complete lattice $\mathfrak{A}$, is called a \textit{frame} if it satisfies the join infinite distributive law (JID), i.e., for any $a\in A$ and $S\subseteq A$, $a\wedge \bigvee S=\bigvee \{a\wedge s\mid s\in S\}$.  According to \cite{galatos2007residuated}, $(\mathscr{F}(\mathfrak{A});\cap,\veebar,\textbf{1},A)$ is a frame in which $\veebar \mathcal{F}=\mathscr{F}(\cup \mathcal{F})$, for any $\mathcal{F}\subseteq \mathscr{F}(\mathfrak{A})$. With any filter $F$ of a residuated lattice $\mathfrak{A}$, we can associate a binary relation $\equiv_{F}$ on $A$ as follows; $x\equiv_{F} y$ if and only if $x\rightarrow y,y\rightarrow x\in F$, for any $x,y\in A$. The binary relation $\equiv_{F}$ is a congruence on $\mathfrak{A}$, and called \textit{the congruence induced by $F$ on $\mathfrak{A}$}. As usual, the set of all congruences on $\mathfrak{A}$ shall be denoted by $Con(\mathfrak{A})$. It is well-known that $\phi:\mathscr{F}(\mathfrak{A})\longrightarrow Con(\mathfrak{A})$, defined by $\phi(F)=\equiv_F$, is a lattice isomorphism, and this implies that $\mathcal{RL}$ is an ideal determined variety. For a filter $F$ of a residuated lattice $\mathfrak{A}$, the quotient set $A/\equiv_{F}$ with the natural operations becomes a residuated lattice which is denoted by $\mathfrak{A}/F$. Also, for any $a\in A$, the equivalence classes $a/\equiv_{F}$ is denoted by $a/F$.

\begin{example}\label{filterexa}
Consider the residuated lattice $\mathfrak{A}_6$ from Example \ref{exa6}, the residuated lattice $\mathfrak{C}_6$ from Example \ref{exb6}, the residuated lattice $\mathfrak{C}_6$ from Example \ref{exc6}, and the residuated lattice $\mathfrak{A}_8$ from Example \ref{exa8}. The sets of their filters are presented in Table \ref{tafiex}.
\begin{table}[h]
\centering
\begin{tabular}{ccl}
\hline
                 & \multicolumn{2}{c}{Filters}                                       \\ \hline
$\mathfrak{A}_6$ & \multicolumn{2}{c}{$\{1\},\{a,b,d,1\},\{c,d,1\},\{d,1\},A_6$} \\
$\mathfrak{B}_6$ & \multicolumn{2}{c}{$\{1\},\{a,c,1\},\{d,1\},B_6$} \\
$\mathfrak{C}_6$ & \multicolumn{2}{c}{$\{1\},C_6$} \\
$\mathfrak{A}_8$ & \multicolumn{2}{c}{$\{1\},\{a,c,d,e,f,1\},\{c,e,1\},\{f,1\},A_8$} \\ \hline
\end{tabular}
\caption{The sets of filters of $\mathfrak{A}_6$, $\mathfrak{B}_6$, $\mathfrak{C}_{6}$, and $\mathfrak{A}_8$}
\label{tafiex}
\end{table}
\end{example}

The following proposition has a routine verification.
\begin{proposition}\label{genfilprop}
Let $\mathfrak{A}$ be a residuated lattice and $F$ be a filter of $\mathfrak{A}$. The following assertions hold, for any $x,y\in A$:
\begin{enumerate}
 \item  [(1) \namedlabel{genfilprop1}{(1)}] $\mathscr{F}(F,x)\stackrel{def.}{=}F\veebar \mathscr{F}(x)=\{a\in A|f\odot x^n\leq a,\textrm{~for~some}~f\in F~\textrm{and}~n\in \mathbb{N}\}$;
  \item  [(2) \namedlabel{genfilprop2}{(2)}] $x\leq y$ implies $\mathscr{F}(F,y)\subseteq \mathscr{F}(F,x)$.
  \item  [(3) \namedlabel{genfilprop3}{(3)}] $\mathscr{F}(F,x)\cap \mathscr{F}(F,y)=\mathscr{F}(F,x\vee y)$;
  \item  [(4) \namedlabel{genfilprop4}{(4)}] $\mathscr{F}(F,x)\veebar \mathscr{F}(F,y)=\mathscr{F}(F,x\odot y)=\mathscr{F}(F\cup \{x,y\})$;
  \item  [(5) \namedlabel{genfilprop5}{(5)}] $\mathscr{PF}(\mathfrak{A})$ is a sublattice of $\mathscr{F}(\mathfrak{A})$.
\end{enumerate}
\end{proposition}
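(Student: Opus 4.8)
The plan is to establish \ref{genfilprop1} directly and then read off \ref{genfilprop2}--\ref{genfilprop5} from it, using only that $z\leq 1$ for all $z\in A$, the monotonicity of $\odot$, the distributive law \ref{res1}, and the fact that $\veebar$ is the join in the frame $\mathscr{F}(\mathfrak{A})$. For \ref{genfilprop1}, put $G:=\{a\in A\mid f\odot x^{n}\leq a\text{ for some }f\in F\text{ and }n\in\mathbb{N}\}$. One checks that $G$ is a filter: $1\in G$; if $f_{1}\odot x^{n_{1}}\leq a$ and $f_{2}\odot x^{n_{2}}\leq b$ then $(f_{1}\odot f_{2})\odot x^{n_{1}+n_{2}}\leq a\odot b$ with $f_{1}\odot f_{2}\in F$, so $a\odot b\in G$; and if $a\in G$, $c\in A$, then $f\odot x^{n}\leq a\leq a\vee c$, so $a\vee c\in G$. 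Moreover $F\subseteq G$ (since $f\odot x\leq f$ for $f\in F$) and $x\in G$ (since $1\odot x\leq x$), whereas any filter containing $F\cup\{x\}$ contains every $f\odot x^{n}$ and hence, being upward closed, contains $G$. Thus $G$ is the least filter containing $F\cup\{x\}$, i.e. $G=\mathscr{F}(F\cup\{x\})=F\veebar\mathscr{F}(x)$.

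Item \ref{genfilprop2} is now immediate from \ref{genfilprop1}: if $x\leq y$ then $x^{n}\leq y^{n}$ by monotonicity, so $f\odot y^{n}\leq a$ forces $f\odot x^{n}\leq a$, whence $\mathscr{F}(F,y)\subseteq\mathscr{F}(F,x)$. For \ref{genfilprop3}, the inclusion $\mathscr{F}(F,x\vee y)\subseteq\mathscr{F}(F,x)\cap\mathscr{F}(F,y)$ follows by applying \ref{genfilprop2} to $x\vee y\geq x$ and $x\vee y\geq y$. For the reverse inclusion, let $a\in\mathscr{F}(F,x)\cap\mathscr{F}(F,y)$, so $f_{1}\odot x^{k}\leq a$ and $f_{2}\odot y^{k}\leq a$ for a common $k$ (replace the two exponents by their maximum, using $z\leq 1$). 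Repeated use of \ref{res1} expands $(x\vee y)^{2k}$ as the join of all products $x^{i}\odot y^{j}$ with $i+j=2k$; since in each such product $i\geq k$ or $j\geq k$, multiplying by $f_{1}\odot f_{2}$ makes each summand $\leq f_{1}\odot x^{k}\leq a$ or $\leq f_{2}\odot y^{k}\leq a$. Distributing $f_{1}\odot f_{2}$ over the join then gives $(f_{1}\odot f_{2})\odot(x\vee y)^{2k}\leq a$, so $a\in\mathscr{F}(F,x\vee y)$.

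For \ref{genfilprop4}, recall that the join of two filters is $F_{1}\veebar F_{2}=\{a\mid g_{1}\odot g_{2}\leq a\text{ for some }g_{1}\in F_{1},\,g_{2}\in F_{2}\}$ (regroup a finite product of elements of $F_{1}\cup F_{2}$ into its $F_{1}$- and $F_{2}$-parts). Applying this with $F_{1}=\mathscr{F}(F,x)$, $F_{2}=\mathscr{F}(F,y)$, using \ref{genfilprop1} and the identities $x^{n}\odot y^{n}=(x\odot y)^{n}$ and $x^{n}\odot y^{m}\geq(x\odot y)^{\max(n,m)}$, one verifies both inclusions between $\mathscr{F}(F,x)\veebar\mathscr{F}(F,y)$ and $\mathscr{F}(F,x\odot y)$. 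Since $\veebar$ is associative and idempotent and $F=\mathscr{F}(F)$, we also have $\mathscr{F}(F,x)\veebar\mathscr{F}(F,y)=F\veebar\mathscr{F}(x)\veebar\mathscr{F}(y)=\mathscr{F}(F\cup\{x,y\})$. Finally, \ref{genfilprop5} is the case $F=\{1\}$ of \ref{genfilprop3} and \ref{genfilprop4}, giving $\mathscr{F}(x)\cap\mathscr{F}(y)=\mathscr{F}(x\vee y)$ and $\mathscr{F}(x)\veebar\mathscr{F}(y)=\mathscr{F}(x\odot y)$, so $\mathscr{PF}(\mathfrak{A})$ is closed under both frame operations and hence is a sublattice. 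The only step needing real care is the non-trivial inclusion in \ref{genfilprop3}: one must make the binomial-style expansion of $(x\vee y)^{2k}$ precise and use the pigeonhole observation that $i+j=2k$ forces $i\geq k$ or $j\geq k$; everything else is routine bookkeeping with $z\leq 1$ and monotonicity of $\odot$.
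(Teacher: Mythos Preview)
Your argument is correct. The paper itself omits the proof entirely, stating only that the proposition ``has a routine verification''; you have supplied precisely those routine details, and the key non-obvious step---the pigeonhole expansion of $(x\vee y)^{2k}$ via \ref{res1} to obtain the inclusion $\mathscr{F}(F,x)\cap\mathscr{F}(F,y)\subseteq\mathscr{F}(F,x\vee y)$---is handled cleanly.
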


Let $\mathfrak{A}$ and $\mathfrak{B}$ be residuated lattices. A map $f:A\longrightarrow B$ is called a  \textit{morphism}, in symbols $f:\mathfrak{A}\longrightarrow \mathfrak{B}$, if it preserves the fundamental operations. If $f:\mathfrak{A}\longrightarrow \mathfrak{B}$ is a morphism we put $coker(f)=f^{\leftarrow}(1)$. One can see that $coker(f)$ is a filter of $\mathfrak{A}$. Also, it is obvious that $f$ is a monomorphism if and only if $coker(f)=\{1\}$.
\begin{proposition}\cite[Proposition 2.5]{rasouli2019going}\label{prorh}
Let $f:\mathfrak{A}\longrightarrow \mathfrak{B}$ be a residuated morphism. The following assertions hold:
\begin{enumerate}
  \item [(1) \namedlabel{prorh1}{(1)}] $F\in \mathscr{F}(\mathfrak{B})$ implies $f^\leftarrow(F)\in \mathscr{F}(\mathfrak{A})$ and $coker(f)\subseteq f^\leftarrow(F)$;
  \item [(2) \namedlabel{prorh2}{(2)}] if $f$ is onto and $F$ is a filter of $\mathfrak{A}$ containing $coker(f)$, then $f(F)$ is a filter of $\mathfrak{B}$.
\end{enumerate}
\end{proposition}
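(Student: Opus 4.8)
The plan is to verify both items straight from the definition of a filter, using only that a residuated morphism preserves all the fundamental operations — in particular $\odot$, $\vee$, and the constant $1$. Nothing deeper about residuated lattices is required, so each assertion reduces to a short bookkeeping check; I anticipate no real obstacle, the single point needing attention being the place where surjectivity is invoked in \ref{prorh2}.

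For \ref{prorh1}, I would work with $f^\leftarrow(F)=\{a\in A\mid f(a)\in F\}$. It is non-void because $f(1)=1\in F$. If $x,y\in f^\leftarrow(F)$ then $f(x),f(y)\in F$, so $f(x\odot y)=f(x)\odot f(y)\in F$ as $F$ is a filter, giving $x\odot y\in f^\leftarrow(F)$. If $x\in f^\leftarrow(F)$ and $y\in A$, then $f(x\vee y)=f(x)\vee f(y)\in F$ since $f(x)\in F$ and $F$ is a filter, so $x\vee y\in f^\leftarrow(F)$; hence $f^\leftarrow(F)\in\mathscr{F}(\mathfrak{A})$. The inclusion $coker(f)\subseteq f^\leftarrow(F)$ is then immediate, since $a\in coker(f)=f^\leftarrow(1)$ forces $f(a)=1\in F$.

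For \ref{prorh2}, I would work with $f(F)=\{f(a)\mid a\in F\}$, which is non-void (indeed $1=f(1)\in f(F)$). If $u,v\in f(F)$, say $u=f(x)$ and $v=f(y)$ with $x,y\in F$, then $u\odot v=f(x\odot y)\in f(F)$ because $x\odot y\in F$. Now take $u=f(x)\in f(F)$ with $x\in F$ and an arbitrary $w\in B$: here I would use surjectivity of $f$ to choose $z\in A$ with $f(z)=w$, so that $u\vee w=f(x)\vee f(z)=f(x\vee z)$ with $x\vee z\in F$ (as $F$ is a filter and $z\in A$), whence $u\vee w\in f(F)$. Thus $f(F)\in\mathscr{F}(\mathfrak{B})$.

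The appeal to surjectivity in the last step is the only non-formal ingredient; everything else is forced by the morphism axioms and the definition of a filter. The hypothesis $coker(f)\subseteq F$ is not actually needed to conclude that $f(F)$ is a filter, but it is the natural side condition guaranteeing $f^\leftarrow(f(F))=F$ (one uses that $coker(f)\subseteq F$ makes $\equiv_{F}$ coarser than $\equiv_{coker(f)}$ and that $F=1/F$), which is what one really wants when using this proposition to transport filters across $f$.
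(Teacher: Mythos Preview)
Your proof is correct. Note, however, that the paper does not supply its own proof of this proposition: it is quoted verbatim from \cite[Proposition 2.5]{rasouli2019going} and stated without argument, so there is nothing in the present paper to compare your approach against. Your verification from the raw definition of a filter (closure under $\odot$ and under $\vee$ with arbitrary elements) is exactly the standard one, and your remark that the hypothesis $coker(f)\subseteq F$ in \ref{prorh2} is not actually used to show $f(F)\in\mathscr{F}(\mathfrak{B})$ is accurate; that hypothesis is really there so that $F\mapsto f(F)$ and $G\mapsto f^{\leftarrow}(G)$ become mutually inverse, which is precisely what the paper exploits in the subsequent Proposition~\ref{filqou}.
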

\begin{proposition}\cite[Corollary 2.8]{rasouli2019going}\label{filqou}
Let $f:\mathfrak{A}\longrightarrow \mathfrak{B}$ be an epimorphism. Then we have
\[\mathscr{F}(\mathfrak{B})=\{f(F)\mid coker(f)\subseteq F\in \mathscr{F}(\mathfrak{A})\}.\]
\end{proposition}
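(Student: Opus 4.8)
The plan is to establish the claimed set equality by proving the two inclusions separately, using only Propositions \ref{prorh} and (implicitly) the fact that for a surjective map $f$ one has $f(f^{\leftarrow}(G))=G$ for every $G\subseteq B$. Throughout I take ``epimorphism'' to mean a surjective residuated morphism, which is precisely the hypothesis under which Proposition \ref{prorh2} applies.

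For the inclusion $\{f(F)\mid coker(f)\subseteq F\in\mathscr{F}(\mathfrak{A})\}\subseteq\mathscr{F}(\mathfrak{B})$ there is essentially nothing to do: if $F$ is a filter of $\mathfrak{A}$ with $coker(f)\subseteq F$, then since $f$ is onto, Proposition \ref{prorh2} immediately gives that $f(F)$ is a filter of $\mathfrak{B}$. For the reverse inclusion, let $G\in\mathscr{F}(\mathfrak{B})$ and set $F:=f^{\leftarrow}(G)$. By Proposition \ref{prorh1}, $F$ is a filter of $\mathfrak{A}$ and $coker(f)\subseteq F$, so $F$ is an admissible index in the right-hand set. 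It then remains to verify $f(F)=G$: the inclusion $f(F)\subseteq G$ is clear from $F=f^{\leftarrow}(G)$, and conversely, given $b\in G$, surjectivity of $f$ yields $a\in A$ with $f(a)=b\in G$, hence $a\in f^{\leftarrow}(G)=F$ and $b=f(a)\in f(F)$. This shows $G=f(F)$ and finishes the proof.

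I do not expect any genuine obstacle here; the statement is a routine correspondence-type result, and the whole content is already packaged in Proposition \ref{prorh} together with the elementary surjectivity identity $f(f^{\leftarrow}(G))=G$. The only point worth flagging explicitly is the meaning of ``epimorphism'' (i.e.\ ontoness), since that is what makes both halves of the argument go through; one could also remark, though it is not needed for the stated equality, that $F\mapsto f(F)$ and $G\mapsto f^{\leftarrow}(G)$ are in fact mutually inverse order isomorphisms between the filters of $\mathfrak{A}$ containing $coker(f)$ and all of $\mathscr{F}(\mathfrak{B})$.
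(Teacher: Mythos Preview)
Your argument is correct and complete. The paper itself does not supply a proof of this proposition; it is quoted verbatim as \cite[Corollary 2.8]{rasouli2019going}, so there is nothing to compare against beyond noting that your derivation from Proposition~\ref{prorh} together with the surjectivity identity $f(f^{\leftarrow}(G))=G$ is exactly the standard route one would take.
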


A proper filter of a residuated lattice $\mathfrak{A}$ is said to be \textit{maximal} provided that it is a maximal element in the set of all proper filters of $\mathfrak{A}$. The set of maximal filters of $\mathfrak{A}$ shall be denoted by $Max(\mathfrak{A})$. A proper filter $\mathfrak{p}$ of $\mathfrak{A}$ is said to be \textit{prime} provided that $x\vee y\in \mathfrak{p}$ implies $x\in \mathfrak{p}$ or $y\in \mathfrak{p}$, for any $x,y\in A$. The set of prime filters of $\mathfrak{A}$ shall be denoted by $Spec(\mathfrak{A})$. Since $\mathscr{F}(\mathfrak{A})$ is a distributive lattice, so $Max(\mathfrak{A})\subseteq Spec(\mathfrak{A})$. Zorn's lemma verifies that any proper filter is contained in a maximal filter, and so in a prime filter.

\begin{proposition}\cite[Corollary 3.19]{rasouli2019going}\label{intprimfilt}
Let $F$ be a filter of a residuated lattice $\mathfrak{A}$ and $X$ a subset of $A$. The following assertions hold:
\begin{enumerate}
\item  [(1) \namedlabel{intprimfilt1}{(1)}]  If $X\nsubseteq F$, there exists a prime filter $P$ containing $F$ such that $X\nsubseteq P$;
\item  [(2) \namedlabel{intprimfilt2}{(2)}] $\mathscr{F}(X)=\bigcap \{\mathfrak{p}\mid X\subseteq \mathfrak{p}\in Spec(\mathfrak{A})\}$.
\end{enumerate}
\end{proposition}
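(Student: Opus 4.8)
The plan is to prove (1) by a Zorn's lemma argument and then derive (2) from it. For (1), assume $X\nsubseteq F$ and fix $x\in X\setminus F$ (note $x\neq 1$, since $1$ lies in every filter). Consider the family $\mathcal{G}=\{G\in\mathscr{F}(\mathfrak{A})\mid F\subseteq G,\ x\notin G\}$; it is nonempty because $F\in\mathcal{G}$, and it is closed under unions of chains because a union of a chain of filters is a filter and contains $x$ only if one of its members does. Zorn's lemma then yields a maximal element $P$ of $\mathcal{G}$. Since $x\notin P$, the filter $P$ is proper and $X\nsubseteq P$, so the only thing left is to check that $P$ is prime.

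For the primeness of $P$ I would argue by contradiction: suppose $a\vee b\in P$ but $a\notin P$ and $b\notin P$. Then $\mathscr{F}(P,a)$ and $\mathscr{F}(P,b)$ are filters containing $F$ that strictly contain $P$, so by the maximality of $P$ in $\mathcal{G}$ neither belongs to $\mathcal{G}$; hence $x\in\mathscr{F}(P,a)$ and $x\in\mathscr{F}(P,b)$. By Proposition~\ref{genfilprop}\ref{genfilprop3}, $x\in\mathscr{F}(P,a)\cap\mathscr{F}(P,b)=\mathscr{F}(P,a\vee b)$, and since $a\vee b\in P$ we have $\mathscr{F}(P,a\vee b)=P$, so $x\in P$, a contradiction. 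Thus $P$ is prime. This step — turning ``maximal among filters omitting $x$'' into ``prime'' via the meet formula for generated filters — is the heart of the argument and the one place that is not routine bookkeeping; everything else is a transcription of the classical fact that an ideal maximal with respect to avoiding a fixed element is prime.

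For (2), the inclusion $\mathscr{F}(X)\subseteq\bigcap\{\mathfrak{p}\mid X\subseteq\mathfrak{p}\in Spec(\mathfrak{A})\}$ is immediate, since $\mathscr{F}(X)$ is the smallest filter containing $X$ and each prime filter in the family is such a filter. For the reverse inclusion, take $a\notin\mathscr{F}(X)$ and apply part (1) with $\mathscr{F}(X)$ in the role of $F$ and $\{a\}$ in the role of $X$: since $\{a\}\nsubseteq\mathscr{F}(X)$, there is a prime filter $P$ with $\mathscr{F}(X)\subseteq P$ (hence $X\subseteq P$) and $a\notin P$, so $a$ does not lie in the intersection. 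In the degenerate case where no prime filter contains $X$, necessarily $\mathscr{F}(X)=A$ (otherwise it would be contained in a maximal, hence prime, filter), and both sides equal $A$ under the empty-intersection convention.
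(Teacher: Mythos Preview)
Your proof is correct. Note, however, that the paper does not prove this proposition at all: it is quoted verbatim as \cite[Corollary 3.19]{rasouli2019going} and used as a black box, so there is no ``paper's own proof'' to compare against. Your argument is the standard one --- Zorn's lemma to find a filter maximal with respect to missing a fixed element, then the identity $\mathscr{F}(P,a)\cap\mathscr{F}(P,b)=\mathscr{F}(P,a\vee b)$ from Proposition~\ref{genfilprop}\ref{genfilprop3} to establish primeness --- and it is exactly what one would expect the cited source to contain.
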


Let $\mathfrak{A}$ be a residuated lattice. If $X$ is a subset of $A$, a prime filter $P$ of $\mathfrak{A}$ is called a \textit{minimal prime filter belonging to $X$} provided that $P$ is a minimal element in the set of prime filters containing $X$. The set of minimal prime filters belonging to $X$ is denoted by $Min_{X}(\mathfrak{A})$. A minimal prime filter belonging to $\{1\}$ is simply called a minimal prime filter, and the set of minimal prime filters of $\mathfrak{A}$ is denoted by $Min(\mathfrak{A})$. For the basic facts concerning prime filters of a residuated lattice,  the reader is referred to \cite{rasouli2019going}.
\begin{example}\label{maxminex}
Consider the residuated lattice $\mathfrak{A}_6$ from Example \ref{exa6}, the residuated lattice $\mathfrak{B}_6$ from Example \ref{exb6}, the residuated lattice $\mathfrak{C}_6$ from Example \ref{exc6}, and the residuated lattice $\mathfrak{A}_8$ from Example \ref{exa8}. The sets of their maximal, prime, and minimal prime filters are presented in Table \ref{prfiltab}.
\begin{table}[h]
\centering
\begin{tabular}{cccc}
\hline
                 & \multicolumn{3}{c}{Prime filters}      \\ \hline
                 & Maximal filters &          & Minimal prime filters      \\
$\mathfrak{A}_6$ &  $\{a,b,d,1\},\{c,d,1\}$                 &  &$ \{1\}$\\
$\mathfrak{B}_6$ &  $\{a,c,1\},\{d,1\}$                 &  &$\{a,c,1\},\{d,1\}$\\
$\mathfrak{C}_6$ &  $\{1\}$                                     &         & $\{1\}$\\
$\mathfrak{A}_8$ &  $\{a,c,d,e,f,1\}$                          &         & $\{c,e,1\},\{f,1\}$ \\ \hline
\end{tabular}
\caption{The sets of maximal, prime, and minimal prime filters of $\mathfrak{A}_6$, $\mathfrak{B}_6$, $\mathfrak{C}_{6}$, and $\mathfrak{A}_8$}
\label{prfiltab}
\end{table}
\end{example}
\begin{proposition}\cite[Corollary 3.25]{rasouli2019going}\label{mp}
   Any prime filter of a residuated lattice contains a minimal prime filter.
 \end{proposition}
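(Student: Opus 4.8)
The plan is to apply Zorn's lemma to the family of prime filters contained in the given prime filter, ordered by reverse inclusion; the only thing that needs genuine verification is that the intersection of a chain of prime filters is again a prime filter.

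Concretely, I would fix a prime filter $\mathfrak{p}$ of $\mathfrak{A}$ and set
\[
\Sigma \;=\; \{\, Q\in Spec(\mathfrak{A}) \mid Q\subseteq \mathfrak{p} \,\},
\]
which is non-empty since $\mathfrak{p}\in\Sigma$. Order $\Sigma$ by reverse inclusion, so that an upper bound of a subfamily is a prime filter contained in every member of that subfamily. The key lemma to establish is: if $\mathcal{C}\subseteq\Sigma$ is a non-empty chain, then $Q_{0}:=\bigcap\mathcal{C}$ again lies in $\Sigma$. That $Q_{0}$ is a filter is immediate because an arbitrary intersection of filters is a filter; it is proper because $0\notin Q$ for every $Q\in\mathcal{C}$, whence $0\notin Q_{0}$; and $Q_{0}\subseteq\mathfrak{p}$ trivially. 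For primeness, suppose $x\vee y\in Q_{0}$ while $x\notin Q_{0}$ and $y\notin Q_{0}$; then there are $Q_{1},Q_{2}\in\mathcal{C}$ with $x\notin Q_{1}$ and $y\notin Q_{2}$, and since $\mathcal{C}$ is a chain we may assume $Q_{1}\subseteq Q_{2}$, so that $x,y\notin Q_{1}$ but $x\vee y\in Q_{1}$, contradicting the primeness of $Q_{1}$. Hence $Q_{0}\in\Sigma$ and $Q_{0}$ is an upper bound of $\mathcal{C}$ in $(\Sigma,\supseteq)$.

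Zorn's lemma then furnishes a maximal element of $(\Sigma,\supseteq)$, i.e.\ a prime filter $Q\subseteq\mathfrak{p}$ that is minimal with respect to inclusion among the prime filters contained in $\mathfrak{p}$. Finally I would verify that $Q$ is genuinely a minimal prime filter of $\mathfrak{A}$: if $Q'\in Spec(\mathfrak{A})$ and $Q'\subseteq Q$, then $Q'\subseteq\mathfrak{p}$, so $Q'\in\Sigma$, and the minimality of $Q$ forces $Q'=Q$; since every prime filter contains $\{1\}$, this says exactly that $Q\in Min(\mathfrak{A})$. Thus $\mathfrak{p}$ contains the minimal prime filter $Q$.

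The only step requiring care is the chain lemma above, and even there the argument is the routine one used for prime ideals in rings or prime filters in distributive lattices; I do not anticipate any real obstacle.
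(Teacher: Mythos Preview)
Your argument is correct and is the standard one. Note, however, that the paper does not give its own proof of this proposition: it is simply quoted as \cite[Corollary~3.25]{rasouli2019going}. That said, the paper does prove the purely-prime analogue (Corollary~\ref{minpurfil}) by exactly the Zorn's-lemma template you describe --- take the set of (purely-)prime filters contained in $\mathfrak{p}$, check that a chain has a lower bound, and extract a minimal element --- so your approach matches the method the authors themselves use in the parallel situation.
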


Let $(A;\vee,\wedge,0,1)$ be a bounded lattice. Recall \citep[\S I.6.1]{gratzer2011lattice} that an element $x\in A$ is
called \textit{complemented} if there is an element $y\in A$ such that $x\vee y=1$ and $x\wedge y=0$; $y$ is called a complement of $x$. Complements in a bounded lattice are generally not unique unless the lattice is distributive. If $y$ is the unique complement of $x$, we write $y=\ddot{x}$. If $\mathfrak{A}$ is a residuated lattice, the set of complemented elements of $\ell(\mathfrak{A})$ is denoted by $\beta(\mathfrak{A})$ and called \textit{the Boolean center} of $\mathfrak{A}$.  In residuated lattices, however, although the underlying lattices need not be distributive, the complements are unique. In the following, for a residuated lattice $\mathfrak{A}$ we set $\mathscr{F}(\beta(\mathfrak{A}))=\{\mathscr{F}(e)\mid e\in \beta(\mathfrak{A})\}$. For further study about the Boolean center of a residuated lattice we refer to \cite{georgescu2014Boolean}.
\begin{example}\label{compeleex}
Consider the residuated lattice $\mathfrak{A}_6$ from Example \ref{exa6}, the residuated lattice $\mathfrak{B}_6$ from Example \ref{exb6}, the residuated lattice $\mathfrak{C}_6$ from Example \ref{exc6}, and the residuated lattice $\mathfrak{A}_8$ from Example \ref{exa8}. Then $\beta(\mathfrak{A}_6)=\{0,1\}$, $\beta(\mathfrak{B}_6)=\{0,a,d,1\}$, $\beta(\mathfrak{C}_6)=\{0,1\}$ and $\beta(\mathfrak{A}_8)=\{0,1\}$
\end{example}
\begin{proposition}\cite[Lemma 2.5]{georgescu2014Boolean}\label{boleleprop}
Let $\mathfrak{A}$ be a residuated lattice. The following assertions hold:
\begin{enumerate}
  \item  [$(1)$ \namedlabel{boleleprop1}{$(1)$}] $\mathscr{F}(e)=\{a\in A|e\leq a\}$, for any $e\in \beta(\mathfrak{A})$;
  \item  [$(2)$ \namedlabel{boleleprop2}{$(2)$}] $\beta(\mathfrak{A})=\{a\in A\mid a\vee \neg a=1\}$;
  \item  [$(3)$ \namedlabel{boleleprop3}{$(3)$}] $\ddot{e}=\neg e$, for any $e\in \beta(\mathfrak{A})$;
  \item  [$(4)$ \namedlabel{boleleprop7}{$(4)$}] $e\odot x=e\wedge x$, for any $e\in \beta(\mathfrak{A})$ and $x\in A$.
\end{enumerate}
\end{proposition}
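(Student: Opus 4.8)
The plan is to establish the four assertions in the order \ref{boleleprop7}, \ref{boleleprop3}, \ref{boleleprop2}, \ref{boleleprop1}, since each later item will feed on the earlier ones together with only elementary residuation facts: the adjunction $u\odot v\le w\Leftrightarrow u\le v\rightarrow w$, the resulting inequality $u\odot v\le u\wedge v$ (which holds because $u\le 1$ and $v\le 1$), the equivalence $u\le v\Leftrightarrow u\rightarrow v=1$, and the distributivity law \ref{res1} of Remark \ref{resproposition}. I will also use the ambient fact, recalled just before the statement, that complements in a residuated lattice are unique.

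For \ref{boleleprop7}, fix $e\in\beta(\mathfrak{A})$ and let $y$ be a complement of $e$, so $e\vee y=1$ and $e\wedge y=0$. One inequality, $e\odot x\le e\wedge x$, is automatic. For the reverse, the one device of the whole proof is to multiply by $1=e\vee y$ and expand using \ref{res1}: $e\wedge x=(e\wedge x)\odot(e\vee y)=\big((e\wedge x)\odot e\big)\vee\big((e\wedge x)\odot y\big)$. Here $(e\wedge x)\odot y\le e\odot y\le e\wedge y=0$ and $(e\wedge x)\odot e\le x\odot e$, so $e\wedge x\le x\odot e=e\odot x$, giving the required equality.

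Item \ref{boleleprop3} then follows at once: with $e,y$ as above, \ref{boleleprop7} gives $e\odot y=e\wedge y=0$, hence $y\le e\rightarrow 0=\neg e$ and so $1=e\vee y\le e\vee\neg e$; moreover $e\wedge\neg e=e\odot\neg e\le 0$ by \ref{boleleprop7} and the adjunction. Thus $\neg e$ is a complement of $e$, and uniqueness of complements forces $\ddot e=\neg e$. For \ref{boleleprop2}, the inclusion $\subseteq$ is immediate from \ref{boleleprop3}. For $\supseteq$, suppose $a\vee\neg a=1$; applying the same multiply-by-$1$ trick (this time I cannot invoke \ref{boleleprop7}, since $a$ is not yet known to be Boolean) gives $a\wedge\neg a=(a\wedge\neg a)\odot(a\vee\neg a)=\big((a\wedge\neg a)\odot a\big)\vee\big((a\wedge\neg a)\odot\neg a\big)\le(\neg a\odot a)\vee(a\odot\neg a)=0$, since $a\odot\neg a\le 0$. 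Hence $a$ is complemented, i.e. $a\in\beta(\mathfrak{A})$.

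Finally, for \ref{boleleprop1}, I apply Proposition \ref{genfilprop}\ref{genfilprop1} with $F=\{1\}$ to get $\mathscr{F}(e)=\{a\in A\mid e^n\le a\text{ for some }n\in\mathbb{N}\}$; but \ref{boleleprop7} applied with $x=e$ yields $e\odot e=e\wedge e=e$, so $e^n=e$ for every $n$, and therefore $\mathscr{F}(e)=\{a\in A\mid e\le a\}$. There is no genuine obstacle here beyond bookkeeping: the content is the single recurring trick of writing an element as its product with $1=e\vee\ddot e$ and distributing $\odot$ over $\vee$, which substitutes for the distributivity the underlying lattice may lack; sequencing the four parts so as to avoid any circularity is the only thing requiring care.
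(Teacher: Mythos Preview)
The paper does not supply its own proof of this proposition; it is cited verbatim as \cite[Lemma 2.5]{georgescu2014Boolean} and left unproved. Your argument is correct and self-contained, using only the adjunction, the distributive law \ref{res1}, the inequality $u\odot v\le u\wedge v$, and the uniqueness of complements already recalled in the text. The ordering \ref{boleleprop7}$\to$\ref{boleleprop3}$\to$\ref{boleleprop2}$\to$\ref{boleleprop1} is well chosen and avoids any circularity; the single recurring trick of writing an element as its product with $1=e\vee\ddot e$ and expanding via \ref{res1} is the standard device here, and your handling of \ref{boleleprop2} (where you correctly refrain from invoking \ref{boleleprop7} since $a$ is not yet known to be Boolean) shows appropriate care.
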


According to \citet[Definition 4.37]{mckenzie2018algebras}, a residuated lattice is said to be \textit{directly indecomposable} if and only if it is not isomorphic to a direct product of two non-degenerate residuated lattices.
\begin{proposition}\cite[Proposition 3.4]{rasouli2021rickart}\label{direcindbeta}
  A residuated lattice $\mathfrak{A}$ is directly indecomposable if and only if $\beta(\mathfrak{A})=\{0,1\}$.
\end{proposition}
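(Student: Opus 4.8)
The plan is to prove the two implications separately, translating the presence of a nontrivial complemented element into a nontrivial direct-product decomposition by means of the congruences attached to principal filters.

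\emph{($\Leftarrow$, contrapositive).} Assume $\beta(\mathfrak{A})\neq\{0,1\}$ and pick $e\in\beta(\mathfrak{A})$ with $e\notin\{0,1\}$; by Proposition \ref{boleleprop} its (unique) complement is $\neg e$, so $e\vee\neg e=1$ and $e\wedge\neg e=0$. Write $\theta={\equiv_{\mathscr{F}(e)}}$ and $\vartheta={\equiv_{\mathscr{F}(\neg e)}}$; I would show these are a pair of complementary factor congruences. For $\theta\cap\vartheta=\Delta$: if $x\,\theta\,y$ and $x\,\vartheta\,y$ then $x\rightarrow y\in\mathscr{F}(e)\cap\mathscr{F}(\neg e)$, and by Proposition \ref{genfilprop}\ref{genfilprop3} (applied with the bottom filter $\{1\}$) this equals $\mathscr{F}(e\vee\neg e)=\mathscr{F}(1)=\{1\}$, whence $x\rightarrow y=1=y\rightarrow x$, i.e. $x=y$. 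For $\theta\circ\vartheta=\nabla$: given $x,y\in A$, put $z:=(e\wedge x)\vee(\neg e\wedge y)$ and verify $z\,\theta\,x$ and $z\,\vartheta\,y$. Since $\mathscr{F}(e)={\uparrow}e$ (Proposition \ref{boleleprop}), $z\,\theta\,x$ amounts to $e\odot x\le z$ and $e\odot z\le x$; using $e\odot c=e\wedge c$ (Proposition \ref{boleleprop}), property \ref{res1}, and $e\wedge\neg e=0$, one gets $e\odot z=(e\wedge x)\vee 0=e\wedge x\le x$ while $e\wedge x\le z$ is immediate, and symmetrically $z\,\vartheta\,y$. Hence $\theta$ and $\vartheta$ permute with $\theta\circ\vartheta=\nabla$, so by the standard complementary-factor-congruence theorem the map $a\mapsto\bigl(a/\mathscr{F}(e),\,a/\mathscr{F}(\neg e)\bigr)$ is an isomorphism $\mathfrak{A}\cong\mathfrak{A}/\mathscr{F}(e)\times\mathfrak{A}/\mathscr{F}(\neg e)$. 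Finally $\mathfrak{A}/\mathscr{F}(e)$ is non-degenerate iff $\mathscr{F}(e)$ is proper iff $0\notin{\uparrow}e$ iff $e\neq 0$, and $\mathfrak{A}/\mathscr{F}(\neg e)$ is non-degenerate iff $\neg e\neq 0$ iff $e\neq 1$ (since $e\vee\neg e=1$); both hold, so $\mathfrak{A}$ is not directly indecomposable.

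\emph{($\Rightarrow$).} Assume $\beta(\mathfrak{A})=\{0,1\}$ and suppose $\mathfrak{A}\cong\mathfrak{B}\times\mathfrak{C}$ with $\mathfrak{B},\mathfrak{C}$ non-degenerate. Then $(1,0)\in\beta(\mathfrak{B}\times\mathfrak{C})$ with complement $(0,1)$, and $(1,0)\notin\{(0,0),(1,1)\}$ because $0\neq 1$ in each factor; transporting along the isomorphism (the Boolean center is isomorphism-invariant) yields an element of $\beta(\mathfrak{A})$ other than $0$ and $1$, a contradiction. Hence $\mathfrak{A}$ is directly indecomposable.

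The only step that is not pure bookkeeping is checking that the witness $z=(e\wedge x)\vee(\neg e\wedge y)$ works even though $\ell(\mathfrak{A})$ need not be distributive; the point is that every identity needed is an instance of \ref{res1} evaluated at the complemented elements $e,\neg e$, where $\odot$ collapses to $\wedge$, so one only uses ``distributivity through $e$ and $\neg e$'', never full lattice distributivity. (Alternatively, one could invoke the lattice isomorphism $\phi:\mathscr{F}(\mathfrak{A})\to Con(\mathfrak{A})$ to deduce $\theta\cap\vartheta=\Delta$ immediately from $\mathscr{F}(e)\cap\mathscr{F}(\neg e)=\{1\}$.)
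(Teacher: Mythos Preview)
Your proof is correct. The paper does not actually give its own proof of this proposition: it is merely quoted from \cite[Proposition 3.4]{rasouli2021rickart}, so there is nothing to compare against here. Your argument via complementary factor congruences is the standard one, and the verification that the witness $z=(e\wedge x)\vee(\neg e\wedge y)$ works is carried out cleanly using only \ref{res1} together with $e\odot c=e\wedge c$ for $e\in\beta(\mathfrak{A})$, exactly as you note; the passage from $\theta\circ\vartheta=\nabla$ to permutability is automatic since congruences are symmetric relations.
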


Let $\mathfrak{A}$ be a residuated lattice and $F$ a filter of $\mathfrak{A}$. For any subset $X$ of $A$, we write $(F:X)=\{a\in A|a\vee x\in F,\forall x\in X\}$, and we set:
 \begin{itemize}
   \item $\Gamma_{F}(\mathfrak{A})=\{(F:X)|X\subseteq A\}$;
   \item $\gamma_{F}(\mathfrak{A})=\{(F:x)|x\in A\}$.
 \end{itemize}

Elements of $\Gamma_{F}(\mathfrak{A})$ and $\gamma_{F}(\mathfrak{A})$ are called \textit{$F$-coannihilators} and \textit{$F$-coannulets}. By \citet[Proposition 3.13]{rasouli2018generalized} follows that $(\Gamma_{F}(\mathfrak{A});\cap,\vee^{\Gamma_{F}},F,A)$ is a complete Boolean lattice, in which for any $\mathscr{F}\subseteq \Gamma_{F}(\mathfrak{A})$ we have $\vee^{\Gamma_{F}} \mathscr{F}=(F:(F:\cup \mathscr{F}))$. By \citet[Proposition 2.15]{rasouli2020quasicomplemented} follows that $\gamma_{F}(\mathfrak{A})$ is a sublattice of $\Gamma_{F}(\mathfrak{A})$. In the sequel, if $F=\{1\}$, $(F:X)$ is denoted by $X^{\perp}$, and the subscript $F$ is dropped. For the basic facts concerning coannihilators and coannulets of residuated lattices we refer to \cite{rasouli2018generalized}.

Let $\mathfrak{A}$ be a residuated lattice. A complemented element of the lattice $\mathscr{F}(\mathfrak{A})$ is called \textit{a direct summand} of $\mathfrak{A}$.
\begin{proposition}\cite[Proposition 4.2]{rasouli2021rickart}\label{b9fxpro}
Let $\mathfrak{A}$ be a residuated lattice and $F$ a filter of $\mathfrak{A}$. The following assertions are equivalent:
\begin{enumerate}
\item  [$(1)$ \namedlabel{b9fxpro1}{$(1)$}] $F$ is a direct summand of $\mathfrak{A}$;
\item  [$(2)$ \namedlabel{b9fxpro2}{$(2)$}] $F\veebar F^{\perp}=A$;
\item  [$(3)$ \namedlabel{b9fxpro3}{$(3)$}] $F\in \mathscr{F}(\beta(\mathfrak{A}))$.
\end{enumerate}
\end{proposition}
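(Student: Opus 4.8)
The plan is to prove the cycle of implications $(1)\Rightarrow(2)\Rightarrow(3)\Rightarrow(1)$, the middle step carrying essentially all the content.

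\smallskip
\noindent\textbf{Step $(1)\Rightarrow(2)$.} Suppose $F$ has a complement $G$ in the frame $\mathscr{F}(\mathfrak{A})$, so $F\cap G=\{1\}$ and $F\veebar G=A$. Since filters are upward closed, for every $f\in F$ and $g\in G$ we have $f\vee g\in F\cap G=\{1\}$, i.e. $f\vee g=1$; as $f\in F$ was arbitrary this says $G\subseteq F^{\perp}$. Hence $A=F\veebar G\subseteq F\veebar F^{\perp}\subseteq A$, which gives $(2)$.

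\smallskip
\noindent\textbf{Step $(2)\Rightarrow(3)$.} Since $F^{\perp}$ is a filter and $\veebar$ in $\mathscr{F}(\mathfrak{A})$ is computed by generation (Proposition~\ref{genfilprop}\ref{genfilprop1}), from $0\in F\veebar F^{\perp}$ we obtain $f\in F$ and $g\in F^{\perp}$ with $f\odot g=0$. By adjunction $g\le f\rightarrow 0=\neg f$, and because $g\in F^{\perp}$ while $f\in F$ we also have $f\vee g=1$; therefore $f\vee\neg f=1$, so $f\in\beta(\mathfrak{A})$ by Proposition~\ref{boleleprop}\ref{boleleprop2}. As $F^{\perp}$ is an up-set we get $\neg f\in F^{\perp}$, hence $\neg f\vee h=1$ for every $h\in F$. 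Now for $h\in F$, using $e\odot x=e\wedge x$ for $e\in\beta(\mathfrak{A})$ (Proposition~\ref{boleleprop}\ref{boleleprop7}), together with $f\odot\neg f=0$ and the distributivity \ref{res1} of $\odot$ over $\vee$,
\[ f=f\wedge 1=f\odot(\neg f\vee h)=(f\odot\neg f)\vee(f\odot h)=f\odot h\le h. \]
Thus $F\subseteq\{a\mid f\le a\}=\mathscr{F}(f)$ by Proposition~\ref{boleleprop}\ref{boleleprop1}, and $f\in F$ gives the reverse inclusion, so $F=\mathscr{F}(f)\in\mathscr{F}(\beta(\mathfrak{A}))$.

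\smallskip
\noindent\textbf{Step $(3)\Rightarrow(1)$.} Write $F=\mathscr{F}(e)$ with $e\in\beta(\mathfrak{A})$. By Proposition~\ref{genfilprop}\ref{genfilprop3}, $\mathscr{F}(e)\cap\mathscr{F}(\neg e)=\mathscr{F}(e\vee\neg e)=\mathscr{F}(1)=\{1\}$, and by Proposition~\ref{genfilprop}\ref{genfilprop4}, $\mathscr{F}(e)\veebar\mathscr{F}(\neg e)=\mathscr{F}(e\odot\neg e)=\mathscr{F}(0)=A$ (here $e\odot\neg e\le e\wedge\neg e=0$). Hence $\mathscr{F}(\neg e)$ is a complement of $F$ in $\mathscr{F}(\mathfrak{A})$, i.e. $F$ is a direct summand.

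\smallskip
The only genuinely delicate point is $(2)\Rightarrow(3)$: since $\ell(\mathfrak{A})$ need not be distributive, one cannot directly expand $f\wedge(\neg f\vee h)$ as $(f\wedge\neg f)\vee(f\wedge h)$. The remedy, and the reason the argument works, is to perform that computation with $\odot$ in place of $\wedge$, which is legitimate because $\odot$ distributes over $\vee$ and coincides with $\wedge$ on the Boolean center; the remaining parts are routine bookkeeping with the frame structure of $\mathscr{F}(\mathfrak{A})$.
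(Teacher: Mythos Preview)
Your proof is correct. The paper does not supply its own argument for this proposition---it is quoted from \cite[Proposition~4.2]{rasouli2021rickart}---so there is nothing to compare against here. Your cycle $(1)\Rightarrow(2)\Rightarrow(3)\Rightarrow(1)$ is the natural one, and each step is sound: in $(1)\Rightarrow(2)$ the observation that any complement $G$ of $F$ satisfies $G\subseteq F^{\perp}$ is exactly right; in $(2)\Rightarrow(3)$ the extraction of $f\in F$, $g\in F^{\perp}$ with $f\odot g=0$ follows from the description $F\veebar F^{\perp}=\mathscr{F}(F\cup F^{\perp})$ together with closure of each filter under $\odot$ (your reference to Proposition~\ref{genfilprop}\ref{genfilprop1} is slightly loose, since that clause treats $\mathscr{F}(F,x)$ for a single $x$, but the needed fact is immediate); and your use of $e\odot x=e\wedge x$ for $e\in\beta(\mathfrak{A})$ to bypass the non-distributivity of $\ell(\mathfrak{A})$ is the key trick. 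Step $(3)\Rightarrow(1)$ is routine as you note.
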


A residuated lattice $\mathfrak{A}$ is called \textit{hyperarchimedean} provided that all its elements are archimedean. By \citet[Theorem 50]{bucsneag2010archimedean} follows that a residuated lattice $\mathfrak{A}$ is hyperarchimedean if and only if it is zero-dimensional, i.e. $(Spec(\mathfrak{A});\subseteq)$ is an antichain. By \textit{Nachbin theorem} \citep[Theorem 123]{gratzer2011lattice} follows that each Boolean lattice is a hyperarchimedean residuated lattice. The following proposition derives some necessary and sufficient conditions for a residuated lattice to become hyperarchimedean.
\begin{proposition}\cite[Proposition 3.6]{rasouli2021rickart}\label{hperarchpri}
 Let $\mathfrak{A}$ be a residuated lattice. The following assertions are equivalent:
\begin{enumerate}
\item  [$(1)$ \namedlabel{hperarchpri1}{$(1)$}] $\beta(\mathscr{F}(\mathfrak{A}))=\mathscr{PF}(\mathfrak{A})$;
\item  [$(2)$ \namedlabel{hperarchpri2}{$(2)$}] $\mathscr{F}(\beta(\mathfrak{A}))=\mathscr{PF}(\mathfrak{A})$;
\item  [$(3)$ \namedlabel{hperarchpri3}{$(3)$}] $\mathfrak{A}$ is hyperarchimedean.
\end{enumerate}
\end{proposition}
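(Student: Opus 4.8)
The plan is to observe that $(1)\Leftrightarrow(2)$ is a formality and to put the work into $(2)\Leftrightarrow(3)$. By definition $\beta(\mathscr{F}(\mathfrak{A}))$ is the set of complemented elements of the lattice $\mathscr{F}(\mathfrak{A})$, i.e. the set of direct summands of $\mathfrak{A}$, and Proposition \ref{b9fxpro} identifies these exactly with $\mathscr{F}(\beta(\mathfrak{A}))$; hence $\beta(\mathscr{F}(\mathfrak{A}))=\mathscr{F}(\beta(\mathfrak{A}))$ with no hypothesis, so $(1)$ and $(2)$ assert the same thing. Since moreover each $\mathscr{F}(e)$ with $e\in\beta(\mathfrak{A})$ is principal, one always has $\mathscr{F}(\beta(\mathfrak{A}))\subseteq\mathscr{PF}(\mathfrak{A})$, so $(2)$ is precisely the statement that every principal filter of $\mathfrak{A}$ is a direct summand.

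For $(2)\Rightarrow(3)$: by the characterization of hyperarchimedean residuated lattices recalled above it is enough to check that $(Spec(\mathfrak{A});\subseteq)$ is an antichain. Let $\mathfrak{p}\subseteq\mathfrak{q}$ be prime filters and take $a\in\mathfrak{q}$, so $\mathscr{F}(a)\subseteq\mathfrak{q}$. By $(2)$ we may write $\mathscr{F}(a)=\mathscr{F}(e)$ with $e\in\beta(\mathfrak{A})$, whence $e\in\mathfrak{q}$, and by Proposition \ref{boleleprop} we have $e\vee\neg e=1$ while $e\odot\neg e=e\wedge\neg e=0$. If $\neg e$ were in $\mathfrak{q}$ then $0=e\odot\neg e\in\mathfrak{q}$, which is impossible; hence $\neg e\notin\mathfrak{q}$, a fortiori $\neg e\notin\mathfrak{p}$. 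Since $1=e\vee\neg e\in\mathfrak{p}$ and $\mathfrak{p}$ is prime, $e\in\mathfrak{p}$, so $a\in\mathscr{F}(e)=\mathscr{F}(a)\subseteq\mathfrak{p}$. Thus $\mathfrak{q}\subseteq\mathfrak{p}$, i.e. $\mathfrak{p}=\mathfrak{q}$.

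For $(3)\Rightarrow(2)$: fix $a\in A$; by the reduction above and Proposition \ref{b9fxpro} it suffices to show $\mathscr{F}(a)\veebar\mathscr{F}(a)^{\perp}=A$. A straightforward induction using $(r_2)$ shows $\{a\}^{\perp}\subseteq\mathscr{F}(a)^{\perp}$, so it is enough to prove $\mathscr{F}(a)\veebar\{a\}^{\perp}=A$, and by Proposition \ref{genfilprop1} this amounts to finding $b\in\{a\}^{\perp}$ and $n\in\mathbb{N}$ with $b\odot a^{n}=0$. Suppose no such $b$ and $n$ exist. Then $\mathscr{F}(a)\veebar\{a\}^{\perp}$ is a proper filter, hence is contained in a prime filter, which by Proposition \ref{mp} together with the antichain property of $Spec(\mathfrak{A})$ is itself a minimal prime filter $\mathfrak{p}$ with $a\in\mathfrak{p}$ and $\{a\}^{\perp}\subseteq\mathfrak{p}$. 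But the (dual) characterization of minimal prime filters from \cite{rasouli2019going} gives, for $a\in\mathfrak{p}$, some $b\notin\mathfrak{p}$ with $a\vee b=1$, i.e. $b\in\{a\}^{\perp}\setminus\mathfrak{p}$ --- a contradiction. Hence every principal filter is a direct summand, which is $(2)$.

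The step I expect to be the main obstacle is this last one: it relies on the description of minimal prime filters via coannulets (that $a\in\mathfrak{p}$ for a minimal prime $\mathfrak{p}$ forces $\{a\}^{\perp}\nsubseteq\mathfrak{p}$), which is not recorded in the present excerpt. If one prefers to stay self-contained, one can instead argue that hyperarchimedean-ness yields, for each $a$, an $n$ with $a^{n}\in\beta(\mathfrak{A})$, and then take $b=\neg(a^{n})$, since $a^{n}\vee\neg(a^{n})=1$ (so $a\vee\neg(a^{n})=1$) and $a^{n}\odot\neg(a^{n})=0$. The incidental facts $\mathscr{F}(a)^{\perp}=\{a\}^{\perp}$ and $\mathscr{F}(a)\cap\{a\}^{\perp}=\{1\}$ also follow easily from $(r_1)$ and $(r_2)$ and may be used to streamline the bookkeeping, but are not strictly needed.
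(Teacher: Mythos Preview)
Your argument is correct. Note, however, that this proposition is quoted in the paper from \cite[Proposition 3.6]{rasouli2021rickart} without proof, so there is no in-paper argument to compare against; your write-up effectively supplies one.

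One remark on your self-assessment: the ``main obstacle'' you flag is not actually external to the paper. The characterization you need --- that for a minimal prime $\mathfrak{p}$ and $a\in\mathfrak{p}$ one has $a^{\perp}\nsubseteq\mathfrak{p}$ --- is precisely Proposition~\ref{1mineq}, which is recorded in the preliminaries. So your first route for $(3)\Rightarrow(2)$ is already self-contained relative to this paper, and you need not fall back on the direct archimedean-element argument (though that alternative is also fine and arguably cleaner, since it avoids the detour through prime filters entirely). The identity $\mathscr{F}(a)^{\perp}=a^{\perp}$ that you mention at the end would in fact tidy the argument: it lets you invoke Proposition~\ref{b9fxpro}\ref{b9fxpro2} directly rather than sandwiching via $a^{\perp}\subseteq\mathscr{F}(a)^{\perp}$.
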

\begin{proposition}\citep[Theorem 8.6]{rasouli2019going}\label{1mineq}
Let $\mathfrak{A}$ be a residuated lattice. A prime filter $\mathfrak{p}$ of $\mathfrak{A}$ is minimal prime if and only if  for any $x\in A$, $\mathfrak{p}$ contains precisely one of $x$ or $x^{\perp}$.
\end{proposition}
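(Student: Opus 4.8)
The plan is to reformulate the condition and then prove two inclusions. Writing $O(\mathfrak{p}):=\{x\in A\mid x^{\perp}\nsubseteq\mathfrak{p}\}$, I note that the assertion ``for every $x$, $\mathfrak{p}$ contains precisely one of $x$ or $x^{\perp}$'' means exactly $x\in\mathfrak{p}\Leftrightarrow x^{\perp}\nsubseteq\mathfrak{p}$, i.e.\ $\mathfrak{p}=O(\mathfrak{p})$. First I would record the inclusion $O(\mathfrak{p})\subseteq\mathfrak{p}$, which holds for \emph{every} prime filter: if $x^{\perp}\nsubseteq\mathfrak{p}$, choose $s\in x^{\perp}\setminus\mathfrak{p}$; then $x\vee s=1\in\mathfrak{p}$, so primality forces $x\in\mathfrak{p}$ (as $s\notin\mathfrak{p}$). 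Equivalently, $x\notin\mathfrak{p}$ already implies $x^{\perp}\subseteq\mathfrak{p}$, so the ``at least one'' half of the condition is automatic; the real content is the implication $x\in\mathfrak{p}\Rightarrow x^{\perp}\nsubseteq\mathfrak{p}$, that is, $\mathfrak{p}\subseteq O(\mathfrak{p})$.

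For the ``if'' direction I would argue: assume $\mathfrak{p}=O(\mathfrak{p})$ and let $\mathfrak{q}\in Spec(\mathfrak{A})$ with $\mathfrak{q}\subseteq\mathfrak{p}$. For $x\in\mathfrak{p}=O(\mathfrak{p})$ there is $s\in x^{\perp}$ with $s\notin\mathfrak{p}$, hence $s\notin\mathfrak{q}$; since $x\vee s=1\in\mathfrak{q}$ and $\mathfrak{q}$ is prime, $x\in\mathfrak{q}$. So $\mathfrak{p}\subseteq\mathfrak{q}$, whence $\mathfrak{q}=\mathfrak{p}$, and $\mathfrak{p}$ is minimal prime.

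The substance is the ``only if'' direction, where I must show $\mathfrak{p}\subseteq O(\mathfrak{p})$ assuming $\mathfrak{p}$ is minimal prime. I would suppose not: some $x\in\mathfrak{p}$ has $x^{\perp}\subseteq\mathfrak{p}$ (so $x\neq1$), and then set
\[C:=(A\setminus\mathfrak{p})\cup\{\,x\vee c\mid c\in A\setminus\mathfrak{p}\,\}.\]
Since the complement of a prime filter is closed under $\vee$, a routine check shows $C$ is closed under $\vee$; moreover $x=x\vee0\in C$, while $1\notin C$, for $1\in A\setminus\mathfrak{p}$ is false and $1=x\vee c$ with $c\notin\mathfrak{p}$ would give $c\in x^{\perp}\setminus\mathfrak{p}$, against $x^{\perp}\subseteq\mathfrak{p}$. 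By Zorn's lemma I would take a filter $\mathfrak{q}$ maximal among those disjoint from $C$ (the family is nonempty, containing $\{1\}$, and closed under unions of chains). Then $\mathfrak{q}$ is prime: if $a\vee b\in\mathfrak{q}$ but $a,b\notin\mathfrak{q}$, maximality makes $\mathscr{F}(\mathfrak{q},a)$ and $\mathscr{F}(\mathfrak{q},b)$ meet $C$, say at $c_{1}\geq f_{1}\odot a^{m}$ and $c_{2}\geq f_{2}\odot b^{n}$ with $f_{1},f_{2}\in\mathfrak{q}$ (Proposition \ref{genfilprop}); with $f:=f_{1}\odot f_{2}\in\mathfrak{q}$, using distributivity of $\odot$ over $\vee$ (Remark \ref{resproposition}) and $(a\vee b)^{m+n}\leq a^{m}\vee b^{n}$ one gets $c_{1}\vee c_{2}\geq f\odot(a\vee b)^{m+n}\in\mathfrak{q}$, so $c_{1}\vee c_{2}\in\mathfrak{q}\cap C$ --- impossible. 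Finally $\mathfrak{q}\cap(A\setminus\mathfrak{p})=\emptyset$ gives $\mathfrak{q}\subseteq\mathfrak{p}$ and $x\in C$ gives $x\notin\mathfrak{q}$, so $\mathfrak{q}\subsetneq\mathfrak{p}$, contradicting minimality; hence $\mathfrak{p}\subseteq O(\mathfrak{p})$, and together with the first paragraph $\mathfrak{p}=O(\mathfrak{p})$.

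The main obstacle is precisely this last step: building the join-closed set $C$ that separates $x$ from $A\setminus\mathfrak{p}$ and checking that a filter maximal disjoint from $C$ is prime. If a prime-separation lemma of this shape is already available from \cite{rasouli2019going} (in the spirit of Proposition \ref{intprimfilt}), the step shortens to a one-line invocation producing a prime filter $\mathfrak{q}\subseteq\mathfrak{p}$ with $x\notin\mathfrak{q}$.
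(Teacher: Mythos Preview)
The paper does not supply a proof of this proposition: it is quoted verbatim as \cite[Theorem 8.6]{rasouli2019going}. So there is no ``paper's own proof'' to compare against; your argument stands as an independent proof of a cited result.

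Your proof is correct. The reformulation via $O(\mathfrak{p})=\{x\mid x^{\perp}\nsubseteq\mathfrak{p}\}$ is exactly the set $D(\mathfrak{p})$ of the paper (see the discussion before Proposition~\ref{omegprop}), and your target $\mathfrak{p}=O(\mathfrak{p})$ is the characterization recorded in Proposition~\ref{omegprop}\ref{omegprop3}; but you supply a direct proof rather than citing it. The ``if'' direction and the inclusion $O(\mathfrak{p})\subseteq\mathfrak{p}$ are routine and correctly handled. For the ``only if'' direction, your construction of the $\vee$-closed set $C=(A\setminus\mathfrak{p})\cup\{x\vee c\mid c\notin\mathfrak{p}\}$ with $x\in C$ and $1\notin C$, followed by taking a filter maximal w.r.t.\ disjointness from $C$ and verifying primality, is a clean prime-avoidance argument. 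The key inequality $(a\vee b)^{m+n}\leq a^{m}\vee b^{n}$ follows from Remark~\ref{resproposition}\ref{res1} and pigeonhole on the $\odot$-expansion, as you indicate; combined with $c_1\vee c_2\in C$ this yields the desired contradiction. The final step $\mathfrak{q}\subseteq\mathfrak{p}$ (from $\mathfrak{q}\cap(A\setminus\mathfrak{p})=\emptyset$) with $x\notin\mathfrak{q}$ gives $\mathfrak{q}\subsetneq\mathfrak{p}$, contradicting minimality.

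In short: the paper offers no argument here, and yours is a valid self-contained one in the spirit of the prime-filter separation lemmas of \cite{rasouli2019going} that you allude to at the end.
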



A morphism of rings $f:\mathfrak{A}\longrightarrow \mathfrak{B}$ is said to be \textit{flat} provided that $\mathfrak{B}$ is flat as an $\mathfrak{A}$-module. According to \citet[p. 46]{picavet1980ultrafiltres}, a morphism of commutative rings $f:\mathfrak{A}\longrightarrow \mathfrak{B}$ is flat if and only if for any ideal $I$ of $\mathfrak{A}$, and for any finitely generated ideal $J$ of $\mathfrak{A}$, we have $(I:J)B=(IB:JB)$. By a suggestion of \citet[p. 8]{georgescu2020}, this characterization of flat ring morphisms in terms of residuation allows us to define a notion of ``flateness" for the residuated morphisms.
\begin{definition}
  Let $f:\mathfrak{A}\longrightarrow \mathfrak{B}$ be a residuated morphism. We say that $f$ fulfills \textit{the flatness property} ($f$ is \textit{flat}) provided that for any filter $F$ of $\mathfrak{A}$ and $a\in A$, $f$ satisfies the following property:
  \[\mathscr{F}(f((F:a)))=(\mathscr{F}(f(F)):f(a)).\]
\end{definition}

The following lemma has a routine verification, and so its proof is left to the reader.
\begin{lemma}\label{flatmorprop}
  Let $f:\mathfrak{A}\longrightarrow \mathfrak{B}$ be a residuated morphism. The following assertions are equivalent:
  \begin{enumerate}
\item  [$(1)$ \namedlabel{flatmorprop1}{$(1)$}] $f$ is flat;
\item  [$(2)$ \namedlabel{flatmorprop2}{$(2)$}] $(\mathscr{F}(f(F)):f(a))\subseteq \mathscr{F}(f((F:a)))$, for any filter $F$ of $\mathfrak{A}$ and $a\in A$.
\end{enumerate}
\end{lemma}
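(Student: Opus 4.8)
The plan is to observe that condition $(2)$ is nothing but the non-trivial half of the equality that defines flatness, the reverse half being automatic for \emph{every} residuated morphism. Indeed, $f$ being flat means precisely that $\mathscr{F}(f((F:a)))=(\mathscr{F}(f(F)):f(a))$ for every filter $F$ of $\mathfrak{A}$ and every $a\in A$, so $(1)\Rightarrow(2)$ is immediate. For $(2)\Rightarrow(1)$ it therefore suffices to prove that the opposite inclusion
\[\mathscr{F}(f((F:a)))\subseteq (\mathscr{F}(f(F)):f(a))\]
holds unconditionally; conjoining it with $(2)$ then gives the equality, i.e. flatness.

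To prove this inclusion I would first establish the set-level containment $f((F:a))\subseteq (\mathscr{F}(f(F)):f(a))$. Take $b\in(F:a)$, which by definition of the coannihilator means $b\vee a\in F$. Since a residuated morphism preserves $\vee$, we get $f(b)\vee f(a)=f(b\vee a)\in f(F)\subseteq\mathscr{F}(f(F))$, and this is exactly the assertion that $f(b)\in(\mathscr{F}(f(F)):f(a))$. Next I would pass to the generated filter: the set $(\mathscr{F}(f(F)):f(a))$ is an $\mathscr{F}(f(F))$-coannihilator in $\mathfrak{B}$, hence a filter of $\mathfrak{B}$ (by the description of $\Gamma_{\mathscr{F}(f(F))}(\mathfrak{B})$ recalled above, cf. \cite{rasouli2018generalized}). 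As $\mathscr{F}(f((F:a)))$ is the least filter of $\mathfrak{B}$ containing $f((F:a))$, the set-level containment upgrades to the desired filter inclusion.

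I do not expect any real obstacle here: the argument uses only that $f$ preserves finite joins and that $F$-coannihilators are filters. The "routine verification'' alluded to in the statement is precisely this bookkeeping; the sole point of substance is the remark that one of the two inclusions in the flatness identity comes for free, which is what reduces the equality to the single inclusion $(2)$.
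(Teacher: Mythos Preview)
Your proposal is correct and is exactly the routine verification the paper alludes to; the paper gives no proof of this lemma at all, merely stating that it ``has a routine verification, and so its proof is left to the reader.'' Your observation that the inclusion $\mathscr{F}(f((F:a)))\subseteq(\mathscr{F}(f(F)):f(a))$ holds for every residuated morphism---because $f$ preserves $\vee$ and $F$-coannihilators are filters---is precisely the content that reduces flatness to condition $(2)$.
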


Let $\mathfrak{A}$ be a residuated lattice and $F$ a filter of $\mathfrak{A}$. It is well-known that the canonical projection $\pi_{F}:\mathfrak{A}\longrightarrow\mathfrak{A}/F$, given by $\pi_{F}(x)=x/F$, is an epimorphism and $coker(\pi_{F})=F$.
\begin{proposition}\label{canonflat}
  Let $\mathfrak{A}$ be a residuated lattice and $F$ a filter of $\mathfrak{A}$. The following assertions are equivalent:
  \begin{enumerate}
\item  [$(1)$ \namedlabel{canonflat1}{$(1)$}] $\pi_{F}$ is flat;
\item  [$(2)$ \namedlabel{canonflat2}{$(2)$}] $(G\veebar F:a)\subseteq (G:a)\veebar F$, for any filter $G$ of $\mathfrak{A}$ and $a\in A$.
\end{enumerate}
\end{proposition}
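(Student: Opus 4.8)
The plan is to reduce the flatness of $\pi_F$, via Lemma \ref{flatmorprop}, to a single family of inclusions in $\mathfrak{A}/F$, and then to transport those inclusions back to $\mathfrak{A}$ along the preimage map $\pi_F^{\leftarrow}$, where they become exactly the stated condition. Throughout, to avoid a clash with the subscript of $\pi_F$, the generic filter playing the role of ``$F$'' in the definition of a flat morphism will be denoted $G$.

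First I would record the auxiliary fact that, for every filter $H$ of $\mathfrak{A}$, $\pi_F^{\leftarrow}(\mathscr{F}(\pi_F(H))) = H \veebar F$. Indeed, $\pi_F^{\leftarrow}(\mathscr{F}(\pi_F(H)))$ is a filter containing $H$ and containing $coker(\pi_F) = F$ by Proposition \ref{prorh}\ref{prorh1}, which gives the inclusion $\supseteq$; conversely, since $H \veebar F$ contains $coker(\pi_F)$, Proposition \ref{prorh}\ref{prorh2} shows that $\pi_F(H \veebar F)$ is a filter of $\mathfrak{A}/F$ that contains $\pi_F(H)$, hence contains $\mathscr{F}(\pi_F(H))$, and applying $\pi_F^{\leftarrow}$ together with $\pi_F^{\leftarrow}(\pi_F(K)) = K$ for filters $K \supseteq coker(\pi_F)$ (surjectivity of $\pi_F$, cf. Proposition \ref{filqou}) yields $\subseteq$. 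In particular $\pi_F^{\leftarrow}$ is an order isomorphism from $\mathscr{F}(\mathfrak{A}/F)$ onto $\{K \in \mathscr{F}(\mathfrak{A}) \mid F \subseteq K\}$; being a preimage induced by a surjection, it both preserves and reflects inclusions between filters of $\mathfrak{A}/F$.

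Next I would unwind the flatness criterion. By Lemma \ref{flatmorprop}, $\pi_F$ is flat iff $(\mathscr{F}(\pi_F(G)):\pi_F(a)) \subseteq \mathscr{F}(\pi_F((G:a)))$ for every filter $G$ of $\mathfrak{A}$ and every $a \in A$. Here $(G:a)$ is a filter of $\mathfrak{A}$ — this is immediate from \ref{res2} of Remark \ref{resproposition}, since $(b\odot c)\vee a \geq (b\vee a)\odot(c\vee a)$ — so both sides of this inclusion are filters of $\mathfrak{A}/F$, and by the previous paragraph the inclusion holds iff it holds after applying $\pi_F^{\leftarrow}$. Now I would evaluate the two preimages. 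On the right, the auxiliary fact applied to $H = (G:a)$ gives $\pi_F^{\leftarrow}(\mathscr{F}(\pi_F((G:a)))) = (G:a) \veebar F$. On the left, since $\pi_F$ is a lattice morphism, for $c \in A$ one has $c \in \pi_F^{\leftarrow}\big((\mathscr{F}(\pi_F(G)):\pi_F(a))\big)$ iff $\pi_F(c \vee a) = \pi_F(c) \vee \pi_F(a) \in \mathscr{F}(\pi_F(G))$, iff $c \vee a \in \pi_F^{\leftarrow}(\mathscr{F}(\pi_F(G))) = G \veebar F$, iff $c \in (G \veebar F : a)$; thus $\pi_F^{\leftarrow}\big((\mathscr{F}(\pi_F(G)):\pi_F(a))\big) = (G \veebar F : a)$.

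Putting these together, $\pi_F$ is flat iff $(G \veebar F : a) \subseteq (G:a) \veebar F$ for all filters $G$ of $\mathfrak{A}$ and all $a \in A$, which is precisely assertion \ref{canonflat2}. The only genuine obstacle is the auxiliary identity $\pi_F^{\leftarrow}(\mathscr{F}(\pi_F(H))) = H \veebar F$ and the surjectivity bookkeeping it rests on; once that is in place, the remainder is a formal transport of an inclusion across the correspondence $\pi_F^{\leftarrow}$.
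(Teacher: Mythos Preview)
Your proof is correct and follows essentially the same approach as the paper. The paper's proof is a single line noting that $\mathscr{F}(\pi_{F}(G))=(G\veebar F)/F$ for any filter $G$ and declaring the rest ``routine''; your auxiliary identity $\pi_F^{\leftarrow}(\mathscr{F}(\pi_F(H))) = H \veebar F$ is the preimage version of that same fact, and you have simply written out in full the transport of the flatness inclusion along the filter correspondence that the paper leaves implicit.
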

\begin{proof}
It proves routinely, because $\mathscr{F}(\pi_{F}(G))=G\veebar F/F$, for any filter $G$ of $\mathfrak{A}$.
\end{proof}

One of a type of important filters in residuated lattices are $\alpha$-filters. A filter $F$ of a residuated lattice $\mathfrak{A}$ is called an $\alpha$-\textit{filter} provided that for any $x\in F$ we have $x^{\perp\perp}\subseteq F$. The set of $\alpha$-filters of $\mathfrak{A}$ is denoted by $\alpha(\mathfrak{A})$. For any subset $X$ of $A$, the $\alpha$-filter generated by $X$ is denoted by $\alpha(X)$. By \citet[Proposition 5.3]{rasouli2020quasicomplemented} follows that $(\alpha(\mathfrak{A});\cap,\vee^{\alpha},\{1\},A)$ is a frame, in which $\vee^{\alpha} \mathcal{F}=\alpha(\veebar \mathcal{F})$, for any $\mathcal{F}\subseteq \alpha(\mathfrak{A})$ . For the basic facts concerning $\alpha$-filters of residuated lattices we refer to \cite{rasouli2020quasicomplemented}.
\begin{example}\label{alpfilterexa}
Consider the residuated lattice $\mathfrak{A}_6$ from Example \ref{exa6}, the residuated lattice $\mathfrak{B}_6$ from Example \ref{exb6}, the residuated lattice $\mathfrak{C}_6$ from Example \ref{exc6}, and the residuated lattice $\mathfrak{A}_8$ from Example \ref{exa8}. The sets of their $\alpha$-filters are presented in Table \ref{alphatafiex}.
\begin{table}[h]
\centering
\begin{tabular}{ccl}
\hline
                 & \multicolumn{2}{c}{$\alpha$-filters}                                       \\ \hline
$\mathfrak{A}_6$ & \multicolumn{2}{c}{$\{1\},A_6$} \\
$\mathfrak{B}_6$ & \multicolumn{2}{c}{$\{1\},\{a,c,1\},\{d,1\},B_6$} \\
$\mathfrak{C}_6$ & \multicolumn{2}{c}{$\{1\},C_6$} \\
$\mathfrak{A}_8$ & \multicolumn{2}{c}{$\{1\},\{c,e,1\},\{f,1\},A_8$} \\ \hline
\end{tabular}
\caption{The sets of $\alpha$-filters of  $\mathfrak{A}_6$, $\mathfrak{B}_6$, $\mathfrak{C}_{6}$, and $\mathfrak{A}_8$}
\label{alphatafiex}
\end{table}
\end{example}

 Let $\mathfrak{A}$ be a residuated lattice. For an ideal $I$ of $\ell(\mathfrak{A})$, set $\omega(I)=\{a\in A|a\vee x=1,\textrm{~for~some~} x\in I\}$, and $\Omega(\mathfrak{A})=\{\omega(I)|I\in id(\ell(\mathfrak{A}))\}$. Using Proposition 3.4 of \cite{rasouli2018n}, it follows that $\Omega(\mathfrak{A})\subseteq \mathscr{F}(\mathfrak{A})$, and so elements of $\Omega(\mathfrak{A})$ are called \textit{$\omega$-filters} of $\mathfrak{A}$. For an $\omega$-filter $F$ of $\mathfrak{A}$, $I_{F}$ denoted an ideal of $\ell(\mathfrak{A})$, which satisfies $F=\omega(I_{F})$. \citet[Proposition 3.7]{rasouli2018n} shows that $(\Omega(\mathfrak{A});\cap,\vee^{\omega},\{1\},A)$ is a bounded distributive lattice, in which $F\vee^{\omega} G=\omega(I_{F}\curlyvee I_{G})$, for any $F,G\in \Omega(\mathfrak{A})$ (by $\curlyvee$, we mean the join operation in the lattice of ideals of $\ell(\mathfrak{A})$). For any proper filter $H$ of $\mathfrak{A}$ we set $D(H)=\omega(\dot{H})$. For the basic facts concerning $\omega$-filters of a residuated lattice, interested readers are referred to \cite{rasouli2018n}.
\begin{proposition}\citep{rasouli2018n}\label{omegprop}
    Let $\mathfrak{A}$ be residuated lattice. The following assertions hold:
    \begin{enumerate}
\item  [$(1)$ \namedlabel{omegprop1}{$(1)$}] $\gamma(\mathfrak{A})$ is a sublattice of $\Omega(\mathfrak{A})$;
\item  [$(2)$ \namedlabel{omegprop2}{$(2)$}]  $D(\mathfrak{p})=k\mathscr{G}(\mathfrak{p})=k(\mathscr{G}(\mathfrak{p})\cap Min(\mathfrak{A}))$, for any prime filter $\mathfrak{p}$ of $\mathfrak{A}$;
\item  [$(3)$ \namedlabel{omegprop3}{$(3)$}] a prime filter $\mathfrak{p}$ of $\mathfrak{A}$ is minimal prime if and only if $\mathfrak{p}=D(\mathfrak{p})$.
\end{enumerate}
  \end{proposition}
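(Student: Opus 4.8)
For (1), I would first note that every coannulet is an $\omega$-filter: for $x\in A$ one has $x^{\perp}=\omega(\downarrow x)$, where $\downarrow x$ is the principal ideal of $\ell(\mathfrak A)$ generated by $x$, since $a\vee y=1$ with $y\le x$ forces $a\vee x=1$, and conversely $a\vee x=1$ exhibits $a\in\omega(\downarrow x)$; hence $\gamma(\mathfrak A)\subseteq\Omega(\mathfrak A)$. It then remains to check that $\gamma(\mathfrak A)$ is closed under the operations of $\Omega(\mathfrak A)$. For the meet, $x^{\perp}\cap y^{\perp}=(x\wedge y)^{\perp}$: the inclusion ``$\supseteq$'' is clear from $x\wedge y\le x,y$, and for ``$\subseteq$'', if $a\vee x=a\vee y=1$ then $(r_2)$ (\ref{res2}) gives $a\vee(x\odot y)\ge(a\vee x)\odot(a\vee y)=1$, whence $a\vee(x\wedge y)=1$ because $x\odot y\le x\wedge y$. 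For the join, since $\vee^{\omega}$ is well defined we may compute it through the representatives $\downarrow x,\downarrow y$ of $x^{\perp},y^{\perp}$, obtaining $x^{\perp}\vee^{\omega}y^{\perp}=\omega(\downarrow x\curlyvee\downarrow y)=\omega(\downarrow(x\vee y))=(x\vee y)^{\perp}$, the join of two principal ideals being the principal ideal of the join. Both outcomes are coannulets, so $\gamma(\mathfrak A)$ is a sublattice of $\Omega(\mathfrak A)$.

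For (2), I would start by unwinding the definitions: since $\mathfrak p$ is prime, $A\setminus\mathfrak p$ is already an ideal of $\ell(\mathfrak A)$, so $\dot{\mathfrak p}=A\setminus\mathfrak p$ and $D(\mathfrak p)=\omega(A\setminus\mathfrak p)=\{a\in A\mid a\vee x=1\text{ for some }x\notin\mathfrak p\}=\{a\in A\mid a^{\perp}\nsubseteq\mathfrak p\}$. Write $k(\cdot)$ for the intersection of a family of filters and $\mathscr G(\mathfrak p)$ for the set of prime filters contained in $\mathfrak p$. The equality $k\mathscr G(\mathfrak p)=k(\mathscr G(\mathfrak p)\cap Min(\mathfrak A))$ is immediate from Proposition \ref{mp}: every member of $\mathscr G(\mathfrak p)$ contains a minimal prime, which again lies in $\mathscr G(\mathfrak p)$, so the minimal members of $\mathscr G(\mathfrak p)$ are cofinal downward and carry the same intersection. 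The inclusion $D(\mathfrak p)\subseteq k\mathscr G(\mathfrak p)$ is likewise easy: given $a^{\perp}\nsubseteq\mathfrak p$ fix $x\in a^{\perp}\setminus\mathfrak p$; for any prime $\mathfrak q\subseteq\mathfrak p$ we have $x\notin\mathfrak q$ while $a\vee x=1\in\mathfrak q$, so $a\in\mathfrak q$ by primeness.

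The heart of the argument — and the step I expect to be the main obstacle — is the reverse inclusion $k\mathscr G(\mathfrak p)\subseteq D(\mathfrak p)$: assuming $a^{\perp}\subseteq\mathfrak p$, one must produce a prime filter $\mathfrak q\subseteq\mathfrak p$ with $a\notin\mathfrak q$. I would obtain such a $\mathfrak q$ by a prime filter--ideal separation in the spirit of Proposition \ref{intprimfilt}(1), the delicate point being to encode the two requirements ``$\mathfrak q$ omits $a$'' and ``$\mathfrak q\subseteq\mathfrak p$'' as a single condition of disjointness from an ideal, so that the possible non-distributivity of $\ell(\mathfrak A)$ causes no trouble. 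Concretely, let $I$ be the ideal of $\ell(\mathfrak A)$ generated by $(A\setminus\mathfrak p)\cup\{a\}$, that is $I=\{z\mid z\le a\vee j\text{ for some }j\notin\mathfrak p\}$. Then $1\in I$ iff $a\vee j=1$ for some $j\notin\mathfrak p$ iff $a^{\perp}\nsubseteq\mathfrak p$, so under our hypothesis $I$ is proper; in particular $\{1\}\cap I=\emptyset$. By Zorn's lemma pick a filter $\mathfrak q$ maximal with respect to $\mathfrak q\cap I=\emptyset$ (so $\mathfrak q$ is also proper, as $0\in I$). I claim $\mathfrak q$ is prime: if $b\vee c\in\mathfrak q$ with $b,c\notin\mathfrak q$, then $\mathscr F(\mathfrak q,b)\supsetneq\mathfrak q$ and $\mathscr F(\mathfrak q,c)\supsetneq\mathfrak q$, so by maximality each meets $I$, say $i_{b}\in\mathscr F(\mathfrak q,b)\cap I$ and $i_{c}\in\mathscr F(\mathfrak q,c)\cap I$; but then $i_{b}\vee i_{c}\in I$, while $i_{b}\vee i_{c}\in\mathscr F(\mathfrak q,b)\cap\mathscr F(\mathfrak q,c)=\mathscr F(\mathfrak q,b\vee c)=\mathfrak q$ by Proposition \ref{genfilprop} (using $b\vee c\in\mathfrak q$), contradicting $\mathfrak q\cap I=\emptyset$. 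Since $a\in I$ and $A\setminus\mathfrak p\subseteq I$, this $\mathfrak q$ satisfies $a\notin\mathfrak q$ and $\mathfrak q\subseteq\mathfrak p$, completing (2).

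Finally, (3) falls out of (2). If $\mathfrak p$ is a minimal prime, then $\mathscr G(\mathfrak p)\cap Min(\mathfrak A)=\{\mathfrak p\}$, so $D(\mathfrak p)=k\{\mathfrak p\}=\mathfrak p$. Conversely, if $\mathfrak p=D(\mathfrak p)=k(\mathscr G(\mathfrak p)\cap Min(\mathfrak A))$, then by Proposition \ref{mp} there is a minimal prime $\mathfrak q_{0}\subseteq\mathfrak p$, so $\mathfrak q_{0}\in\mathscr G(\mathfrak p)\cap Min(\mathfrak A)$ and therefore $\mathfrak p=k(\mathscr G(\mathfrak p)\cap Min(\mathfrak A))\subseteq\mathfrak q_{0}\subseteq\mathfrak p$, whence $\mathfrak p=\mathfrak q_{0}$ is a minimal prime. (One could alternatively invoke Proposition \ref{1mineq} directly for (3), but routing through (2) keeps the bookkeeping uniform.)
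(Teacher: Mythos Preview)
The paper does not supply a proof of this proposition at all: it is quoted verbatim from \cite{rasouli2018n} and used as background. So there is no ``paper's own proof'' to compare against; your write-up stands on its own, and it is correct.

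A couple of remarks on your argument. For (1), your identification $x^{\perp}=\omega(\downarrow x)$ and the computations $x^{\perp}\cap y^{\perp}=(x\wedge y)^{\perp}$, $x^{\perp}\vee^{\omega}y^{\perp}=(x\vee y)^{\perp}$ are exactly the standard ones; note that the paper also records, just before Proposition \ref{omegprop}, that $\gamma_{F}(\mathfrak A)$ is a sublattice of $\Gamma_{F}(\mathfrak A)$, so (1) is essentially a bookkeeping check that the $\omega$-join agrees with the $\Gamma$-join on coannulets. For (2), your separation argument via the lattice ideal $I$ generated by $(A\setminus\mathfrak p)\cup\{a\}$ is clean and self-contained; the key step---that a filter maximal disjoint from a lattice ideal is prime---is precisely the prime-filter lemma underlying Proposition \ref{intprimfilt}, so you are effectively reproving that lemma in situ rather than citing it, which is fine. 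Your derivation of (3) from (2) is the expected one; as you note, Proposition \ref{1mineq} gives an alternative direct route.
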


Let $\mathfrak{A}$ be a residuated lattice and $\Pi$ a collection of prime filters of $\mathfrak{A}$.
For a subset $\pi$ of $\Pi$ we set $k(\pi)=\bigcap \pi$, and for a subset $X$ of $A$ we set $h_{\Pi}(X)=\{P\in \Pi\mid X\subseteq  P\}$ and $d_{\Pi}(X)=\Pi\setminus h_{\Pi}(X)$. The collection $\Pi$ can be topologized by taking the collection $\{h_{\Pi}(x)\mid x\in A\}$  as a closed (an open) basis, which is called \textit{the (dual) hull-kernel topology} on $\Pi$ and denoted by $\Pi_{h(d)}$. The generated
topology by $\tau_{h}\cup \tau_{d}$ on $Spec(\mathfrak{A})$ is called \textit{the patch topology} and denoted by $\tau_{p}$. In the sequel, subscripts $M$ and $m$ are reserved for $Max(\mathfrak{A})$ and $Min(\mathfrak{A})$, respectively. For a detailed discussion on the (dual) hull-kernel and patch topologies on a residuated lattice, we refer to \cite{rasouli2018hull}.
\begin{proposition}\citep{rasouli2018hull}\label{hulkerinstr}
Let $\mathfrak{A}$ be a residuated lattice. $Max_{h}(\mathfrak{A})$ and $Min_{d}(\mathfrak{A})$ are compact.
\end{proposition}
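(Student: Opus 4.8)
The statement to prove is Proposition~\ref{hulkerinstr}: that $Max_h(\mathfrak{A})$ and $Min_d(\mathfrak{A})$ are compact. The plan is to treat the two spaces in parallel using the basic-open/basic-closed descriptions of the (dual) hull-kernel topology and a standard finite-intersection-property argument. For $Max_h(\mathfrak{A})$, recall that the hull-kernel topology has $\{h_M(x)\mid x\in A\}$ as a closed basis, so the complementary sets $d_M(x)=Max(\mathfrak{A})\setminus h_M(x)$ form an open basis. To show compactness it suffices to show every cover by basic opens $\{d_M(x_i)\mid i\in I\}$ has a finite subcover. Unwinding, $Max(\mathfrak{A})=\bigcup_{i\in I}d_M(x_i)$ means no maximal filter contains all the $x_i$, i.e.\ the filter $\mathscr{F}(\{x_i\mid i\in I\})$ is contained in no maximal filter; since by Zorn's lemma every proper filter lies in a maximal one, this forces $\mathscr{F}(\{x_i\mid i\in I\})=A$, hence $0\in \mathscr{F}(\{x_i\mid i\in I\})$.

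First I would invoke Proposition~\ref{genfilprop}, in particular the description of finitely generated filters in \ref{genfilprop1} together with \ref{genfilprop4}: the filter generated by a set is the directed union of the filters generated by its finite subsets, so $0\in\mathscr{F}(\{x_i\mid i\in I\})$ implies there is a finite subset $\{x_{i_1},\dots,x_{i_n}\}$ with $0\in\mathscr{F}(\{x_{i_1},\dots,x_{i_n}\})=\mathscr{F}(x_{i_1}\odot\cdots\odot x_{i_n})$, i.e.\ $x_{i_1}\odot\cdots\odot x_{i_n}$ has some power below $0$, so this finite subfilter is already improper. Then the same Zorn argument in reverse shows $\{d_M(x_{i_1}),\dots,d_M(x_{i_n})\}$ already covers $Max(\mathfrak{A})$: any maximal filter omitting all of $x_{i_1},\dots,x_{i_n}$ would be disjoint from the improper filter they generate, which is absurd. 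This yields the finite subcover, so $Max_h(\mathfrak{A})$ is compact.

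For $Min_d(\mathfrak{A})$ the roles of ``hull'' and ``dual hull'' swap: the dual hull-kernel topology on $Min(\mathfrak{A})$ has $\{h_m(x)\mid x\in A\}$ as an \emph{open} basis, so I must show every cover of $Min(\mathfrak{A})$ by sets $h_m(x_i)$ admits a finite subcover. Here I would use Proposition~\ref{1mineq}: a prime filter is minimal prime iff for each $x$ it contains exactly one of $x$ and $x^{\perp}$. Thus $Min(\mathfrak{A})=\bigcup_i h_m(x_i)$ says every minimal prime filter contains some $x_i$, equivalently (by \ref{1mineq}) no minimal prime filter contains all the $x_i^{\perp}$; by Proposition~\ref{mp} every prime filter contains a minimal prime, so in fact no prime filter contains $\{x_i^{\perp}\mid i\in I\}$, whence $\mathscr{F}(\{x_i^{\perp}\mid i\in I\})=A$ by Proposition~\ref{intprimfilt}\ref{intprimfilt2}. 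As before, a finite subset $x_{i_1}^{\perp},\dots,x_{i_n}^{\perp}$ already generates the improper filter, and then $\{h_m(x_{i_1}),\dots,h_m(x_{i_n})\}$ covers $Min(\mathfrak{A})$, since any minimal prime filter missing all the $x_{i_k}$ would contain all the $x_{i_k}^{\perp}$ by \ref{1mineq}, contradicting that they generate $A$.

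The routine verifications are the translations between ``covered by basic opens'' and ``the relevant generated filter is improper''; these are the steps I would present only briefly. The main obstacle is the $Min(\mathfrak{A})$ case: unlike $Max(\mathfrak{A})$, minimal prime filters are not simply the proper filters maximal under something, so the naive Zorn argument is unavailable, and one genuinely needs the coannihilator characterization of minimality (Proposition~\ref{1mineq}) to convert a cover by the $h_m(x_i)$ into a statement about the filter generated by the complements $x_i^{\perp}$. Once that translation is in place, compactness falls out of the finite-generation property of filters exactly as in the maximal case.
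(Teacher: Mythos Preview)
The paper does not actually prove this proposition; it is quoted from \cite{rasouli2018hull}. Your argument for $Max_h(\mathfrak{A})$ is correct and is the standard one. The argument for $Min_d(\mathfrak{A})$, however, has a genuine gap.

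The faulty step is the sentence ``by Proposition~\ref{mp} every prime filter contains a minimal prime, so in fact no prime filter contains $\{x_i^{\perp}\mid i\in I\}$.'' Proposition~\ref{mp} gives $\mathfrak{m}\subseteq\mathfrak{p}$, so from ``$\mathfrak{m}$ does not contain all $x_i^{\perp}$'' you cannot conclude the same for the larger filter $\mathfrak{p}$; the inclusion points the wrong way. In fact the conclusion you draw is false. In the paper's own Example~\ref{exa8} take $X=\{c,f\}$: then $h_m(c)\cup h_m(f)=Min(\mathfrak{A}_8)$, yet $c^{\perp}=\{f,1\}$, $f^{\perp}=\{c,e,1\}$, and $c^{\perp}\veebar f^{\perp}=\{a,c,d,e,f,1\}$ is a proper (indeed maximal) filter. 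So ``$\mathscr{F}(\bigcup_i x_i^{\perp})=A$'' need not hold, and your route through coannihilators breaks down.

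A correct argument stays on the ideal side. If $\{h_m(x_i)\}_{i\in I}$ covers $Min(\mathfrak{A})$ but no finite subfamily does, then $x_{i_1}\vee\cdots\vee x_{i_n}\neq 1$ for every finite subset (since $h_m(x)\cup h_m(y)=h_m(x\vee y)$ for primes), so the lattice ideal $I$ of $\ell(\mathfrak{A})$ generated by $\{x_i\}$ is proper. By the prime filter theorem there is a prime filter $\mathfrak{p}$ disjoint from $I$; any minimal prime $\mathfrak{m}\subseteq\mathfrak{p}$ (Proposition~\ref{mp}) is then also disjoint from $I$, hence contains no $x_i$, contradicting the cover. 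Here Proposition~\ref{mp} is used in the correct direction: passing to a \emph{smaller} filter preserves disjointness from $I$.
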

The following proposition characterizes the open sets of the prime spectrum of a residuated lattice w.r.t the dual hull-kernel topology.
\begin{proposition}\label{opensd}
  Let $\mathfrak{A}$ be a residuated lattice. The open sets of $Spec_{d}(\mathfrak{A})$ are precisely of the form $\{\mathfrak{p}\in Spec(\mathfrak{A})\mid \mathfrak{p}\cap X\neq \emptyset\}$, where $X$ is a subset of $A$.
\end{proposition}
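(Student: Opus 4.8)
The plan is to unwind the definition of the dual hull-kernel topology on $\Pi=Spec(\mathfrak{A})$. By construction, $Spec_{d}(\mathfrak{A})$ carries the topology generated by declaring $\mathcal{B}=\{h(x)\mid x\in A\}$ to be an open basis, where $h(x)=h_{Spec(\mathfrak{A})}(x)=\{\mathfrak{p}\in Spec(\mathfrak{A})\mid x\in \mathfrak{p}\}$; hence a subset of $Spec(\mathfrak{A})$ is open precisely when it is a union of members of $\mathcal{B}$.

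First I would record the elementary identity that, for any subset $X$ of $A$,
\[\{\mathfrak{p}\in Spec(\mathfrak{A})\mid \mathfrak{p}\cap X\neq \emptyset\}=\bigcup_{x\in X}h(x),\]
which is immediate since $\mathfrak{p}\cap X\neq\emptyset$ means exactly that $x\in\mathfrak{p}$ for some $x\in X$, i.e. $\mathfrak{p}\in h(x)$ for some $x\in X$. In particular the left-hand side is a union of basic open sets, hence open in $Spec_{d}(\mathfrak{A})$; this gives one of the two inclusions, and also covers the degenerate case $X=\emptyset$, where both sides equal $\emptyset$.

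Conversely, let $U$ be an arbitrary open set of $Spec_{d}(\mathfrak{A})$. Since $\mathcal{B}$ is an open basis, $U=\bigcup_{i\in I}h(x_{i})$ for some indexed family $\{x_{i}\}_{i\in I}\subseteq A$. Setting $X=\{x_{i}\mid i\in I\}$ and applying the identity above, $U=\bigcup_{x\in X}h(x)=\{\mathfrak{p}\in Spec(\mathfrak{A})\mid \mathfrak{p}\cap X\neq\emptyset\}$, which is exactly of the asserted form.

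There is no real obstacle here: the statement is a direct translation between ``membership in some basic open $h(x)$ with $x$ ranging over a set $X$'' and ``having nonempty intersection with $X$''. The only points deserving a word of care are the edge case $X=\emptyset$ and the fact that, by the very definition of the topology generated by a basis, every open set is a (possibly empty) union of basic opens — both of which are handled above.
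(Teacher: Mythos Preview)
Your argument is correct and follows the same route as the paper's proof: write an arbitrary open set of $Spec_{d}(\mathfrak{A})$ as a union $\bigcup_{x\in X}h(x)$ of basic opens and observe that this union equals $\{\mathfrak{p}\in Spec(\mathfrak{A})\mid \mathfrak{p}\cap X\neq\emptyset\}$. You simply spell out both inclusions and the edge case $X=\emptyset$ a bit more explicitly than the paper does.
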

\begin{proof}
  Let $U$ be an open set in $Spec_{d}(\mathfrak{A})$. So $U=\bigcup_{x\in X} h(x)$, for some $X\subseteq A$. It is clear that $\bigcup_{x\in X} h(x)=\{\mathfrak{p}\in Spec(\mathfrak{A})\mid \mathfrak{p}\cap X\neq \emptyset\}$. This holds the result.
\end{proof}
\begin{remark}\label{closd}
  Let $\mathfrak{A}$ be a residuated lattice. By Proposition \ref{opensd} follows that the closed sets of $Spec_{d}(\mathfrak{A})$ are precisely of the form $\{\mathfrak{p}\in Spec(\mathfrak{A})\mid \mathfrak{p}\cap X=\emptyset\}$, where $X$ is a subset of $A$.
\end{remark}

Let $\Pi$ be a collection of prime filters in a residuated lattice $\mathfrak{A}$. If $\pi$ is a subset of $\Pi$, its \textit{specialization} (\textit{generalization}) in $\Pi$, $\mathscr{S}_{\Pi}(\pi)$ ($\mathscr{G}_{\Pi}(\pi)$), is the set of all primes in $\Pi$, which contain (are contained in) some prime belonging to $\pi$; $\pi$ is said to be $\mathscr{S}_{\Pi}$-stable ($\mathscr{G}_{\Pi}$-stable) if and only if $\mathscr{S}_{\Pi}(\pi)=\pi$ ($\mathscr{G}_{\Pi}(\pi)=\pi$). If $\Pi$ is understood, it will be dropped. Notice that for any subset $B$ of $A$, $\bigcup_{b\in B}h(b)(\bigcup_{b\in B}d(b))$ is $\mathscr{S(G)}$-stable. The following theorem characterizes the closed sets of the hull-kernel topology.


\begin{theorem}\cite[Theorem 4.30]{rasouli2018hull}\label{closefalzai}
  Let $\mathfrak{A}$ be a residuated lattice and $\pi$ a subset of $Spec(\mathfrak{A})$. $\pi$ is closed under the hull-kernel topology if and only if it is closed under the patch topology and $\mathscr{S}$-stable.
\end{theorem}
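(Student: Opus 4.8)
The plan is to prove the two implications separately; the forward one is purely formal, and the reverse one contains the real argument. Throughout I write $k(\pi)=\bigcap\pi$ and, for $x\in A$, abbreviate $h(x)=h_{Spec(\mathfrak{A})}(x)$ and $d(x)=d_{Spec(\mathfrak{A})}(x)$. For the forward direction, suppose $\pi$ is closed in $\tau_h$. Since $\tau_p$ is generated by $\tau_h\cup\tau_d$, it refines $\tau_h$, so $\pi$ is $\tau_p$-closed. For $\mathscr{S}$-stability, observe that for a single prime $P$ the basic $\tau_h$-closed sets containing $P$ are exactly the $h(x)$ with $x\in P$, so the $\tau_h$-closure of $\{P\}$ equals $\bigcap_{x\in P}h(x)=\{Q\in Spec(\mathfrak{A})\mid P\subseteq Q\}=\mathscr{S}(\{P\})$. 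A closed set contains the closure of each of its points, whence $\mathscr{S}(\{P\})\subseteq\pi$ for every $P\in\pi$, and therefore $\mathscr{S}(\pi)=\bigcup_{P\in\pi}\mathscr{S}(\{P\})=\pi$.

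For the reverse direction, assume $\pi$ is $\tau_p$-closed and $\mathscr{S}$-stable; I claim $\pi=h(k(\pi))$, which is $\tau_h$-closed since $h(k(\pi))=\bigcap_{x\in k(\pi)}h(x)$. The inclusion $\pi\subseteq h(k(\pi))$ is immediate from $k(\pi)\subseteq P$ for each $P\in\pi$. Conversely, fix $P\in Spec(\mathfrak{A})$ with $k(\pi)\subseteq P$ and suppose, toward a contradiction, that $P\notin\pi$. No $Q\in\pi$ can satisfy $Q\subseteq P$, for otherwise $P\in\mathscr{S}(\{Q\})\subseteq\mathscr{S}(\pi)=\pi$; hence for each $Q\in\pi$ we may choose $a_Q\in Q\setminus P$. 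The sets $h(a_Q)$ $(Q\in\pi)$ are open in $\tau_d$, hence in $\tau_p$, and they cover $\pi$. Since $\pi$ is $\tau_p$-closed and the patch topology is compact, $\pi$ is $\tau_p$-compact, so $\pi\subseteq h(a_{Q_1})\cup\cdots\cup h(a_{Q_n})$ for finitely many $Q_1,\dots,Q_n$. As the members of $Spec(\mathfrak{A})$ are prime filters (and filters are upward closed), $h(a_{Q_1})\cup\cdots\cup h(a_{Q_n})=h(c)$ with $c:=a_{Q_1}\vee\cdots\vee a_{Q_n}$, so $c\in\bigcap\pi=k(\pi)\subseteq P$; then primeness of $P$ yields $a_{Q_i}\in P$ for some $i$, contradicting $a_{Q_i}\notin P$. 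Hence $k(\pi)\subseteq P$ forces $P\in\pi$, so $\pi=h(k(\pi))$ is $\tau_h$-closed.

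The step I expect to be the crux is the compactness input used in the reverse direction: one must know that a $\tau_p$-closed subset of $Spec(\mathfrak{A})$ is compact in $\tau_p$, equivalently that $Spec(\mathfrak{A})$ is compact in the patch topology. This is the residuated-lattice counterpart of the classical fact that the constructible (patch) topology of a spectral space is compact, and it is where the finitary nature of filter generation must be used (if $0\in\mathscr{F}(X)$ then $x_{i_1}\odot\cdots\odot x_{i_k}=0$ for finitely many elements of $X$); I would invoke it from \cite{rasouli2018hull}, the source of the theorem. The remaining ingredients — the identity $h(x)\cup h(y)=h(x\vee y)$, the computation of $\tau_h$-closures, and the description of $h(F)$ as $\bigcap_{x\in F}h(x)$ together with $\bigcap\{P\in Spec(\mathfrak{A})\mid F\subseteq P\}=\mathscr{F}(F)$ from Proposition~\ref{intprimfilt} — are routine.
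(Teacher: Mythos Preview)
The paper does not give its own proof of this statement: it is quoted verbatim as \cite[Theorem 4.30]{rasouli2018hull} and used as a black box, so there is nothing in the present paper to compare your argument against.

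That said, your argument is essentially the standard one and is correct. The forward direction is fine. In the reverse direction your reduction to $\pi=h(k(\pi))$ via $\mathscr{S}$-stability, the cover by the patch-open sets $h(a_Q)$, and the passage to a finite subcover followed by $h(a_{Q_1})\cup\cdots\cup h(a_{Q_n})=h(a_{Q_1}\vee\cdots\vee a_{Q_n})$ are all valid for prime filters. You have correctly isolated the one nontrivial external input: compactness of $Spec(\mathfrak{A})$ in the patch topology, which is exactly what must be imported from \cite{rasouli2018hull} (it is the residuated-lattice analogue of Hochster's theorem that the patch space of a spectral space is compact). With that granted, the proof goes through.
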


\section{Pure filters}\label{sec3}

In this section, the notion of pure filters of a residuated lattice is investigated, and some properties of them are extracted.
\begin{definition}\label{sigfildef}
Let $\mathfrak{A}$ be a residuated lattice. For any filter $F$ of $\mathfrak{A}$ we set $\sigma(F)=k\mathscr{G}h(F)$, and shall be called the \textit{sink} of $F$ (see Fig. \ref{figsigma}).
\end{definition}
\begin{figure}[h]

\centering

\tikzset{every picture/.style={line width=0.75pt}} 

\begin{tikzpicture}[x=0.75pt,y=0.75pt,yscale=-1,xscale=1]

\draw    (258.88,80.57) .. controls (288.8,49.88) and (370.87,50.55) .. (400.54,80.24) ;
\draw    (258.88,80.57) -- (330.2,270) ;
\draw    (400.54,80.24) -- (330.2,270) ;
\draw    (258.88,80.57) -- (329.86,150.3) ;
\draw    (400.54,80.24) -- (329.86,150.3) ;
\draw  [fill={rgb, 255:red, 0; green, 0; blue, 0 }  ,fill opacity=1 ] (328.33,70.4) .. controls (328.33,69.38) and (329.11,68.56) .. (330.06,68.56) .. controls (331.02,68.56) and (331.8,69.38) .. (331.8,70.4) .. controls (331.8,71.41) and (331.02,72.23) .. (330.06,72.23) .. controls (329.11,72.23) and (328.33,71.41) .. (328.33,70.4) -- cycle ;
\draw    (330.06,70.23) -- (360.75,99.75) ;
\draw    (330.06,70.4) -- (300.75,99.35) ;
\draw    (300.75,99.35) -- (330.35,191.35) ;
\draw    (360.75,99.75) -- (330.35,191.35) ;
\draw  [fill={rgb, 255:red, 0; green, 0; blue, 0 }  ,fill opacity=1 ] (328.13,150.3) .. controls (328.13,149.28) and (328.91,148.46) .. (329.86,148.46) .. controls (330.82,148.46) and (331.59,149.28) .. (331.59,150.3) .. controls (331.59,151.31) and (330.82,152.14) .. (329.86,152.14) .. controls (328.91,152.14) and (328.13,151.31) .. (328.13,150.3) -- cycle ;
\draw  [fill={rgb, 255:red, 0; green, 0; blue, 0 }  ,fill opacity=1 ] (328.35,190.35) .. controls (328.35,189.33) and (329.13,188.51) .. (330.08,188.51) .. controls (331.04,188.51) and (331.81,189.33) .. (331.81,190.35) .. controls (331.81,191.36) and (331.04,192.19) .. (330.08,192.19) .. controls (329.13,192.19) and (328.35,191.36) .. (328.35,190.35) -- cycle ;
\draw  [fill={rgb, 255:red, 0; green, 0; blue, 0 }  ,fill opacity=1 ] (328.73,270) .. controls (328.73,268.98) and (329.51,268.16) .. (330.46,268.16) .. controls (331.42,268.16) and (332.2,268.98) .. (332.2,270) .. controls (332.2,271.02) and (331.42,271.84) .. (330.46,271.84) .. controls (329.51,271.84) and (328.73,271.02) .. (328.73,270) -- cycle ;
\draw    (330.4,57.2) -- (490.85,57.6) ;
\draw    (330.46,270) -- (490.91,270.4) ;
\draw   (499.5,270) .. controls (504.17,269.98) and (506.49,267.64) .. (506.47,262.97) -- (506.04,173.08) .. controls (506.01,166.41) and (508.33,163.07) .. (513,163.04) .. controls (508.33,163.07) and (505.98,159.75) .. (505.95,153.08)(505.96,156.08) -- (505.54,65.97) .. controls (505.52,61.3) and (503.18,58.98) .. (498.51,59) ;
\draw  [dash pattern={on 4.5pt off 4.5pt}]  (200.18,57.17) -- (240.5,57.18) -- (330.4,57.2) ;
\draw  [dash pattern={on 4.5pt off 4.5pt}]  (199.64,148.43) -- (239.96,148.44) -- (329.86,148.46) ;
\draw   (189.5,58) .. controls (184.83,58.05) and (182.53,60.41) .. (182.58,65.08) -- (182.89,93.08) .. controls (182.96,99.75) and (180.67,103.11) .. (176,103.16) .. controls (180.67,103.11) and (183.04,106.41) .. (183.11,113.08)(183.08,110.08) -- (183.43,141.08) .. controls (183.48,145.75) and (185.84,148.05) .. (190.51,148) ;

\draw (331,160) node    {$F$};
\draw (341,70) node    {$\mathfrak{p}$};
\draw (332,201) node    {$D(\mathfrak{p})$};
\draw (331,284) node    {$\sigma(F)$};
\draw (540,163) node    {$\mathscr{G}h(F)$};
\draw (143,103) node    {$h(F)$};

\end{tikzpicture}
\caption{} \label{figsigma}
\end{figure}

The following proposition has a routine verification, and so its proof is left to the reader.
\begin{proposition}\label{sigmapro}
  Let $\mathfrak{A}$ be a residuated lattice. The following assertions hold:
\begin{enumerate}
\item  [$(1)$ \namedlabel{sigmapro1}{$(1)$}] $\sigma(F)$ is a filter of $\mathfrak{A}$, for any $F\in \mathscr{F}(\mathfrak{A})$;
\item  [$(2)$ \namedlabel{sigmapro2}{$(2)$}] $F\subseteq G$ implies $\sigma(F)\subseteq \sigma(G)$, for any $F,G\in \mathscr{F}(\mathfrak{A})$;
\item  [$(3)$ \namedlabel{sigmapro3}{$(3)$}] $\sigma(\mathfrak{p})\subseteq D(\mathfrak{p})$, for any $\mathfrak{p}\in Spec(\mathfrak{A})$;
\item  [$(4)$ \namedlabel{sigmapro4}{$(4)$}] $\sigma(\mathfrak{m})=D(\mathfrak{m})$, for any $\mathfrak{m}\in Max(\mathfrak{A})$;
\end{enumerate}
\end{proposition}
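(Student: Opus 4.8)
The plan is to deduce all four items from the elementary fact that $h$ and $k$ are inclusion-reversing while $\mathscr{G}$ is inclusion-preserving, together with the description $D(\mathfrak{p})=k\mathscr{G}(\mathfrak{p})$ supplied by Proposition~\ref{omegprop}\ref{omegprop2}. For $(1)$ I would observe that $\mathscr{G}h(F)$ is a collection of prime filters of $\mathfrak{A}$ (empty precisely when $F=A$, since prime filters are proper), so $\sigma(F)=k\mathscr{G}h(F)=\bigcap\mathscr{G}h(F)$ is an intersection of filters, hence itself a filter, the empty intersection being understood as $A$. For $(2)$, from $F\subseteq G$ one gets $h(G)\subseteq h(F)$, then $\mathscr{G}h(G)\subseteq\mathscr{G}h(F)$, then $k\mathscr{G}h(F)\subseteq k\mathscr{G}h(G)$, which is exactly $\sigma(F)\subseteq\sigma(G)$.

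For $(3)$, the point is that $\mathfrak{p}\in h(\mathfrak{p})$, whence $\mathscr{G}(\{\mathfrak{p}\})\subseteq\mathscr{G}h(\mathfrak{p})$; applying $k$ and using Proposition~\ref{omegprop}\ref{omegprop2} gives $\sigma(\mathfrak{p})=k\mathscr{G}h(\mathfrak{p})\subseteq k\mathscr{G}(\mathfrak{p})=D(\mathfrak{p})$. For $(4)$, the extra ingredient is that $h(\mathfrak{m})=\{\mathfrak{m}\}$ when $\mathfrak{m}$ is maximal: a prime filter containing $\mathfrak{m}$ is a proper filter containing the maximal proper filter $\mathfrak{m}$, hence equals $\mathfrak{m}$. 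Therefore $\mathscr{G}h(\mathfrak{m})=\mathscr{G}(\{\mathfrak{m}\})$ and so $\sigma(\mathfrak{m})=k\mathscr{G}(\mathfrak{m})=D(\mathfrak{m})$, again by Proposition~\ref{omegprop}\ref{omegprop2}.

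There is no genuine obstacle here --- which is why the statement is flagged as routine --- but the step meriting a moment's care is the two appearances of $D$ in $(3)$ and $(4)$: rather than unwinding the definition $D(\mathfrak{p})=\omega(\dot{\mathfrak{p}})$ directly, one should route the argument through the already-established identity $D(\mathfrak{p})=k\mathscr{G}(\mathfrak{p})$ of Proposition~\ref{omegprop}. One should also confirm the degenerate case $F=A$ in $(1)$ so that $\sigma$ is genuinely defined on all of $\mathscr{F}(\mathfrak{A})$.
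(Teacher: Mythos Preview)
Your proposal is correct and constitutes precisely the routine verification the paper has in mind; the paper itself omits the proof entirely, remarking only that it ``has a routine verification, and so its proof is left to the reader.'' Your use of the monotonicity of $h$, $\mathscr{G}$, $k$ together with the identity $D(\mathfrak{p})=k\mathscr{G}(\mathfrak{p})$ from Proposition~\ref{omegprop}\ref{omegprop2} is exactly the intended argument, and your attention to the degenerate case $F=A$ in (1) is appropriate.
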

  \begin{theorem}\label{sigmafequiv}
  Let $\mathfrak{A}$ be a residuated lattice and $F$ a filter of $\mathfrak{A}$. The following assertions hold:
  \begin{enumerate}
\item  [$(1)$ \namedlabel{sigmafequiv1}{$(1)$}] $\sigma(F)=k(\mathscr{G}h(F)\cap Min(\mathfrak{A}))$;
\item  [$(2)$ \namedlabel{sigmafequiv2}{$(2)$}] $~~~~~~=\bigcap \{D(\mathfrak{p})\mid \mathfrak{p}\in h(F)\}$;
\item  [$(3)$ \namedlabel{sigmafequiv3}{$(3)$}] $~~~~~~=\{a\in A\mid a^{\perp}\veebar F=A\}$;
\item  [$(4)$ \namedlabel{sigmafequiv4}{$(4)$}] $~~~~~~=\{a\in A\mid \neg b\in F,\textrm{~for~some}~b\in a^{\perp}\}$;
\item  [$(5)$ \namedlabel{sigmafequiv5}{$(5)$}] $~~~~~~=\bigcap \{D(\mathfrak{m})\mid \mathfrak{m}\in h_{M}(F)\}$;
\item  [$(6)$ \namedlabel{sigmafequiv6}{$(6)$}] $~~~~~~=\omega(I_F)$, where $I_F=\{a\in A\mid a^{\perp\perp}\veebar F=A\}$.
\end{enumerate}
\end{theorem}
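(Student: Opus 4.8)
The plan is to split the six descriptions into a \emph{topological} block $\{$Definition, (1), (2), (5)$\}$ and an \emph{algebraic} block $\{(3),(4),(6)\}$, to prove each block internally, and then to glue them through the single identity $(2)=(3)$. For the topological block, start from $\sigma(F)=k\mathscr{G}h(F)$ of Definition \ref{sigfildef}. Since $\mathscr{G}h(F)=\bigcup\{\mathscr{G}(\mathfrak{p})\mid \mathfrak{p}\in h(F)\}$, taking kernels gives $\sigma(F)=\bigcap\{k\mathscr{G}(\mathfrak{p})\mid \mathfrak{p}\in h(F)\}$, and $k\mathscr{G}(\mathfrak{p})=D(\mathfrak{p})$ by Proposition \ref{omegprop2}, which is (2). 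Using instead the equality $D(\mathfrak{p})=k(\mathscr{G}(\mathfrak{p})\cap Min(\mathfrak{A}))$ from the same proposition and pushing the intersection through the union yields (1); equivalently, (1) holds because $\mathscr{G}h(F)$ is a down-set in $Spec(\mathfrak{A})$ every point of which dominates a minimal prime (Proposition \ref{mp}), so the minimal primes inside $\mathscr{G}h(F)$ already determine the kernel. For (5), observe that every prime filter extends to a maximal one by Zorn's lemma, so each prime in $\mathscr{G}h(F)$ also lies below a maximal filter containing $F$; hence $\mathscr{G}h(F)$ is exactly the set of primes dominated by some member of $h_{M}(F)$, and the computation giving (2) applies verbatim with $h_{M}(F)$ in place of $h(F)$.

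The algebraic block rests on one elementary identity: for filters $G,F$ of $\mathfrak{A}$, one has $G\veebar F=A$ if and only if $\neg g\in F$ for some $g\in G$. Indeed, $G\veebar F=\mathscr{F}(G\cup F)=\{a\mid g\odot f\le a\text{ for some }g\in G,\ f\in F\}$ by Proposition \ref{genfilprop}, so $0\in G\veebar F$ means $g\odot f=0$ for some such $g,f$; by adjunction $g\odot f=0$ iff $f\le\neg g$, and since $F$ is upward closed this amounts to $\neg g\in F$ (conversely one may take $f=\neg g$). Applying this with $G=a^{\perp}$ gives $(3)\Leftrightarrow(4)$ at once. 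To link (2) and (3) I use the description $D(\mathfrak{p})=\{a\in A\mid a^{\perp}\nsubseteq\mathfrak{p}\}$, valid for prime $\mathfrak{p}$ (this is $D(\mathfrak{p})=\omega(\dot{\mathfrak{p}})$ read off with $\dot{\mathfrak{p}}$ the ideal $A\setminus\mathfrak{p}$ of $\ell(\mathfrak{A})$): if $a^{\perp}\veebar F=A$ then no prime filter containing $F$ can contain $a^{\perp}$, so $a\in\bigcap\{D(\mathfrak{p})\mid\mathfrak{p}\in h(F)\}$; conversely, if $a^{\perp}\veebar F$ is proper it lies in some prime $\mathfrak{p}\supseteq F$ with $a^{\perp}\subseteq\mathfrak{p}$, witnessing $a\notin D(\mathfrak{p})$.

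Finally I absorb (6) by showing $\omega(I_{F})=\{a\mid a^{\perp}\veebar F=A\}$, i.e. that it equals (3). Two observations suffice: $b\in b^{\perp\perp}$ for every $b$ (immediate, since $b\vee d=1$ for all $d\in b^{\perp}$), and $a^{\perp}$ satisfies $x\in a^{\perp}\Rightarrow x^{\perp\perp}\subseteq a^{\perp}$, i.e. $a^{\perp}$ is an $\alpha$-filter (if $x\in a^{\perp}$ then $a\in x^{\perp}$, hence every $c\in x^{\perp\perp}$ joins with $a$ to $1$). For the inclusion $\supseteq$: given $a^{\perp}\veebar F=A$, pick $b\in a^{\perp}$ with $\neg b\in F$; then $a\vee b=1$, and $b\in b^{\perp\perp}$ together with $\neg b\in F$ gives $b^{\perp\perp}\veebar F=A$, so $b\in I_{F}$ and $a\in\omega(I_{F})$. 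For $\subseteq$: given $a\in\omega(I_{F})$, pick $x\in I_{F}$ with $a\vee x=1$, and then $c\in x^{\perp\perp}$ with $\neg c\in F$; since $a\vee x=1$ means $x\in a^{\perp}$, the $\alpha$-filter property puts $c\in x^{\perp\perp}\subseteq a^{\perp}$, whence $a^{\perp}\veebar F=A$. (That $I_{F}$ is an ideal of $\ell(\mathfrak{A})$, so that $\omega(I_{F})$ is a genuine $\omega$-filter, is routine, the down-set property being immediate from antitonicity of $(-)^{\perp}$ applied twice.)

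The steps I expect to be most delicate are the bookkeeping ones rather than any conceptual hurdle: pinning down the identity $G\veebar F=A\iff(\exists g\in G)\ \neg g\in F$ precisely (it powers $(3),(4),(6)$), verifying that $a^{\perp}$ is an $\alpha$-filter, and keeping track that the description $D(\mathfrak{p})=\{a\mid a^{\perp}\nsubseteq\mathfrak{p}\}$ is invoked only for prime $\mathfrak{p}$. Everything else is a matter of unwinding the definitions of $\sigma$, $D$, and $\omega$ and citing Propositions \ref{omegprop}, \ref{mp}, and \ref{genfilprop}.
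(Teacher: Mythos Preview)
Your proposal is correct and follows essentially the same route as the paper's proof: the paper also treats (1), (2), (5) as immediate from the definition of $\sigma(F)=k\mathscr{G}h(F)$ together with Proposition~\ref{omegprop}\ref{omegprop2}, proves $(2)=(3)$ by exactly your contrapositive argument via a prime containing $a^{\perp}\veebar F$, and handles $(3)=(6)$ through the same identity $G\veebar F=A\Leftrightarrow(\exists g\in G)\,\neg g\in F$ (the paper writes this as ``$x\odot f=0$ for some $x\in a^{\perp},\,f\in F$''). Your organization into a topological and an algebraic block, and your making explicit both that identity and the fact that $x\in a^{\perp}$ implies $x^{\perp\perp}\subseteq a^{\perp}$, are tidier than the paper's terse treatment, but the substance is the same.
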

\begin{proof}
  \ref{sigmafequiv1} and \ref{sigmafequiv2} are obvious.
  \item [\ref{sigmafequiv3}:] Let $a\in \sigma(F)$. By absurdum assume that $a^{\perp}\veebar F\neq A$. So $a^{\perp}\subseteq \mathfrak{p}$, for some $\mathfrak{p}\in h(F)$. This yields that $a\notin D(\mathfrak{p})$; a contradiction. Conversely, let $a^{\perp}\veebar F=A$, for some $a\in A$, and $\mathfrak{p}\in h(F)$. So $b\odot f=0$, for some $b\in a^{\perp}$ and $f\in F$. This implies that $\neg b\in \mathfrak{p}$. Thus $b\notin \mathfrak{p}$, and so $a\in D(\mathfrak{p})$. This holds the result.
  \item [\ref{sigmafequiv4}:] There is nothing to prove.
  \item [\ref{sigmafequiv5}:] There is nothing to prove.
  \item [\ref{sigmafequiv6}:] Let $a\in \omega(I_F)$. So $a\in x^{\perp}$, for some $x\in I_F$. This implies that $A=x^{\perp\perp}\veebar F\subseteq a^{\perp}\veebar F$ and so $a\in \sigma(F)$. Conversely, let $a\in \sigma(F)$. So $x\odot f=0$, for some $x\in a^{\perp}$ and $f\in F$. This shows that $x\in I_F$ and so $a\in \omega(I_F)$.
\end{proof}
\begin{proposition}\label{primesigmad}
  Let $\mathfrak{A}$ be a residuated lattice. The following assertions hold:
\begin{enumerate}
\item  [$(1)$ \namedlabel{primesigmad1}{$(1)$}] $\sigma(F)\subseteq F$, for every $F\in \mathscr{F}(\mathfrak{A})$;
\item  [$(2)$ \namedlabel{primesigmad2}{$(2)$}] $\sigma(F\cap G)=\sigma(F)\cap \sigma(G)$, for any $F,G\in \mathscr{F}(\mathfrak{A})$;
\item  [$(3)$ \namedlabel{primesigmad3}{$(3)$}] $\veebar_{F\in \mathcal{F}}\sigma(F)\subseteq \sigma(\veebar \mathcal{F})$, for any $\mathcal{F}\subseteq \mathscr{F}(\mathfrak{A})$.
\end{enumerate}
\end{proposition}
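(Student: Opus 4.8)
The plan is to read off each clause from the equivalent descriptions of the sink collected in Theorem \ref{sigmafequiv}, reserving the topological form $\sigma(F)=k\mathscr{G}h(F)$ for the inclusions and the algebraic forms \ref{sigmafequiv3}, \ref{sigmafequiv4} whenever a short computation is more convenient. Monotonicity of $\sigma$ (Proposition \ref{sigmapro2}) will do most of the work for the last clause.

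For \ref{primesigmad1}, I would observe that every prime of $h(F)$ lies in its own generalization, so $h(F)\subseteq\mathscr{G}h(F)$ and hence $\sigma(F)=k\mathscr{G}h(F)\subseteq kh(F)$; but $kh(F)=\bigcap\{\mathfrak{p}\in Spec(\mathfrak{A})\mid F\subseteq\mathfrak{p}\}=\mathscr{F}(F)=F$ by Proposition \ref{intprimfilt2}, since $F$ is already a filter. Purely algebraically, one can instead invoke \ref{sigmafequiv4}: if $a\in\sigma(F)$ there is $b\in a^{\perp}$ with $\neg b\in F$, and then $b\vee a=1$ together with $(r_{1})$ gives $\neg b=(\neg b\odot b)\vee(\neg b\odot a)=\neg b\odot a\le a$, so $a\in F$ by upward closure.

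For \ref{primesigmad2}, the crucial identity is $h(F\cap G)=h(F)\cup h(G)$: the inclusion $\supseteq$ is immediate, while for $\subseteq$ one uses primeness---if $\mathfrak{p}\in h(F\cap G)$ contained neither $F$ nor $G$, choosing $f\in F\setminus\mathfrak{p}$ and $g\in G\setminus\mathfrak{p}$ would force $f\vee g\in(F\cap G)\setminus\mathfrak{p}$, a contradiction. Since generalization commutes with unions, $\mathscr{G}h(F\cap G)=\mathscr{G}h(F)\cup\mathscr{G}h(G)$, and applying $k=\bigcap$ turns this union into the intersection $\sigma(F)\cap\sigma(G)$. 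Alternatively, \ref{sigmafequiv3} reduces the statement to $a^{\perp}\veebar(F\cap G)=(a^{\perp}\veebar F)\cap(a^{\perp}\veebar G)$, which is just distributivity in the frame $\mathscr{F}(\mathfrak{A})$.

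Finally, \ref{primesigmad3} follows from monotonicity alone: $F\subseteq\veebar\mathcal{F}$ for each $F\in\mathcal{F}$, so $\sigma(F)\subseteq\sigma(\veebar\mathcal{F})$ by Proposition \ref{sigmapro2}; thus $\sigma(\veebar\mathcal{F})$ is a filter containing $\bigcup_{F\in\mathcal{F}}\sigma(F)$, hence contains the filter $\veebar_{F\in\mathcal{F}}\sigma(F)$ that this union generates. I do not anticipate a genuine obstacle; the only points needing care are the set-theoretic identity $h(F\cap G)=h(F)\cup h(G)$ and the compatibility of $\mathscr{G}$ and $k$ with unions, together with the (expected) fact that \ref{primesigmad3} yields only an inclusion and not an equality in general.
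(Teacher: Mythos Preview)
Your proposal is correct and, for each item, includes the paper's own argument among the alternatives you list: the paper proves \ref{primesigmad1} via $a\in D(\mathfrak{p})\subseteq\mathfrak{p}$ for all $\mathfrak{p}\in h(F)$ so $a\in kh(F)=F$ (your first route), proves \ref{primesigmad2} exactly by the distributivity $a^{\perp}\veebar(F\cap G)=(a^{\perp}\veebar F)\cap(a^{\perp}\veebar G)$ plus monotonicity for the reverse inclusion (your second route), and proves \ref{primesigmad3} by monotonicity alone. Your additional topological argument for \ref{primesigmad2} via $h(F\cap G)=h(F)\cup h(G)$ and your element-level check for \ref{primesigmad1} via $\neg b\le a$ are sound extras not in the paper.
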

\begin{proof}
  \item [\ref{primesigmad1}:] Let $F$ be a filter of $\mathfrak{A}$ and $a\in \sigma(F)$. Then for any $\mathfrak{p}\in h(F)$ we have $a\in D(\mathfrak{p})\subseteq \mathfrak{p}$. This states that $a\in kh(F)=F$.
  \item [\ref{primesigmad2}:] Let $F$ and $G$ be two filters of $\mathfrak{A}$ and $a\in \sigma(F)\cap \sigma(G)$. Since $a^{\perp}\veebar (F\cap G)=(a^{\perp}\veebar F)\cap (a^{\perp}\veebar G)=A$, so $a\in \sigma(F\cap G)$. The inverse is evident by Proposition \ref{sigmapro}\ref{sigmapro2}.
  \item [\ref{primesigmad3}:] It is evident by Proposition \ref{sigmapro}\ref{sigmapro2}.
\end{proof}


\begin{definition}\label{puredef}
  Let $\mathfrak{A}$ be a residuated lattice. A filter $F$ of $\mathfrak{A}$ is called \textit{pure} provided that $\sigma(F)=F$. The set of pure filters of $\mathfrak{A}$ is denoted by $\sigma(\mathfrak{A})$. It is obvious that $\{1\},A\in \sigma(\mathfrak{A})$.
\end{definition}
\begin{example}\label{purefilterexa}
Consider the residuated lattice $\mathfrak{A}_6$ from Example \ref{exa6}, the residuated lattice $\mathfrak{B}_6$ from Example \ref{exb6}, the residuated lattice $\mathfrak{C}_6$ from Example \ref{exc6}, and the residuated lattice $\mathfrak{A}_8$ from Example \ref{exa8}. The sets of their pure filters are presented in Table \ref{tafiex}.
\begin{table}[h]
\centering
\begin{tabular}{ccl}
\hline
                 & \multicolumn{2}{c}{Pure filters}                                       \\ \hline
$\mathfrak{A}_6$ & \multicolumn{2}{c}{$\{1\},A_6$} \\
$\mathfrak{B}_6$ & \multicolumn{2}{c}{$\{1\},\{a,c,1\},\{d,1\},B_6$} \\
$\mathfrak{C}_6$ & \multicolumn{2}{c}{$\{1\},C_6$} \\
$\mathfrak{A}_8$ & \multicolumn{2}{c}{$\{1\},A_8$} \\ \hline
\end{tabular}
\caption{The sets of pure filters of $\mathfrak{A}_6$, $\mathfrak{B}_6$, $\mathfrak{C}_{6}$, and $\mathfrak{A}_8$}
\label{tafiex}
\end{table}
\end{example}

The following theorem states that the pure filters in a residuated lattice are generalizations of direct summands.
\begin{theorem}\label{fbetasig}
  Let $\mathfrak{A}$ be a residuated lattice. Then any direct summand of $\mathfrak{A}$ is a pure filter of $\mathfrak{A}$.
\end{theorem}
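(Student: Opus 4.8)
The plan is to establish the two inclusions $\sigma(F)\subseteq F$ and $F\subseteq \sigma(F)$. The first is for free: Proposition \ref{primesigmad}\ref{primesigmad1} already gives $\sigma(F)\subseteq F$ for \emph{every} filter $F$, so all the content sits in the reverse inclusion. For that, I would abandon the topological definition of the sink and work instead with the algebraic description provided by Theorem \ref{sigmafequiv}\ref{sigmafequiv3}, namely $\sigma(F)=\{a\in A\mid a^{\perp}\veebar F=A\}$. Thus it suffices to prove that every element $a$ of a direct summand $F$ satisfies $a^{\perp}\veebar F=A$.

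Here I would invoke Proposition \ref{b9fxpro}: since $F$ is a direct summand, $F\veebar F^{\perp}=A$. Now fix $a\in F$. For any $x\in F^{\perp}$ we have, by definition of the coannihilator, $x\vee a=1$ (because $a\in F$), that is, $x\in a^{\perp}$; hence $F^{\perp}\subseteq a^{\perp}$, using only the elementary monotonicity of $(\cdot)^{\perp}$. Consequently $A=F\veebar F^{\perp}\subseteq F\veebar a^{\perp}\subseteq A$, so $a^{\perp}\veebar F=A$ and therefore $a\in\sigma(F)$. This yields $F\subseteq\sigma(F)$, and combined with the trivial inclusion we get $\sigma(F)=F$, i.e.\ $F$ is pure.

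There is essentially no genuine obstacle once the correct characterization of $\sigma$ is singled out; the only points to be careful about are to cite Theorem \ref{sigmafequiv}\ref{sigmafequiv3} rather than Definition \ref{sigfildef}, and to record the observation $F^{\perp}\subseteq a^{\perp}$ for $a\in F$. As an alternative route one could pass through the representation $F=\mathscr{F}(e)$ with $e\in\beta(\mathfrak{A})$ coming from Proposition \ref{b9fxpro}\ref{b9fxpro3}: then $e\in F$ and, by Proposition \ref{boleleprop}\ref{boleleprop2}--\ref{boleleprop3}, $\neg e=\ddot e$ with $e\vee\neg e=1$, so $\neg e\in e^{\perp}\subseteq a^{\perp}$ for every $a\in F$; hence $a^{\perp}\veebar F$ contains both $e$ and $\neg e$, and therefore contains $e\odot\neg e=e\wedge\neg e=0$ by Proposition \ref{boleleprop}\ref{boleleprop7}, forcing $a^{\perp}\veebar F=A$ and again $a\in\sigma(F)$.
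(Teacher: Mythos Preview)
Your proof is correct and essentially the same as the paper's. The paper writes $F=\mathscr{F}(e)$ with $e\in\beta(\mathfrak{A})$ (Proposition~\ref{b9fxpro}\ref{b9fxpro3}) and then, for $a\in F$, concludes $A=e^{\perp}\veebar\mathscr{F}(e)\subseteq a^{\perp}\veebar F$; since $F^{\perp}=e^{\perp}$ when $F=\mathscr{F}(e)$, this is exactly your main argument phrased through condition~\ref{b9fxpro3} instead of~\ref{b9fxpro2}, and your ``alternative route'' is the paper's proof verbatim.
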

\begin{proof}
Let $F$ be a direct summand of $\mathfrak{A}$. By Proposition \ref{b9fxpro} follows that $F=\mathscr{F}(e)$, for some $e\in \beta(\mathfrak{A})$. Let $a\in F$. Thus $A=e^{\perp}\veebar \mathscr{F}(e)\subseteq a^{\perp}\veebar F$.
\end{proof}

The following theorem characterizes pure filters of a residuated lattice in terms of flat residuated morphisms, inspired by the one obtained for unitary rings by \citet[Proposition 2]{borceux1983algebra}.
\begin{theorem}\label{flatpurethe}
  Let $\mathfrak{A}$ be a residuated lattice and $F$ a filter of $\mathfrak{A}$. The following assertions are equivalent:
  \begin{enumerate}
\item  [$(1)$ \namedlabel{flatpurethe1}{$(1)$}] $\pi_{F}$ is flat;
\item  [$(2)$ \namedlabel{flatpurethe2}{$(2)$}] $F$ is pure.
\end{enumerate}
\end{theorem}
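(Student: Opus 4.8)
The plan is to prove the equivalence by unravelling both conditions into the single combinatorial statement about filters and making them meet in the middle. Recall from Proposition~\ref{canonflat} that $\pi_F$ is flat if and only if $(G\veebar F:a)\subseteq (G:a)\veebar F$ for every filter $G$ and every $a\in A$, and recall from Theorem~\ref{sigmafequiv}\ref{sigmafequiv3} that $\sigma(F)=\{a\in A\mid a^{\perp}\veebar F=A\}$, so that $F$ is pure precisely when $a\in F$ forces $a^{\perp}\veebar F=A$. So the first step is to rewrite purity: using Proposition~\ref{genfilprop}\ref{genfilprop1} and the description of $\veebar$ as the filter generated by the union, $a^{\perp}\veebar F=A$ says exactly that $b\odot f=0$ for some $b\in a^{\perp}$ and $f\in F$, i.e. (since $b\odot a=0$ already, as $b\in a^{\perp}$ means $b\vee a=1$... — here I must be careful which annihilator is meant; $a^{\perp}=(\{1\}:a)=\{x\mid x\vee a=1\}$).

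First I would prove $(1)\Rightarrow(2)$. Assume $\pi_F$ is flat and take $a\in F$; I want $a^{\perp}\veebar F=A$, equivalently $0\in a^{\perp}\veebar F$. Apply the flatness inequality of Proposition~\ref{canonflat}\ref{canonflat2} with $G=a^{\perp}$: we get $(a^{\perp}\veebar F:a)\subseteq (a^{\perp}:a)\veebar F$. Now the left-hand side contains $0$, because $a^{\perp}\veebar F$ contains $a$ (as $a\in F$) together with all of $a^{\perp}$, and $a\vee x=1$ for $x\in a^\perp$, so by \ref{res1} and the filter axioms $0=$ (a suitable product) lies in $a^\perp\veebar F$; more directly $A=a^{\perp}\veebar\mathscr{F}(a)\subseteq a^{\perp}\veebar F$ already, hence $(a^\perp\veebar F:a)=A\ni 0$. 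Then $0\in(a^{\perp}:a)\veebar F$. But $(a^{\perp}:a)=\{x\mid x\vee a\in a^{\perp}\}$ and one checks $x\vee a\in a^\perp$ means $(x\vee a)\vee a=1$, i.e. $x\vee a=1$, so $(a^{\perp}:a)=a^{\perp}$. Therefore $0\in a^{\perp}\veebar F$, i.e. $a^{\perp}\veebar F=A$, so $a\in\sigma(F)$ and $F\subseteq\sigma(F)$; combined with Proposition~\ref{primesigmad}\ref{primesigmad1} this gives $\sigma(F)=F$.

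For $(2)\Rightarrow(1)$, assume $F$ pure and let $G$ be any filter, $a\in A$, $c\in(G\veebar F:a)$; I must show $c\in(G:a)\veebar F$. By Proposition~\ref{genfilprop}\ref{genfilprop1}, $c\vee a\in G\veebar F$ means $g\odot f\odot(c\vee a)^{0}\le\cdots$ — more precisely there are $g\in G$, $f\in F$, $n$ with $g\odot f\le c\vee a$, but since we are working with $\vee$ rather than $\odot$ in the coannihilator it is cleaner to use that $G\veebar F=\{x\mid g\odot f\le x\text{ for some }g\in G, f\in F\}$ and hence $g\odot f\le c\vee a$. The idea is: $f\in F$, and by purity $f^{\perp}\veebar F=A$, so there is $t\in f^{\perp}$ with $\neg t\in F$ (Theorem~\ref{sigmafequiv}\ref{sigmafequiv4}); then split $c$ along the pair $t,\neg t$, showing $c\vee(t\text{-part})\in G:a$ while the $\neg t$-part lies in $F$. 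The main obstacle — and the step I would spend the most care on — is exactly this splitting: manufacturing from $g\odot f\le c\vee a$ and $t\vee f=1$ an element of $(G:a)$ and an element of $F$ whose join (or a join above which $c$ sits) recovers $c$, using the distributivity laws \ref{res1}, \ref{res2}, the adjunction, and Proposition~\ref{genfilprop}\ref{genfilprop3}--\ref{genfilprop4}; everything else is bookkeeping with the already-established identities $(a^{\perp}:a)=a^{\perp}$ and the generation formula. Once that inclusion is in hand, Proposition~\ref{canonflat} delivers flatness of $\pi_F$, completing the cycle.
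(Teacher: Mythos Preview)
Your approach is essentially the paper's, but one step in the $(1)\Rightarrow(2)$ direction is wrong and worth flagging. You assert ``more directly $A=a^{\perp}\veebar\mathscr{F}(a)\subseteq a^{\perp}\veebar F$''; this is false in general (in $\mathfrak{A}_6$ of Example~\ref{exa6} with $a=d$ one has $d^{\perp}=\{1\}$, so $d^{\perp}\veebar\mathscr{F}(d)=\{d,1\}\neq A_6$), and if it \emph{were} true it would already give $a^{\perp}\veebar F=A$ without using flatness, making the argument circular. Your \emph{first} reason is the correct one: $0\in(a^{\perp}\veebar F:a)$ simply because $0\vee a=a\in F\subseteq a^{\perp}\veebar F$; then flatness and $(a^{\perp}:a)=a^{\perp}$ finish it. The paper streamlines this direction by taking $G=\{1\}$ rather than $G=a^{\perp}$: for $a\in F$ one has $(F:a)=A$ immediately, and flatness then gives $A\subseteq(\{1\}:a)\veebar F=a^{\perp}\veebar F$.

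For $(2)\Rightarrow(1)$ your plan matches the paper exactly, and the ``splitting'' you anticipate is short. From $g\odot f\leq c\vee a$ and purity applied to $f\in F$ one gets $s\in f^{\perp}$ and $t\in F$ with $s\odot t=0$. Since $s\vee f=1$, \ref{res2} yields $s\vee g\leq s\vee(g\odot f)\leq (s\vee c)\vee a$, so $s\vee c\in(G:a)$ because $s\vee g\in G$. Again by \ref{res2}, $c=c\vee(s\odot t)\geq(c\vee s)\odot(c\vee t)$, with $c\vee s\in(G:a)$ and $c\vee t\in F$; hence $c\in(G:a)\veebar F$. That is all the ``careful step'' requires.
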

\begin{proof}
  \item [\ref{flatpurethe1}$\Rightarrow$\ref{flatpurethe2}:] Following Proposition \ref{canonflat}, it proves by taking $G=\{1\}$.
  \item [\ref{flatpurethe2}$\Rightarrow$\ref{flatpurethe1}:] Let $G$ be a filter of $\mathfrak{A}$ and $a\in F$. Consider $x\in (G\veebar F:a)$. So there exist $g\in G$ and $f\in F$ such that $g\odot f\leq x\vee a$. By hypothesis there exist $s\in f^{\perp}$ and $t\in F$ such that $s\odot t=0$. By \ref{res2} we have
      \begin{equation}\label{eq1}
        s\vee g\leq s\vee (g\odot f)\leq s\vee (x\vee a).
      \end{equation}
      Also, we
      \begin{equation}\label{eq2}
        x=x\vee (s\odot t)\geq (x\vee s)\odot (x\vee s)\geq (x\vee t)\odot s.
      \end{equation}
      Applying \ref{eq1} and \ref{eq2} implies that $x\in (G:a)\veebar F$. This holds the result.
\end{proof}

  \citet[p. 291]{de1983projectivity} extended the notion of the support of a module to an ideal of a unitary commutative ring.  This generalization allows us to define a similar notion for residuated lattices. Let $F$ be a filter of a residuated lattice $\mathfrak{A}$. The set $\bigcup_{f\in F}h(f^{\perp})$ is called \textit{the support of $F$} and denoted by $Supp(F)$. It is obvious that $d(F)\subseteq Supp(F)$, and $F=\{1\}$ if and only if $Supp(F)=\emptyset$. Also, for any $a\in A$, $Supp(\mathscr{F}(a))=h(a^{\perp})$, and so $Supp(\mathscr{F}(a))$ is a closed set of $Spec_{h}(\mathfrak{A})$. The next theorem characterizes pure filters, inspired by the one obtained for unitary commutative rings by \citet[Proposition 1.5]{de1983projectivity}.
\begin{theorem}\label{pureequalsupport}
  Let $\mathfrak{A}$ be a residuated lattice and $F$ a filter of $\mathfrak{A}$. The following assertions are equivalent:
  \begin{enumerate}
\item  [$(1)$ \namedlabel{pureequalsupport1}{$(1)$}] $d(F)=Supp(F)$;
\item  [$(2)$ \namedlabel{pureequalsupport2}{$(2)$}] $F$ is pure.
\end{enumerate}
\end{theorem}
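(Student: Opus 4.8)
The plan is to prove the equivalence $d(F) = Supp(F) \iff F$ is pure by characterizing membership in each side point-by-point, using Theorem \ref{sigmafequiv}\ref{sigmafequiv3}, which says $\sigma(F) = \{a \in A \mid a^{\perp} \veebar F = A\}$. Since $\sigma(F) \subseteq F$ always holds by Proposition \ref{primesigmad}\ref{primesigmad1}, purity of $F$ is equivalent to the reverse inclusion $F \subseteq \sigma(F)$, i.e. to the statement that $a \in F$ implies $a^{\perp} \veebar F = A$. First I would recast both conditions as statements about prime filters: $d(F) = Supp(F)$ means $\mathfrak{p} \notin h(F) \iff \mathfrak{p} \in h(f^{\perp})$ for some $f \in F$, i.e. for every prime $\mathfrak{p}$, $F \nsubseteq \mathfrak{p}$ if and only if $f^{\perp} \subseteq \mathfrak{p}$ for some $f \in F$. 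Note that the inclusion $d(F) \subseteq Supp(F)$ is already given in the paper, so the real content of (1) is $Supp(F) \subseteq d(F)$, namely that $f^{\perp} \subseteq \mathfrak{p}$ for some $f \in F$ forces $F \nsubseteq \mathfrak{p}$.

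For the direction \ref{pureequalsupport2} $\Rightarrow$ \ref{pureequalsupport1}, suppose $F$ is pure and take $\mathfrak{p} \in Supp(F)$, so $f^{\perp} \subseteq \mathfrak{p}$ for some $f \in F$. Since $f \in F = \sigma(F)$, Theorem \ref{sigmafequiv}\ref{sigmafequiv3} gives $f^{\perp} \veebar F = A$, hence $0 \in f^{\perp} \veebar F$; by Proposition \ref{genfilprop}\ref{genfilprop1} there are $s \in f^{\perp}$ and $g \in F$ (and $n \in \mathbb{N}$) with $s \odot g^n = 0 \leq 0$. If $F \subseteq \mathfrak{p}$, then $g \in \mathfrak{p}$, and also $s \in f^{\perp} \subseteq \mathfrak{p}$, so $s \odot g^n = 0 \in \mathfrak{p}$, contradicting that $\mathfrak{p}$ is proper. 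Hence $F \nsubseteq \mathfrak{p}$, i.e. $\mathfrak{p} \in d(F)$, giving $Supp(F) \subseteq d(F)$ and thus equality.

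For \ref{pureequalsupport1} $\Rightarrow$ \ref{pureequalsupport2}, assume $d(F) = Supp(F)$ and let $a \in F$; I must show $a^{\perp} \veebar F = A$. Suppose not; then $a^{\perp} \veebar F$ is a proper filter, so by Zorn (as recalled in the text) it is contained in a prime filter $\mathfrak{p}$. Then $F \subseteq \mathfrak{p}$, so $\mathfrak{p} \notin d(F) = Supp(F)$; but also $a^{\perp} \subseteq \mathfrak{p}$ with $a \in F$, which by definition of $Supp(F)$ puts $\mathfrak{p} \in h(a^{\perp}) \subseteq Supp(F)$ — a contradiction. Hence $a^{\perp} \veebar F = A$ for all $a \in F$, so $F \subseteq \sigma(F)$, and combined with Proposition \ref{primesigmad}\ref{primesigmad1} this yields $\sigma(F) = F$, i.e. $F$ is pure.

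The only delicate point is the bookkeeping in the first direction: extracting from $f^{\perp} \veebar F = A$ concrete elements witnessing $0$ via Proposition \ref{genfilprop}\ref{genfilprop1}, and noting $s \in f^{\perp} \subseteq \mathfrak{p}$ together with $g^n \in \mathfrak{p}$ (a filter is closed under $\odot$) forces $0 \in \mathfrak{p}$. This is routine once one keeps track of the generated-filter description, so I expect no genuine obstacle; the proof is essentially a translation between the $\veebar$-description of $\sigma$ and the prime-filter description of $Supp$.
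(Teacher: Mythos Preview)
Your proof is correct and follows essentially the same route as the paper: both directions hinge on the description $\sigma(F)=\{a\in A\mid a^{\perp}\veebar F=A\}$ and on picking a prime filter containing $a^{\perp}\veebar F$ when it is proper. The only cosmetic difference is that in \ref{pureequalsupport2}$\Rightarrow$\ref{pureequalsupport1} you extract explicit witnesses $s,g$ via Proposition~\ref{genfilprop}\ref{genfilprop1}, whereas the paper simply uses that a filter containing both $f^{\perp}$ and $F$ already contains their join $f^{\perp}\veebar F=A$.
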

\begin{proof}
\item [\ref{pureequalsupport1}$\Rightarrow$\ref{pureequalsupport2}:] Let $f\in F$. If $f^{\perp}\veebar F\subsetneq A$, then $f^{\perp}\subseteq \mathfrak{p}$, for some $\mathfrak{p}\in Spec(\mathfrak{A})$. This states that $\mathfrak{p}\in d(F)$; a contradiction
\item [\ref{pureequalsupport2}$\Rightarrow$\ref{pureequalsupport1}:] Let $\mathfrak{p}\in Supp(F)$. So there exists $f\in F$ such that $f^{\perp}\subseteq \mathfrak{p}$. This implies that $F\nsubseteq \mathfrak{p}$ and this shows the equality.
\end{proof}
\begin{theorem}
  Let $F$ be a pure filter of a residuated lattice $\mathfrak{A}$. Then $F=\{a\in A\mid d(a)\subseteq Supp(F)\}$.
\end{theorem}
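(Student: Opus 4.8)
The plan is to show the two inclusions between $F$ and the set $G := \{a\in A\mid d(a)\subseteq Supp(F)\}$, leaning on the characterization of purity already established in Theorem~\ref{pureequalsupport}, namely that $F$ pure is equivalent to $d(F)=Supp(F)$, together with the defining description $\sigma(F)=\{a\in A\mid a^{\perp}\veebar F=A\}$ from Theorem~\ref{sigmafequiv}\ref{sigmafequiv3}.

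First I would prove $F\subseteq G$. For $a\in F$ we certainly have $d(a)\subseteq d(F)$ by antitonicity of $h$, and since $F$ is pure, $d(F)=Supp(F)$; hence $d(a)\subseteq Supp(F)$, so $a\in G$. (This direction does not even use the full strength of purity beyond $d(F)\subseteq Supp(F)$, which is the easy inclusion $d(F)\subseteq Supp(F)$ valid for every filter.)

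The substantive direction is $G\subseteq F$. Take $a\in A$ with $d(a)\subseteq Supp(F)=\bigcup_{f\in F}h(f^{\perp})$. I want to conclude $a\in F$. Since $F$ is pure, $F=\sigma(F)$, so it suffices to show $a^{\perp}\veebar F=A$, i.e. $0\in a^{\perp}\veebar F$, i.e. there are $b\in a^{\perp}$ and $f\in F$ with $b\odot f=0$ (equivalently $\neg b\in F$ for some $b\in a^{\perp}$, by Theorem~\ref{sigmafequiv}\ref{sigmafequiv4}). Suppose not: then $a^{\perp}\veebar F\neq A$ is a proper filter, so by Proposition~\ref{intprimfilt}\ref{intprimfilt1} (or Zorn) it is contained in some prime filter $\mathfrak{p}$, and then $\mathfrak{p}\in h(F)$, i.e. $\mathfrak{p}\notin d(F)$. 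On the other hand $a^{\perp}\subseteq \mathfrak{p}$ forces $a\notin\mathfrak{p}$ — because $x\odot\neg x=0$ cannot lie in a proper filter, $a$ and any element of $a^{\perp}$ cannot both be in $\mathfrak{p}$ unless... more carefully: if $a\in\mathfrak{p}$ and $b\in a^{\perp}\subseteq\mathfrak{p}$ then $a\vee b\in\mathfrak{p}$, but I actually need $a\notin \mathfrak p$ from $a^\perp\subseteq\mathfrak p$; recall $a^{\perp}=\{x\mid x\vee a=1\}$ so $a^{\perp}\subseteq \mathfrak p$ together with primeness gives, for each $x\in a^\perp$, that $x\vee a=1\in\mathfrak p$ hence $x\in\mathfrak p$ or $a\in\mathfrak p$ — this does not immediately yield $a\notin\mathfrak p$. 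The correct route is: $\mathfrak{p}\in Spec(\mathfrak{A})$ contains a minimal prime $\mathfrak{q}$ (Proposition~\ref{mp}); a minimal prime contains exactly one of $x,x^{\perp}$ for each $x$ (Proposition~\ref{1mineq}), and $a^{\perp}\subseteq\mathfrak{p}$; one shows $a\notin\mathfrak q$, hence $a\notin\mathfrak p\in h(F)$ means $\mathfrak p\notin d(a)$, while $d(a)\subseteq Supp(F)$ and $\mathfrak p\in Supp(F)=\bigcup_{f\in F}h(f^\perp)$ would give some $f\in F$ with $f^{\perp}\subseteq\mathfrak p$; but $f\in F\subseteq\mathfrak p$ forces (again via the minimal prime $\mathfrak q\subseteq\mathfrak p$ and Proposition~\ref{1mineq}, since $f\in\mathfrak q$ implies $f^{\perp}\not\subseteq\mathfrak q$, hence $f^\perp\not\subseteq\mathfrak p$) a contradiction. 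So $a^{\perp}\veebar F = A$, whence $a\in\sigma(F)=F$.

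The main obstacle is bookkeeping with the coannihilator $a^{\perp}$ inside prime filters: one must be careful that $a^{\perp}\subseteq\mathfrak p$ is incompatible with $f^{\perp}\subseteq\mathfrak p$ for $f\in F$ when $\mathfrak p\supseteq F$, and the clean way to force these incompatibilities is to pass to a minimal prime below $\mathfrak p$ and invoke Proposition~\ref{1mineq}. Alternatively — and this is probably the slicker writeup — I would avoid the contrapositive entirely: show directly that $d(a)\subseteq Supp(F)$ implies $a\in\sigma(F)$ by the identity $Supp(F)=\bigcup_{f\in F}h(f^{\perp})$ and $d(a)=\bigcup\{d(b^{\perp})\cap\cdots\}$-type manipulations, together with compactness of the relevant sets; but the prime-filter argument above is the most transparent and uses only results already in the excerpt. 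Either way the proof is short once the purity hypothesis is converted, via Theorems~\ref{sigmafequiv} and \ref{pureequalsupport}, into the statements $F=\sigma(F)$ and $d(F)=Supp(F)$.
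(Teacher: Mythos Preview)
Your inclusion $F\subseteq G$ is fine. The reverse inclusion, however, does not go through as written, and the detour via $\sigma(F)$ is both unnecessary and the source of the trouble.

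The specific errors: you pass to a minimal prime $\mathfrak q\subseteq\mathfrak p$ and then repeatedly transport information the wrong way along this inclusion. From $a^{\perp}\subseteq\mathfrak p$ you cannot deduce $a^{\perp}\subseteq\mathfrak q$, so Proposition~\ref{1mineq} gives no information about whether $a\in\mathfrak q$. Even granting $a\notin\mathfrak q$, this does \emph{not} give $a\notin\mathfrak p$ (the inclusion $\mathfrak q\subseteq\mathfrak p$ goes the other way). Likewise, $f^{\perp}\not\subseteq\mathfrak q$ does not imply $f^{\perp}\not\subseteq\mathfrak p$. Every step in your minimal-prime paragraph is an instance of this same reversed monotonicity, so no contradiction is ever reached. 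In fact the situation ``$a\in\mathfrak p$ and $a^{\perp}\subseteq\mathfrak p$'' that you are trying to rule out is perfectly consistent (take $a=1$), so the strategy of deriving a contradiction from $a^{\perp}\veebar F\subseteq\mathfrak p$ alone cannot succeed.

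The fix is to use the \emph{other} consequence of purity that you already noted but never deployed in this direction: $Supp(F)=d(F)$ (Theorem~\ref{pureequalsupport}). Then $d(a)\subseteq Supp(F)=d(F)$ says precisely that every prime containing $F$ contains $a$, i.e.\ $h(F)\subseteq h(a)$; hence $a\in kh(F)=F$ by Proposition~\ref{intprimfilt}\ref{intprimfilt2}. This is the paper's argument, and it is a two-line affair with no need for $\sigma$, coannihilators, or minimal primes.
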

\begin{proof}
Take $a\in A$ such that $d(a)\subseteq Supp(F)$. Suppose there exists $P\in h(F)$ such that $a\notin F$. By Theorem \ref{pureequalsupport} follows that $P\in d(F)$; a contradiction. So the result holds due to Proposition \ref{intprimfilt}.
\end{proof}

By Theorem \ref{closefalzai} follows that the closed sets of $Spec_{h}(\mathfrak{A})$ are $\mathscr{S}$-stable. It may arise a question, are there open $\mathscr{S}$-stable subsets of $Spec_{h}(\mathfrak{A})$, and, if so, can we characterize them by their determining filters? For commutative rings with unity \citep{de1983projectivity}, bounded distributive lattices \citep{al1990topological}, and MV-algebras \citep{georgescu1997pure,belluce2000stable}, this has given rise to the notion of pure ideals. Also, for residuated lattices \citep[Theorem 61]{bucsneag2012stable}, it is shown that this has created the notion of pure filters.
\begin{remark}
 For MV-algebras \citep{belluce2000stable}, and residuated lattices \citep{bucsneag2012stable}, $\mathscr{S}$-stable subsets of the hull-kernel topology is called \textit{stable under ascent}.
\end{remark}

The following theorem gives a criteria for open $\mathscr{S}$-stable subsets of the spectrum of a residuated lattice by means of pure filters in a different way of \citet[Theorem 61]{bucsneag2012stable}.
\begin{theorem}\label{purestable}
Let $\mathfrak{A}$ be a residuated lattice and $F$ a filter of $\mathfrak{A}$. The following assertions are equivalent:
  \begin{enumerate}
\item  [$(1)$ \namedlabel{purestable1}{$(1)$}] $d(F)$ is $\mathscr{S}$-stable;
\item  [$(2)$ \namedlabel{purestable2}{$(2)$}] $F$ is pure.
\end{enumerate}
\end{theorem}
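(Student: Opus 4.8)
The plan is to reduce the statement to the order-theoretic identity $\sigma(F)=k\mathscr{G}h(F)$ recorded in Definition \ref{sigfildef}, together with the inclusion $\sigma(F)\subseteq F$ already proved in Proposition \ref{primesigmad}\ref{primesigmad1}. The underlying observation is that ``$d(F)$ is $\mathscr{S}$-stable'' is, after unwinding, the same as ``$h(F)$ is closed under generalization'', i.e. $\mathscr{G}h(F)=h(F)$, which in turn says precisely that $F\subseteq\bigcap_{\mathfrak{q}\in\mathscr{G}h(F)}\mathfrak{q}=k\mathscr{G}h(F)=\sigma(F)$; once this dictionary is set up, both implications are immediate.

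For $(2)\Rightarrow(1)$ I would assume $F=\sigma(F)=k\mathscr{G}h(F)$ and check $\mathscr{S}(d(F))\subseteq d(F)$ (the reverse inclusion being automatic). So take a prime $\mathfrak{q}$ with $\mathfrak{p}\subseteq\mathfrak{q}$ for some $\mathfrak{p}\in d(F)$; if $\mathfrak{q}\notin d(F)$ then $F\subseteq\mathfrak{q}$, hence $\mathfrak{q}\in h(F)$ and therefore $\mathfrak{p}\in\mathscr{G}h(F)$, which forces $F=k\mathscr{G}h(F)\subseteq\mathfrak{p}$ and contradicts $\mathfrak{p}\in d(F)$.

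For $(1)\Rightarrow(2)$, by Proposition \ref{primesigmad}\ref{primesigmad1} it remains to prove $F\subseteq\sigma(F)=k\mathscr{G}h(F)$. I would fix $a\in F$ and an arbitrary $\mathfrak{q}\in\mathscr{G}h(F)$, say $\mathfrak{q}\subseteq\mathfrak{p}$ with $\mathfrak{p}\in h(F)$. Were $F\nsubseteq\mathfrak{q}$, i.e. $\mathfrak{q}\in d(F)$, then $\mathscr{S}$-stability would give $\mathfrak{p}\in\mathscr{S}(d(F))=d(F)$, contradicting $\mathfrak{p}\in h(F)$; hence $F\subseteq\mathfrak{q}$ and in particular $a\in\mathfrak{q}$. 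As $\mathfrak{q}$ was arbitrary, $a\in k\mathscr{G}h(F)=\sigma(F)$, so $F\subseteq\sigma(F)$ and $F$ is pure.

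I do not expect a genuine obstacle here; the only thing demanding care is the bookkeeping in translating $\mathscr{S}$-stability of the open set $d(F)$ into a generalization-closedness statement about the closed set $h(F)$, keeping in mind that the ambient space is $Spec(\mathfrak{A})$ so that $d(F)=Spec(\mathfrak{A})\setminus h(F)$. A more computational alternative would work directly from $\sigma(F)=\{a\mid a^{\perp}\veebar F=A\}$ (Theorem \ref{sigmafequiv}\ref{sigmafequiv3}) and invoke Propositions \ref{mp} and \ref{1mineq} to pass to minimal primes, but the topological route above is shorter and avoids that.
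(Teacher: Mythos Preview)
Your proof is correct and follows essentially the same route as the paper: both directions hinge on the definition $\sigma(F)=k\mathscr{G}h(F)$ and the observation that $\mathscr{S}$-stability of $d(F)$ amounts to $\mathscr{G}h(F)\subseteq h(F)$. Your write-up is a bit more explicit (invoking Proposition~\ref{primesigmad}\ref{primesigmad1} for the standing inclusion $\sigma(F)\subseteq F$), but the underlying argument is identical.
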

\begin{proof}
\item [\ref{purestable1}$\Rightarrow$\ref{purestable2}:] Consider $\mathfrak{q}\in \mathscr{G}h(F)$. So $\mathfrak{q}\subseteq \mathfrak{p}$ for some $\mathfrak{p}\notin d(F)$. This implies that $\mathfrak{q}\notin d(F)$ and so the result holds.
\item [\ref{purestable2}$\Rightarrow$\ref{purestable1}:] Consider $\mathfrak{p},\mathfrak{q}\in Spec(\mathfrak{A})$ such that $\mathfrak{p}\in d(F)$ and $\mathfrak{p}\subseteq \mathfrak{q}$. If $F\subseteq \mathfrak{q}$, $\mathfrak{p}\in \mathscr{G}h(F)$; a contradiction.
\end{proof}

\citet[\S 7. Proposition 7]{borceux1983algebra} show that any sum and any finite intersection of pure ideals of a unitary ring (not necessarily commutative) is a pure ideal. The following theorem improves and generalizes the case for residuated lattices.
\begin{theorem}\label{sigmfiltlatt}
   Let $\mathfrak{A}$ be a residuated lattice. The following assertions hold:
    \begin{enumerate}
\item  [$(1)$ \namedlabel{sigmfiltlatt1}{$(1)$}] $(\sigma(\mathfrak{A});\cap,\veebar)$ is a frame;
\item  [$(2)$ \namedlabel{sigmfiltlatt2}{$(2)$}] $(\sigma(\mathfrak{A});\cap,\vee^{\alpha})$ is a frame;
\item  [$(3)$ \namedlabel{sigmfiltlatt3}{$(3)$}] $(\sigma(\mathfrak{A});\cap,\vee^{\omega})$ is a lattice;
\end{enumerate}
\end{theorem}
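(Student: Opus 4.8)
The plan is to exhibit $\sigma(\mathfrak{A})$ as a \emph{subframe} of the ambient frame $(\mathscr{F}(\mathfrak{A});\cap,\veebar)$, which settles (1), and then to show that the joins $\vee^{\alpha}$ of $\alpha(\mathfrak{A})$ and $\vee^{\omega}$ of $\Omega(\mathfrak{A})$ restrict to the ambient $\veebar$ on the set of pure filters; this reduces (2) and (3) to (1). First I would record that the class of pure filters is closed under arbitrary $\veebar$-joins: for a family $\{F_{i}\}_{i\in I}\subseteq\sigma(\mathfrak{A})$, Proposition \ref{primesigmad}\ref{primesigmad3} gives $\veebar_{i}F_{i}=\veebar_{i}\sigma(F_{i})\subseteq\sigma(\veebar_{i}F_{i})$, while Proposition \ref{primesigmad}\ref{primesigmad1} gives the reverse inclusion $\sigma(\veebar_{i}F_{i})\subseteq\veebar_{i}F_{i}$; hence $\veebar_{i}F_{i}$ is again pure. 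Closure under binary intersection is immediate from Proposition \ref{primesigmad}\ref{primesigmad2}, since $\sigma(F\cap G)=\sigma(F)\cap\sigma(G)=F\cap G$ when $F,G$ are pure. As $\{1\},A\in\sigma(\mathfrak{A})$, this presents $\sigma(\mathfrak{A})$ as a subset of $\mathscr{F}(\mathfrak{A})$ closed under arbitrary joins and binary meets and containing the extremal elements; it is therefore a complete lattice whose joins and binary meets are those of $\mathscr{F}(\mathfrak{A})$, and the join-infinite distributive law is inherited verbatim. This gives (1).

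For (2) and (3) the crucial observation is that \emph{every pure filter is simultaneously an $\alpha$-filter and an $\omega$-filter}. That a pure filter $F$ is an $\omega$-filter is immediate from Theorem \ref{sigmafequiv}\ref{sigmafequiv6}: $F=\sigma(F)=\omega(I_{F})\in\Omega(\mathfrak{A})$. For the $\alpha$-property I would take $x\in F=\sigma(F)$, so $x^{\perp}\veebar F=A$ by Theorem \ref{sigmafequiv}\ref{sigmafequiv3}; for any $y\in x^{\perp\perp}$ one has $x^{\perp}\subseteq y^{\perp}$, hence $y^{\perp}\veebar F=A$ and $y\in\sigma(F)=F$, so $x^{\perp\perp}\subseteq F$. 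Granting this, let $\{F_{i}\}$ be pure filters; by (1) the filter $\veebar_{i}F_{i}$ is pure, hence an $\alpha$-filter and an $\omega$-filter, so $\vee^{\alpha}_{i}F_{i}=\alpha(\veebar_{i}F_{i})=\veebar_{i}F_{i}$, and for two pure filters $F,G$ the filter $F\veebar G$ is already the least $\omega$-filter containing $F$ and $G$, whence $F\vee^{\omega}G=F\veebar G$. Thus $\sigma(\mathfrak{A})$ is closed under $\vee^{\alpha}$ inside the frame $\alpha(\mathfrak{A})$ and under $\vee^{\omega}$ inside the lattice $\Omega(\mathfrak{A})$, and in both cases this join coincides with $\veebar$; being also closed under the common meet $\cap$, $\sigma(\mathfrak{A})$ inherits the frame structure of $\alpha(\mathfrak{A})$ in (2) and the (distributive) lattice structure of $\Omega(\mathfrak{A})$ in (3). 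In fact the three structures on $\sigma(\mathfrak{A})$ coincide.

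The only step that is not routine bookkeeping with the already-established properties of $\sigma$ is the verification that a pure filter is an $\alpha$-filter, and even that is two lines once Theorem \ref{sigmafequiv}\ref{sigmafequiv3} is available, so I do not foresee a genuine obstacle. The point requiring the most care in (2) and (3) is the claim that $\vee^{\alpha}$ and $\vee^{\omega}$ restrict to $\veebar$ on $\sigma(\mathfrak{A})$: this works exactly because the $\veebar$-join of pure filters already lies in $\alpha(\mathfrak{A})$ (respectively $\Omega(\mathfrak{A})$), which is what makes $\sigma(\mathfrak{A})$ a genuine subframe (respectively sublattice) rather than merely a subposet equipped with a different join.
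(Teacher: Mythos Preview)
Your proposal is correct and follows essentially the same approach as the paper: both arguments establish (1) by showing via Proposition~\ref{primesigmad} that $\sigma(\mathfrak{A})$ is closed under arbitrary $\veebar$-joins and binary intersections inside the frame $\mathscr{F}(\mathfrak{A})$, and then reduce (2) and (3) to (1) by arguing that $\vee^{\alpha}$ and $\vee^{\omega}$ coincide with $\veebar$ on pure filters. Your treatment is in fact more explicit than the paper's in two places: for (2) the paper simply writes ``evident by (1)'' without recording that pure filters are $\alpha$-filters, whereas you supply the two-line verification; and for (3) you argue that $F\veebar G$, being pure and hence an $\omega$-filter, must equal the $\Omega$-join $F\vee^{\omega}G$ by minimality, while the paper instead runs the chain $F\veebar G\subseteq F\vee^{\omega}G=\omega(I_F)\vee^{\omega}\omega(I_G)\subseteq\omega(I_{F\veebar G})=\sigma(F\veebar G)=F\veebar G$ using Theorem~\ref{sigmafequiv}\ref{sigmafequiv6} directly---a cosmetic difference in execution of the same idea.
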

\begin{proof}
\item [\ref{sigmfiltlatt1}:]
Let $\mathcal{F}$ be a family of pure filters of $\mathfrak{A}$. By Proposition \ref{primesigmad} follows that
\[\veebar \mathcal{F}=\veebar_{F\in \mathcal{F}} \sigma(F)\subseteq \sigma(\veebar \mathcal{F})\subseteq \veebar \mathcal{F}.\]
\item [\ref{sigmfiltlatt2}:] It is evident by \ref{sigmfiltlatt1}.
\item [\ref{sigmfiltlatt3}:] Applying Proposition \ref{sigmafequiv}\ref{sigmafequiv6} shows that $\sigma(\mathfrak{A})\subseteq \Omega(\mathfrak{A})$. Let $F,G\in \sigma(\mathfrak{A})$. We have
    \[F\veebar G\subseteq F\vee^{\omega} G=\omega(I_F) \vee^{\omega} \omega(I_G)\subseteq \omega(I_{F\veebar G})=\sigma(F\veebar G)=F\veebar G.\]
    This holds the result due to \ref{sigmfiltlatt1}.
\end{proof}

The next theorem shows that in a residuated lattice every filter is pure if and only if it is hyperarchimedean.
\begin{theorem}\label{sigmahyper}
  Let $\mathfrak{A}$ be a residuated lattice. The following assertions are equivalent:
  \begin{enumerate}
\item  [$(1)$ \namedlabel{sigmahyper1}{$(1)$}] $\mathscr{PF}(\mathfrak{A})\subseteq \sigma(\mathfrak{A})$;
\item  [$(2)$ \namedlabel{sigmahyper2}{$(2)$}] $\mathfrak{A}$ is hyperarchimedean;
\item  [$(3)$ \namedlabel{sigmahyper3}{$(3)$}] $\mathscr{F}(\mathfrak{A})\subseteq \sigma(\mathfrak{A})$.
\end{enumerate}
\end{theorem}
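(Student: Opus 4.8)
The plan is to prove the cycle $(1)\Rightarrow(2)\Rightarrow(3)\Rightarrow(1)$, exploiting the characterizations of hyperarchimedean residuated lattices already available (Proposition \ref{hperarchpri}) together with the algebraic description of the sink from Theorem \ref{sigmafequiv}\ref{sigmafequiv3}.

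For $(3)\Rightarrow(1)$ there is nothing to do, since every principal filter is a filter, so $\mathscr{PF}(\mathfrak{A})\subseteq\mathscr{F}(\mathfrak{A})\subseteq\sigma(\mathfrak{A})$. For $(1)\Rightarrow(2)$, I would invoke Proposition \ref{hperarchpri}: it suffices to show $\mathscr{PF}(\mathfrak{A})\subseteq\mathscr{F}(\beta(\mathfrak{A}))$ (the reverse inclusion is automatic from Theorem \ref{fbetasig} together with Proposition \ref{b9fxpro}, since every $\mathscr{F}(e)$ is a direct summand hence pure, but we actually want equality of the two sets). Concretely, take $x\in A$; by hypothesis $\mathscr{F}(x)$ is pure, so $x\in\sigma(\mathscr{F}(x))$, and by Theorem \ref{sigmafequiv}\ref{sigmafequiv3} this means $x^{\perp}\veebar\mathscr{F}(x)=A$. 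But $\mathscr{F}(x)\veebar x^{\perp}=\mathscr{F}(x)\veebar\mathscr{F}(x)^{\perp}$ is not literally what we have; rather, using Proposition \ref{genfilprop}\ref{genfilprop4} or a direct argument, $x^{\perp}\veebar\mathscr{F}(x)=A$ gives elements $b\in x^{\perp}$ and $n\in\mathbb{N}$ with $b\odot x^{n}=0$, hence $b\wedge x^{n}\le b\odot x^{n}\vee\cdots=0$; more cleanly, $b\le\neg(x^{n})$ while $b\vee x^n$ sits above a product equal to $0$, and one deduces $x^{n}\vee\neg(x^{n})=1$, i.e.\ $x^{n}\in\beta(\mathfrak{A})$ by Proposition \ref{boleleprop}\ref{boleleprop2}. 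This is exactly the statement that $x$ is archimedean, so $\mathfrak{A}$ is hyperarchimedean.

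For $(2)\Rightarrow(3)$, I would again use Proposition \ref{hperarchpri}: if $\mathfrak{A}$ is hyperarchimedean then $\mathscr{F}(\beta(\mathfrak{A}))=\mathscr{PF}(\mathfrak{A})=\beta(\mathscr{F}(\mathfrak{A}))$. Now let $F$ be an arbitrary filter of $\mathfrak{A}$ and let $a\in F$; I must show $a\in\sigma(F)$, i.e.\ $a^{\perp}\veebar F=A$. Pick $n$ with $a^{n}\in\beta(\mathfrak{A})$; then $a^{n}\vee\neg(a^{n})=1$, and $\neg(a^{n})\in a^{\perp}$ because $a\odot\neg(a^{n})\le a^{n}\odot\neg(a^{n})$... more carefully $a^{n}\odot\neg(a^{n})=0$ shows $\neg(a^n)\in(a^n)^{\perp}=a^{\perp}$ (since $(a^n)^{\perp}=a^{\perp}$ in general). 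Since $a^{n}\in\mathscr{F}(a)\subseteq F$ (as $a\in F$) and $\neg(a^{n})\in a^{\perp}$, from $a^{n}\vee\neg(a^{n})=1$ we get $1\in F\veebar a^{\perp}$, hence $a^{\perp}\veebar F=A$. Therefore $F\subseteq\sigma(F)$, and the reverse inclusion is Proposition \ref{primesigmad}\ref{primesigmad1}, so $F$ is pure.

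The only mildly delicate point is the manipulation with $\odot$, $\vee$, and $\neg$ establishing that $x^{n}\in\beta(\mathfrak{A})$ from $b\odot x^{n}=0$ with $b\in x^{\perp}$; the $\vee$-side needs the distributivity-type inequalities \ref{res1}, \ref{res2} from Remark \ref{resproposition}, and one must remember that $\veebar$ of filters is the filter \emph{generated} by the union, so "$1\in F\veebar a^{\perp}$'' unwinds via Proposition \ref{genfilprop}\ref{genfilprop1} to a product inequality. I expect this bookkeeping with generated filters and the identity $(a^{n})^{\perp}=a^{\perp}$ to be the main (still routine) obstacle; everything else is a direct appeal to Proposition \ref{hperarchpri}, Theorem \ref{sigmafequiv}, and Proposition \ref{primesigmad}.
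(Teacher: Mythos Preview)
Your cycle $(1)\Rightarrow(2)\Rightarrow(3)\Rightarrow(1)$ via Proposition~\ref{hperarchpri} and Theorem~\ref{sigmafequiv}\ref{sigmafequiv3} is exactly the route the paper intends (it writes only ``an immediate consequence of Proposition~\ref{hperarchpri}''), and the overall strategy is sound. Two steps in $(2)\Rightarrow(3)$, however, are misstated and would not go through as written.

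First, ``$a^{n}\odot\neg(a^{n})=0$ shows $\neg(a^n)\in(a^n)^{\perp}$'' misapplies the definition: in this paper $(a^n)^{\perp}=\{y\mid y\vee a^n=1\}$, not $\{y\mid y\odot a^n=0\}$. The right reason is simply that $a^n\in\beta(\mathfrak{A})$ gives $a^n\vee\neg(a^n)=1$, whence $\neg(a^n)\in(a^n)^{\perp}=a^{\perp}$ (your identity $(a^n)^{\perp}=a^{\perp}$ is correct, via \ref{res2}). Second, and more seriously, ``from $a^{n}\vee\neg(a^{n})=1$ we get $1\in F\veebar a^{\perp}$, hence $a^{\perp}\veebar F=A$'' is vacuous: $1$ lies in every filter. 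What you need is $0\in F\veebar a^{\perp}$, and this follows because $a^n\in F$, $\neg(a^n)\in a^{\perp}$, and $a^n\odot\neg(a^n)=0$. With these two fixes the argument is complete; your $(1)\Rightarrow(2)$ sketch is also correct once you make explicit that $b\vee x=1$ yields $b\vee x^n\geq(b\vee x)^n=1$ by \ref{res2}, which together with $b\leq\neg(x^n)$ gives $x^n\vee\neg(x^n)=1$.
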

\begin{proof}
It is an immediate consequence of Proposition \ref{hperarchpri}.
\end{proof}
\begin{proposition}\label{comxpureprime}
  Let $\mathfrak{A}$ be a residuated lattice. Any two distinct elements of the set $Spec(\mathfrak{A})\cap \sigma(\mathfrak{A})$ are comaximal.
\end{proposition}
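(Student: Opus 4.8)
The plan is to fix two distinct prime filters $\mathfrak{p},\mathfrak{q}$ of $\mathfrak{A}$ that both happen to be pure and to show $\mathfrak{p}\veebar\mathfrak{q}=A$, i.e.\ that they are comaximal. I would argue by contradiction: assume $\mathfrak{p}\veebar\mathfrak{q}\neq A$. Then $\mathfrak{p}\veebar\mathfrak{q}$ is a proper filter, so (by Zorn's lemma, as recalled above) it is contained in some prime filter $\mathfrak{r}$; in particular $\mathfrak{p}\subseteq\mathfrak{r}$ and $\mathfrak{q}\subseteq\mathfrak{r}$, so $\mathfrak{r}\in h(\mathfrak{p})\cap h(\mathfrak{q})$.

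The core of the argument is to unwind purity of $\mathfrak{p}$ through the descriptions of $\sigma$ and $D$ already available. Using Theorem~\ref{sigmafequiv}\ref{sigmafequiv2}, purity gives $\mathfrak{p}=\sigma(\mathfrak{p})=\bigcap\{D(\mathfrak{s})\mid\mathfrak{s}\in h(\mathfrak{p})\}$, and since $\mathfrak{r}\in h(\mathfrak{p})$ this forces $\mathfrak{p}\subseteq D(\mathfrak{r})$. Now Proposition~\ref{omegprop}\ref{omegprop2} identifies $D(\mathfrak{r})=k\mathscr{G}(\mathfrak{r})=\bigcap\{\mathfrak{t}\in Spec(\mathfrak{A})\mid\mathfrak{t}\subseteq\mathfrak{r}\}$; as $\mathfrak{q}$ is a prime filter with $\mathfrak{q}\subseteq\mathfrak{r}$, we get $D(\mathfrak{r})\subseteq\mathfrak{q}$, hence $\mathfrak{p}\subseteq\mathfrak{q}$. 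The hypotheses used for $\mathfrak{p}$ (purity, and being contained in $\mathfrak{r}$) apply verbatim to $\mathfrak{q}$, so symmetrically $\mathfrak{q}\subseteq\mathfrak{p}$, whence $\mathfrak{p}=\mathfrak{q}$ --- contradicting distinctness. Therefore $\mathfrak{p}\veebar\mathfrak{q}=A$.

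An equivalent, more element-wise route is also available and perhaps worth recording: combining $\sigma(\mathfrak{p})\subseteq D(\mathfrak{p})\subseteq\mathfrak{p}$ (Proposition~\ref{sigmapro}\ref{sigmapro3} and Proposition~\ref{primesigmad}\ref{primesigmad1}) with Proposition~\ref{omegprop}\ref{omegprop3} shows that every pure prime filter equals its own $D$, hence is a minimal prime filter; so $\mathfrak{p}$ and $\mathfrak{q}$ are incomparable minimal primes. Picking $a\in\mathfrak{p}\setminus\mathfrak{q}$, Proposition~\ref{1mineq} gives $a^{\perp}\subseteq\mathfrak{q}$, while $a\in\mathfrak{p}=\sigma(\mathfrak{p})$ together with Theorem~\ref{sigmafequiv}\ref{sigmafequiv4} yields some $b\in a^{\perp}$ with $\neg b\in\mathfrak{p}$; then $b\in\mathfrak{q}$, $\neg b\in\mathfrak{p}$, and $b\odot\neg b=0$, so $0\in\mathfrak{p}\veebar\mathfrak{q}$ and again $\mathfrak{p}\veebar\mathfrak{q}=A$. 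I do not expect a genuine obstacle here: the whole proof is a matter of correctly marshalling the already-proved characterizations of the sink and of $D(\mathfrak{p})$. The only points needing care are that a containing prime $\mathfrak{r}$ really exists (guaranteed since proper filters lie in prime filters) and the symmetry step, where one must check that every ingredient invoked for $\mathfrak{p}$ is equally valid for $\mathfrak{q}$.
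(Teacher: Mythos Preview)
Both of your arguments are correct. Your second, ``element-wise'' route is essentially the paper's proof: the paper picks $x\in\mathfrak{p}_1\setminus\mathfrak{p}_2$, uses purity of $\mathfrak{p}_1$ (via Theorem~\ref{sigmafequiv}\ref{sigmafequiv3}) to get $\mathfrak{p}_1\veebar x^{\perp}=A$, and uses only \emph{primality} of $\mathfrak{p}_2$ to conclude $x^{\perp}\subseteq\mathfrak{p}_2$ (since $x\vee y=1\in\mathfrak{p}_2$ and $x\notin\mathfrak{p}_2$ force $y\in\mathfrak{p}_2$). You reach the same two ingredients but through a slight detour, first proving that pure primes are minimal and then invoking Proposition~\ref{1mineq}; this is unnecessary, as primality alone already gives $x^{\perp}\subseteq\mathfrak{p}_2$.

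Your first argument is a genuinely different route: rather than manipulating coannihilators, you pass through a common prime cover $\mathfrak{r}$ and use the chain $\mathfrak{p}=\sigma(\mathfrak{p})\subseteq D(\mathfrak{r})=k\mathscr{G}(\mathfrak{r})\subseteq\mathfrak{q}$ (Theorem~\ref{sigmafequiv}\ref{sigmafequiv2} and Proposition~\ref{omegprop}\ref{omegprop2}), then symmetry. This is clean and conceptually highlights that pure primes must be minimal, but it leans on more machinery (the $D$-operator and the $k\mathscr{G}$ description) than the paper's three-line direct computation. The paper's version has the small advantage that it only needs purity on one side and primality on the other.
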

\begin{proof}
  Let $\mathfrak{p}_1$ and $\mathfrak{p}_2$ be two distinct elements of the set $Spec(\mathfrak{A})\cap \sigma(\mathfrak{A})$. Consider $x\in \mathfrak{p}_1\setminus \mathfrak{p}_2$. So $\mathfrak{p}_1\veebar x^{\perp}=A$ and $x^{\perp}\subseteq \mathfrak{p}_2$. This holds the result.
\end{proof}

Following Theorem \ref{purestable}, the open $\mathscr{S}$-stable subsets of $Spec_{h}(\mathfrak{A})$ are precisely of the form $d(F)$, in which $F$ runs over the pure filters of $\mathfrak{A}$. So, applying Proposition \ref{sigmfiltlatt}, the set $\tau_{\mathscr{D}}=\{d(F)\mid F\in \sigma(\mathfrak{A})\}$ forms a topology on $Spec(\mathfrak{A})$ which is called the $\mathscr{D}$-topology on $\mathfrak{A}$. It is obvious that $\tau_{h}$ is finer than $\tau_{\mathscr{D}}$.
   \begin{remark}
     \cite{lazard1967disconnexites}, for a commutative ring with unity $\mathfrak{A}$, showed that the open $\mathscr{S}$-stable subsets of $Spec(\mathfrak{A})$ forms a topology, coarser than the hull-kernel topology, which is called the $\mathscr{D}$-topology. This line is continued by \cite{de1983projectivity}, for commutative rings with unity, and by \citep{al1990topological}, for bounded distributive lattices. Also, $\mathscr{D}$-topology in MV-algebras \citep{belluce2000stable} and residuated lattices \citep{bucsneag2012stable} is discussed under the name of \textit{stable topology}.
   \end{remark}

The next result gives a characterization for hyperarchimedean residuated lattices using $\mathscr{D}$-topology on $Spec(\mathfrak{A})$.
\begin{theorem}\label{huldtopohyper}
  Let $\mathfrak{A}$ be a residuated lattice.  The hull-kernel topology coincides with the $\mathscr{D}$-topology on $Spec(\mathfrak{A})$ if and only if $\mathfrak{A}$ is hyperarchimedean.
\end{theorem}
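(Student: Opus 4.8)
The plan is to compare the two topologies at the level of their open sets and then reduce everything to Theorem~\ref{sigmahyper}. First I would record the reformulations that make the comparison transparent. Since $h(F)=\bigcap_{f\in F}h(f)$, we have $d(F)=\bigcup_{f\in F}d(f)$, so every $d(F)$ with $F\in\mathscr{F}(\mathfrak{A})$ is hull–kernel open; conversely $\bigcup_{x\in X}d(x)=d(\mathscr{F}(X))$, so every hull–kernel open set has this form. Hence $\tau_{h}=\{d(F)\mid F\in\mathscr{F}(\mathfrak{A})\}$, whereas by definition $\tau_{\mathscr{D}}=\{d(F)\mid F\in\sigma(\mathfrak{A})\}$; as $\sigma(\mathfrak{A})\subseteq\mathscr{F}(\mathfrak{A})$ this re-proves $\tau_{\mathscr{D}}\subseteq\tau_{h}$, so only the reverse inclusion is at issue. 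The second ingredient I would isolate is that $F\mapsto d(F)$ is injective on $\mathscr{F}(\mathfrak{A})$: if $d(F)=d(G)$ then $h(F)=h(G)$, hence $F=kh(F)=kh(G)=G$ by Proposition~\ref{intprimfilt}\ref{intprimfilt2}.

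For the ``if'' direction, assume $\mathfrak{A}$ is hyperarchimedean. By Theorem~\ref{sigmahyper} this means $\mathscr{F}(\mathfrak{A})=\sigma(\mathfrak{A})$, i.e.\ every filter is pure, and the two descriptions above give $\tau_{h}=\tau_{\mathscr{D}}$ at once.

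For the ``only if'' direction, assume $\tau_{h}=\tau_{\mathscr{D}}$ and take $x\in A$. The basic hull–kernel open set $d(x)=d(\mathscr{F}(x))$ is then $\mathscr{D}$-open, so $d(\mathscr{F}(x))=d(F)$ for some pure filter $F$; by the injectivity noted above $\mathscr{F}(x)=F$, so $\mathscr{F}(x)$ is pure. Thus $\mathscr{PF}(\mathfrak{A})\subseteq\sigma(\mathfrak{A})$, and Theorem~\ref{sigmahyper} gives that $\mathfrak{A}$ is hyperarchimedean.

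The argument is short once these reformulations are in place; I do not expect a genuine obstacle. The only point that needs care is the bookkeeping observation that $F\mapsto d(F)$ is injective on filters, since that is exactly what upgrades the purely set-theoretic equality $d(\mathscr{F}(x))=d(F)$ to the algebraic conclusion that $\mathscr{F}(x)$ itself is pure; everything else is a direct appeal to Theorem~\ref{sigmahyper} and to $kh(F)=F$.
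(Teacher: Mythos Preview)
Your proof is correct. The paper's one-line argument invokes Theorems~\ref{purestable} and~\ref{sigmahyper}: since the $\mathscr{D}$-open sets are exactly the $\mathscr{S}$-stable hull--kernel open sets, equality of the two topologies forces every $d(a)$ to be $\mathscr{S}$-stable, whence $\mathscr{F}(a)$ is pure by Theorem~\ref{purestable}, and Theorem~\ref{sigmahyper} finishes. You reach the same conclusion but replace the appeal to Theorem~\ref{purestable} by the injectivity of $F\mapsto d(F)$ on filters, obtained from $kh(F)=F$ (Proposition~\ref{intprimfilt}\ref{intprimfilt2}). Both routes share Theorem~\ref{sigmahyper} as the algebraic core; the only difference is in how the topological hypothesis is converted into the statement that every principal filter is pure, and your injectivity argument is a perfectly good (and slightly more elementary) substitute for the $\mathscr{S}$-stability characterization.
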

\begin{proof}
  It is an immediate consequence of Theorems \ref{purestable} \& \ref{sigmahyper}.
\end{proof}


Let $\mathfrak{A}$ be a residuated lattice. For any filter $F$ of $\mathfrak{A}$, we set
\[\rho(F)=\underline{\bigvee}\{G\in \sigma(\mathfrak{A})\mid G\subseteq F\},\]
and call it \textit{the pure part of $F$}. The following proposition gives some properties of the pure part of a filter of a residuated lattice.
\begin{proposition}\label{rfilter}
  Let $\mathfrak{A}$ be a residuated lattice. The following assertions hold:
\begin{enumerate}
\item  [$(1)$ \namedlabel{rfilter1}{$(1)$}] $\rho(F)\subseteq \sigma(F)$,  for any $F\in \mathscr{F}(\mathfrak{A})$;
\item  [$(2)$ \namedlabel{rfilter2}{$(2)$}] $\rho(F)$ is the greatest pure filter of $\mathfrak{A}$ contained in $F$;
\item  [$(3)$ \namedlabel{rfilter3}{$(3)$}] the map $\rho$, given by $F\rightsquigarrow \rho(F)$, is a closure operator on $\mathscr{F}(\mathfrak{A})$, with the fixed points $\sigma(\mathfrak{A})$;
\item  [$(4)$ \namedlabel{rfilter4}{$(4)$}] $\rho(F)\cap \rho(G)=\rho(F\cap G)$, for any $F,G\in \mathscr{F}(\mathfrak{A})$;
\item  [$(5)$ \namedlabel{rfilter5}{$(5)$}] $\veebar_{F\in \mathscr{F}}\rho(F)\subseteq \rho(\veebar \mathscr{F})$,for any $\mathscr{F}\subseteq \mathscr{F}(\mathfrak{A})$;
\item  [$(6)$ \namedlabel{rfilter6}{$(6)$}] $F=\bigcap\{\rho(\mathfrak{m})\mid \mathfrak{m}\in h_{M}(F)\}$, for any $F\in \sigma(\mathfrak{A})$;
\item  [$(7)$ \namedlabel{rfilter7}{$(7)$}] $F=\rho(Rad(F))$,  for any $F\in \sigma(\mathfrak{A})$;
\item  [$(8)$ \namedlabel{rfilter8}{$(8)$}] $\rho(\mathfrak{p})=\rho(D(\mathfrak{p}))$,  for any $\mathfrak{p}\in Spec(\mathfrak{A})$.
\end{enumerate}
\end{proposition}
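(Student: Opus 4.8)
The plan is to establish the eight items roughly in the order listed, leaning on Theorem \ref{sigmafequiv}, Proposition \ref{primesigmad}, and Proposition \ref{sigmafequiv}\ref{sigmafequiv6} (which gives $\sigma(\mathfrak{A})\subseteq\Omega(\mathfrak{A})$), together with the frame structure of $\sigma(\mathfrak{A})$ from Theorem \ref{sigmfiltlatt}. For \ref{rfilter1}: each pure $G\subseteq F$ satisfies $G=\sigma(G)\subseteq\sigma(F)$ by Proposition \ref{sigmapro}\ref{sigmapro2}, and since $\sigma(F)$ is a filter containing all such $G$ (and $\underline{\bigvee}$ is the join in $\mathscr{F}(\mathfrak{A})$, i.e.\ the generated filter), the generated filter $\rho(F)$ is still contained in $\sigma(F)$ by Proposition \ref{primesigmad}\ref{primesigmad1} applied after noting $\rho(F)\subseteq F$. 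For \ref{rfilter2}: $\rho(F)\subseteq F$ is immediate; that $\rho(F)$ is pure follows from Theorem \ref{sigmfiltlatt}\ref{sigmfiltlatt1}, since $\sigma(\mathfrak{A})$ is a frame and hence closed under arbitrary joins $\veebar$; maximality is by construction. Item \ref{rfilter3} is then formal: $\rho$ is extensive ($\rho(F)\subseteq F$? — careful, a closure operator should be increasing, so one actually uses $F\subseteq\rho(F)$ is \emph{false}; rather $\rho$ here is an \emph{interior/coclosure}-type operator — but the paper calls it a closure operator on $\mathscr{F}(\mathfrak{A})$ ordered by reverse inclusion, or one verifies monotonicity, idempotence $\rho\rho=\rho$ since $\rho(F)$ is already pure, and $\rho(F)\subseteq F$), with fixed-point set exactly $\sigma(\mathfrak{A})$ because $F$ pure $\iff\rho(F)=F$.

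For \ref{rfilter4}: the inclusion $\rho(F\cap G)\subseteq\rho(F)\cap\rho(G)$ is monotonicity; for the reverse, $\rho(F)\cap\rho(G)$ is pure by Proposition \ref{primesigmad}\ref{primesigmad2} (sink commutes with finite meets) and is contained in $F\cap G$, hence contained in $\rho(F\cap G)$ by \ref{rfilter2}. Item \ref{rfilter5} is again monotonicity plus the fact that $\veebar_{F}\rho(F)$ is pure (frame property) and contained in $\veebar\mathscr{F}$. Item \ref{rfilter8}: combine \ref{rfilter2} with Theorem \ref{sigmafequiv}\ref{sigmafequiv2}, observing $\sigma(\mathfrak{p})\subseteq D(\mathfrak{p})\subseteq\mathfrak{p}$ (Proposition \ref{sigmapro}\ref{sigmapro3}), so any pure filter inside $\mathfrak{p}$ is inside $D(\mathfrak{p})$ and vice versa; more carefully, $\rho(\mathfrak{p})\subseteq\rho(D(\mathfrak{p}))$ by monotonicity, and conversely a pure $G\subseteq D(\mathfrak{p})\subseteq\mathfrak{p}$ gives $G\subseteq\rho(\mathfrak{p})$.

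The two items I expect to be the real work are \ref{rfilter6} and \ref{rfilter7}. For \ref{rfilter7}: $F\subseteq Rad(F)$ gives $F=\rho(F)\subseteq\rho(Rad(F))$; for the reverse I would show $\rho(Rad(F))\subseteq\sigma(Rad(F))=\sigma(F)=F$, where the key equality $\sigma(Rad(F))=\sigma(F)$ should follow from Theorem \ref{sigmafequiv}\ref{sigmafequiv3}, since $a^{\perp}\veebar Rad(F)=A$ forces $a^{\perp}\veebar F=A$ (a proper filter extends to a prime, and $Rad(F)=\bigcap h(F)$, so $h(Rad(F))=h(F)$). For \ref{rfilter6}, given $F$ pure, the inclusion $F\subseteq\bigcap\{\rho(\mathfrak{m})\mid\mathfrak{m}\in h_M(F)\}$ holds because $F$ is a pure filter contained in each such $\mathfrak{m}$, hence in each $\rho(\mathfrak{m})$; the reverse inclusion is the delicate point — I would argue that if $a\notin F$ then, since $F$ is pure, $a\notin\sigma(F)$, so by Theorem \ref{sigmafequiv}\ref{sigmafequiv3} $a^{\perp}\veebar F\neq A$, whence $a^{\perp}\veebar F\subseteq\mathfrak{m}$ for some maximal $\mathfrak{m}\in h_M(F)$; then $a^{\perp}\subseteq\mathfrak{m}$ gives $a\notin D(\mathfrak{m})$, and since $\rho(\mathfrak{m})\subseteq\sigma(\mathfrak{m})=D(\mathfrak{m})$ by Proposition \ref{sigmapro}\ref{sigmapro4}, we get $a\notin\rho(\mathfrak{m})$. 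Pinning down this last chain — especially that $a^{\perp}\veebar F$ proper lands inside a \emph{maximal} filter lying in $h_M(F)$ — is where I would be most careful, but it is a routine Zorn's lemma argument once set up.
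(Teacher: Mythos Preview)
Your plan is sound and tracks the paper's proof closely for items \ref{rfilter1}--\ref{rfilter5}; the paper argues these just as you do (frame closure for \ref{rfilter2}, monotonicity plus purity of $\rho(F)\cap\rho(G)$ for \ref{rfilter4}, etc.). Two places deserve correction, though.

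\textbf{Item \ref{rfilter7}.} Your route via $\rho(Rad(F))\subseteq\sigma(Rad(F))=\sigma(F)=F$ is valid, but the justification you give is wrong: $Rad(F)$ is by definition $\bigcap h_{M}(F)$, \emph{not} $\bigcap h(F)$, and the equality $h(Rad(F))=h(F)$ fails in general (only $h(Rad(F))\subseteq h(F)$ holds). What is true, and suffices, is $h_{M}(Rad(F))=h_{M}(F)$; then $\sigma(Rad(F))=\sigma(F)$ follows from Theorem~\ref{sigmafequiv}\ref{sigmafequiv5}, not \ref{sigmafequiv3}. Equivalently, in your ``extend to a prime'' step you must extend $a^{\perp}\veebar F$ to a \emph{maximal} filter. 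The paper instead derives \ref{rfilter7} from \ref{rfilter6}: since $Rad(F)\subseteq\mathfrak{m}$ for each $\mathfrak{m}\in h_{M}(F)$, one gets $\rho(Rad(F))\subseteq\bigcap_{\mathfrak{m}\in h_{M}(F)}\rho(\mathfrak{m})=F$.

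\textbf{Item \ref{rfilter8}.} You correctly observe that any pure $G\subseteq\mathfrak{p}$ satisfies $G=\sigma(G)\subseteq\sigma(\mathfrak{p})\subseteq D(\mathfrak{p})$, giving $\rho(\mathfrak{p})\subseteq\rho(D(\mathfrak{p}))$. But labelling this inclusion ``by monotonicity'' is a slip: since $D(\mathfrak{p})\subseteq\mathfrak{p}$, monotonicity of $\rho$ yields the \emph{reverse} inclusion $\rho(D(\mathfrak{p}))\subseteq\rho(\mathfrak{p})$. The paper packages both directions as the chain $\rho(\mathfrak{p})\subseteq\rho(\sigma(\mathfrak{p}))\subseteq\rho(D(\mathfrak{p}))\subseteq\rho(\mathfrak{p})$, using \ref{rfilter1}, Proposition~\ref{sigmapro}\ref{sigmapro3}, and monotonicity respectively.

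Your argument for \ref{rfilter6} is correct and is exactly the paper's one-line appeal to Theorem~\ref{sigmafequiv}\ref{sigmafequiv5} and \ref{rfilter1}, merely unpacked.
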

\begin{proof}
\item [\ref{rfilter1}:] It is evident by Proposition \ref{sigmapro}\ref{sigmapro2}.
\item [\ref{rfilter2}:] It is evident by Theorem \ref{sigmfiltlatt}\ref{sigmfiltlatt1}.
\item [\ref{rfilter3}:] It is obvious.
\item [\ref{rfilter4}:] Let $F,G\in \mathscr{F}(\mathfrak{A})$. So $\rho(F\cap G)\subseteq \rho(F)\cap \rho(G)$, since $\rho$ is isotone. Conversely, $\rho(F)\cap \rho(G)\subseteq F\cap G$, and this implies $\rho(F)\cap \rho(G)\subseteq \rho(F\cap G)$.
\item [\ref{rfilter5}:] There is nothing to prove.
\item [\ref{rfilter6}:] It is a direct result of Proposition \ref{sigmafequiv}\ref{sigmafequiv5} and \ref{rfilter1}.
\item [\ref{rfilter7}:] Let $F$ be a pure filter of $\mathfrak{A}$. Since $\rho$ is isotone, so $F\subseteq \rho(Rad(F))$. Conversely, let $\mathfrak{m}$ be an arbitrary maximal filter of $\mathfrak{A}$ containing $F$. We get that $\rho(Rad(F))\subseteq \rho(\mathfrak{m})$, and this establishes the result due to \ref{rfilter4}.
\item [\ref{rfilter8}:] Let $\mathfrak{p}$ be a prime filter of $\mathfrak{A}$. By Proposition \ref{sigmapro}\ref{sigmapro3}, \ref{rfilter1} and \ref{rfilter2} follows that $\rho(\mathfrak{p})\subseteq \rho(\sigma(\mathfrak{p}))\subseteq \rho(D(\mathfrak{p}))\subseteq \rho(\mathfrak{p})$.
\end{proof}
The next result can be compared with Proposition \ref{filqou}.
\begin{proposition}\label{purefilqou}
Let $f:\mathfrak{A}\longrightarrow \mathfrak{B}$ be an epimorphism. We have
\[\sigma(\mathfrak{B})=\{f(F)\mid coker(f)\subseteq F\in \sigma(\mathfrak{A})\}.\]
\end{proposition}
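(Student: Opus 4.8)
The plan is to upgrade the filter correspondence of Proposition \ref{filqou} to pure filters. By Proposition \ref{filqou} the map $F\mapsto f(F)$ is an order isomorphism from $\{F\in\mathscr F(\mathfrak A)\mid coker(f)\subseteq F\}$ onto $\mathscr F(\mathfrak B)$, with inverse $G\mapsto f^{\leftarrow}(G)$ (recall $coker(f)\subseteq f^{\leftarrow}(G)$ by Proposition \ref{prorh}\ref{prorh1}). Hence the statement reduces to: for a filter $F$ with $coker(f)\subseteq F$, $F$ is pure in $\mathfrak A$ iff $f(F)$ is pure in $\mathfrak B$. Writing $K=coker(f)$ and identifying $\mathfrak B$ with $\mathfrak A/K$, $f$ with $\pi_{K}$, and $f(F)$ with $F/K$, I would first record the computation $(a/K)^{\perp}=(K:a)/K$ for $a\in A$ (since in $\mathfrak A/K$ one has $\bar x\vee\bar a=\bar 1$ iff $x\vee a\in K$); as the above correspondence is a lattice isomorphism, $(a/K)^{\perp}\veebar(F/K)=\big((K:a)\veebar F\big)/K$, so Theorem \ref{sigmafequiv}\ref{sigmafequiv3} yields
\[\sigma(F/K)=\{\,a/K\mid a\in A,\ (K:a)\veebar F=A\,\}.\]

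For ``$F$ pure $\Rightarrow f(F)$ pure'': $\sigma(F/K)\subseteq F/K$ always (Proposition \ref{primesigmad}\ref{primesigmad1}, in $\mathfrak A/K$); for the reverse inclusion pick $a\in F$, so $a^{\perp}\veebar F=A$ by purity and Theorem \ref{sigmafequiv}\ref{sigmafequiv3}, and since $a^{\perp}\subseteq(K:a)$ (because $1\in K$) also $(K:a)\veebar F=A$, i.e. $a/K\in\sigma(F/K)$ by the displayed formula; hence $\sigma(F/K)=F/K$. One could instead use Theorem \ref{flatpurethe} with Proposition \ref{canonflat}, factoring $\pi_{F}$ through $\pi_{K}$ and using that $\pi_{K}$ is surjective.

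The real work is ``$f(F)$ pure $\Rightarrow F$ pure'': one must pass from ``$(K:a)\veebar F=A$ for all $a\in F$'' to ``$a^{\perp}\veebar F=A$ for all $a\in F$'', equivalently (Theorem \ref{sigmafequiv}\ref{sigmafequiv4}) produce, for each $a\in F$, some $b$ with $b\vee a=1$ and $\neg b\in F$ out of a $b$ with merely $b\vee a\in K$ and $\neg b\in F$. The natural attempt reapplies the hypothesis to $b\vee a\in K\subseteq F$ and iterates, splicing the successive witnesses by means of \ref{res1} and \ref{res2} in the style of the proof of Theorem \ref{flatpurethe}. I expect this to be the crux and the step I would scrutinise hardest: it seems to require honest control over $K=coker(f)$, and I would anticipate having to assume $K$ itself pure — equivalently $f$ flat, in the sense of the Definition preceding Lemma \ref{flatmorprop} — for the inclusion $\sigma(\mathfrak B)\subseteq\{f(F)\mid coker(f)\subseteq F\in\sigma(\mathfrak A)\}$ to go through, since $\{1\}\in\sigma(\mathfrak B)$ together with $f^{\leftarrow}(\{1\})=coker(f)$ (the unique filter above $coker(f)$ with image $\{1\}$) already compels $coker(f)$ to be pure. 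Granting this, the two implications together with Proposition \ref{filqou} give the asserted equality.
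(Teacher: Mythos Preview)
Your suspicion is correct, and the gap lies in the paper's statement rather than in your argument. The observation that $\{1\}\in\sigma(\mathfrak B)$ forces $coker(f)\in\sigma(\mathfrak A)$ via the filter correspondence is decisive: taking $\mathfrak A=\mathfrak A_6$ from Example~\ref{exa6} and $K=\{d,1\}$ (not pure, by Example~\ref{purefilterexa}), the only pure filter of $\mathfrak A_6$ containing $K$ is $A_6$, so the right-hand side of the asserted equality is $\{A_6/K\}$, while $\{1\}\in\sigma(\mathfrak A_6/K)$. The paper's proof is only the sentence ``With a little effort, it follows from Proposition~\ref{prorh} and Theorem~\ref{sigmafequiv}'', which does not meet this objection; tellingly, the sole application, Corollary~\ref{corpurefilqout}, \emph{does} add the hypothesis that the kernel filter is pure, so that is presumably the intended statement.

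Under that hypothesis your plan completes. The forward direction you give is correct. For the backward direction, assume $K$ pure, $K\subseteq F$, and $F/K$ pure. Given $a\in F$, your displayed formula yields $x\in(K:a)$ and $g\in F$ with $x\odot g=0$, hence $\neg x\in F$. Since $x\vee a\in K$ and $K$ is pure, Theorem~\ref{sigmafequiv}\ref{sigmafequiv4} gives $y\in(x\vee a)^{\perp}$ with $\neg y\in K\subseteq F$. Then $y\vee x\in a^{\perp}$, and $\neg(y\vee x)=\neg y\wedge\neg x\geq\neg y\odot\neg x\in F$, so $a\in\sigma(F)$ by Theorem~\ref{sigmafequiv}\ref{sigmafequiv4} again. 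Thus $F$ is pure, and the corrected equality follows from Proposition~\ref{filqou} exactly as you outlined.
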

\begin{proof}
With a little effort, it follows from Proposition \ref{prorh} and Theorem \ref{sigmafequiv}.
\end{proof}

The subsequent upshot characterizes the pure filters of a quotient residuated lattice.
\begin{corollary}\label{corpurefilqout}
Let $\mathfrak{A}$ be a residuated lattice and $F$ a pure filter of $\mathfrak{A}$. We have
\[\sigma(\mathfrak{A}/F)=\{H/F|F\subseteq H\in \sigma(\mathfrak{A})\}.\]
\end{corollary}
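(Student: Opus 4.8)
The plan is to apply Proposition \ref{purefilqou} to the canonical projection $\pi_F:\mathfrak{A}\longrightarrow \mathfrak{A}/F$, which is an epimorphism with $coker(\pi_F)=F$. By that proposition, $\sigma(\mathfrak{A}/F)=\{\pi_F(H)\mid F\subseteq H\in \sigma(\mathfrak{A})\}$. Since $\pi_F$ is onto and $H$ is a filter of $\mathfrak{A}$ containing $coker(\pi_F)=F$, Proposition \ref{prorh}\ref{prorh2} tells us $\pi_F(H)$ is a filter of $\mathfrak{A}/F$, and in fact $\pi_F(H)=H/F$ in the usual notation for the image of a filter under the quotient map. So the asserted formula is just a rewriting of the conclusion of Proposition \ref{purefilqou} in the special case $\mathfrak{B}=\mathfrak{A}/F$ and $f=\pi_F$.

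First I would note that the hypothesis that $F$ itself is pure is exactly what guarantees $F$ lies in the indexing set $\{H\in\sigma(\mathfrak{A})\mid F\subseteq H\}$, so $\{1\}=F/F\in\sigma(\mathfrak{A}/F)$, consistent with Definition \ref{puredef}; more importantly, purity of $F$ is needed so that $\pi_F$ is flat (Theorem \ref{flatpurethe}), which is implicitly what makes the sink behave well under the quotient — though strictly speaking Proposition \ref{purefilqou} is stated for an arbitrary epimorphism, so one can simply invoke it directly. Then I would observe that the correspondence $H\mapsto H/F$ is the familiar inclusion-preserving bijection between filters of $\mathfrak{A}$ containing $F$ and filters of $\mathfrak{A}/F$ (a consequence of $\mathcal{RL}$ being ideal-determined, or of Proposition \ref{filqou}), so that the set on the right-hand side is well-defined and the formula is precisely the instance of Proposition \ref{purefilqou} after this identification.

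The only point requiring a line of genuine argument is the verification that, under the bijection $H\leftrightarrow H/F$ for $F\subseteq H$, a filter $H$ of $\mathfrak{A}$ containing $F$ is pure in $\mathfrak{A}$ if and only if $H/F$ is pure in $\mathfrak{A}/F$ — equivalently that $\pi_F$ restricts Proposition \ref{purefilqou} to exactly those $H$ containing $F$, which it does automatically since $coker(\pi_F)=F\subseteq H$ is forced for any $H$ with $F\subseteq H$, and conversely any preimage of a filter of $\mathfrak{A}/F$ contains $F$ by Proposition \ref{prorh}\ref{prorh1}. I do not expect any real obstacle here: the corollary is essentially a translation of Proposition \ref{purefilqou} into quotient notation, so the proof is short. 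I would write simply: ``Apply Proposition \ref{purefilqou} to the canonical projection $\pi_F$, noting that $coker(\pi_F)=F$ and that for $F\subseteq H\in\mathscr{F}(\mathfrak{A})$ one has $\pi_F(H)=H/F$.''
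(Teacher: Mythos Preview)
Your proposal is correct and takes essentially the same approach as the paper: the paper's proof is simply ``By considering the canonical projection $\pi_{F}$, it follows from Proposition \ref{purefilqou},'' which is exactly what you arrive at. Your additional remarks about $coker(\pi_F)=F$, the identification $\pi_F(H)=H/F$, and the role of the purity of $F$ are accurate elaborations but not required for the argument as stated.
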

\begin{proof}
By considering the canonical projection $\pi_{F}$, it follows from Proposition \ref{purefilqou}.
\end{proof}
\section{The pure spectrum of a residuated lattice}\label{sec4}

The purely-prime notion was introduced and studied in \citet[\S 7,8]{borceux1983algebra}
for general rings (not necessarily commutative).

Let $\mathfrak{A}$ be a residuated lattice. A proper pure filter of $\mathfrak{A}$ is called \textit{purely-maximal} provided that it is a maximal element in the set of proper and pure filters of $\mathfrak{A}$. The set of purely-maximal filters of $\mathfrak{A}$ shall be denoted by $Max(\sigma(\mathfrak{A}))$.  A proper pure filter $\mathfrak{p}$ of $\mathfrak{A}$ is called \textit{purely-prime} provided that $F_1\cap F_2=\mathfrak{p}$ implies $F_1=\mathfrak{p}$ or $F_2=\mathfrak{p}$, for any $F_1,F_2\in \sigma(\mathfrak{A})$. The set of all purely-prime filters of $\mathfrak{A}$ shall be denoted by $Spp(\mathfrak{A})$.  It is obvious that $Max(\sigma(\mathfrak{A}))\subseteq Spp(\mathfrak{A})$. Zorn's lemma ensures that any proper pure filter is contained in a purely-maximal filter, and so in a purely-prime filter.
\begin{proposition}\label{preqpropu}
Let $\mathfrak{A}$ be a residuated lattice and $\mathfrak{p}$ a proper pure filter of $\mathfrak{A}$. The following assertions are equivalent:
\begin{enumerate}
\item  [(1) \namedlabel{preqpropu1}{(1)}] $\mathfrak{p}$ is purely-prime;
\item  [(2) \namedlabel{preqpropu2}{(2)}] $F_1\cap F_2\subseteq \mathfrak{p}$ implies $F_1\subseteq \mathfrak{p}$ or $F_2\subseteq \mathfrak{p}$, for any $F_1,F_2\in \sigma(\mathfrak{A})$.
\end{enumerate}
\end{proposition}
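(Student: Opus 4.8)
The plan is to prove the two directions of the equivalence separately, with the forward direction being essentially a specialization and the reverse direction requiring a small argument that leverages the lattice structure of $\sigma(\mathfrak{A})$ established in Theorem \ref{sigmfiltlatt}.

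For \ref{preqpropu2}$\Rightarrow$\ref{preqpropu1}: assume $F_1 \cap F_2 = \mathfrak{p}$ with $F_1, F_2 \in \sigma(\mathfrak{A})$. Then in particular $F_1 \cap F_2 \subseteq \mathfrak{p}$, so by hypothesis $F_1 \subseteq \mathfrak{p}$ or $F_2 \subseteq \mathfrak{p}$. Say $F_1 \subseteq \mathfrak{p}$. But also $\mathfrak{p} = F_1 \cap F_2 \subseteq F_1$, so $F_1 = \mathfrak{p}$. This direction is immediate.

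For \ref{preqpropu1}$\Rightarrow$\ref{preqpropu2}: suppose $\mathfrak{p}$ is purely-prime and $F_1 \cap F_2 \subseteq \mathfrak{p}$ with $F_1, F_2 \in \sigma(\mathfrak{A})$. The natural move is to replace $F_i$ by $F_i \veebar \mathfrak{p}$: since $\sigma(\mathfrak{A})$ is closed under $\veebar$ by Theorem \ref{sigmfiltlatt}\ref{sigmfiltlatt1}, both $G_i := F_i \veebar \mathfrak{p}$ lie in $\sigma(\mathfrak{A})$, and each contains $\mathfrak{p}$. Now compute $G_1 \cap G_2 = (F_1 \veebar \mathfrak{p}) \cap (F_2 \veebar \mathfrak{p})$. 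Using that $\mathscr{F}(\mathfrak{A})$ is a frame (so $\cap$ distributes over $\veebar$), this expands to $(F_1 \cap F_2) \veebar \mathfrak{p} \veebar (\textrm{terms } F_i \cap \mathfrak{p}) $, all of which are contained in $\mathfrak{p}$ by the hypothesis $F_1 \cap F_2 \subseteq \mathfrak{p}$ and $F_i \cap \mathfrak{p} \subseteq \mathfrak{p}$; hence $G_1 \cap G_2 \subseteq \mathfrak{p}$. Combined with $\mathfrak{p} \subseteq G_1 \cap G_2$, we get $G_1 \cap G_2 = \mathfrak{p}$, so purely-primeness gives $G_1 = \mathfrak{p}$ or $G_2 = \mathfrak{p}$; say $G_1 = \mathfrak{p}$. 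Then $F_1 \subseteq F_1 \veebar \mathfrak{p} = G_1 = \mathfrak{p}$, as desired.

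The only step requiring care is verifying that $(F_1 \veebar \mathfrak{p}) \cap (F_2 \veebar \mathfrak{p}) \subseteq \mathfrak{p}$; this is where the frame distributivity of $\mathscr{F}(\mathfrak{A})$ (recalled in Section \ref{sec2}) is used, namely $(F_1 \veebar \mathfrak{p}) \cap (F_2 \veebar \mathfrak{p}) = (F_1 \cap F_2) \veebar (F_1 \cap \mathfrak{p}) \veebar (\mathfrak{p} \cap F_2) \veebar \mathfrak{p}$, and each joinand sits inside $\mathfrak{p}$. I expect no genuine obstacle here; the subtlety is just remembering that $\sigma(\mathfrak{A})$ is a sub-$\veebar$-semilattice of $\mathscr{F}(\mathfrak{A})$ with the same meet, so the computation may be carried out in the ambient frame $\mathscr{F}(\mathfrak{A})$ and the conclusion transported back.
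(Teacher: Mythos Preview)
Your proof is correct and follows the same approach as the paper, which simply writes ``By distributivity of the lattice $\sigma(\mathfrak{A})$, it is evident.'' Your argument is the standard unpacking of that remark; note that the distributive identity $(F_1\veebar\mathfrak{p})\cap(F_2\veebar\mathfrak{p})=(F_1\cap F_2)\veebar\mathfrak{p}$ already gives the key equality in one step, without the four-term expansion.
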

\begin{proof}
  By distributivity of the lattice $\sigma(\mathfrak{A})$, it is evident.
\end{proof}

\begin{proposition}\label{spurefilqou}
Let $f:\mathfrak{A}\longrightarrow \mathfrak{B}$ be an epimorphism. We have
\[Spp(\mathfrak{B})=\{f(\mathfrak{p})\mid coker(f)\subseteq \mathfrak{p}\in Spp(\mathfrak{A})\}.\]
\end{proposition}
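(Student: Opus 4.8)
The plan is to promote the epimorphism filter correspondence of Proposition \ref{filqou} — which identifies $\mathscr{F}(\mathfrak{B})$ with the filters of $\mathfrak{A}$ containing $coker(f)$ — from ordinary filters to purely-prime ones. Proposition \ref{purefilqou} already restricts this correspondence to a bijection between $\sigma(\mathfrak{B})$ and $\{F\in\sigma(\mathfrak{A})\mid coker(f)\subseteq F\}$, with mutually inverse maps $\mathfrak{p}\rightsquigarrow f(\mathfrak{p})$ and $\mathfrak{q}\rightsquigarrow f^{\leftarrow}(\mathfrak{q})$, and I will show these same maps carry $Spp(\mathfrak{A})\cap\{F\mid coker(f)\subseteq F\}$ onto $Spp(\mathfrak{B})$. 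I would work with the characterization of purely-prime filters in Proposition \ref{preqpropu} (via $F_1\cap F_2\subseteq\mathfrak{p}$) together with the round-trip identities $f(f^{\leftarrow}(G))=G$ (as $f$ is onto) and $f^{\leftarrow}(f(F))=F$ for $coker(f)\subseteq F$; with Proposition \ref{prorh}\ref{prorh1} these also dispose of $coker(f)$-containment and of properness ($f(\mathfrak{p})=B$ would force $\mathfrak{p}=f^{\leftarrow}(B)=A$, and $0_{\mathfrak{B}}\notin\mathfrak{q}$ forces $0_{\mathfrak{A}}\notin f^{\leftarrow}(\mathfrak{q})$). One extra ingredient, slightly beyond Proposition \ref{purefilqou}, is needed: a surjective residuated morphism sends pure filters to pure filters with \emph{no} hypothesis on $coker(f)$ — indeed, if $F\in\sigma(\mathfrak{A})$ and $b=f(a)\in f(F)$ with $a\in F$, Theorem \ref{sigmafequiv}\ref{sigmafequiv4} gives $c\in a^{\perp}$ with $\neg c\in F$, and then $f(c)\vee f(a)=f(c\vee a)=1$ and $\neg f(c)=f(\neg c)\in f(F)$, so $f(F)\in\sigma(\mathfrak{B})$ again by Theorem \ref{sigmafequiv}\ref{sigmafequiv4} ($f(F)$ being a filter of $\mathfrak{B}$ since $f$ is onto).

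For the inclusion ``$\supseteq$'', take $\mathfrak{p}\in Spp(\mathfrak{A})$ with $coker(f)\subseteq\mathfrak{p}$; then $f(\mathfrak{p})$ is a proper pure filter of $\mathfrak{B}$. To see it is purely-prime, let $G_1,G_2\in\sigma(\mathfrak{B})$ with $G_1\cap G_2\subseteq f(\mathfrak{p})$ and, by Proposition \ref{purefilqou}, write $G_i=f(F_i)$ with $F_i\in\sigma(\mathfrak{A})$ and $coker(f)\subseteq F_i$. Since $coker(f)\subseteq F_1\cap F_2$ and $coker(f)\subseteq\mathfrak{p}$, applying $f^{\leftarrow}$ to $f(F_1\cap F_2)\subseteq f(F_1)\cap f(F_2)\subseteq f(\mathfrak{p})$ yields $F_1\cap F_2\subseteq\mathfrak{p}$; purely-primeness of $\mathfrak{p}$ then gives $F_1\subseteq\mathfrak{p}$ or $F_2\subseteq\mathfrak{p}$, hence $G_1\subseteq f(\mathfrak{p})$ or $G_2\subseteq f(\mathfrak{p})$. (Equivalently, the frame isomorphism $F\mapsto f(F)$ of $\{F\in\sigma(\mathfrak{A})\mid coker(f)\subseteq F\}$ onto $\sigma(\mathfrak{B})$ carries $\mathfrak{p}$, which is prime in the whole frame $\sigma(\mathfrak{A})$ and hence in the subframe, to a prime element of $\sigma(\mathfrak{B})$.)

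For the inclusion ``$\subseteq$'', take $\mathfrak{q}\in Spp(\mathfrak{B})$ and put $\mathfrak{p}=f^{\leftarrow}(\mathfrak{q})$; writing $\mathfrak{q}=f(F_0)$ with $F_0$ pure and $coker(f)\subseteq F_0$ shows $\mathfrak{p}=f^{\leftarrow}(f(F_0))=F_0$, so $\mathfrak{p}$ is a proper pure filter with $coker(f)\subseteq\mathfrak{p}$ and $f(\mathfrak{p})=\mathfrak{q}$. What remains — and this is the only real difficulty — is that $\mathfrak{p}$ is purely-prime; the snag is that an arbitrary $F\in\sigma(\mathfrak{A})$ need not contain $coker(f)$, and $coker(f)$ itself need not be pure, so neither the filter correspondence nor a ``principal up-set of the frame $\sigma(\mathfrak{A})$'' argument applies verbatim. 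I would argue directly: given $F_1,F_2\in\sigma(\mathfrak{A})$ with $F_1\cap F_2\subseteq\mathfrak{p}$, the images $f(F_1),f(F_2)$ are pure in $\mathfrak{B}$ by the ingredient above, and $f(F_1)\cap f(F_2)\subseteq\mathfrak{q}$: if $b\in f(F_1)\cap f(F_2)$, say $b=f(a_1)=f(a_2)$ with $a_i\in F_i$, then $a_1\vee a_2\in F_1\cap F_2\subseteq\mathfrak{p}$ while $f(a_1\vee a_2)=f(a_1)\vee f(a_2)=b$, so $b\in f(\mathfrak{p})=\mathfrak{q}$. Purely-primeness of $\mathfrak{q}$ now gives $f(F_1)\subseteq\mathfrak{q}$ or $f(F_2)\subseteq\mathfrak{q}$, and pulling back along $f^{\leftarrow}$ gives $F_1\subseteq\mathfrak{p}$ or $F_2\subseteq\mathfrak{p}$. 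Thus the main obstacle is concentrated in this last direction, and it is defused by the two observations that surjections preserve purity and that $a_1\vee a_2$ provides a preimage of $b$ lying inside $F_1\cap F_2$.
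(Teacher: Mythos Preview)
Your proof is correct and follows the same skeleton the paper gestures at (Propositions \ref{purefilqou} and \ref{preqpropu}), only fully fleshed out; in particular your handling of properness, the round-trip identities, and the ``$\supseteq$'' direction is exactly what the paper's one-line citation intends.

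The one place you depart from the paper is the ``$\subseteq$'' direction, where you introduce the auxiliary lemma that a surjective morphism sends \emph{any} pure filter to a pure filter (not just those containing $coker(f)$), and then use the $a_1\vee a_2$ trick. This is correct and self-contained, but it is more than is needed: the obstacle you flag --- that an arbitrary $F_i\in\sigma(\mathfrak{A})$ need not contain $coker(f)$ --- dissolves by distributivity alone. Given $F_1,F_2\in\sigma(\mathfrak{A})$ with $F_1\cap F_2\subseteq\mathfrak{p}$, set $F_i'=F_i\veebar\mathfrak{p}$; each $F_i'$ is pure (Theorem \ref{sigmfiltlatt}\ref{sigmfiltlatt1}) and contains $coker(f)\subseteq\mathfrak{p}$, and $(F_1\veebar\mathfrak{p})\cap(F_2\veebar\mathfrak{p})=(F_1\cap F_2)\veebar\mathfrak{p}=\mathfrak{p}$, so the lattice isomorphism of Proposition \ref{purefilqou} transports primeness of $\mathfrak{q}$ directly to $F_1'\subseteq\mathfrak{p}$ or $F_2'\subseteq\mathfrak{p}$, i.e.\ $F_1\subseteq\mathfrak{p}$ or $F_2\subseteq\mathfrak{p}$. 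This keeps the argument entirely inside the two propositions the paper cites, whereas your route buys a reusable side-lemma at the cost of a slightly longer detour.
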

\begin{proof}
With a little effort, it follows from Propositions \ref{purefilqou} \& \ref{preqpropu}.
\end{proof}

The next corollary characterizes the purely-prime filters of a quotient residuated lattice.
\begin{corollary}\label{corpurefilqout}
Let $\mathfrak{A}$ be a residuated lattice and $F$ a pure filter of $\mathfrak{A}$. We have
\[Spp(\mathfrak{A}/F)=\{\mathfrak{p}/F|F\subseteq \mathfrak{p}\in Spp(\mathfrak{A})\}.\]
\end{corollary}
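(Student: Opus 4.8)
The plan is to deduce this directly from Proposition \ref{spurefilqou} by specializing the epimorphism $f$ to the canonical projection $\pi_{F}:\mathfrak{A}\longrightarrow \mathfrak{A}/F$. Recall that $\pi_{F}$ is an epimorphism with $coker(\pi_{F})=F$, and for any filter $H$ of $\mathfrak{A}$ containing $F$ we have $\pi_{F}(H)=H/F$, with the correspondence $H\mapsto H/F$ being an inclusion-preserving bijection between the filters of $\mathfrak{A}$ containing $F$ and the filters of $\mathfrak{A}/F$ (this is the standard correspondence theorem, a consequence of Proposition \ref{filqou}).

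First I would invoke Proposition \ref{spurefilqou} with $f=\pi_{F}$, which gives
\[
Spp(\mathfrak{A}/F)=\{\pi_{F}(\mathfrak{p})\mid coker(\pi_{F})\subseteq \mathfrak{p}\in Spp(\mathfrak{A})\}.
\]
Then I would rewrite the right-hand side using $coker(\pi_{F})=F$ and $\pi_{F}(\mathfrak{p})=\mathfrak{p}/F$, obtaining exactly $\{\mathfrak{p}/F\mid F\subseteq \mathfrak{p}\in Spp(\mathfrak{A})\}$, which is the claimed description. This is essentially a one-line specialization, so the proof is short.

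The only point that needs a moment of care — and the place where I would be slightly careful rather than facing a genuine obstacle — is checking that the hypothesis ``$F$ a pure filter of $\mathfrak{A}$'' is what licenses the application: Proposition \ref{spurefilqou} itself only requires $f$ to be an epimorphism, so no purity of $F$ is needed to apply it; however, purity of $F$ is implicitly used to ensure that the purely-prime filters of $\mathfrak{A}$ containing $F$ are exactly the ones that descend to purely-prime filters of $\mathfrak{A}/F$ in a clean way (and, more fundamentally, purity of $F$ is what was used to describe $\sigma(\mathfrak{A}/F)$ in Corollary \ref{corpurefilqout}, on which the notion of purely-prime in the quotient rests). I would simply remark that the result follows from Proposition \ref{spurefilqou} by taking $f$ to be the canonical projection $\pi_{F}$, exactly as the proof of the companion Corollary \ref{corpurefilqout} for pure filters does.
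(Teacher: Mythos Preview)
Your proposal is correct and is essentially identical to the paper's own proof: the paper simply writes ``By considering the canonical projection $\pi_{F}$, it follows from Proposition \ref{spurefilqou}.'' Your additional remarks about where purity of $F$ enters are reasonable commentary but not needed for the argument itself, since Proposition \ref{spurefilqou} applies to any epimorphism.
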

\begin{proof}
By considering the canonical projection $\pi_{F}$, it follows from Proposition \ref{spurefilqou}.
\end{proof}

\begin{proposition}\label{comppurpri}
   Let $\mathfrak{p}$ be a purely-prime filter of a residuated lattice $\mathfrak{A}$ and $e\in \beta(\mathfrak{A})$. $\mathfrak{p}$ contains precisely one of $e$ or $\neg e$.
\end{proposition}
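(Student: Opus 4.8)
The plan is to play the two principal filters $\mathscr{F}(e)$ and $\mathscr{F}(\neg e)$ against each other inside $\mathfrak{p}$, using the purely-prime property in the form given by Proposition \ref{preqpropu}. First I would record that both candidates are pure: by Proposition \ref{boleleprop}\ref{boleleprop2} we have $\neg e\in\beta(\mathfrak{A})$ whenever $e\in\beta(\mathfrak{A})$, so $\mathscr{F}(e)$ and $\mathscr{F}(\neg e)$ both lie in $\mathscr{F}(\beta(\mathfrak{A}))$; hence by Proposition \ref{b9fxpro} they are direct summands of $\mathfrak{A}$, and then by Theorem \ref{fbetasig} they are pure filters, i.e.\ $\mathscr{F}(e),\mathscr{F}(\neg e)\in\sigma(\mathfrak{A})$.

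Next I would compute their intersection. Treating $\mathscr{F}(x)$ as $\mathscr{F}(\{1\},x)$ and applying Proposition \ref{genfilprop}\ref{genfilprop3}, together with $e\vee\neg e=1$ (Proposition \ref{boleleprop}\ref{boleleprop2}), gives
\[\mathscr{F}(e)\cap\mathscr{F}(\neg e)=\mathscr{F}(e\vee\neg e)=\mathscr{F}(1)=\{1\}\subseteq\mathfrak{p}.\]
Since $\mathfrak{p}$ is purely-prime and $\mathscr{F}(e),\mathscr{F}(\neg e)$ are pure, Proposition \ref{preqpropu} yields $\mathscr{F}(e)\subseteq\mathfrak{p}$ or $\mathscr{F}(\neg e)\subseteq\mathfrak{p}$. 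By Proposition \ref{boleleprop}\ref{boleleprop1}, $\mathscr{F}(e)=\{a\in A\mid e\le a\}$ is the up-set of $e$, and as $\mathfrak{p}$ is a filter (hence an up-set), $\mathscr{F}(e)\subseteq\mathfrak{p}$ is equivalent to $e\in\mathfrak{p}$, and similarly for $\neg e$. Thus $e\in\mathfrak{p}$ or $\neg e\in\mathfrak{p}$.

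Finally I would rule out both holding simultaneously. If $e\in\mathfrak{p}$ and $\neg e\in\mathfrak{p}$, then $e\odot\neg e\in\mathfrak{p}$ because $\mathfrak{p}$ is closed under $\odot$; but $e\odot\neg e=e\wedge\neg e=0$ by Proposition \ref{boleleprop}\ref{boleleprop7} and the fact that $\neg e$ is the complement of $e$, so $0\in\mathfrak{p}$, contradicting that $\mathfrak{p}$ is proper. Hence exactly one of $e,\neg e$ belongs to $\mathfrak{p}$. I do not anticipate a genuine obstacle here; the only points requiring care are the identification of filter-containment with element-membership for principal filters of Boolean elements, and invoking properness of $\mathfrak{p}$ for the "at most one" half.
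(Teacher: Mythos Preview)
Your proof is correct and follows essentially the same route as the paper's: the paper's proof is the one-line ``immediate consequence of Propositions \ref{genfilprop}\ref{genfilprop3} \& \ref{boleleprop}\ref{boleleprop2}, Theorem \ref{fbetasig}, and Proposition \ref{preqpropu},'' which is exactly the chain you spell out (pure principal filters from complemented elements, trivial intersection via $e\vee\neg e=1$, then the purely-prime condition). Your additional use of Proposition \ref{boleleprop}\ref{boleleprop1} and \ref{boleleprop7} to convert filter-containment to membership and to handle the ``at most one'' half simply makes explicit what the paper leaves implicit.
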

\begin{proof}
  It is an immediate consequence of Propositions \ref{genfilprop}\ref{genfilprop3} \& \ref{boleleprop}\ref{boleleprop2}, Theorem \ref{fbetasig}, and Proposition \ref{preqpropu}.
\end{proof}

\begin{proposition}\label{r1filter}
  Let $\mathfrak{A}$ be a residuated lattice. The following assertions hold:
\begin{enumerate}
\item  [$(1)$ \namedlabel{r1filter1}{$(1)$}] $Max(\sigma(\mathfrak{A}))\subseteq \rho(Max(\mathfrak{A}))$;
\item  [$(2)$ \namedlabel{r1filter2}{$(2)$}] $\rho(Spec(\mathfrak{A}))\subseteq Spp(\mathfrak{A})$;
\item  [$(3)$ \namedlabel{r1filter3}{$(3)$}] $F=\bigcap \{\mathfrak{p}\mid F\subseteq \mathfrak{p}\in Spp(\mathfrak{A})\}$, for any pure filter $F$ of $\mathfrak{A}$.
\end{enumerate}
\end{proposition}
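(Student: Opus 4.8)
The plan is to dispatch the three items in turn, relying throughout on two facts already in place: $\rho(G)$ is the \emph{greatest} pure filter contained in $G$ (Proposition \ref{rfilter}\ref{rfilter2}), and a proper pure filter $\mathfrak{p}$ is purely-prime precisely when $F_1\cap F_2\subseteq \mathfrak{p}$ forces $F_1\subseteq \mathfrak{p}$ or $F_2\subseteq \mathfrak{p}$ for every pair of pure filters $F_1,F_2$ (Proposition \ref{preqpropu}). For \ref{r1filter1}, start from a purely-maximal filter $\mathfrak{m}$; being proper, it lies inside some maximal filter $\mathfrak{n}$ of $\mathfrak{A}$. Then $\rho(\mathfrak{n})$ is pure and, as a subfilter of $\mathfrak{n}\neq A$, proper, while $\mathfrak{m}$ is a pure filter with $\mathfrak{m}\subseteq \mathfrak{n}$, hence $\mathfrak{m}\subseteq \rho(\mathfrak{n})$. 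Since $\rho(\mathfrak{n})$ is a proper pure filter containing $\mathfrak{m}$, maximality of $\mathfrak{m}$ among proper pure filters gives $\mathfrak{m}=\rho(\mathfrak{n})\in \rho(Max(\mathfrak{A}))$.

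For \ref{r1filter2}, fix $\mathfrak{p}\in Spec(\mathfrak{A})$; then $\rho(\mathfrak{p})$ is pure and, since $\rho(\mathfrak{p})\subseteq \mathfrak{p}\neq A$, proper. I would verify the criterion of Proposition \ref{preqpropu}: given pure filters $F_1,F_2$ with $F_1\cap F_2\subseteq \rho(\mathfrak{p})\subseteq \mathfrak{p}$, primeness of $\mathfrak{p}$ yields $F_1\subseteq \mathfrak{p}$ or $F_2\subseteq \mathfrak{p}$ (else choose $x\in F_1\setminus \mathfrak{p}$ and $y\in F_2\setminus \mathfrak{p}$ and observe $x\vee y\in F_1\cap F_2\subseteq \mathfrak{p}$, contradicting primeness); and whichever inclusion holds, purity of the corresponding $F_i$ together with Proposition \ref{rfilter}\ref{rfilter2} upgrades it to inclusion in $\rho(\mathfrak{p})$. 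Hence $\rho(\mathfrak{p})\in Spp(\mathfrak{A})$.

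For \ref{r1filter3}, the inclusion $F\subseteq \bigcap\{\mathfrak{p}\mid F\subseteq \mathfrak{p}\in Spp(\mathfrak{A})\}$ is immediate, so I would take $a\notin F$ and produce a purely-prime filter containing $F$ but missing $a$. Apply Zorn's lemma to the family $\mathcal{S}$ of pure filters $G$ with $F\subseteq G$ and $a\notin G$; it is nonempty ($F\in \mathcal{S}$) and closed under unions of chains, because a union of a chain of filters is a filter and $\sigma(\mathfrak{A})$ is closed under $\veebar$ by Theorem \ref{sigmfiltlatt}\ref{sigmfiltlatt1}. Let $\mathfrak{p}$ be a maximal member of $\mathcal{S}$; it is a proper (as $a\notin \mathfrak{p}$) pure filter with $F\subseteq \mathfrak{p}$. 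If $\mathfrak{p}$ failed to be purely-prime there would be pure filters $F_1,F_2$ with $F_1\cap F_2\subseteq \mathfrak{p}$ but $F_1\nsubseteq \mathfrak{p}$ and $F_2\nsubseteq \mathfrak{p}$; then $\mathfrak{p}\veebar F_1$ and $\mathfrak{p}\veebar F_2$ are pure filters containing $F$ and strictly containing $\mathfrak{p}$, so maximality of $\mathfrak{p}$ in $\mathcal{S}$ forces $a\in \mathfrak{p}\veebar F_1$ and $a\in \mathfrak{p}\veebar F_2$, whence $a\in (\mathfrak{p}\veebar F_1)\cap(\mathfrak{p}\veebar F_2)=\mathfrak{p}\veebar(F_1\cap F_2)=\mathfrak{p}$ by distributivity of the frame $\mathscr{F}(\mathfrak{A})$ and $F_1\cap F_2\subseteq \mathfrak{p}$ — contradicting $a\notin \mathfrak{p}$. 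Thus $\mathfrak{p}\in Spp(\mathfrak{A})$ separates $a$ from $F$, and \ref{r1filter3} follows.

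The genuinely delicate step is the purely-primeness of the Zorn-maximal filter $\mathfrak{p}$ in \ref{r1filter3}: one must keep $\mathfrak{p}\veebar F_i$ inside $\sigma(\mathfrak{A})$ (closure under $\veebar$) and apply the identity $(\mathfrak{p}\veebar F_1)\cap(\mathfrak{p}\veebar F_2)=\mathfrak{p}\veebar(F_1\cap F_2)$ in the correct lattice. Both ingredients come from the earlier development (Theorem \ref{sigmfiltlatt} and the frame structure of $\mathscr{F}(\mathfrak{A})$), so once they are invoked correctly the rest is bookkeeping; items \ref{r1filter1} and \ref{r1filter2} are then short consequences of Propositions \ref{rfilter} and \ref{preqpropu}.
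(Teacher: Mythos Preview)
Your arguments for \ref{r1filter1} and \ref{r1filter2} are correct and are exactly the unpacking of what the paper indicates (it says ``nothing to prove'' and ``straightforward by Proposition~\ref{rfilter}\ref{rfilter2}'').

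For \ref{r1filter3} your Zorn-based separation argument is correct, but it is a genuinely different route from the paper's. The paper simply observes that, by part \ref{r1filter2}, the family $\Psi=\{\rho(\mathfrak{m})\mid \mathfrak{m}\in h_{M}(F)\}$ already sits inside $\{\mathfrak{p}\in Spp(\mathfrak{A})\mid F\subseteq \mathfrak{p}\}$, and then invokes Proposition~\ref{rfilter}\ref{rfilter6} (which gives $F=\bigcap\Psi$ for pure $F$) to squeeze the desired intersection between $F$ and $F$. That proof is two lines because the hard work was done earlier in Proposition~\ref{rfilter}\ref{rfilter6}, which in turn rests on Theorem~\ref{sigmafequiv}\ref{sigmafequiv5}. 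Your approach instead builds the separating purely-prime filter from scratch via a standard prime-existence argument in the frame $\sigma(\mathfrak{A})$; it is self-contained and does not need Proposition~\ref{rfilter}\ref{rfilter6}, at the cost of redoing a bit of work that the paper has already packaged. Either way the key ingredients are the same: closure of $\sigma(\mathfrak{A})$ under $\veebar$ (Theorem~\ref{sigmfiltlatt}) and distributivity of $\mathscr{F}(\mathfrak{A})$.
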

\begin{proof}
\item [\ref{r1filter1}:] There is nothing to prove.
\item [\ref{r1filter2}:] It is straightforward by Proposition \ref{rfilter}\ref{rfilter2}.
\item [\ref{r1filter3}:] Let $\Phi=\{\mathfrak{p}\mid F\subseteq \mathfrak{p}\in Spp(\mathfrak{A})\}$ and $\Psi=\{\rho(\mathfrak{m})\mid \mathfrak{m}\in h_{M}(F)\}$. Using Proposition \ref{rfilter}\ref{rfilter6} and \ref{r1filter2}, it follows that $F\subseteq \bigcap\Phi\subseteq \bigcap\Psi=F$.
\end{proof}
Let $\mathfrak{A}$ be a residuated lattice. A purely-prime filter $\mathfrak{p}$ of $\mathfrak{A}$ is called  \textit{purely-minimal prime} provided that $\mathfrak{p}$ is a minimal element in the set of all purely-prime filters of $\mathfrak{A}$. The following theorem can be compared with Proposition \ref{mp}.
\begin{corollary}\label{minpurfil}
  Any purely-prime filter of a residuated lattice contains a purely-minimal prime filter.
\end{corollary}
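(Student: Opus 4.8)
The plan is the evident analogue of the proof that every prime filter contains a minimal prime filter (Proposition~\ref{mp}): a Zorn's lemma argument, the one genuine subtlety being that a chain of purely-prime filters need not have a \emph{pure} intersection, so one must pass to its pure part. Fix $\mathfrak{p}\in Spp(\mathfrak{A})$ and let $\Sigma=\{F\in Spp(\mathfrak{A})\mid F\subseteq \mathfrak{p}\}$, partially ordered by reverse inclusion; it is nonempty since $\mathfrak{p}\in\Sigma$. I will verify that every nonempty chain in $(\Sigma,\supseteq)$ has an upper bound, so that Zorn's lemma yields a maximal element $\mathfrak{q}$ of $(\Sigma,\supseteq)$, i.e. an element of $Spp(\mathfrak{A})$ that is minimal, with respect to $\subseteq$, among the purely-prime filters contained in $\mathfrak{p}$. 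Any purely-prime filter contained in $\mathfrak{q}$ then also lies in $\Sigma$, so minimality in $\Sigma$ forces it to equal $\mathfrak{q}$; hence $\mathfrak{q}$ is a purely-minimal prime filter with $\mathfrak{q}\subseteq\mathfrak{p}$, as required.

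For the upper-bound step, let $\mathcal{C}=\{\mathfrak{p}_i\}_{i\in I}$ be a nonempty chain in $(\Sigma,\supseteq)$ and set $N=\rho\!\left(\bigcap_{i}\mathfrak{p}_i\right)$ (which, by the definition of $\rho$ and Theorem~\ref{sigmfiltlatt}\ref{sigmfiltlatt1}, is just the meet of $\mathcal{C}$ in the frame $\sigma(\mathfrak{A})$). By Proposition~\ref{rfilter}\ref{rfilter2}, $N$ is a pure filter and is the greatest pure filter contained in $\bigcap_i\mathfrak{p}_i$; in particular $N\subseteq\mathfrak{p}_i$ for all $i$ and $N\subseteq\mathfrak{p}$, so $N$ is proper. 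Thus, once we know $N$ is purely-prime, $N\in\Sigma$ and $N$ is an upper bound of $\mathcal{C}$ in $(\Sigma,\supseteq)$.

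It remains to check that $N$ is purely-prime, which is the only step using the chain hypothesis. Let $F_1,F_2\in\sigma(\mathfrak{A})$ with $F_1\cap F_2\subseteq N$; by Proposition~\ref{preqpropu} it is enough to show $F_1\subseteq N$ or $F_2\subseteq N$. Since each $F_\ell$ is pure and $N$ is the largest pure filter inside $\bigcap_i\mathfrak{p}_i$, one has $F_\ell\subseteq N$ if and only if $F_\ell\subseteq\mathfrak{p}_i$ for every $i$. Assume, towards a contradiction, that $F_1\not\subseteq N$ and $F_2\not\subseteq N$, and choose $j,k\in I$ with $F_1\not\subseteq\mathfrak{p}_j$ and $F_2\not\subseteq\mathfrak{p}_k$. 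As $\mathcal{C}$ is totally ordered we may assume $\mathfrak{p}_j\subseteq\mathfrak{p}_k$; then $F_2\not\subseteq\mathfrak{p}_j$ as well (otherwise $F_2\subseteq\mathfrak{p}_j\subseteq\mathfrak{p}_k$). But $F_1\cap F_2\subseteq N\subseteq\mathfrak{p}_j$ with $\mathfrak{p}_j$ purely-prime contradicts Proposition~\ref{preqpropu}. Hence $N$ is purely-prime, completing the Zorn argument.

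The only place requiring any thought is this last paragraph: that meet-primeness (in the sense of Proposition~\ref{preqpropu}) is inherited by the frame-meet $\rho\!\left(\bigcap_i\mathfrak{p}_i\right)$ of a chain of purely-prime filters. Everything else is standard Zorn bookkeeping.
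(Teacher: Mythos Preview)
Your proof is correct and follows exactly the route the paper takes: a Zorn's lemma argument on $\Sigma=\{F\in Spp(\mathfrak{A})\mid F\subseteq\mathfrak{p}\}$, with the lower bound of a chain $\mathcal{C}$ supplied by $\rho(\bigcap\mathcal{C})$. The paper dispatches the verification that $\rho(\bigcap\mathcal{C})\in\Sigma$ with ``it is easy to see'', whereas you spell out the pure-primeness check via Proposition~\ref{preqpropu} and the chain condition; this is the right justification and is the only place anything nontrivial happens.
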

\begin{proof}
  Let $\mathfrak{A}$ be a residuated lattice and $\mathfrak{p}\in Spp(\mathfrak{A})$. Let $\Sigma$ be the set of all purely-prime filters of $\mathfrak{A}$ which are contained in $\mathfrak{p}$. If $\mathfrak{C}$ is a non-empty chain in $\Sigma$, then it is easy to see that $\rho(\bigcap\mathfrak{C})\in \Sigma$. Therefore, by the Zorn’s lemma, $\Sigma$ has at least a minimal element.
\end{proof}

In the theory of commutative rings, a basic interesting and useful result of I. S. Cohen is that for a ring to be Noetherian i.e. for every ideal to be finitely generated, it is enough to have
that every prime ideal be finitely generated \cite[p. 388]{hungerford1974algebra}. \citet[Theorem 3.23]{rasouli2019going} show that in a residuated lattice every prime filter is principal if and only if every filter is principal. \citet[Theorem 6.2]{tarizadeh2021purely} called a commutative ring $\mathfrak{A}$ semi-Noetherian provided that every pure ideal of $\mathfrak{A}$ is finitely generated, and established that if every purely-maximal ideal of a commutative ring is finitely generated, then it is semi-Noetherian. The following theorem improves and generalizes the case for residuated lattices.
\begin{lemma}\label{prinpuregen}
  Every principal pure filter of a residuated lattice is generated by a complemented element.
\end{lemma}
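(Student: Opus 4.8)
The plan is to produce the required complemented element explicitly from a purity witness. Let $F=\mathscr{F}(a)$ be a principal pure filter. Since $a\in F=\sigma(F)$, Theorem \ref{sigmafequiv}\ref{sigmafequiv3} gives $a^{\perp}\veebar F=A$, so in particular $0\in a^{\perp}\veebar \mathscr{F}(a)$. Unwinding the description of a generated filter in Proposition \ref{genfilprop}\ref{genfilprop1}, this produces $b\in a^{\perp}$ and $n\in \mathbb{N}$ with $b\odot a^{n}=0$; recall that $b\in a^{\perp}$ means exactly $b\vee a=1$.

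Next I would pass from $a$ to $a^{n}$ and read off a complemented element. From $b\vee a=1$, repeated use of \ref{res2} gives $b\vee a^{n}=1$ by a short induction. From $b\odot a^{n}=0$ and the adjunction, $a^{n}\leq \neg b$. Combining the two, $1=b\vee a^{n}\leq b\vee \neg b$, hence $b\vee \neg b=1$, i.e. $b\in \beta(\mathfrak{A})$ by Proposition \ref{boleleprop}\ref{boleleprop2}. Consequently $b\wedge \neg b=b\odot \neg b=0$ by Proposition \ref{boleleprop}\ref{boleleprop7}, so $b$ and $\neg b$ are mutual (unique) complements and $\neg b\in \beta(\mathfrak{A})$ as well.

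It then remains to check $F=\mathscr{F}(\neg b)$. For one inclusion, $\neg b\geq a^{n}\in \mathscr{F}(a)$ and filters are upward closed, so $\neg b\in \mathscr{F}(a)$, whence $\mathscr{F}(\neg b)\subseteq \mathscr{F}(a)$. For the reverse, using \ref{res1} together with $\neg b\odot b=0$ one computes $\neg b=\neg b\odot(a\vee b)=(\neg b\odot a)\vee(\neg b\odot b)=\neg b\odot a\leq a$, so $a\in \mathscr{F}(\neg b)$ and $\mathscr{F}(a)\subseteq \mathscr{F}(\neg b)$. Hence $F=\mathscr{F}(\neg b)$ with $\neg b$ complemented, as desired.

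The routine parts — the induction for $b\vee a^{n}=1$ and the adjunction bookkeeping — are harmless. The one point needing care is that $a^{n}$ itself need not be complemented; the element that simultaneously lies in $\beta(\mathfrak{A})$ and regenerates $F$ is $\neg b$, and verifying both of these properties (especially $b\vee\neg b=1$, which is exactly where the upgrade $b\vee a^{n}=1$ is used) is the real content of the argument.
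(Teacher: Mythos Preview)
Your proof is correct. The paper's own proof is the single line ``It is an immediate consequence of Proposition~\ref{b9fxpro}'': from $a\in F=\sigma(F)$ one gets $a^{\perp}\veebar F=A$, and since $\mathscr{F}(a)^{\perp}=a^{\perp}$ this reads $F\veebar F^{\perp}=A$, whence $F\in\mathscr{F}(\beta(\mathfrak{A}))$ by Proposition~\ref{b9fxpro}. Your argument is not really a different route so much as an explicit, self-contained unpacking of the implication \ref{b9fxpro2}$\Rightarrow$\ref{b9fxpro3} of that proposition in the principal case: you extract the witness $b\in a^{\perp}$ with $b\odot a^{n}=0$, verify $b\in\beta(\mathfrak{A})$ via $b\vee a^{n}=1$, and then check directly that $\neg b$ regenerates $F$. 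What this buys you is a constructive identification of the complemented generator and independence from the cited external result; what the paper's version buys is brevity. One incidental remark: your appeal to Proposition~\ref{boleleprop}\ref{boleleprop7} for $b\odot\neg b=0$ is unnecessary, since $b\odot(b\to 0)\leq 0$ holds in any residuated lattice by adjunction.
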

\begin{proof}
  It is an immediate consequence of Proposition \ref{b9fxpro}.
\end{proof}
\begin{theorem}(Cohen type theorem)\label{purpriprithe}
Let $\mathfrak{A}$ be a residuated lattice. The following assertions are equivalent:
\begin{enumerate}
\item  [(1) \namedlabel{purpriprithe1}{(1)}] each purely-maximal filter of $\mathfrak{A}$ is principal;
\item  [(2) \namedlabel{purpriprithe2}{(2)}] each purely-prime filter of $\mathfrak{A}$ is principal;
\item  [(3) \namedlabel{purpriprithe3}{(3)}] each pure filter of $\mathfrak{A}$ is principal.
\end{enumerate}
\end{theorem}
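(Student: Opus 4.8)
The implications $(3)\Rightarrow(2)\Rightarrow(1)$ are immediate, since every purely-maximal filter is purely-prime and every purely-prime filter is in particular pure. So the plan is to prove $(1)\Rightarrow(3)$ by a Cohen-style maximality argument. Assume $(1)$ and, aiming at a contradiction, suppose $\mathfrak{A}$ has a pure filter that is not principal. Let $\Sigma$ be the set of all non-principal pure filters of $\mathfrak{A}$, ordered by inclusion; by assumption $\Sigma\neq\emptyset$. First I would check the chain hypothesis of Zorn's lemma: if $\mathfrak{C}$ is a chain in $\Sigma$, then $U:=\veebar\mathfrak{C}=\bigcup\mathfrak{C}$ is again pure, because $\sigma(\mathfrak{A})$ is closed under arbitrary joins by Theorem \ref{sigmfiltlatt}\ref{sigmfiltlatt1} (equivalently, by Proposition \ref{primesigmad}), and $U$ is non-principal: if $U=\mathscr{F}(x)$ then $x$ lies in some $F\in\mathfrak{C}$, so $\mathscr{F}(x)\subseteq F\subseteq U=\mathscr{F}(x)$ forces $F=\mathscr{F}(x)$, contradicting $F\in\Sigma$. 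Hence $\Sigma$ has a maximal element $\mathfrak{m}$, which is a proper (as $A=\mathscr{F}(0)\notin\Sigma$), non-principal pure filter.

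The heart of the proof is to show that $\mathfrak{m}$ is purely-maximal; together with $(1)$ this contradicts the non-principality of $\mathfrak{m}$ and finishes the argument. So suppose $\mathfrak{m}$ is not purely-maximal, i.e. there is a proper pure filter $G$ with $\mathfrak{m}\subsetneq G$. By maximality of $\mathfrak{m}$ in $\Sigma$, $G\notin\Sigma$, so $G$ is principal, and by Lemma \ref{prinpuregen} we may write $G=\mathscr{F}(e)$ with $e\in\beta(\mathfrak{A})$, where $e\neq 0$ since $G$ is proper. From $\mathfrak{m}\subseteq\mathscr{F}(e)$ and $\mathscr{F}(e)\cap\mathscr{F}(\neg e)=\mathscr{F}(e\vee\neg e)=\{1\}$ (Propositions \ref{genfilprop}\ref{genfilprop3} and \ref{boleleprop}\ref{boleleprop2}) we get $\mathfrak{m}\cap\mathscr{F}(\neg e)=\{1\}$. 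Put $H:=\mathfrak{m}\veebar\mathscr{F}(\neg e)$, which is pure by Theorem \ref{sigmfiltlatt}\ref{sigmfiltlatt1}; using distributivity of $\mathscr{F}(\mathfrak{A})$ one computes
\[H\cap\mathscr{F}(e)=(\mathfrak{m}\cap\mathscr{F}(e))\veebar(\mathscr{F}(\neg e)\cap\mathscr{F}(e))=\mathfrak{m}\veebar\{1\}=\mathfrak{m}.\]
Now a trichotomy on $H$ closes the argument. If $H=A$, then $\mathscr{F}(\neg e)$ is a complement of $\mathfrak{m}$ in $\mathscr{F}(\mathfrak{A})$, so $\mathfrak{m}$ is a direct summand and hence principal by Proposition \ref{b9fxpro} and Lemma \ref{prinpuregen}. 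If $H$ is proper and principal, say $H=\mathscr{F}(h)$, then $\mathfrak{m}=H\cap\mathscr{F}(e)=\mathscr{F}(h\vee e)$ is principal. If $H$ is proper and non-principal, then $H\in\Sigma$ and $H\supseteq\mathfrak{m}$, so maximality gives $H=\mathfrak{m}$, whence $\mathscr{F}(\neg e)\subseteq\mathfrak{m}\subseteq\mathscr{F}(e)$, so $\neg e\in\mathscr{F}(e)$, i.e. $e\leq\neg e$, giving $e=e\wedge\neg e=0$ (Proposition \ref{boleleprop}\ref{boleleprop3}), contrary to $e\neq 0$. In every case we contradict $\mathfrak{m}\in\Sigma$, so no such $G$ exists and $\mathfrak{m}$ is purely-maximal, as desired.

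I expect the delicate point to be precisely this last paragraph. The naive Cohen argument only yields that a maximal element of $\Sigma$ is purely-\emph{prime} (from $F_1\cap F_2=\mathfrak{m}$ with $F_1,F_2\supsetneq\mathfrak{m}$ forced to be principal, together with Proposition \ref{genfilprop}\ref{genfilprop3}), which would let $(2)\Rightarrow(3)$ go through but not $(1)\Rightarrow(3)$; upgrading to purely-maximality is what requires comparing $\mathfrak{m}$ with the direct summand $\mathscr{F}(\neg e)$ and exploiting the identity $H\cap\mathscr{F}(e)=\mathfrak{m}$. Everything else — the trivial implications, and checking that directed unions of pure filters remain in $\Sigma$ — is routine.
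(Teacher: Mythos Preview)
Your proof is correct. The paper organizes the argument differently: it proves $(1)\Rightarrow(2)\Rightarrow(3)$ rather than $(1)\Rightarrow(3)$ directly. For $(1)\Rightarrow(2)$ the paper invokes Lemma~\ref{prinpuregen}: given a purely-prime $\mathfrak{p}$, pick a purely-maximal $\mathfrak{m}\supseteq\mathfrak{p}$, write $\mathfrak{m}=\mathscr{F}(e)$ with $e\in\beta(\mathfrak{A})$, and use $\mathscr{F}(e)\cap\mathscr{F}(\neg e)=\{1\}\subseteq\mathfrak{p}$ together with purely-primeness to force $\mathfrak{p}=\mathscr{F}(e)$. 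For $(2)\Rightarrow(3)$ the paper runs the ``naive'' Cohen argument you describe in your final paragraph, showing only that a maximal non-principal pure filter is purely-prime. Your route fuses these two steps: you run Zorn once and push the maximal non-principal $\mathfrak{m}$ all the way up to purely-maximal via the trichotomy on $H=\mathfrak{m}\veebar\mathscr{F}(\neg e)$. The ingredients (Lemma~\ref{prinpuregen}, the complementary pair $\mathscr{F}(e),\mathscr{F}(\neg e)$, distributivity of $\mathscr{F}(\mathfrak{A})$) are identical; your packaging is a bit more economical, while the paper's decomposition isolates the pleasant side fact that under $(1)$ every purely-prime filter is already purely-maximal.
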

\begin{proof}
  \item [\ref{purpriprithe1}$\Rightarrow$\ref{purpriprithe2}:] It follows by Lemma \ref{prinpuregen}.
  \item [\ref{purpriprithe2}$\Rightarrow$\ref{purpriprithe3}:] Set $\Psi$ be the set of all non-principal pure filters of $\mathfrak{A}$. Let $\Psi$ be a non-empty set. By Zorn's lemma $\Psi$ has a maximal element $\mathfrak{m}$. Obviously, $\mathfrak{m}$ is proper. Assume that $F_1$ and $F_2$ are two pure filters of $\mathfrak{A}$ such that $F_1\cap F_2\subseteq \mathfrak{m}$. Suppose that $F_1\nsubseteq \mathfrak{m}$ and $F_2\nsubseteq \mathfrak{m}$. By Theorem \ref{sigmfiltlatt}, $\mathfrak{m}\veebar F_{1}$ and $\mathfrak{m}\veebar F_{2}$ are pure filters. So $\mathfrak{m}\veebar F_{1}=\mathscr{F}(a)$ and $\mathfrak{m}\veebar F_{2}=\mathscr{F}(b)$, for some $a,b\in A$. Applying Proposition \ref{genfilprop}\ref{genfilprop3} and distributivity of $\mathscr{F}(\mathfrak{A})$, it follows that $\mathfrak{m}=\mathscr{F}(a\vee b)$; a contradiction. So $\mathfrak{m}$ is purely-prime; a contradiction, too.
  \item [\ref{purpriprithe3}$\Rightarrow$\ref{purpriprithe1}:] It is evident.
\end{proof}

Let $\mathfrak{A}$ be a residuated lattice. For each pure filter $F$ of $\mathfrak{A}$ we set $d_{\kappa}(F)=\{\mathfrak{p}\in Spp(\mathfrak{A})\mid F\nsubseteq \mathfrak{p}\}$. We can topologize $Spp(\mathfrak{A})$  by taking the set $\{d_{\kappa}(F)\mid F\in \sigma(\mathfrak{A})\}$ as the open sets. One can easily verify the following properties:
\begin{itemize}
  \item $d_{\kappa}(\{1\})=\emptyset$ and $d_{\kappa}(A)=Spp(\mathfrak{A})$;
  \item $d_{\kappa}(F_1)\cap d_{\kappa}(F_2)=d_{\kappa}(F_1\cap F_2)$, for any $F_1,F_2\in \sigma(\mathfrak{A})$;
  \item $\bigcup_{F\in \mathcal{F}} d_{\kappa}(F)=d_{\kappa}(\veebar \mathcal{F})$, for any $\mathcal{F}\subseteq \sigma(\mathfrak{A})$.
\end{itemize}

The set $Spp(\mathfrak{A})$ endowed with this topology shall be called the \textit{pure spectrum} of $\mathfrak{A}$. It is obvious that the closed subsets of the pure spectrum $Spp(\mathfrak{A})$ are precisely of the form $h_{\kappa}(F) =\{\mathfrak{p}\in Spp(\mathfrak{A})\mid F\subseteq \mathfrak{p}\}$, where $F$ runs over pure filters of $\mathfrak{A}$. In the sequel, if $e$ is a complemented element of $\mathfrak{A}$, then we shall denote $d_{\kappa}(\mathscr{F}(e))$ simply by $d_{\kappa}(e)$.

The following theorem shows that the pure spectrum of a residuated lattice is a $T_0$ space.
\begin{theorem}\label{t0spppacecon}
Let $\mathfrak{A}$ be a residuated lattice. $Spp(\mathfrak{A})$ is a $T_0$ spaces.
\end{theorem}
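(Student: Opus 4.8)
The plan is to argue directly from the description of the topology on $Spp(\mathfrak{A})$ just given. Recall that the open sets of the pure spectrum are exactly the sets $d_{\kappa}(F)=\{\mathfrak{r}\in Spp(\mathfrak{A})\mid F\nsubseteq \mathfrak{r}\}$, where $F$ ranges over the \emph{pure} filters of $\mathfrak{A}$. Thus, given two distinct purely-prime filters $\mathfrak{p}$ and $\mathfrak{q}$, the task is to exhibit one pure filter $F$ for which exactly one of the inclusions $F\subseteq \mathfrak{p}$, $F\subseteq \mathfrak{q}$ fails.

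First I would make the obvious reduction: since $\mathfrak{p}$ and $\mathfrak{q}$ are distinct as subsets of $A$, they cannot each be contained in the other, so after relabelling we may assume $\mathfrak{p}\nsubseteq\mathfrak{q}$. The key observation is then that $\mathfrak{p}$ is itself a pure filter --- this is part of the definition of ``purely-prime'' --- so $d_{\kappa}(\mathfrak{p})$ is a genuine open set of $Spp(\mathfrak{A})$. It remains to check membership: from $\mathfrak{p}\subseteq\mathfrak{p}$ we get $\mathfrak{p}\notin d_{\kappa}(\mathfrak{p})$, while from $\mathfrak{p}\nsubseteq\mathfrak{q}$ we get $\mathfrak{q}\in d_{\kappa}(\mathfrak{p})$. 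Hence $d_{\kappa}(\mathfrak{p})$ is an open neighbourhood of $\mathfrak{q}$ missing $\mathfrak{p}$, which is exactly the $T_0$ separation condition.

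There is no real obstacle here; the single point worth care is that the topology is generated by the sets $d_{\kappa}(F)$ with $F$ ranging over pure filters only, so the proof must use that $\mathfrak{p}$ is pure rather than passing to some auxiliary (possibly non-pure) filter separating an element of $\mathfrak{p}\setminus\mathfrak{q}$. Equivalently, one may phrase the same argument with the closed sets $h_{\kappa}(F)$: the set $h_{\kappa}(\mathfrak{p})$ is closed, contains $\mathfrak{p}$, and --- because $\mathfrak{p}\nsubseteq\mathfrak{q}$ --- omits $\mathfrak{q}$, so its complement separates the two points. Either way the verification is a single line once the relabelling $\mathfrak{p}\nsubseteq\mathfrak{q}$ has been arranged.
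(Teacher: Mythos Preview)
Your proof is correct and follows exactly the same approach as the paper: assume $\mathfrak{p}\nsubseteq\mathfrak{q}$ and use the open set $d_{\kappa}(\mathfrak{p})$ as a neighbourhood of $\mathfrak{q}$ that avoids $\mathfrak{p}$. Your write-up is in fact more explicit than the paper's, which compresses the argument to a single line.
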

\begin{proof}
Let $\mathfrak{p}$ and $\mathfrak{q}$ be two disjoint purely-prime filter of $\mathfrak{A}$. Suppose that $\mathfrak{p}\nsubseteq \mathfrak{q}$. So $d_{\kappa}(\mathfrak{p})$ is a neighborhood of $\mathfrak{q}$ not containing $\mathfrak{p}$.
\end{proof}

Let $\mathfrak{A}$ be a residuated lattice. In the following, for a given subset $\pi$ of $Spp(\mathfrak{A})$, $cl_{\kappa}(\pi)$ stands for the closure of $\pi$ in the topological space $Spp(\mathfrak{A})$. If $\pi=\{\mathfrak{p}\}$, for some purely-prime filter $\mathfrak{p}$ of $\mathfrak{A}$, then $cl_{\kappa}(\{\mathfrak{p}\})$ is simply denoted by $cl_{\kappa}(\mathfrak{p})$.
\begin{lemma}\label{closurofp}
Let $\mathfrak{A}$ be a residuated lattice and $\mathfrak{p}$ a purely-prime filter of $\mathfrak{A}$. Then $cl_{\kappa}(\mathfrak{p})=h_{\kappa}(\mathfrak{p})$.
\end{lemma}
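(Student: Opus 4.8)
The plan is to establish the two inclusions separately, each relying only on the description of the closed sets of $Spp(\mathfrak{A})$ recorded just before the statement, namely that they are exactly the sets $h_{\kappa}(F)=\{\mathfrak{q}\in Spp(\mathfrak{A})\mid F\subseteq\mathfrak{q}\}$ with $F$ ranging over $\sigma(\mathfrak{A})$.

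First I would show $cl_{\kappa}(\mathfrak{p})\subseteq h_{\kappa}(\mathfrak{p})$. Since $\mathfrak{p}$ is itself a pure filter, $h_{\kappa}(\mathfrak{p})$ is a closed subset of the pure spectrum, and it contains $\mathfrak{p}$ because $\mathfrak{p}\subseteq\mathfrak{p}$. As $cl_{\kappa}(\mathfrak{p})$ is by definition the smallest closed set containing $\mathfrak{p}$, the inclusion is immediate.

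For the reverse inclusion $h_{\kappa}(\mathfrak{p})\subseteq cl_{\kappa}(\mathfrak{p})$, I would write $cl_{\kappa}(\mathfrak{p})$ as the intersection of all closed sets containing $\mathfrak{p}$. Each such closed set has the form $h_{\kappa}(F)$ for some $F\in\sigma(\mathfrak{A})$, and the condition $\mathfrak{p}\in h_{\kappa}(F)$ is precisely $F\subseteq\mathfrak{p}$. Then for any $\mathfrak{q}\in h_{\kappa}(\mathfrak{p})$ we have $F\subseteq\mathfrak{p}\subseteq\mathfrak{q}$, so $\mathfrak{q}\in h_{\kappa}(F)$; hence $h_{\kappa}(\mathfrak{p})\subseteq h_{\kappa}(F)$ for every closed set $h_{\kappa}(F)$ containing $\mathfrak{p}$. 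Intersecting over all of them gives $h_{\kappa}(\mathfrak{p})\subseteq cl_{\kappa}(\mathfrak{p})$, which completes the argument.

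There is no genuine obstacle here: the one thing that must be noticed is that $\mathfrak{p}$, being a pure filter, itself determines the closed set $h_{\kappa}(\mathfrak{p})$ of the pure spectrum, after which the antitonicity of the assignment $F\mapsto h_{\kappa}(F)$ does all the work. This is simply the pure-spectrum counterpart of the familiar fact that in $Spec_{h}(\mathfrak{A})$ the closure of a point $\mathfrak{q}$ equals $h(\mathfrak{q})$.
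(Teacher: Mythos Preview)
Your proof is correct and follows essentially the same approach as the paper: both arguments use that the closed sets of $Spp(\mathfrak{A})$ are exactly the $h_{\kappa}(F)$ with $F\in\sigma(\mathfrak{A})$, note that $h_{\kappa}(\mathfrak{p})$ is itself closed, and then exploit the antitonicity $F\subseteq\mathfrak{p}\Rightarrow h_{\kappa}(\mathfrak{p})\subseteq h_{\kappa}(F)$. The only cosmetic difference is that the paper writes $cl_{\kappa}(\mathfrak{p})=h_{\kappa}(F)$ for a single pure filter $F$ (since the closure is itself a closed set) rather than intersecting over all closed supersets.
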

\begin{proof}
Let $cl_{\kappa}(\mathfrak{p})=h_{\kappa}(F)$, for some pure filter $F$ of $\mathfrak{A}$. Assume that $\mathfrak{q}\in h_{\kappa}(\mathfrak{p})$. Thus $\mathfrak{p}\subseteq \mathfrak{q}$, and this implies that $\mathfrak{q}\in h_{\kappa}(F)$. The converse is evident.
\end{proof}

The next theorem gives a necessary and sufficient condition for the pure spectrum of a residuated lattice to be a $T_1$ space. Recall that a topological space is $T_1$ if and only if points are closed.
\begin{theorem}\label{t1spaspp}
Let $\mathfrak{A}$ be a residuated lattice. $Spp(\mathfrak{A})$ is $T_{1}$ if and only if it is an antichain.
\end{theorem}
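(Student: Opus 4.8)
The plan is to reduce the statement to Lemma \ref{closurofp} together with the standard characterization that a topological space is $T_1$ precisely when every singleton is closed. First I would recall that for a purely-prime filter $\mathfrak{p}$ of $\mathfrak{A}$ we have, by Lemma \ref{closurofp}, $cl_{\kappa}(\mathfrak{p})=h_{\kappa}(\mathfrak{p})=\{\mathfrak{q}\in Spp(\mathfrak{A})\mid \mathfrak{p}\subseteq \mathfrak{q}\}$. Consequently the singleton $\{\mathfrak{p}\}$ is closed in $Spp(\mathfrak{A})$ if and only if $h_{\kappa}(\mathfrak{p})=\{\mathfrak{p}\}$, that is, if and only if there is no purely-prime filter of $\mathfrak{A}$ properly containing $\mathfrak{p}$. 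This observation is the hinge of both implications.

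For the implication ($\Leftarrow$) I would argue as follows: assume $(Spp(\mathfrak{A}),\subseteq)$ is an antichain. Then for each $\mathfrak{p}\in Spp(\mathfrak{A})$ no purely-prime filter properly contains $\mathfrak{p}$, so by the observation above $\{\mathfrak{p}\}$ is closed; since $\mathfrak{p}$ is arbitrary, every point of $Spp(\mathfrak{A})$ is closed and hence $Spp(\mathfrak{A})$ is $T_1$.

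For the implication ($\Rightarrow$): assume $Spp(\mathfrak{A})$ is $T_1$, so every singleton is closed and therefore $h_{\kappa}(\mathfrak{p})=\{\mathfrak{p}\}$ for every $\mathfrak{p}\in Spp(\mathfrak{A})$. If there existed distinct $\mathfrak{p},\mathfrak{q}\in Spp(\mathfrak{A})$ with $\mathfrak{p}\subseteq \mathfrak{q}$, then $\mathfrak{q}\in h_{\kappa}(\mathfrak{p})\setminus\{\mathfrak{p}\}$, contradicting $h_{\kappa}(\mathfrak{p})=\{\mathfrak{p}\}$. Hence no two distinct purely-prime filters are comparable, i.e. $Spp(\mathfrak{A})$ is an antichain.

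I do not anticipate a genuine obstacle here: the entire content lives in Lemma \ref{closurofp}. The only point deserving a word of care is the (elementary) symmetry built into the notion of an antichain — the conditions ``no purely-prime filter properly contains another'' and ``no purely-prime filter is properly contained in another'' coincide once both are quantified over all of $Spp(\mathfrak{A})$ — so verifying closedness of every singleton does indeed capture the full antichain condition. The remaining task is simply to state this equivalence between the order-theoretic and topological formulations cleanly.
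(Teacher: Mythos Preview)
Your proposal is correct and follows essentially the same approach as the paper: the paper's proof simply states that the result is an immediate consequence of Lemma~\ref{closurofp}, and what you have written is exactly the unpacking of that consequence via the standard ``$T_1$ iff all singletons are closed'' characterization.
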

\begin{proof}
It is an immediate consequence of Lemma \ref{closurofp}.
\end{proof}

\begin{proposition}\label{sigmad}
  Let $\mathfrak{A}$ be a residuated lattice. The mapping $\delta:\sigma(\mathfrak{A})\longrightarrow \{d_{\kappa}(F)\mid F\in \sigma(\mathfrak{A})\}$, defined by $\delta(F)=d_{\kappa}(F)$, for any $F\in \sigma(\mathfrak{A})$, is a lattice isomorphism.
\end{proposition}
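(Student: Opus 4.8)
The plan is to verify that $\delta$ is a bijective homomorphism of bounded lattices; since the inverse of a bijective lattice homomorphism is automatically a homomorphism, this will give that $\delta$ is a lattice isomorphism. Surjectivity of $\delta$ is immediate from the definition of its codomain. That $\delta$ is a lattice homomorphism preserving bounds is nothing but the three identities recorded just before the definition of the pure spectrum: $d_{\kappa}(\{1\})=\emptyset$ and $d_{\kappa}(A)=Spp(\mathfrak{A})$ say $\delta$ preserves least and greatest elements; $d_{\kappa}(F_1\cap F_2)=d_{\kappa}(F_1)\cap d_{\kappa}(F_2)$ says $\delta$ preserves meets; and the two-element case of $\bigcup_{F\in\mathcal F}d_{\kappa}(F)=d_{\kappa}(\veebar\mathcal F)$ says $\delta$ preserves joins, where on the codomain the operations $\cap,\cup$ are the meet and join in the frame of open sets of $Spp(\mathfrak{A})$ (which is a bounded lattice since a topology is closed under finite intersections and arbitrary unions).

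The one substantive point is injectivity. Suppose $F,G\in\sigma(\mathfrak{A})$ satisfy $d_{\kappa}(F)=d_{\kappa}(G)$. Taking complements inside $Spp(\mathfrak{A})$ yields $h_{\kappa}(F)=h_{\kappa}(G)$; that is, the purely-prime filters containing $F$ are exactly those containing $G$. Now invoke Proposition \ref{r1filter}\ref{r1filter3}, which says that every pure filter equals the intersection of the purely-prime filters lying above it. Hence
\[F=\bigcap\{\mathfrak{p}\mid F\subseteq\mathfrak{p}\in Spp(\mathfrak{A})\}=\bigcap\{\mathfrak{p}\mid G\subseteq\mathfrak{p}\in Spp(\mathfrak{A})\}=G,\]
so $\delta$ is injective. (The degenerate situation $h_{\kappa}(F)=\emptyset$ is subsumed: in that case $F=A$, under the usual convention that the empty intersection is $A$, which is consistent since any proper pure filter is contained in a purely-maximal, hence purely-prime, filter.)

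Being a bijective homomorphism of bounded lattices, $\delta$ is a lattice isomorphism. Equivalently, one may record the order statement directly: $F\subseteq G$ iff $h_{\kappa}(G)\subseteq h_{\kappa}(F)$ iff $d_{\kappa}(F)\subseteq d_{\kappa}(G)$, where the first equivalence uses Proposition \ref{r1filter}\ref{r1filter3} once more, so that both $\delta$ and $\delta^{-1}$ are order-preserving. The only place where any genuine work enters is the injectivity step, and it is entirely concentrated in the separation property of Proposition \ref{r1filter}\ref{r1filter3}; everything else is bookkeeping with the lattice identities already established for $d_{\kappa}$.
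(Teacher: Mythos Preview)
Your proof is correct and follows essentially the same approach as the paper: surjectivity is immediate, and injectivity is obtained from Proposition~\ref{r1filter}\ref{r1filter3}. Your write-up is in fact more complete, since you explicitly verify the lattice-homomorphism clauses via the $d_{\kappa}$ identities, whereas the paper's proof only records bijectivity and leaves the homomorphism part implicit.
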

\begin{proof}
It is evident that $\delta$ is a surjection. The injectivity follows by Proposition \ref{r1filter}\ref{r1filter3}.
\end{proof}
\begin{theorem}\label{puresppcomp}
  Let $\mathfrak{A}$ be a residuated lattice. $Spp(\mathfrak{A})$ is a compact space.
\end{theorem}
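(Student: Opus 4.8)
The plan is to exploit the order-correspondence between the basic open sets $d_{\kappa}(\cdot)$ and the pure filters, and to reduce compactness of $Spp(\mathfrak{A})$ to the trivial observation that the top filter $A=\mathscr{F}(0)$ is principal, hence ``finitely generated'' by a single element.

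First I would note that, by the very definition of the topology on $Spp(\mathfrak{A})$, every open set has the form $d_{\kappa}(F)$ with $F\in\sigma(\mathfrak{A})$, and that for any family $\mathcal{F}\subseteq\sigma(\mathfrak{A})$ one has $\bigcup_{F\in\mathcal{F}}d_{\kappa}(F)=d_{\kappa}(\veebar\mathcal{F})$, where $\veebar\mathcal{F}\in\sigma(\mathfrak{A})$ by Theorem \ref{sigmfiltlatt}\ref{sigmfiltlatt1}. Hence, given an arbitrary open cover $Spp(\mathfrak{A})=\bigcup_{i\in I}d_{\kappa}(F_i)$ with each $F_i$ pure, we obtain $d_{\kappa}\!\left(\veebar_{i\in I}F_i\right)=Spp(\mathfrak{A})=d_{\kappa}(A)$. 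Since $\delta\colon\sigma(\mathfrak{A})\to\{d_{\kappa}(F)\mid F\in\sigma(\mathfrak{A})\}$ is a bijection by Proposition \ref{sigmad} (equivalently, a proper pure filter sits inside some purely-prime filter, which would then avoid every $d_{\kappa}(F_i)$, contradicting the covering property), this forces $\veebar_{i\in I}F_i=A$.

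The key step is now to pass from this infinite join to a finite one. Since $\veebar_{i\in I}F_i=\mathscr{F}\!\left(\bigcup_{i\in I}F_i\right)$ and $0\in A$, the description of the filter generated by a set (Proposition \ref{genfilprop}\ref{genfilprop1} together with \ref{genfilprop4}, applied inductively) yields finitely many indices $i_1,\dots,i_n\in I$ and elements $a_j\in F_{i_j}$ such that $a_1\odot\cdots\odot a_n\le 0$, that is, $a_1\odot\cdots\odot a_n=0$; therefore $0\in F_{i_1}\veebar\cdots\veebar F_{i_n}$ and consequently $F_{i_1}\veebar\cdots\veebar F_{i_n}=A$.

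Finally, combining the two properties of $d_{\kappa}$ recorded above, $d_{\kappa}(F_{i_1})\cup\cdots\cup d_{\kappa}(F_{i_n})=d_{\kappa}(F_{i_1}\veebar\cdots\veebar F_{i_n})=d_{\kappa}(A)=Spp(\mathfrak{A})$, which is the required finite subcover, so $Spp(\mathfrak{A})$ is compact. The only substantive point is the finiteness step of the previous paragraph, which is really the statement that the principal filter $A=\mathscr{F}(0)$ is a compact element of the frame $\mathscr{F}(\mathfrak{A})$; the remainder is a formal transcription through the lattice isomorphism $\delta$.
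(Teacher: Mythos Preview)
Your proof is correct and follows essentially the same approach as the paper: the paper's proof consists of the single line ``It is an immediate consequence of Proposition \ref{sigmad},'' and your argument is precisely the unpacking of that consequence---using the lattice isomorphism $\delta$ to transport the fact that $A=\mathscr{F}(0)$ is a compact element of $\mathscr{F}(\mathfrak{A})$ to compactness of the open cover. Your exposition simply makes explicit the finiteness step that the paper leaves implicit.
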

\begin{proof}
  It is an immediate consequence of Proposition \ref{sigmad}.
\end{proof}

Following by \cite[p. 94]{bourbaki1972commutative}, a non-empty subset $X$
of a topological space $A$ is called \textit{irreducible} if for every pair
of closed subsets $C_1$ and $C_2$ in $A$, $X\subseteq C_1\cup C_2$ implies $X\subseteq C_1$ or $X\subseteq C_2$. Equivalently, all non empty open subsets of X are dense or any two nonempty open sets have nonempty intersection.
\begin{proposition}\label{irrsppdclosub}
  Let $\mathfrak{A}$ be a residuated lattice. The following assertions are equivalent:
\begin{enumerate}
   \item [(1) \namedlabel{irrsppdclosub1}{(1)}] $C$ is an irreducible closed subset of $Spp(\mathfrak{A})$;
   \item [(2) \namedlabel{irrsppdclosub2}{(2)}] $C=h_{\kappa}(\mathfrak{p})$, for some purely-prime filter $\mathfrak{p}$ of $\mathfrak{A}$.
 \end{enumerate}
\end{proposition}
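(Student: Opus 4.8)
The plan is to establish the equivalence in the usual way for spectral‑type spaces: show first that every set of the form $h_{\kappa}(\mathfrak{p})$ is irreducible because it is the closure of a point, and then show conversely that the kernel (intersection) of an irreducible closed set is a purely‑prime filter.

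\textbf{Direction $(2)\Rightarrow(1)$.} Here I would simply invoke Lemma \ref{closurofp}, which identifies $h_{\kappa}(\mathfrak{p})$ with $cl_{\kappa}(\mathfrak{p})$, and then recall the elementary topological fact that the closure of a single point is irreducible: if $cl_{\kappa}(\mathfrak{p})\subseteq C_{1}\cup C_{2}$ with $C_{1},C_{2}$ closed, then $\mathfrak{p}\in C_{1}$ or $\mathfrak{p}\in C_{2}$, and hence $cl_{\kappa}(\mathfrak{p})\subseteq C_{1}$ or $cl_{\kappa}(\mathfrak{p})\subseteq C_{2}$.

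\textbf{Direction $(1)\Rightarrow(2)$.} Let $C$ be irreducible and closed. Since the closed subsets of $Spp(\mathfrak{A})$ are exactly the sets $h_{\kappa}(F)$ with $F$ a pure filter, write $C=h_{\kappa}(F)$ and put $\mathfrak{p}:=\bigcap C$. Because $C=\{\mathfrak{q}\in Spp(\mathfrak{A})\mid F\subseteq\mathfrak{q}\}$, Proposition \ref{r1filter}\ref{r1filter3} yields $\mathfrak{p}=\bigcap\{\mathfrak{q}\mid F\subseteq\mathfrak{q}\in Spp(\mathfrak{A})\}=F$, so $\mathfrak{p}$ is a pure filter and $C=h_{\kappa}(\mathfrak{p})$; moreover $C\neq\emptyset$ by irreducibility, so $\mathfrak{p}$ is proper. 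To see that $\mathfrak{p}$ is purely‑prime I would use the criterion of Proposition \ref{preqpropu}\ref{preqpropu2}: take $F_{1},F_{2}\in\sigma(\mathfrak{A})$ with $F_{1}\cap F_{2}\subseteq\mathfrak{p}$. Taking complements in the identity $d_{\kappa}(F_{1})\cap d_{\kappa}(F_{2})=d_{\kappa}(F_{1}\cap F_{2})$ gives $h_{\kappa}(F_{1})\cup h_{\kappa}(F_{2})=h_{\kappa}(F_{1}\cap F_{2})$, whence $C=h_{\kappa}(\mathfrak{p})\subseteq h_{\kappa}(F_{1}\cap F_{2})=h_{\kappa}(F_{1})\cup h_{\kappa}(F_{2})$. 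Irreducibility of $C$ forces, say, $C\subseteq h_{\kappa}(F_{1})$, i.e. every $\mathfrak{q}\in C$ contains $F_{1}$, so $F_{1}\subseteq\bigcap C=\mathfrak{p}$. Thus $\mathfrak{p}$ is purely‑prime and $C=h_{\kappa}(\mathfrak{p})$, as required.

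\textbf{Main obstacle.} There is no real obstacle; the proof is the exact analogue of the classical description of irreducible closed subsets of a spectral space, and everything needed is already in hand. The only two points demanding a word of care are the identification $\mathfrak{p}=\bigcap C=F$, which rests on the ``enough purely‑prime filters'' statement Proposition \ref{r1filter}\ref{r1filter3}, and the lattice identity $h_{\kappa}(F_{1}\cap F_{2})=h_{\kappa}(F_{1})\cup h_{\kappa}(F_{2})$, which is immediate from the basic properties of $d_{\kappa}$ recorded when the pure spectrum was defined.
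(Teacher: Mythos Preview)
Your proof is correct and follows essentially the same route as the paper. The paper also writes $C=h_{\kappa}(F)$ for a pure filter $F$ and verifies that $F$ is purely-prime via irreducibility, invoking Proposition~\ref{sigmad} (whose injectivity is exactly your Proposition~\ref{r1filter}\ref{r1filter3}) to pass between inclusions of $h_{\kappa}$-sets and inclusions of pure filters; for $(2)\Rightarrow(1)$ the paper gives a direct check rather than your closure-of-a-point shortcut via Lemma~\ref{closurofp}, but this is a cosmetic difference only.
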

\begin{proof}
\item [\ref{irrsppdclosub1}$\Rightarrow$\ref{irrsppdclosub2}:] Let $C$ be an irreducible closed subset of $Spp(\mathfrak{A})$. So $C=h(F)$, for some pure filter $F$ of $\mathfrak{A}$. Let $F_1\cap F_2\subseteq F$, for some pure filters $F_1$ and $F_2$ of $\mathfrak{A}$. Using Proposition \ref{sigmad}, it follows that $C=(h_{\kappa}(F)\cap h_{\kappa}(F_1))\cup (h_{\kappa}(F)\cap h_{\kappa}(F_2))$. Without loss of generality, assume that $C\subseteq h_{\kappa}(F)\cap h_{\kappa}(F_1)$. So $F_1\subseteq F$.
\item [\ref{irrsppdclosub2}$\Rightarrow$\ref{irrsppdclosub1}:] Let $\mathfrak{p}$ be a purely-prime filter of $\mathfrak{A}$. Let $h_{\kappa}(\mathfrak{p})\subseteq h_{\kappa}(F_1)\cup h_{\kappa}(F_2)=h_{\kappa}(F_1\cap F_2)$. Therefore, $F_1\cap F_2\subseteq \mathfrak{p}$, and so $F_1\subseteq \mathfrak{p}$ or $F_2\subseteq \mathfrak{p}$.
\end{proof}

Following \cite[p. 150]{wraith1975lectures}, a topological space $A$ is called \textit{Sober} provided that every irreducible closed subset $\mathcal{C}$ of $A$ has a unique \textit{generic point}, i.e. $\mathcal{C}=cl(c)$, for some $c\in \mathcal{C}$. Any Hausdorff space is sober, and all sober spaces are $T_{0}$, and both implications are strict. Sobriety is not comparable to the $T_{1}$ condition, and moreover $T_{2}$ is stronger than $T_{1}$ and sober.
\begin{theorem}\label{Soberspec}
  Let $\mathfrak{A}$ be a residuated lattice. Then $Spp(\mathfrak{A})$ is a Sober space.
 \end{theorem}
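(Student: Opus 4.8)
The plan is to combine the description of irreducible closed sets from Proposition \ref{irrsppdclosub} with the computation of closures from Lemma \ref{closurofp}, and then to promote the resulting existence of a generic point to uniqueness by invoking the $T_0$ property established in Theorem \ref{t0spppacecon}.

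First I would take an arbitrary irreducible closed subset $\mathcal{C}$ of $Spp(\mathfrak{A})$. By Proposition \ref{irrsppdclosub}, there is a purely-prime filter $\mathfrak{p}$ of $\mathfrak{A}$ with $\mathcal{C}=h_{\kappa}(\mathfrak{p})$. Since $\mathfrak{p}\subseteq\mathfrak{p}$, we have $\mathfrak{p}\in h_{\kappa}(\mathfrak{p})=\mathcal{C}$, so $\mathfrak{p}$ is a point of $\mathcal{C}$; and Lemma \ref{closurofp} gives $cl_{\kappa}(\mathfrak{p})=h_{\kappa}(\mathfrak{p})=\mathcal{C}$, which exhibits $\mathfrak{p}$ as a generic point of $\mathcal{C}$. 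For uniqueness, I would suppose $\mathfrak{q}$ is another generic point of $\mathcal{C}$, i.e. $cl_{\kappa}(\mathfrak{q})=\mathcal{C}=cl_{\kappa}(\mathfrak{p})$. Then $\mathfrak{p}\in cl_{\kappa}(\mathfrak{q})$ and $\mathfrak{q}\in cl_{\kappa}(\mathfrak{p})$; since $Spp(\mathfrak{A})$ is $T_0$ by Theorem \ref{t0spppacecon}, the standard fact that a $T_0$ space admits at most one generic point per irreducible closed set (two points each lying in the closure of the other cannot be separated by an open set, hence coincide) forces $\mathfrak{p}=\mathfrak{q}$. Therefore every irreducible closed subset of $Spp(\mathfrak{A})$ has a unique generic point, and the space is sober.

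I do not anticipate a genuine obstacle: the substantive work has already been carried out in Proposition \ref{irrsppdclosub} and Lemma \ref{closurofp}. The only step requiring a moment's care is the uniqueness clause, but this is purely the elementary topological observation that $T_0$ separation rules out distinct generic points; no further structure of $\sigma(\mathfrak{A})$ or of the pure spectrum is needed.
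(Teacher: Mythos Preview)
Your argument is correct and follows essentially the same route as the paper: the paper simply records that sobriety is an immediate consequence of Theorem \ref{t0spppacecon} and Proposition \ref{irrsppdclosub}, and your write-up merely makes explicit the intermediate step (via Lemma \ref{closurofp}) that $h_{\kappa}(\mathfrak{p})=cl_{\kappa}(\mathfrak{p})$ and spells out why $T_0$ forces uniqueness of the generic point.
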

\begin{proof}
 It is an immediate consequence of Theorem \ref{t0spppacecon} and Proposition \ref{irrsppdclosub}.
\end{proof}

Proposition \ref{r1filter}\ref{r1filter2} leads us to a map $\rho:Spec(\mathfrak{A})\longrightarrow Spp(\mathfrak{A})$, defined by $\mathfrak{p}\rightsquigarrow \rho(\mathfrak{p})$, which is called \textit{the pure part map of $\mathfrak{A}$}. In general, this map is not injective or even surjective. However, by Proposition \ref{r1filter}\ref{r1filter1} follows that if $\mathfrak{n}$ is a purely-maximal filter of $\mathfrak{A}$, then $\rho(\mathfrak{m})=\mathfrak{n}$, for some $\mathfrak{m}\in Max(\mathfrak{A})$.
\begin{theorem}\label{spsppconti}
   Let $\mathfrak{A}$ be a residuated lattice. The pure part map of $\mathfrak{A}$ is continuous.
\end{theorem}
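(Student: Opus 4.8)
The plan is to verify continuity directly on a basis of open sets of the target. Recall that the topology on $Spp(\mathfrak{A})$ is generated by the sets $d_{\kappa}(F)=\{\mathfrak{q}\in Spp(\mathfrak{A})\mid F\nsubseteq \mathfrak{q}\}$ with $F$ ranging over $\sigma(\mathfrak{A})$, while $Spec(\mathfrak{A})$ carries the hull--kernel topology, whose open sets are the unions of the basic opens $d(x)=\{\mathfrak{p}\in Spec(\mathfrak{A})\mid x\notin\mathfrak{p}\}$. Hence it suffices to show that the preimage $\rho^{\leftarrow}(d_{\kappa}(F))$ is open in $Spec(\mathfrak{A})$ for every pure filter $F$.

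First I would unwind the definition: $\rho^{\leftarrow}(d_{\kappa}(F))=\{\mathfrak{p}\in Spec(\mathfrak{A})\mid F\nsubseteq \rho(\mathfrak{p})\}$. The crux is the equivalence, for a pure filter $F$ and a prime filter $\mathfrak{p}$,
\[
F\subseteq \rho(\mathfrak{p})\quad\Longleftrightarrow\quad F\subseteq \mathfrak{p}.
\]
The forward implication is immediate from Proposition \ref{primesigmad}\ref{primesigmad1} together with Proposition \ref{rfilter}\ref{rfilter1}, since $\rho(\mathfrak{p})\subseteq \sigma(\mathfrak{p})\subseteq \mathfrak{p}$. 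For the converse, if $F$ is pure and $F\subseteq \mathfrak{p}$, then $F$ is a pure filter contained in $\mathfrak{p}$, so $F\subseteq \rho(\mathfrak{p})$ because, by Proposition \ref{rfilter}\ref{rfilter2}, $\rho(\mathfrak{p})$ is the greatest pure filter of $\mathfrak{A}$ contained in $\mathfrak{p}$.

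Consequently $\rho^{\leftarrow}(d_{\kappa}(F))=\{\mathfrak{p}\in Spec(\mathfrak{A})\mid F\nsubseteq \mathfrak{p}\}=\bigcup_{f\in F}d(f)$, which is open in the hull--kernel topology (in fact already in the coarser $\mathscr{D}$-topology, being exactly $d(F)$ with $F$ pure). This shows $\rho$ is continuous. I do not anticipate a genuine obstacle: once the topologies on source and target are correctly identified, everything reduces to the displayed equivalence, which in turn is merely a repackaging of the two basic properties of the pure part recorded in Proposition \ref{rfilter}. The only points demanding a little care are bookkeeping the directions of the inclusions and confirming that the preimage of a basic open is again a union of basic opens $d(f)$ rather than merely an $\mathscr{S}$-stable set.
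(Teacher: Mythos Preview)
Your argument is correct and is exactly the approach the paper takes: the paper's proof simply asserts that $\rho^{\leftarrow}(d_{\kappa}(F))=d(F)$ for every pure filter $F$, and you have supplied the routine verification of this identity via the equivalence $F\subseteq\rho(\mathfrak{p})\Leftrightarrow F\subseteq\mathfrak{p}$ using Proposition~\ref{rfilter}(\ref{rfilter1},\ref{rfilter2}) and Proposition~\ref{primesigmad}\ref{primesigmad1}.
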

\begin{proof}
Let $F$ be a pure filter of $\mathfrak{A}$. Routinely, one can show that $\rho^{\leftarrow}(d_{\kappa}(F))=d(F)$.
\end{proof}

A fundamental result due to Grothendieck states that there is a canonical bijection between the idempotents of a ring and the clopen of its prime spectrum. The subsequent outcome generalizes the Grothendieck correspondence to the complemented elements and the clopen subsets of the pure spectrum of a residuated lattice.
\begin{theorem}(Grothendieck type theorem)\label{grothfundres}\label{clopsppbij}
Let $\mathfrak{A}$ be a residuated lattice. The map $\Gamma: e\rightsquigarrow d_{\kappa}(e)$ is a one-to-one correspondence from the set $\beta(\mathfrak{A})$ onto the set $Clop(Spp(\mathfrak{A}))$.
\end{theorem}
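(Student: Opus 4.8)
The strategy is to check that $\Gamma$ is well-defined (i.e.\ lands in $Clop(Spp(\mathfrak{A}))$), injective, and surjective; each part is short once the earlier machinery is invoked.

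First I would confirm that $\Gamma(e)=d_{\kappa}(e)$ is clopen for $e\in\beta(\mathfrak{A})$. By Proposition \ref{b9fxpro} the principal filter $\mathscr{F}(e)$ is a direct summand of $\mathfrak{A}$, hence a pure filter by Theorem \ref{fbetasig}; so $d_{\kappa}(e)=d_{\kappa}(\mathscr{F}(e))$ is open by the very definition of the pure-spectrum topology. For closedness, observe that $\neg e\in\beta(\mathfrak{A})$ with $\ddot e=\neg e$ (Proposition \ref{boleleprop}\ref{boleleprop2},\ref{boleleprop3}), so $d_{\kappa}(\neg e)$ is open as well; moreover $e\vee\neg e=1$ and $e\odot\neg e=e\wedge\neg e=0$ (Proposition \ref{boleleprop}\ref{boleleprop7}), whence Proposition \ref{genfilprop}\ref{genfilprop3},\ref{genfilprop4} give $\mathscr{F}(e)\cap\mathscr{F}(\neg e)=\mathscr{F}(1)=\{1\}$ and $\mathscr{F}(e)\veebar\mathscr{F}(\neg e)=\mathscr{F}(0)=A$. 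Substituting these into the basic identities $d_{\kappa}(F_1\cap F_2)=d_{\kappa}(F_1)\cap d_{\kappa}(F_2)$ and $d_{\kappa}(F_1\veebar F_2)=d_{\kappa}(F_1)\cup d_{\kappa}(F_2)$ yields $d_{\kappa}(\neg e)=Spp(\mathfrak{A})\setminus d_{\kappa}(e)$, so $d_{\kappa}(e)$ is clopen. (Equivalently, $d_{\kappa}(e)=h_{\kappa}(\neg e)$ by Proposition \ref{comppurpri}, which is manifestly closed.)

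Injectivity is immediate: if $d_{\kappa}(e)=d_{\kappa}(e')$ for $e,e'\in\beta(\mathfrak{A})$, then since the map $\delta$ of Proposition \ref{sigmad} is a lattice isomorphism, in particular injective, we get $\mathscr{F}(e)=\mathscr{F}(e')$; as $\mathscr{F}(e)=\{a\in A\mid e\le a\}$ and $\mathscr{F}(e')=\{a\in A\mid e'\le a\}$ by Proposition \ref{boleleprop}\ref{boleleprop1}, this forces $e\le e'$ and $e'\le e$, i.e.\ $e=e'$.

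The real content is surjectivity, which I expect to be the main obstacle. Given a clopen $K\subseteq Spp(\mathfrak{A})$, both $K$ and $Spp(\mathfrak{A})\setminus K$ are open, so $K=d_{\kappa}(F)$ and $Spp(\mathfrak{A})\setminus K=d_{\kappa}(G)$ for some pure filters $F,G$. By Theorem \ref{sigmfiltlatt}\ref{sigmfiltlatt1} both $F\cap G$ and $F\veebar G$ are again pure, and the $d_{\kappa}$-identities give $d_{\kappa}(F\cap G)=d_{\kappa}(F)\cap d_{\kappa}(G)=\emptyset=d_{\kappa}(\{1\})$ and $d_{\kappa}(F\veebar G)=d_{\kappa}(F)\cup d_{\kappa}(G)=Spp(\mathfrak{A})=d_{\kappa}(A)$; injectivity of $\delta$ (Proposition \ref{sigmad}) then forces $F\cap G=\{1\}$ and $F\veebar G=A$. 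Hence $F$ is a complemented element of the lattice $\mathscr{F}(\mathfrak{A})$, i.e.\ a direct summand of $\mathfrak{A}$, so Proposition \ref{b9fxpro} produces $e\in\beta(\mathfrak{A})$ with $F=\mathscr{F}(e)$, and therefore $K=d_{\kappa}(F)=d_{\kappa}(e)=\Gamma(e)$. The delicate point is precisely this passage from ``$K$ clopen'' to ``$F$ a direct summand'': it uses the complementary open set, the closure of $\sigma(\mathfrak{A})$ under $\cap$ and $\veebar$, and the injectivity of $\delta$ (which itself rests on Proposition \ref{r1filter}\ref{r1filter3}). Finally, the identities assembled above also show that $\Gamma$ carries $\vee,\wedge,\neg$ on $\beta(\mathfrak{A})$ to $\cup,\cap$ and set-complement on $Clop(Spp(\mathfrak{A}))$, so $\Gamma$ is in fact an isomorphism of Boolean algebras.
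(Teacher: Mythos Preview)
Your proof is correct and follows essentially the same route as the paper's: well-definedness via $d_{\kappa}(e)$ and $d_{\kappa}(\neg e)$ partitioning $Spp(\mathfrak{A})$ (the paper cites Proposition~\ref{comppurpri} directly), injectivity via the fact that $F\mapsto d_{\kappa}(F)$ separates pure filters (the paper cites Proposition~\ref{r1filter}\ref{r1filter3}, you cite its repackaging as Proposition~\ref{sigmad}), and surjectivity by writing a clopen and its complement as $d_{\kappa}(F),d_{\kappa}(G)$, deducing $F\cap G=\{1\}$ and $F\veebar G=A$, and invoking Proposition~\ref{b9fxpro}. The only cosmetic difference is that the paper phrases surjectivity in terms of $h_{\kappa}$ rather than $d_{\kappa}$ and lands on $\Gamma(\neg e)$ instead of $\Gamma(e)$; your added remark that $\Gamma$ is a Boolean-algebra isomorphism is a correct bonus not stated in the paper.
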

\begin{proof}
  Applying Propositions \ref{comppurpri} \& \ref{r1filter}\ref{r1filter3}, it shows that $\Gamma$ is a well-defined injection. Let $\mathscr{C}$ is a clopen subset of $Spp(\mathfrak{A})$. So there exist two pure filters $F,G$ of $\mathfrak{A}$ such that $\mathscr{C}=h_{\kappa}(F)$ and $Spp(\mathfrak{A})\setminus \mathscr{C}=h_{\kappa}(G)$. This implies that $h_{\kappa}(F\cap G)=h_{\kappa}(F)\cup h_{\kappa}(G)=Spp(\mathfrak{A})$ and $h_{\kappa}(F\veebar G)=h_{\kappa}(F)\cap h_{\kappa}(G)=\emptyset$. By Proposition \ref{r1filter}\ref{r1filter3}, we have $F\cap G=\{1\}$ and $F\veebar G=A$. By Proposition \ref{b9fxpro}, there exists $e\in \beta(\mathfrak{A})$ such that $F=\mathscr{F}(e)$ and $G=\mathscr{F}(\neg e)$. So $h_{\kappa}(F)\subseteq h_{\kappa}(e)$ and $h_{\kappa}(G)\subseteq h_{\kappa}(\neg e)$. This implies that $\Gamma(\neg e)=\mathscr{C}$.
\end{proof}

Recall that a topological space $(A;\tau)$ is said to be \textit{disconnected} provided that it is the union of two disjoint non-empty open sets. Otherwise, $(A;\tau)$ is said to be \textit{connected}. A subset of a topological space is said to be \textit{connected} if it is connected under its subspace topology. It is well-known that $(A;\tau)$ is connected if and only if the empty set and the whole space are the only clopen subsets of the space $A$, see \citep[Theorem 6.1.1]{engelking1989general}.
\begin{corollary}\label{sppconn}
  Let $\mathfrak{A}$ be a residuated lattice. $\mathfrak{A}$ is directly indecomposable if and only if $Spp(\mathfrak{A})$ is connected.
\end{corollary}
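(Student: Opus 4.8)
The plan is to deduce this directly from the Grothendieck-type correspondence of Theorem \ref{clopsppbij}, the characterization of direct indecomposability via the Boolean center (Proposition \ref{direcindbeta}), and the standard clopen criterion for connectedness quoted just above.

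First I would recall that, by Proposition \ref{direcindbeta}, $\mathfrak{A}$ is directly indecomposable if and only if $\beta(\mathfrak{A})=\{0,1\}$. Next, Theorem \ref{clopsppbij} provides a bijection $\Gamma\colon e\rightsquigarrow d_{\kappa}(e)$ from $\beta(\mathfrak{A})$ onto $Clop(Spp(\mathfrak{A}))$. I would then pin down the images of the two trivial elements: since $\mathscr{F}(0)=A$ and $\mathscr{F}(1)=\{1\}$ by Proposition \ref{boleleprop}\ref{boleleprop1}, and since $d_{\kappa}(A)=Spp(\mathfrak{A})$ and $d_{\kappa}(\{1\})=\emptyset$ by the listed basic properties of the pure spectrum, we get $\Gamma(0)=Spp(\mathfrak{A})$ and $\Gamma(1)=\emptyset$. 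Thus $\Gamma$ carries the always-present pair $\{0,1\}\subseteq\beta(\mathfrak{A})$ onto the always-present pair $\{\emptyset,Spp(\mathfrak{A})\}\subseteq Clop(Spp(\mathfrak{A}))$.

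Because $\Gamma$ is a bijection, $\beta(\mathfrak{A})=\{0,1\}$ holds if and only if $Clop(Spp(\mathfrak{A}))=\{\emptyset,Spp(\mathfrak{A})\}$, and by \citep[Theorem 6.1.1]{engelking1989general} the latter is precisely the assertion that $Spp(\mathfrak{A})$ is connected. Chaining these three equivalences yields the corollary. I do not expect a genuine obstacle here; the only point needing a little care is the bookkeeping in the degenerate case $0=1$, where all the sets in question collapse to singletons and the equivalences persist, while in the non-degenerate case $\Gamma(0)\neq\Gamma(1)$ forces $Spp(\mathfrak{A})\neq\emptyset$, so that ``connected'' carries its usual content.
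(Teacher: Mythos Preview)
Your argument is correct and follows exactly the route the paper takes: the paper's proof simply invokes Proposition \ref{direcindbeta} and Theorem \ref{clopsppbij} (together with the clopen criterion for connectedness recalled just above), and your proposal merely spells out the bookkeeping that $\Gamma$ sends $\{0,1\}$ to $\{\emptyset,Spp(\mathfrak{A})\}$.
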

\begin{proof}
  It is evident by  Proposition \ref{direcindbeta}, and Theorem \ref{clopsppbij}.
\end{proof}

\begin{proposition}\label{contsppfunh}
  The map which sends every residuated lattice $\mathfrak{A}$ to $Spp(\mathfrak{A})$, and every residuated morphism $f:\mathfrak{A}\longrightarrow \mathfrak{B}$ to $Spp(f):Spp(\mathfrak{B})\longrightarrow Spp(\mathfrak{A})$, given by $\mathfrak{p}\rightsquigarrow \rho(f^{\leftarrow}(\mathfrak{p}))$, is a contravariant functor from the category of residuated lattices to the category of topological spaces.
\end{proposition}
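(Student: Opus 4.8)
The crux of the argument is the following lemma, which extends Proposition~\ref{purefilqou} beyond epimorphisms: \emph{for every residuated morphism $f:\mathfrak{A}\longrightarrow\mathfrak{B}$ and every pure filter $P$ of $\mathfrak{A}$, the filter $\mathscr{F}(f(P))$ of $\mathfrak{B}$ is pure.} I would prove it via Theorem~\ref{sigmafequiv}\ref{sigmafequiv3}. Put $Q=\mathscr{F}(f(P))$; since $\sigma(Q)\subseteq Q$ always holds (Proposition~\ref{primesigmad}\ref{primesigmad1}), it suffices to check $Q\subseteq\sigma(Q)$. Fix $b\in Q$; as $b$ lies above a finite $\odot$-product of elements of $f(P)$ and $f$ preserves $\odot$, we get $b\geq f(p)$ for some $p\in P$. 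Purity of $P$ gives $p^{\perp}\veebar P=A$, so there are $c$ with $c\vee p=1$ and $q\in P$ with $c\odot q=0$. Applying $f$ yields $f(c)\vee f(p)=1$, hence $f(c)\vee b=1$, and $f(c)\odot f(q)=0$ with $f(q)\in Q$; therefore $b^{\perp}\veebar Q=A$, i.e.\ $b\in\sigma(Q)$. This lemma is the only step requiring a genuine argument; the remaining subtlety is organizational, namely that $f^{\leftarrow}$ of a purely-prime filter need not even be pure, which is why $\rho$ appears in the definition of $Spp(f)$ and makes the composition law nontrivial.

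Next I would verify that $Spp(f)$ is well defined. For $\mathfrak{p}\in Spp(\mathfrak{B})$, $G:=f^{\leftarrow}(\mathfrak{p})$ is a filter of $\mathfrak{A}$ by Proposition~\ref{prorh}, and it is proper since $f(0)=0\notin\mathfrak{p}$; hence $\rho(G)$ is a proper pure filter (Proposition~\ref{rfilter}\ref{rfilter2}). To see $\rho(G)$ is purely-prime, let $F_1,F_2\in\sigma(\mathfrak{A})$ with $F_1\cap F_2\subseteq\rho(G)$. Any $z\in\mathscr{F}(f(F_1))\cap\mathscr{F}(f(F_2))$ satisfies $z\geq f(a)$ and $z\geq f(b)$ for some $a\in F_1$, $b\in F_2$, so $z\geq f(a\vee b)$ with $a\vee b\in F_1\cap F_2\subseteq G$, giving $z\in\mathfrak{p}$; thus $\mathscr{F}(f(F_1))\cap\mathscr{F}(f(F_2))\subseteq\mathfrak{p}$. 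Since $\mathscr{F}(f(F_1))$ and $\mathscr{F}(f(F_2))$ are pure by the lemma, Proposition~\ref{preqpropu} yields, say, $\mathscr{F}(f(F_1))\subseteq\mathfrak{p}$, whence $F_1\subseteq f^{\leftarrow}(\mathfrak{p})=G$ and $F_1\subseteq\rho(G)$ because $F_1$ is pure. Continuity then follows at once: for a pure filter $F$ of $\mathfrak{A}$, $\mathscr{F}(f(F))$ is pure by the lemma, and the equivalences $F\subseteq\rho(f^{\leftarrow}(\mathfrak{p}))\Leftrightarrow F\subseteq f^{\leftarrow}(\mathfrak{p})\Leftrightarrow\mathscr{F}(f(F))\subseteq\mathfrak{p}$ (the first by purity of $F$ and Proposition~\ref{rfilter}\ref{rfilter2}) show $(Spp(f))^{\leftarrow}(d_{\kappa}(F))=d_{\kappa}(\mathscr{F}(f(F)))$, which is open.

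It remains to check the functor axioms. A purely-prime filter is pure, so $\rho$ fixes it by Proposition~\ref{rfilter}\ref{rfilter3}, whence $Spp(\mathrm{id}_{\mathfrak{A}})(\mathfrak{p})=\rho(\mathfrak{p})=\mathfrak{p}$. For $f:\mathfrak{A}\to\mathfrak{B}$, $g:\mathfrak{B}\to\mathfrak{C}$ and $\mathfrak{p}\in Spp(\mathfrak{C})$, put $\mathfrak{q}:=g^{\leftarrow}(\mathfrak{p})$; then $Spp(g\circ f)=Spp(f)\circ Spp(g)$ reduces to the identity $\rho(f^{\leftarrow}(\mathfrak{q}))=\rho(f^{\leftarrow}(\rho(\mathfrak{q})))$, valid for every filter $\mathfrak{q}$ of $\mathfrak{B}$. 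The inclusion $\supseteq$ holds since $\rho(\mathfrak{q})\subseteq\mathfrak{q}$ and $\rho$ is monotone. For $\subseteq$, set $P:=\rho(f^{\leftarrow}(\mathfrak{q}))$, so $f(P)\subseteq\mathfrak{q}$; by the lemma $\mathscr{F}(f(P))$ is a pure filter contained in $\mathfrak{q}$, hence $\mathscr{F}(f(P))\subseteq\rho(\mathfrak{q})$, so $P\subseteq f^{\leftarrow}(\rho(\mathfrak{q}))$, and therefore $P\subseteq\rho(f^{\leftarrow}(\rho(\mathfrak{q})))$ because $P$ is pure. Together these facts show that $Spp$ is a contravariant functor from the category of residuated lattices to the category of topological spaces.
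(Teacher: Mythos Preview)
Your proof is correct and considerably more complete than the paper's. The paper dismisses the functoriality as ``a routine verification'' and, for continuity, only asserts the formula $Spp(f)^{\leftarrow}(h_{\kappa}(F))=h_{\kappa}(\rho(\mathscr{F}(f(F))))$ without further justification. Your approach differs in that you first establish the lemma that $\mathscr{F}(f(P))$ is pure whenever $P$ is pure; this is a genuine contribution that the paper does not isolate. With that lemma in hand, your continuity formula $(Spp(f))^{\leftarrow}(d_{\kappa}(F))=d_{\kappa}(\mathscr{F}(f(F)))$ is the same as the paper's (the extra $\rho$ in the paper's version is redundant once one knows $\mathscr{F}(f(F))$ is pure), but more importantly your lemma is exactly what is needed to verify well-definedness of $Spp(f)$ and the composition law $\rho(f^{\leftarrow}(\mathfrak{q}))=\rho(f^{\leftarrow}(\rho(\mathfrak{q})))$. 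The paper declares these routine, yet the latter identity is not entirely obvious and your argument via the lemma is the natural way to see it. In short, the paper sketches the continuity while asserting the rest; you supply the missing structural ingredient and carry out all the checks explicitly.
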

\begin{proof}
The contravariant functoriality has a routine verification. Let $f:\mathfrak{A}\longrightarrow \mathfrak{B}$ be a residuated morphism. Consider $F\in \sigma(\mathfrak{A})$. One can see that $Spp(f)^{\leftarrow}(h_{\kappa}(F))=h_{\kappa}(\rho(\mathscr{F}(f(F))))$. This holds the result.
\end{proof}

Notice that if $\mathfrak{A}$ is a residuated lattice and $F$ a pure filter of $\mathfrak{A}$, by Corollary \ref{corpurefilqout}, $Spp(\mathfrak{A}/F)=\{\mathfrak{p}/F|\mathfrak{p}\in h_{\kappa}(F)\}$. The subsequent corollary shows that $Spp(\mathfrak{A}/F)$ and $h_{\kappa}(F)$ are homeomorphic as topological spaces.
\begin{corollary}\label{qoepuruspec}
  Let $\mathfrak{A}$ be a residuated lattice and $F$ a pure filter of $\mathfrak{A}$. The canonical projection $\pi_{F}$ induces a homeomorphism from the pure prime spectrum $Spp(\mathfrak{A}/F)$ onto $h_{\kappa}(F)$.
\end{corollary}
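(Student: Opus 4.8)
The plan is to exhibit an explicit bijection $\psi\colon Spp(\mathfrak{A}/F)\longrightarrow h_{\kappa}(F)$, recognise it as a corestriction of the functorial map $Spp(\pi_{F})$, and then check that it is continuous and open. By Corollary \ref{corpurefilqout}, every purely-prime filter of $\mathfrak{A}/F$ is of the form $\mathfrak{p}/F$ with $\mathfrak{p}\in Spp(\mathfrak{A})$ and $F\subseteq \mathfrak{p}$, i.e. with $\mathfrak{p}\in h_{\kappa}(F)$; moreover the standard order-isomorphism between the filters of $\mathfrak{A}$ containing $F$ and the filters of $\mathfrak{A}/F$ (together with the two descriptions in Corollary \ref{corpurefilqout}, one for $\sigma(\mathfrak{A}/F)$ and one for $Spp(\mathfrak{A}/F)$) shows that $\mathfrak{p}\rightsquigarrow \mathfrak{p}/F$ is a bijection of $h_{\kappa}(F)$ onto $Spp(\mathfrak{A}/F)$. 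I would let $\psi$ be its inverse, $\psi(\mathfrak{p}/F)=\mathfrak{p}$.

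Next I would match $\psi$ with $Spp(\pi_{F})$ from Proposition \ref{contsppfunh}. For a purely-prime filter $\mathfrak{p}/F$ of $\mathfrak{A}/F$ one has $Spp(\pi_{F})(\mathfrak{p}/F)=\rho(\pi_{F}^{\leftarrow}(\mathfrak{p}/F))$; since $coker(\pi_{F})=F\subseteq \mathfrak{p}$, the preimage $\pi_{F}^{\leftarrow}(\mathfrak{p}/F)$ is exactly $\mathfrak{p}$, and since $\mathfrak{p}$ is purely-prime it is in particular pure, whence $\rho(\mathfrak{p})=\mathfrak{p}$ by Proposition \ref{rfilter}\ref{rfilter3}. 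Thus $Spp(\pi_{F})(\mathfrak{p}/F)=\mathfrak{p}$, so $Spp(\pi_{F})$ has image precisely $h_{\kappa}(F)$ and, as a map into $h_{\kappa}(F)$ with its subspace topology, coincides with $\psi$. In particular $\psi$ is continuous, being a corestriction of the continuous map $Spp(\pi_{F})$.

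Finally I would verify that $\psi$ is open, which, together with bijectivity and continuity, yields that it is a homeomorphism. A basic open set of $Spp(\mathfrak{A}/F)$ is $d_{\kappa}(H/F)$ for some $H/F\in \sigma(\mathfrak{A}/F)$, that is, for $F\subseteq H\in \sigma(\mathfrak{A})$ by Corollary \ref{corpurefilqout}. For $\mathfrak{p}\in h_{\kappa}(F)$ the order-isomorphism gives $H/F\nsubseteq \mathfrak{p}/F$ iff $H\nsubseteq \mathfrak{p}$, so $\psi(d_{\kappa}(H/F))=\{\mathfrak{p}\in h_{\kappa}(F)\mid H\nsubseteq \mathfrak{p}\}=d_{\kappa}(H)\cap h_{\kappa}(F)$, which is open in the subspace $h_{\kappa}(F)$ since $d_{\kappa}(H)$ is open in $Spp(\mathfrak{A})$. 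Hence $\psi$ is open and the corollary follows.

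All the individual steps are short; the only point demanding care is the bookkeeping of the filter correspondence across the quotient — namely that it restricts to an order-isomorphism between pure filters above $F$ and pure filters of $\mathfrak{A}/F$, and carries purely-prime filters to purely-prime filters in both directions (this is exactly what the two parts of Corollary \ref{corpurefilqout} provide). I expect this purely algebraic verification, rather than any topological subtlety, to be the main thing one must get exactly right so that $\psi$ is a genuine bijection.
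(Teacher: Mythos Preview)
Your proof is correct and follows essentially the same route as the paper: identify the induced map with (a corestriction of) the functorial $Spp(\pi_{F})$ from Proposition \ref{contsppfunh} to obtain continuity, use Corollary \ref{corpurefilqout} to pin down the image as $h_{\kappa}(F)$ and establish bijectivity, and then verify a homeomorphism. The only cosmetic difference is that the paper checks the map is \emph{closed} whereas you check it is \emph{open}; for a bijection these are equivalent, and your computation $\psi(d_{\kappa}(H/F))=d_{\kappa}(H)\cap h_{\kappa}(F)$ is the natural direct verification.
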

\begin{proof}
Using Proposition \ref{contsppfunh}, it follows that $\pi^{\kappa}$ is a continuous injection. We can show, routinely, that $Im(\pi^{\kappa})=h_{\kappa}(F)$, and the map $Spp(\mathfrak{A}/F)\longrightarrow h_{\kappa}(F)$ is closed.
\end{proof}
\section{The pure spectrum of a Gelfand residuated lattice}\label{sec5}

This section deals with the pure spectrum of a Gelfand residuated lattice. For the basic facts concerning Gelfand residuated lattices, the reader is referred to \cite{rasouli2022gelfand}.
\begin{definition}
A residuated lattice $\mathfrak{A}$ is called \textit{Gelfand} provided that any prime filter of $\mathfrak{A}$ is contained in a unique maximal filter of $\mathfrak{A}$.
\end{definition}
\begin{example}\label{quanorexas}
One can see that the residuated lattice $\mathfrak{A}_6$ from Example \ref{exa6} is not Gelfand, and the residuated lattice $\mathfrak{B}_6$ from Example \ref{exb6}, the residuated lattice $\mathfrak{C}_6$ from Example \ref{exc6}, and the residuated lattice $\mathfrak{A}_8$ from Example \ref{exa8} are Gelfand.
\end{example}
\begin{example}
   The class of MTL-algebras, and so,  MV-algebras, BL-algebras, and Boolean algebras are some subclasses of Gelfand residuated lattices.
\end{example}

The following proposition gives some algebraic characterizations for Gelfand residuated lattices.
\begin{proposition}\cite[Theorem 3.7]{rasouli2022gelfand}\label{pmprop}
Let $\mathfrak{A}$ be a residuated lattice.  The following assertions are equivalent:
 \begin{enumerate}
   \item [(1) \namedlabel{pmprop1}{(1)}] $\mathfrak{A}$ is Gelfand;
   \item [(3) \namedlabel{pmprop3}{(3)}] for any distinct maximal filters $\mathfrak{m}$ and $\mathfrak{n}$ of $\mathfrak{A}$, $D(\mathfrak{m})$ and $D(\mathfrak{n})$ are comaximal;
   \item [(7) \namedlabel{pmprop7}{(7)}] for any maximal filter $\mathfrak{m}$ and any proper filter $F$ of $\mathfrak{A}$, $F\veebar \mathfrak{m}=A$ implies $F\veebar D(\mathfrak{m})=A$;
 \end{enumerate}
\end{proposition}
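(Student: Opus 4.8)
The plan is to prove the cycle $(1)\Rightarrow(3)\Rightarrow(7)\Rightarrow(1)$. Two of these links are short computations with filters, and essentially all the weight sits in $(1)\Rightarrow(3)$. Throughout I would use that for a maximal filter $\mathfrak{m}$ one has $D(\mathfrak{m})=\sigma(\mathfrak{m})=k\mathscr{G}(\mathfrak{m})\subseteq\mathfrak{m}$ (Propositions \ref{sigmapro}\ref{sigmapro4}, \ref{primesigmad}\ref{primesigmad1} and \ref{omegprop}\ref{omegprop2}), and, when convenient, the description $D(\mathfrak{m})=\{a\in A\mid a^{\perp}\veebar\mathfrak{m}=A\}$ from Theorem \ref{sigmafequiv}\ref{sigmafequiv3}.

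The key lemma I would extract is: \emph{if $\mathfrak{A}$ is Gelfand and $\mathfrak{m}$ is a maximal filter, then $\mathscr{G}(\mathfrak{m})$ is closed in the hull-kernel topology; equivalently $h(D(\mathfrak{m}))=\mathscr{G}(\mathfrak{m})$, i.e.\ a prime filter contains $D(\mathfrak{m})$ precisely when it is contained in $\mathfrak{m}$.} To prove it I would apply Theorem \ref{closefalzai}: it suffices to check that $\mathscr{G}(\mathfrak{m})$ is patch-closed and $\mathscr{S}$-stable. Patch-closedness holds for \emph{any} filter, with no hypothesis, because $\mathscr{G}(\mathfrak{m})=\bigcap\{d(x)\mid x\notin\mathfrak{m}\}$ and each $d(x)$ is patch-closed. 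For $\mathscr{S}$-stability I would invoke the defining property of Gelfand: if $\mathfrak{q}\subseteq\mathfrak{p}$ with $\mathfrak{q}\in\mathscr{G}(\mathfrak{m})$, pick a maximal filter $\mathfrak{n}\supseteq\mathfrak{p}$; then $\mathfrak{n}$ and $\mathfrak{m}$ are both maximal filters containing the prime filter $\mathfrak{q}$, hence $\mathfrak{n}=\mathfrak{m}$ and $\mathfrak{p}\subseteq\mathfrak{m}$. Since $h(k\mathscr{G}(\mathfrak{m}))$ is exactly the hull-kernel closure of $\mathscr{G}(\mathfrak{m})$, the lemma follows.

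Granting the lemma, $(1)\Rightarrow(3)$ is immediate: for distinct maximal filters $\mathfrak{m},\mathfrak{n}$, were $D(\mathfrak{m})\veebar D(\mathfrak{n})$ proper it would be contained in some prime filter $\mathfrak{p}$, and then the lemma would give $\mathfrak{p}\subseteq\mathfrak{m}$ and $\mathfrak{p}\subseteq\mathfrak{n}$, contradicting that $\mathfrak{p}$ is contained in a unique maximal filter. For $(3)\Rightarrow(7)$: assume $F$ proper and $F\veebar\mathfrak{m}=A$; if $F\veebar D(\mathfrak{m})$ were proper, take a maximal filter $\mathfrak{n}\supseteq F\veebar D(\mathfrak{m})$; then $F\nsubseteq\mathfrak{m}$ forces $\mathfrak{n}\neq\mathfrak{m}$, while $D(\mathfrak{m})\subseteq\mathfrak{n}$ and $D(\mathfrak{n})\subseteq\mathfrak{n}$ give $D(\mathfrak{m})\veebar D(\mathfrak{n})\subseteq\mathfrak{n}\subsetneq A$, contradicting $(3)$. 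For $(7)\Rightarrow(1)$: if a prime filter $\mathfrak{p}$ is contained in two distinct maximal filters $\mathfrak{m},\mathfrak{n}$, then $\mathfrak{n}\veebar\mathfrak{m}=A$, so $(7)$ yields $\mathfrak{n}\veebar D(\mathfrak{m})=A$; but $D(\mathfrak{m})=k\mathscr{G}(\mathfrak{m})\subseteq\mathfrak{p}\subseteq\mathfrak{n}$, whence $\mathfrak{n}\veebar D(\mathfrak{m})=\mathfrak{n}\subsetneq A$, a contradiction.

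I expect the lemma, and in particular the use of Theorem \ref{closefalzai} to upgrade ``patch-closed and $\mathscr{S}$-stable'' to ``hull-kernel closed'', to be the only genuine obstacle; everything else is bookkeeping with the identities $D(\mathfrak{m})=k\mathscr{G}(\mathfrak{m})$ and $D(\mathfrak{m})\subseteq\mathfrak{m}$ together with the uniqueness of the maximal filter over a prime. A more hands-on alternative would be to replace the lemma by first reformulating $(1)$ as a pointwise splitting property (one of the intermediate clauses of the cited theorem) and then constructing explicit witnesses $u\in D(\mathfrak{m})$, $v\in D(\mathfrak{n})$ with $u\odot v=0$; I would expect that route to be more delicate, since the co-annihilator $a^{\perp}=\{b\mid a\vee b=1\}$ interacts awkwardly with $\odot$.
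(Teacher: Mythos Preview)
The paper does not actually prove this proposition; it is quoted verbatim from \cite[Theorem 3.7]{rasouli2022gelfand} without argument. So there is no ``paper's own proof'' to compare against, and your proposal must be judged on its own merits.

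Your cycle $(1)\Rightarrow(3)\Rightarrow(7)\Rightarrow(1)$ is correct. The key lemma---that in a Gelfand residuated lattice $\mathscr{G}(\mathfrak{m})$ is hull-kernel closed, hence $h(D(\mathfrak{m}))=\mathscr{G}(\mathfrak{m})$---is cleanly handled via Theorem \ref{closefalzai}: the identity $\mathscr{G}(\mathfrak{m})=\bigcap_{x\notin\mathfrak{m}}d(x)$ gives patch-closedness for free, and $\mathscr{S}$-stability is exactly the Gelfand condition. The remaining implications are routine, and your bookkeeping is accurate (in particular, $F\veebar\mathfrak{m}=A$ with $F$ proper does force $F\nsubseteq\mathfrak{m}$, and in $(7)\Rightarrow(1)$ the filter $\mathfrak{n}$ is proper so the hypothesis of $(7)$ applies). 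Your closing remark is also apt: the elementwise route via explicit witnesses in $D(\mathfrak{m})$ and $D(\mathfrak{n})$ is the one usually taken in the ring-theoretic and lattice-theoretic literature (and presumably in the cited reference), but your topological shortcut through Theorem \ref{closefalzai} is cleaner and avoids the delicate interaction between $\vee$ and $\odot$.
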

Recall that a \textit{retraction} is a continuous mapping from a topological space into a subspace which preserves the position of all points in that subspace.
\begin{proposition}\cite[Theorem 3.9]{rasouli2022gelfand}\label{gelnor}
Let $\mathfrak{A}$ be a residuated lattice. The following assertions are equivalent:
\begin{enumerate}
   \item [(1) \namedlabel{gelnor1}{(1)}] $\mathfrak{A}$ is Gelfand;
   \item [(2) \namedlabel{gelnor2}{(2)}] $Max_{h}(\mathfrak{A})$ is a retract of $Spec_{h}(\mathfrak{A})$.
 \end{enumerate}
\end{proposition}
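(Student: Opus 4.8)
The plan is to establish the two implications separately. Both rely on two elementary facts about the hull-kernel topology: the closure of a point is its specialization set, $cl_{\tau_{h}}(\{\mathfrak{q}\})=h(\mathfrak{q})=\{\mathfrak{p}\in Spec(\mathfrak{A})\mid \mathfrak{q}\subseteq \mathfrak{p}\}$ (this is exactly how the closed sets of $Spec_{h}(\mathfrak{A})$ arise), and the fact that a maximal filter contained in a maximal filter coincides with it, so that inside the subspace $Max_{h}(\mathfrak{A})$ the closure of a point of $Max(\mathfrak{A})$ is that single point.

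For the implication $(1)\Rightarrow(2)$, assume $\mathfrak{A}$ is Gelfand and define $r\colon Spec_{h}(\mathfrak{A})\longrightarrow Max_{h}(\mathfrak{A})$ by letting $r(\mathfrak{p})=\mathfrak{m}_{\mathfrak{p}}$ be the unique maximal filter containing $\mathfrak{p}$ (existence by Zorn's lemma, uniqueness by hypothesis). Then $r$ fixes $Max(\mathfrak{A})$ pointwise, so the whole content is the continuity of $r$. First I would record that, for a filter $F$ and a prime filter $\mathfrak{p}$, one has $F\veebar \mathfrak{p}=A$ if and only if $\neg f\in \mathfrak{p}$ for some $f\in F$ --- because $F\veebar \mathfrak{p}=\mathscr{F}(F\cup \mathfrak{p})$, both $F$ and $\mathfrak{p}$ are closed under $\odot$, and $f\odot g\leq 0\Leftrightarrow g\leq \neg f$ --- and, here is where the Gelfand hypothesis enters, that $F\subseteq \mathfrak{m}_{\mathfrak{p}}$ if and only if $F\veebar \mathfrak{p}\neq A$ (a proper filter over $\mathfrak{p}$ lies in a maximal filter over $\mathfrak{p}$, which must then be $\mathfrak{m}_{\mathfrak{p}}$). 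Hence, for every filter $F$,
\[r^{\leftarrow}(h_{M}(F))=\{\mathfrak{p}\in Spec(\mathfrak{A})\mid F\veebar \mathfrak{p}\neq A\}=\bigcap_{f\in F}d(\neg f).\]
Now each $d(x)$ is open in $\tau_{h}$ and, being the complement of the $\tau_{d}$-basic-open set $h(x)$, is closed in $\tau_{d}$; hence $d(x)$ is clopen in the patch topology, and therefore $r^{\leftarrow}(h_{M}(F))$ is patch-closed. It is also $\mathscr{S}$-stable, since $\mathfrak{p}\subseteq \mathfrak{q}$ forces $\mathfrak{m}_{\mathfrak{p}}=\mathfrak{m}_{\mathfrak{q}}$, that is $r(\mathfrak{p})=r(\mathfrak{q})$. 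By Theorem \ref{closefalzai}, $r^{\leftarrow}(h_{M}(F))$ is $\tau_{h}$-closed, so $r$ is continuous and therefore a retraction of $Spec_{h}(\mathfrak{A})$ onto $Max_{h}(\mathfrak{A})$.

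For the implication $(2)\Rightarrow(1)$, let $r\colon Spec_{h}(\mathfrak{A})\longrightarrow Max_{h}(\mathfrak{A})$ be a retraction and let $\mathfrak{p}\in Spec(\mathfrak{A})$ with $\mathfrak{p}\subseteq \mathfrak{m}$ for some $\mathfrak{m}\in Max(\mathfrak{A})$. Then $\mathfrak{m}\in cl_{\tau_{h}}(\{\mathfrak{p}\})$, so continuity of $r$ yields
\[\mathfrak{m}=r(\mathfrak{m})\in r\bigl(cl_{\tau_{h}}(\{\mathfrak{p}\})\bigr)\subseteq cl_{Max_{h}}\bigl(\{r(\mathfrak{p})\}\bigr)=\{r(\mathfrak{p})\},\]
hence $\mathfrak{m}=r(\mathfrak{p})$. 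Since $r(\mathfrak{p})$ does not depend on the choice of $\mathfrak{m}$, every maximal filter containing $\mathfrak{p}$ equals $r(\mathfrak{p})$, so it is unique and $\mathfrak{A}$ is Gelfand.

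I expect the main obstacle to be the continuity of $r$ in $(1)\Rightarrow(2)$: the natural description exhibits $r^{\leftarrow}(h_{M}(F))$ as an intersection of the sets $d(\neg f)$, which are open in $\tau_{h}$, so this set is a priori neither open nor closed in $\tau_{h}$, and one is forced to detour through the patch topology and invoke the characterization of $\tau_{h}$-closed sets as the patch-closed $\mathscr{S}$-stable sets; checking $\mathscr{S}$-stability is precisely where the Gelfand hypothesis must be used a second time. The converse implication is, by contrast, essentially formal once the point-closures in $Spec_{h}(\mathfrak{A})$ and $Max_{h}(\mathfrak{A})$ have been identified.
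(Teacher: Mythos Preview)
The paper does not actually give a proof of this proposition: it is quoted verbatim from \cite[Theorem 3.9]{rasouli2022gelfand} and used as a black box, so there is no argument in the present paper to compare yours against.

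That said, your proof is correct. The direction $(2)\Rightarrow(1)$ is the standard closure-of-a-point argument and needs no comment. For $(1)\Rightarrow(2)$, your identity
\[r^{\leftarrow}(h_{M}(F))=\{\mathfrak{p}\mid F\veebar\mathfrak{p}\neq A\}=\bigcap_{f\in F}d(\neg f)\]
is right, and the appeal to Theorem~\ref{closefalzai} (patch-closed plus $\mathscr{S}$-stable implies $\tau_{h}$-closed) is a clean way to finish. One small remark: you could bypass the patch topology entirely by observing that $\bigcap_{f\in F}d(\neg f)=h(\{\neg f\mid f\in F\})^{c}$ is not directly helpful, but the set $\{\mathfrak{p}\mid F\veebar\mathfrak{p}\neq A\}$ can also be recognized as $h(\sigma(F))$ via Theorem~\ref{sigmafequiv}\ref{sigmafequiv4}: indeed $a\in\sigma(F)$ iff $\neg b\in F$ for some $b\in a^{\perp}$, and unwinding this for a prime $\mathfrak{p}$ shows $\sigma(F)\subseteq\mathfrak{p}$ iff $F\veebar\mathfrak{p}\neq A$ under the Gelfand hypothesis (this is essentially the content of Theorem~\ref{equgelchaunit}\ref{equgelchaunit2}). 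Either route works; yours has the virtue of not depending on the later Section~\ref{sec5} material.
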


Let $F$ be a filter of a residuated lattice $\mathfrak{A}$. The intersection of all maximal filters of $\mathfrak{A}$ containing $F$ is denoted by $Rad(F)$. The subsequent theorem characterizes the Gelfand residuated lattices using the sink of a filter.
\begin{theorem}\label{equgelchaunit}
  Let $\mathfrak{A}$ be a residuated lattice. The following assertions are equivalent:
\begin{enumerate}
\item  [(1) \namedlabel{equgelchaunit1}{(1)}] $\mathfrak{A}$ is Gelfand;
\item  [(2) \namedlabel{equgelchaunit2}{(2)}] $\sigma(F)\subseteq \mathfrak{m}$ implies $F\subseteq \mathfrak{m}$, for any $F\in \mathscr{F}(\mathfrak{A})$ and $\mathfrak{m}\in Max(\mathfrak{A})$;
\item  [(3) \namedlabel{equgelchaunit3}{(3)}] $h_{M}(F)=h_{M}(\sigma(F))$, for any filter $F$ of $\mathfrak{A}$;
\item  [(4) \namedlabel{equgelchaunit4}{(4)}] $Rad(F)=Rad(\sigma(F))$, for any filter $F$ of $\mathfrak{A}$;
\item  [(5) \namedlabel{equgelchaunit5}{(5)}] if $F$ and $G$ are comaximal filters of $\mathfrak{A}$, then so are $\sigma(F)$ and $\sigma(G)$;
\item  [(6) \namedlabel{equgelchaunit6}{(6)}]  $\veebar_{F\in \mathcal{F}}\sigma(F)=\sigma(\veebar \mathcal{F})$, for any $\mathcal{F}\subseteq \mathscr{F}(\mathfrak{A})$.
\end{enumerate}
\end{theorem}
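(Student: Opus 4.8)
The plan is to close the cycle $\ref{equgelchaunit1}\Rightarrow\ref{equgelchaunit2}\Rightarrow\ref{equgelchaunit3}\Rightarrow\ref{equgelchaunit4}\Rightarrow\ref{equgelchaunit5}\Rightarrow\ref{equgelchaunit6}\Rightarrow\ref{equgelchaunit1}$, using throughout the several descriptions of the sink collected in Theorem \ref{sigmafequiv}, the identity $\sigma(\mathfrak{m})=D(\mathfrak{m})$ for $\mathfrak{m}\in Max(\mathfrak{A})$ (Proposition \ref{sigmapro}\ref{sigmapro4}), and the inclusion $\sigma(F)\subseteq F$ (Proposition \ref{primesigmad}\ref{primesigmad1}).

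For $\ref{equgelchaunit1}\Rightarrow\ref{equgelchaunit2}$ I would argue by contradiction. If $\sigma(F)\subseteq\mathfrak{m}$ but $F\nsubseteq\mathfrak{m}$, then $F\veebar\mathfrak{m}=A$, so Proposition \ref{pmprop}\ref{pmprop7} gives $F\veebar D(\mathfrak{m})=A$, i.e.\ $F\veebar\sigma(\mathfrak{m})=A$. Choosing $f\in F$ and $y\in\sigma(\mathfrak{m})$ with $f\odot y=0$ yields $\neg y\in F$; and $y\in\sigma(\mathfrak{m})$ together with Theorem \ref{sigmafequiv}\ref{sigmafequiv3} produces $z\in y^{\perp}$ and $m\in\mathfrak{m}$ with $z\odot m=0$. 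Then $z\notin\mathfrak{m}$ (otherwise $0\in\mathfrak{m}$), whereas $y\in z^{\perp}$ and $\neg y\in F$ force $z\in\sigma(F)$ by Theorem \ref{sigmafequiv}\ref{sigmafequiv4}, contradicting $\sigma(F)\subseteq\mathfrak{m}$.

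The block $\ref{equgelchaunit2}\Leftrightarrow\ref{equgelchaunit3}\Leftrightarrow\ref{equgelchaunit4}$ is essentially formal: $\sigma(F)\subseteq F$ always gives $h_M(F)\subseteq h_M(\sigma(F))$, so \ref{equgelchaunit3} is precisely the reverse inclusion, which unwinds to \ref{equgelchaunit2}; applying $k=\bigcap$ and $Rad=k\circ h_M$ passes between \ref{equgelchaunit3} and \ref{equgelchaunit4}, and conversely $\sigma(F)\subseteq\mathfrak{m}$ forces $Rad(\sigma(F))\subseteq\mathfrak{m}$, recovering \ref{equgelchaunit2}. For $\ref{equgelchaunit4}\Rightarrow\ref{equgelchaunit5}$: were $F,G$ comaximal but $\sigma(F)\veebar\sigma(G)\subseteq\mathfrak{m}$ for some $\mathfrak{m}\in Max(\mathfrak{A})$, then $\mathfrak{m}\supseteq\sigma(F),\sigma(G)$, hence $Rad(F)=Rad(\sigma(F))\subseteq\mathfrak{m}$ and likewise for $G$, so $F\veebar G\subseteq\mathfrak{m}\neq A$, a contradiction.

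For $\ref{equgelchaunit5}\Rightarrow\ref{equgelchaunit6}$ only $\sigma(\veebar\mathcal{F})\subseteq\veebar_{F\in\mathcal{F}}\sigma(F)$ needs proof, the reverse inclusion being Proposition \ref{primesigmad}\ref{primesigmad3}. Given $a\in\sigma(\veebar\mathcal{F})$, Theorem \ref{sigmafequiv}\ref{sigmafequiv3} together with a finite-product argument (Proposition \ref{genfilprop}\ref{genfilprop1}) yields $F_1,\dots,F_n\in\mathcal{F}$ with $a^{\perp}\veebar F_1\veebar\cdots\veebar F_n=A$; applying \ref{equgelchaunit5} to the comaximal pair $a^{\perp}\veebar F_1\veebar\cdots\veebar F_{n-1}$ and $F_n$, then Proposition \ref{primesigmad}\ref{primesigmad3} and $\sigma(a^{\perp})\subseteq a^{\perp}$, one obtains $a^{\perp}\veebar\bigl(\sigma(F_1)\veebar\cdots\veebar\sigma(F_n)\bigr)=A$, hence $s\in a^{\perp}$ and $k\in\sigma(F_1)\veebar\cdots\veebar\sigma(F_n)$ with $s\odot k=0$; now \ref{res2} gives $a=a\vee(s\odot k)\geq(a\vee s)\odot(a\vee k)=a\vee k\geq k$, so $a\in\veebar_{F\in\mathcal{F}}\sigma(F)$. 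Finally $\ref{equgelchaunit6}\Rightarrow\ref{equgelchaunit1}$ is immediate: for distinct $\mathfrak{m},\mathfrak{n}\in Max(\mathfrak{A})$ one has $D(\mathfrak{m})\veebar D(\mathfrak{n})=\sigma(\mathfrak{m})\veebar\sigma(\mathfrak{n})=\sigma(\mathfrak{m}\veebar\mathfrak{n})=\sigma(A)=A$, so $\mathfrak{A}$ is Gelfand by Proposition \ref{pmprop}\ref{pmprop3}.

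I expect the main obstacle to be $\ref{equgelchaunit5}\Rightarrow\ref{equgelchaunit6}$ — making the reduction to finitely many members of $\mathcal{F}$ precise and checking that the \ref{res2}-manipulation genuinely absorbs $a^{\perp}$ into $\sigma(F_1)\veebar\cdots\veebar\sigma(F_n)$ — and, just behind it, bookkeeping in $\ref{equgelchaunit1}\Rightarrow\ref{equgelchaunit2}$ the several distinct descriptions of $\sigma$ from Theorem \ref{sigmafequiv} that have to be chained together with Proposition \ref{pmprop}\ref{pmprop7}.
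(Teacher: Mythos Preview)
Your proposal is correct and follows the same cycle $\ref{equgelchaunit1}\Rightarrow\cdots\Rightarrow\ref{equgelchaunit6}\Rightarrow\ref{equgelchaunit1}$ as the paper, with essentially identical arguments at each step. Two small remarks: in $\ref{equgelchaunit5}\Rightarrow\ref{equgelchaunit6}$ your citation of Proposition \ref{primesigmad}\ref{primesigmad3} is a slip (that inclusion points the wrong way); what you actually need, and in fact use via ``$\sigma(a^{\perp})\subseteq a^{\perp}$'', is Proposition \ref{primesigmad}\ref{primesigmad1} applied iteratively to peel off one $F_i$ at a time. Your \ref{res2}-based finish showing $a\geq k$ is a nice elementary alternative to the paper's route, which instead observes $a\in\sigma\bigl(\veebar_{G\in\mathcal{G}}\sigma(G)\bigr)=\veebar_{G\in\mathcal{G}}\sigma(G)$ using Theorem \ref{sigmfiltlatt}\ref{sigmfiltlatt1}.
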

\begin{proof}
\item [\ref{equgelchaunit1}$\Rightarrow$\ref{equgelchaunit2}:] Let $F$ be a filter of $\mathfrak{A}$ such that $\sigma(F)$ contained in a maximal filter $\mathfrak{m}$ of $\mathfrak{A}$. By absurdum, let $F$ not contained in $\mathfrak{m}$. So $F\veebar \mathfrak{m}=A$, and so by Propositions \ref{sigmapro}\ref{sigmapro4} \& \ref{pmprop}\ref{pmprop7} follows that $F\veebar \sigma(\mathfrak{m})=A$. Hence, there exist $f\in F$ and $a\in \sigma(\mathfrak{m})$ such that $f\odot a=0$. This concludes that $\neg\neg a\in (\neg x)^{\perp}$, for some $x\in \mathfrak{m}$. So $\neg x\in \sigma(F)$; a contradiction.
\item [\ref{equgelchaunit2}$\Rightarrow$\ref{equgelchaunit3}:] It is obvious.
\item [\ref{equgelchaunit3}$\Rightarrow$\ref{equgelchaunit4}:] It is obvious.
\item [\ref{equgelchaunit4}$\Rightarrow$\ref{equgelchaunit5}:] Let $F$ and $G$ be two comaximal filter of $\mathfrak{A}$. Assume by absurdum that $\sigma(F)$ and $\sigma(G)$ are not comaximal. Thus $\sigma(F),\sigma(G)\subseteq \mathfrak{m}$, for some maximal filter $\mathfrak{m}$ of $\mathfrak{A}$. Thus $Rad(\sigma(F)),Rad(\sigma(G))\subseteq \mathfrak{m}$, and this yields that $Rad(F),Rad(G)\subseteq \mathfrak{m}$. This establishes that $F,G\subseteq \mathfrak{m}$; a contradiction.
\item [\ref{equgelchaunit5}$\Rightarrow$\ref{equgelchaunit6}:] Let $\mathcal{F}$ be a family of filters of $\mathfrak{A}$. Consider $a\in \sigma(\veebar \mathcal{F})$. So $(\veebar \mathcal{F})\veebar a^{\perp}=A$. Routinely, one can show that $(\veebar \mathcal{G})\veebar a^{\perp}=A$, for a finite subset $\mathcal{G}$ of $\mathcal{F}$. This results that $(\veebar_{G\in \mathcal{G}}\sigma(G))\veebar a^{\perp}=A$. Therefore, $a\in \sigma(\veebar_{G\in \mathcal{G}}\sigma(G))\subseteq \veebar_{F\in \mathcal{F}}\sigma(F)$.
\item [\ref{equgelchaunit6}$\Rightarrow$\ref{equgelchaunit1}:] It is an immediate consequence of Propositions \ref{sigmapro}\ref{sigmapro4} \& \ref{pmprop}\ref{pmprop3}.
\end{proof}

The following outcome shows that for a given filter of a Gelfand residuated lattice its pure part and its sink are equal.
\begin{corollary}\label{rhosigmanorg}
  Let $\mathfrak{A}$ be a Gelfand residuated lattice and $F$ a filter of $\mathfrak{A}$. Then $\rho(F)=\sigma(F)$.
\end{corollary}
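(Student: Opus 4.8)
The plan is to deduce this from the fact that, under the Gelfand hypothesis, the sink $\sigma(F)$ is \emph{itself} a pure filter; once that is known the equality follows by a sandwich argument. Indeed, $\rho(F)\subseteq\sigma(F)$ holds for every filter by Proposition \ref{rfilter}\ref{rfilter1}, and $\sigma(F)\subseteq F$ holds by Proposition \ref{primesigmad}\ref{primesigmad1}. So if $\sigma(F)\in\sigma(\mathfrak{A})$, then $\sigma(F)$ is one of the pure filters contained in $F$ over which the join $\rho(F)=\underline{\bigvee}\{G\in\sigma(\mathfrak{A})\mid G\subseteq F\}$ is taken, hence $\sigma(F)\subseteq\rho(F)$, and combined with the previous inclusion we get $\rho(F)=\sigma(F)$.

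It thus remains to check $\sigma(\sigma(F))=\sigma(F)$. The inclusion $\sigma(\sigma(F))\subseteq\sigma(F)$ is automatic (Proposition \ref{primesigmad}\ref{primesigmad1} applied to the filter $\sigma(F)$). For the reverse inclusion I would use the maximal-filter description of the sink, Theorem \ref{sigmafequiv}\ref{sigmafequiv5}, which says $\sigma(G)=\bigcap\{D(\mathfrak{m})\mid\mathfrak{m}\in h_{M}(G)\}$ for every filter $G$. Applying it first with $G=\sigma(F)$ and then with $G=F$, and inserting the Gelfand characterization $h_{M}(\sigma(F))=h_{M}(F)$ from Theorem \ref{equgelchaunit}\ref{equgelchaunit3}, we obtain
\[\sigma(\sigma(F))=\bigcap\{D(\mathfrak{m})\mid\mathfrak{m}\in h_{M}(\sigma(F))\}=\bigcap\{D(\mathfrak{m})\mid\mathfrak{m}\in h_{M}(F)\}=\sigma(F),\]
so $\sigma(F)$ is pure, and the proof concludes as above with $\sigma(F)\subseteq\rho(F)\subseteq\sigma(F)$.

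There is no real obstacle here beyond bookkeeping: the one point to be careful about is that purity of $\sigma(F)$ is a statement about the \emph{double} sink $\sigma(\sigma(F))$, and it is precisely the Gelfand hypothesis, via the identity $h_{M}=h_{M}\circ\sigma$, that collapses the double sink back to the single one. An alternative packaging uses the additivity $\veebar_{G}\sigma(G)=\sigma(\veebar_{G}G)$ available in the Gelfand case (Theorem \ref{equgelchaunit}\ref{equgelchaunit6}) applied to the family $\{G\in\sigma(\mathfrak{A})\mid G\subseteq F\}$ (each such $G$ being $\sigma(G)$), but one still has to identify the resulting sink with $\sigma(F)$, so the route through Theorem \ref{sigmafequiv}\ref{sigmafequiv5} is the cleanest.
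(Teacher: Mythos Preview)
Your proof is correct and follows essentially the same plan as the paper: both reduce the claim to showing $\sigma(F)$ is pure (i.e., $\sigma(\sigma(F))=\sigma(F)$) and then conclude via the sandwich $\rho(F)\subseteq\sigma(F)\subseteq F$ together with Proposition~\ref{rfilter}\ref{rfilter2}. The only difference is in how purity of $\sigma(F)$ is verified: the paper argues elementwise by contradiction (if $a\in\sigma(F)\setminus\sigma(\sigma(F))$ then $\sigma(F)\veebar a^{\perp}$ sits in some $\mathfrak{m}\in Max(\mathfrak{A})$, and Theorem~\ref{equgelchaunit}\ref{equgelchaunit2} forces $F\subseteq\mathfrak{m}$, contradicting $a^{\perp}\veebar F=A$), whereas you compute $\sigma(\sigma(F))$ directly from Theorem~\ref{sigmafequiv}\ref{sigmafequiv5} combined with $h_{M}(\sigma(F))=h_{M}(F)$ from Theorem~\ref{equgelchaunit}\ref{equgelchaunit3}. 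Both routes invoke the same Gelfand characterization theorem, and yours is arguably the tidier packaging.
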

\begin{proof}
  Let $a\in \sigma(F)\setminus \sigma(\sigma(F))$. So $\sigma(F)\veebar a^{\perp}\neq A$. Assume that $\sigma(F)\veebar a^{\perp}\subseteq \mathfrak{m}$, for a maximal filter $\mathfrak{m}$ of $\mathfrak{A}$. By Theorem \ref{equgelchaunit} follows that $F\subseteq \mathfrak{m}$; a contradiction.
\end{proof}

Gelfand rings are characterized in terms of pure ideals by \citet[\S 7, Theorems 31]{borceux1983algebra}. In the following theorem, these results have been improved and generalized to residuated lattices.
\begin{theorem}\label{equgelchapure}
  Let $\mathfrak{A}$ be a residuated lattice. The following assertions are equivalent:
\begin{enumerate}
\item  [(1) \namedlabel{equgelchapure1}{(1)}] $\mathfrak{A}$ is Gelfand;
\item  [(2) \namedlabel{equgelchapure2}{(2)}] $\rho(F)\subseteq \mathfrak{m}$ implies $F\subseteq \mathfrak{m}$, for any $F\in \mathscr{F}(\mathfrak{A})$ and $\mathfrak{m}\in Max(\mathfrak{A})$;
\item  [(3) \namedlabel{equgelchapure3}{(3)}] $h_{M}(F)=h_{M}(\rho(F))$, for any filter $F$ of $\mathfrak{A}$;
\item  [(4) \namedlabel{equgelchapure4}{(4)}] $Rad(F)=Rad(\rho(F))$, for any filter $F$ of $\mathfrak{A}$;
\item  [(5) \namedlabel{equgelchapure5}{(5)}] if $F$ and $G$ are comaximal filters of $\mathfrak{A}$, then so are $\rho(F)$ and $\rho(G)$;
\item  [(6) \namedlabel{equgelchapure6}{(6)}] if $\mathcal{F}$ is a family of filters of $\mathfrak{A}$, then $\rho(\veebar\mathcal{F})=\veebar_{F\in \mathcal{F}}\rho(F)$;
\item  [(7) \namedlabel{equgelchapure7}{(7)}] if $\mathfrak{m}$ and $\mathfrak{n}$ are maximal filters of $\mathfrak{A}$, then $\rho(\mathfrak{m})$ and $\rho(\mathfrak{n})$ are comaximal.

\end{enumerate}
\end{theorem}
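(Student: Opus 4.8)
The plan is to prove the cycle \ref{equgelchapure1} $\Rightarrow$ \ref{equgelchapure2} $\Rightarrow$ \ref{equgelchapure3} $\Rightarrow$ \ref{equgelchapure4} $\Rightarrow$ \ref{equgelchapure5} $\Rightarrow$ \ref{equgelchapure6} $\Rightarrow$ \ref{equgelchapure7} $\Rightarrow$ \ref{equgelchapure1}, treating the statement as the $\rho$-analogue of Theorem \ref{equgelchaunit}. Three facts do the work throughout: the identity $\rho(F)=\sigma(F)$ that holds as soon as $\mathfrak{A}$ is Gelfand (Corollary \ref{rhosigmanorg}); the unconditional containments $\rho(F)\subseteq F$ (Proposition \ref{rfilter}\ref{rfilter2}) and $\rho(\mathfrak{m})\subseteq\sigma(\mathfrak{m})=D(\mathfrak{m})$ for $\mathfrak{m}\in Max(\mathfrak{A})$ (Proposition \ref{rfilter}\ref{rfilter1} together with Proposition \ref{sigmapro}\ref{sigmapro4}); and the characterization $\sigma(G)=\{a\in A\mid a^{\perp}\veebar G=A\}$ of Theorem \ref{sigmafequiv}\ref{sigmafequiv3}.

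Most links are almost immediate. For \ref{equgelchapure1} $\Rightarrow$ \ref{equgelchapure2}, Corollary \ref{rhosigmanorg} rewrites $\rho(F)\subseteq\mathfrak{m}$ as $\sigma(F)\subseteq\mathfrak{m}$, and Theorem \ref{equgelchaunit}\ref{equgelchaunit2} gives $F\subseteq\mathfrak{m}$. Since $\rho(F)\subseteq F$ always forces $h_{M}(F)\subseteq h_{M}(\rho(F))$, assertion \ref{equgelchapure2} is exactly the reverse inclusion, so \ref{equgelchapure2} $\Rightarrow$ \ref{equgelchapure3}, and intersecting over $h_{M}(\cdot)$ yields \ref{equgelchapure3} $\Rightarrow$ \ref{equgelchapure4}. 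For \ref{equgelchapure4} $\Rightarrow$ \ref{equgelchapure5}: if comaximal $F,G$ had $\rho(F),\rho(G)$ inside a common $\mathfrak{m}\in Max(\mathfrak{A})$, then $Rad(F)=Rad(\rho(F))\subseteq\mathfrak{m}$ and $Rad(G)=Rad(\rho(G))\subseteq\mathfrak{m}$, hence $F,G\subseteq\mathfrak{m}$, against $F\veebar G=A$. For \ref{equgelchapure6} $\Rightarrow$ \ref{equgelchapure7}: distinct maximal filters $\mathfrak{m},\mathfrak{n}$ are comaximal, so $\rho(\mathfrak{m})\veebar\rho(\mathfrak{n})=\rho(\mathfrak{m}\veebar\mathfrak{n})=\rho(A)=A$. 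For \ref{equgelchapure7} $\Rightarrow$ \ref{equgelchapure1}: the containment $\rho(\mathfrak{m})\subseteq D(\mathfrak{m})$ upgrades comaximality of $\rho(\mathfrak{m}),\rho(\mathfrak{n})$ to comaximality of $D(\mathfrak{m}),D(\mathfrak{n})$ for all distinct $\mathfrak{m},\mathfrak{n}\in Max(\mathfrak{A})$, whence $\mathfrak{A}$ is Gelfand by Proposition \ref{pmprop}\ref{pmprop3}. (The same upgrade also gives \ref{equgelchapure5} $\Rightarrow$ \ref{equgelchapure1} directly, so one could instead split into two shorter cycles through \ref{equgelchapure1}; I keep the single cycle only for symmetry with Theorem \ref{equgelchaunit}.)

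The implication that needs real effort, and the one I expect to be the main obstacle, is \ref{equgelchapure5} $\Rightarrow$ \ref{equgelchapure6}. Since $\veebar_{F\in\mathcal{F}}\rho(F)\subseteq\rho(\veebar\mathcal{F})$ holds with no hypothesis (Proposition \ref{rfilter}\ref{rfilter5}), only the reverse inclusion is at stake. Given $a\in\rho(\veebar\mathcal{F})\subseteq\sigma(\veebar\mathcal{F})$ one has $a^{\perp}\veebar(\veebar\mathcal{F})=A$; since $\veebar\mathcal{F}=\mathscr{F}(\bigcup\mathcal{F})$ and membership in a generated filter involves only finitely many generators (Proposition \ref{genfilprop}), already $a^{\perp}\veebar F_{1}\veebar\cdots\veebar F_{k}=A$ for some $F_{1},\dots,F_{k}\in\mathcal{F}$. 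The plan is then to push $\rho$ onto the factors one by one: if $a^{\perp}\veebar\rho(F_{1})\veebar\cdots\veebar\rho(F_{j})\veebar F_{j+1}\veebar\cdots\veebar F_{k}=A$, apply \ref{equgelchapure5} to the comaximal pair $F_{j+1}$ and $H:=a^{\perp}\veebar\rho(F_{1})\veebar\cdots\veebar\rho(F_{j})\veebar F_{j+2}\veebar\cdots\veebar F_{k}$ to get $\rho(F_{j+1})\veebar\rho(H)=A$, hence $\rho(F_{j+1})\veebar H=A$ because $\rho(H)\subseteq H$. After $k$ steps, $a^{\perp}\veebar\rho(F_{1})\veebar\cdots\veebar\rho(F_{k})=A$, i.e.\ $a\in\sigma\bigl(\rho(F_{1})\veebar\cdots\veebar\rho(F_{k})\bigr)$ by Theorem \ref{sigmafequiv}\ref{sigmafequiv3}; and since a finite join of pure filters is pure (Theorem \ref{sigmfiltlatt}\ref{sigmfiltlatt1}), that set equals $\rho(F_{1})\veebar\cdots\veebar\rho(F_{k})\subseteq\veebar_{F\in\mathcal{F}}\rho(F)$, so $a\in\veebar_{F\in\mathcal{F}}\rho(F)$ as required. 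The only delicate points are the bookkeeping in this stripping induction and the finiteness reduction; everything else transcribes the $\sigma$-version, with Corollary \ref{rhosigmanorg} supplying the Gelfand bridge.
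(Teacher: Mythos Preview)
Your proposal is correct and follows essentially the same route as the paper: the same cycle \ref{equgelchapure1}$\Rightarrow\cdots\Rightarrow$\ref{equgelchapure6} with Corollary \ref{rhosigmanorg} bridging to Theorem \ref{equgelchaunit} at the start, and the passage $\rho(\mathfrak{m})\subseteq D(\mathfrak{m})$ feeding into Proposition \ref{pmprop}\ref{pmprop3} at the end. The only real difference is that your stripping induction for \ref{equgelchapure5}$\Rightarrow$\ref{equgelchapure6} spells out what the paper compresses into the single phrase ``This results that $(\veebar_{G\in\mathcal{G}}\rho(G))\veebar a^{\perp}=A$''; your version is the honest unpacking of that step, and also your explicit detour through \ref{equgelchapure7} makes visible a link the paper leaves implicit (the paper closes with \ref{equgelchapure6}$\Rightarrow$\ref{equgelchapure1} ``nothing to prove'' and a separate \ref{equgelchapure7}$\Rightarrow$\ref{equgelchapure1}, relying tacitly on \ref{equgelchapure5}$\Rightarrow$\ref{equgelchapure7}).
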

\begin{proof}
\item [\ref{equgelchapure1}$\Rightarrow$\ref{equgelchapure2}:] It follows by Theorem \ref{equgelchaunit} and Proposition \ref{rhosigmanorg}.
\item [\ref{equgelchapure2}$\Rightarrow$\ref{equgelchapure3}:] It is evident.
\item [\ref{equgelchapure3}$\Rightarrow$\ref{equgelchapure4}:] It is evident.
\item [\ref{equgelchapure4}$\Rightarrow$\ref{equgelchapure5}:] Let $F$ and $G$ be two comaximal filter of $\mathfrak{A}$. Assume by absurdum that $\rho(F)$ and $\rho(G)$ are not comaximal. Thus $\rho(F),\rho(G)\subseteq \mathfrak{m}$, for some maximal filter $\mathfrak{m}$ of $\mathfrak{A}$. Thus $Rad(\rho(F)),Rad(\rho(G))\subseteq \mathfrak{m}$, and this yields that $Rad(F),Rad(G)\subseteq \mathfrak{m}$. This establishes that $F,G\subseteq \mathfrak{m}$; a contradiction.
\item [\ref{equgelchapure5}$\Rightarrow$\ref{equgelchapure6}:] Let $\mathcal{F}$ be a family of filters of $\mathfrak{A}$. Consider $a\in \rho(\veebar \mathcal{F})$. So $(\veebar \mathcal{F})\veebar a^{\perp}=A$. Routinely, one can show that $(\veebar \mathcal{G})\veebar a^{\perp}=A$, for a finite subset $\mathcal{G}$ of $\mathcal{F}$. This results that $(\veebar_{G\in \mathcal{G}}\rho(G))\veebar a^{\perp}=A$. Therefore, $a\in \sigma(\veebar_{G\in \mathcal{G}}\rho(G))\subseteq \veebar_{F\in \mathcal{F}}\rho(F)$.
\item [\ref{equgelchapure6}$\Rightarrow$\ref{equgelchapure1}:] There is nothing to prove.
\item [\ref{equgelchapure7}$\Rightarrow$\ref{equgelchapure1}:] It follows by Theorem \ref{pmprop}.
\end{proof}
Recall that a pair $(f,g)$ is called \textit{an adjunction (or isotone Galois connection)} between posets $\mathscr{A}=(A;\leq)$ and $\mathscr{B}=(B;\preccurlyeq)$, where $f:A\longrightarrow B$ and $g:B\longrightarrow A$ are two functions such that for all $a\in A$ and $b\in B$, $f(a)\leq b$ if and only if $a\preccurlyeq g(b)$. It is well-known that $(f,g)$ is an adjunction if and only if $gf$ is inflationary, $fg$ is deflationary, and $f,g$ are isotone \citep[Theorem 2]{garcia2013galois}. Also, $\mathscr{C}_{fg}=Im(f)$, in which $\mathscr{C}_{fg}$ is the set of fixed points of the kernel operator $fg$. In the following, the relationship between the pure filters and radicals in a Gelfand residuated lattice is described. This description is inspired by the one obtained for Gelfand rings \cite[\S 8, Proposition 34]{borceux1983algebra}.
\begin{theorem}\label{equgelchapure}
  Let $\mathfrak{A}$ be a residuated lattice. The following assertions are equivalent:
\begin{enumerate}
\item  [(1) \namedlabel{rhoradgel1}{(1)}] $\mathfrak{A}$ is Gelfand;
\item  [(2) \namedlabel{rhoradgel2}{(2)}] The pair $(\rho,Rad)$ is an adjunction.
\end{enumerate}
\end{theorem}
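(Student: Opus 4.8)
The plan is to verify the defining property of an adjunction directly, viewing $(\rho,Rad)$ as a pair of isotone maps $\rho\colon(\mathscr{F}(\mathfrak{A});\subseteq)\longrightarrow(\sigma(\mathfrak{A});\subseteq)$ and $Rad\colon(\sigma(\mathfrak{A});\subseteq)\longrightarrow(\mathscr{F}(\mathfrak{A});\subseteq)$; note that the two posets are genuinely different, since $Rad$ of a pure filter need not be pure, so one must be careful about the ambient posets and the direction of the adjunction. Concretely, $(\rho,Rad)$ being an adjunction means: for every $F\in\mathscr{F}(\mathfrak{A})$ and every $G\in\sigma(\mathfrak{A})$, $\rho(F)\subseteq G$ if and only if $F\subseteq Rad(G)$. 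I will use the characterization recalled just above the statement: it suffices to check that $\rho$ and $Rad$ are isotone, that $Rad\circ\rho$ is inflationary, and that $\rho\circ Rad$ is deflationary.

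First I would dispose of everything that holds unconditionally. Isotonicity of $\rho$ is part of Proposition~\ref{rfilter}\ref{rfilter3} (it is a closure operator on $\mathscr{F}(\mathfrak{A})$), and isotonicity of $Rad$ is immediate from its definition as an intersection of maximal filters. Moreover, by Proposition~\ref{rfilter}\ref{rfilter7} we have $\rho(Rad(G))=G$ for every pure filter $G$, so $\rho\circ Rad$ is in fact the identity on $\sigma(\mathfrak{A})$, hence trivially deflationary. Consequently $(\rho,Rad)$ is an adjunction if and only if the one remaining condition holds: $Rad\circ\rho$ is inflationary, i.e. $F\subseteq Rad(\rho(F))$ for every $F\in\mathscr{F}(\mathfrak{A})$.

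The last step is to recognize this condition. Since $Rad(\rho(F))=\bigcap\{\mathfrak{m}\in Max(\mathfrak{A})\mid\rho(F)\subseteq\mathfrak{m}\}$, the inclusion $F\subseteq Rad(\rho(F))$ holding for all $F$ is literally the assertion that $\rho(F)\subseteq\mathfrak{m}$ implies $F\subseteq\mathfrak{m}$ for every filter $F$ and every $\mathfrak{m}\in Max(\mathfrak{A})$, which is condition~\ref{equgelchapure2} of the preceding characterization theorem; by that theorem it is equivalent to $\mathfrak{A}$ being Gelfand, and both implications follow at once. Equivalently, one can argue without the abstract lemma: if the adjunction holds, instantiating $\rho(F)\subseteq\rho(F)$ gives $F\subseteq Rad(\rho(F))$ for all $F$, hence $\mathfrak{A}$ is Gelfand; conversely, if $\mathfrak{A}$ is Gelfand then $F\subseteq Rad(G)$ forces $\rho(F)\subseteq\rho(Rad(G))=G$ by isotonicity and Proposition~\ref{rfilter}\ref{rfilter7}, while $\rho(F)\subseteq G$ forces $F\subseteq Rad(\rho(F))\subseteq Rad(G)$ by condition~\ref{equgelchapure2} and isotonicity of $Rad$, so the defining equivalence $\rho(F)\subseteq G\Leftrightarrow F\subseteq Rad(G)$ holds.

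There is no serious obstacle here: once the posets are identified correctly, all the content is carried by the earlier equivalence ``$\mathfrak{A}$ Gelfand $\iff$ ($\rho(F)\subseteq\mathfrak{m}\Rightarrow F\subseteq\mathfrak{m}$)'', while Proposition~\ref{rfilter}\ref{rfilter7} makes the kernel-operator half of the adjunction ($\rho\circ Rad=\mathrm{id}$ on $\sigma(\mathfrak{A})$) automatic. The only mildly delicate point is bookkeeping: deciding which composite must be a closure operator and which a kernel operator, and remembering that $Rad$ does not respect purity, so the adjunction is between $\mathscr{F}(\mathfrak{A})$ and $\sigma(\mathfrak{A})$ rather than within $\sigma(\mathfrak{A})$.
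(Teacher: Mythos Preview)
Your argument is correct. Both you and the paper reduce the statement to the preceding characterization theorem (the equivalence of Gelfand with $\rho(F)\subseteq\mathfrak{m}\Rightarrow F\subseteq\mathfrak{m}$), but the organization differs. The paper verifies the defining biconditional $\rho(F)\subseteq G\Leftrightarrow F\subseteq Rad(G)$ directly for arbitrary filters $F,G$: the forward direction is exactly your condition~\ref{equgelchapure2}, while for the backward direction it invokes $h_{M}(G)=h_{M}(\rho(G))$ together with Proposition~\ref{rfilter}\ref{rfilter6} to obtain $\rho(F)\subseteq\bigcap_{\mathfrak{m}\in h_M(G)}\rho(\mathfrak{m})=\rho(G)\subseteq G$. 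You instead use the inflationary/deflationary characterization recalled just before the theorem and observe that Proposition~\ref{rfilter}\ref{rfilter7} makes $\rho\circ Rad=\mathrm{id}$ on $\sigma(\mathfrak{A})$ hold unconditionally, so that the entire content of the adjunction collapses to $Rad\circ\rho$ being inflationary, i.e.\ condition~\ref{equgelchapure2}. Your route is a little more economical: it needs only \ref{rfilter7} and \ref{equgelchapure2}, avoiding \ref{rfilter6} and the $h_M$-equality, and it makes transparent exactly which half of the adjunction carries the Gelfand hypothesis. The only caveat is the one you already flagged: the adjunction must be read between $\mathscr{F}(\mathfrak{A})$ and $\sigma(\mathfrak{A})$ for \ref{rfilter7} to apply as stated; the paper phrases its proof for arbitrary filters $G$, which is an equivalent formulation since $\rho(F)\subseteq G$ iff $\rho(F)\subseteq\rho(G)$.
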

\begin{proof}
\item [\ref{rhoradgel1}$\Rightarrow$\ref{rhoradgel2}:] Let $F$ and $G$ be two filters of $\mathfrak{A}$. Suppose first $\rho(F)\subseteq G$. Assume that $\mathfrak{m}$ is a maximal filter of $\mathfrak{A}$ containing $G$. By Theorem \ref{equgelchapure} we have $F\subseteq \mathfrak{m}$, and this yields that $F\subseteq Rad(G)$. Conversely, suppose $F\subseteq Rad(G)$. One can see that $\rho(F)\subseteq \rho(\mathfrak{m})$, for any maximal filter $\mathfrak{m}$ of $\mathfrak{A}$ containing $G$. Using Theorem \ref{equgelchapure} and Proposition \ref{rfilter}\ref{rfilter4}, we have the following sequence of formulas:
  \[
  \begin{array}{ll}
    \rho(F) & \subseteq \{\rho(\mathfrak{m})\mid \mathfrak{m}\in h_{M}(G)\} \\
     & =\{\rho(\mathfrak{m})\mid \mathfrak{m}\in h_{M}(\rho(G))\} \\
     & =\rho(G)\subseteq G.
  \end{array}
  \]
\item [\ref{rhoradgel2}$\Rightarrow$\ref{rhoradgel1}:] It follows by Theorem \ref{equgelchapure}.
\end{proof}
The subsequent theorem characterizes the purely-maximal filters of a Gelfand residuated lattice.
\begin{theorem}\label{gelfmaxpure}
  Let $\mathfrak{A}$ be a Gelfand residuated lattice. The following assertions hold:
\begin{enumerate}
\item  [$(1)$ \namedlabel{gelfmaxpure1}{$(1)$}] $Max(\sigma(\mathfrak{A}))=\rho(Max(\mathfrak{A}))$;
\item  [$(2)$ \namedlabel{gelfmaxpure2}{$(2)$}] $Spp(\mathfrak{A})=Max(\sigma(\mathfrak{A}))$;
\item  [$(3)$ \namedlabel{gelfmaxpure3}{$(3)$}] $Spp(\mathfrak{A})=\rho(Max(\mathfrak{A}))$.
\end{enumerate}
\end{theorem}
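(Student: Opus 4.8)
The plan is to establish the three equalities in the order (1), then (2), then deduce (3). For (1), Proposition~\ref{r1filter}\ref{r1filter1} already gives $Max(\sigma(\mathfrak{A}))\subseteq \rho(Max(\mathfrak{A}))$, so it remains to check that $\rho(\mathfrak{m})$ is purely-maximal for every $\mathfrak{m}\in Max(\mathfrak{A})$. It is proper since $\rho(\mathfrak{m})\subseteq\mathfrak{m}$. If $\rho(\mathfrak{m})\subseteq F$ for a proper pure filter $F$, pick a maximal filter $\mathfrak{n}\supseteq F$; then $\rho(\mathfrak{m})=\sigma(\mathfrak{m})\subseteq\mathfrak{n}$ by Corollary~\ref{rhosigmanorg}, so $\mathfrak{m}\subseteq\mathfrak{n}$ by Theorem~\ref{equgelchaunit}\ref{equgelchaunit2}, forcing $\mathfrak{m}=\mathfrak{n}$ and $F\subseteq\mathfrak{m}$; since $F$ is pure, $F\subseteq\rho(\mathfrak{m})$ by Proposition~\ref{rfilter}\ref{rfilter2}, hence $F=\rho(\mathfrak{m})$. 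This yields $\rho(Max(\mathfrak{A}))\subseteq Max(\sigma(\mathfrak{A}))$, and so equality.

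For (2), the inclusion $Max(\sigma(\mathfrak{A}))\subseteq Spp(\mathfrak{A})$ is automatic, so the content is the reverse one. Let $\mathfrak{p}\in Spp(\mathfrak{A})$; being a proper pure filter it lies in some $\mathfrak{m}\in Max(\mathfrak{A})$, hence $\mathfrak{p}=\rho(\mathfrak{p})\subseteq\rho(\mathfrak{m})$ by Proposition~\ref{rfilter}\ref{rfilter2}, and in particular $h_{M}(\mathfrak{p})\neq\emptyset$; by Proposition~\ref{rfilter}\ref{rfilter6}, $\mathfrak{p}=\bigcap\{\rho(\mathfrak{n})\mid\mathfrak{n}\in h_{M}(\mathfrak{p})\}$. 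I would show that $h_{M}(\mathfrak{p})$ is a singleton, so that $\mathfrak{p}=\rho(\mathfrak{m})$ for that unique $\mathfrak{m}$, which is purely-maximal by (1). Suppose $\mathfrak{n}_{1}\neq\mathfrak{n}_{2}$ both lie in $h_{M}(\mathfrak{p})$. By Corollary~\ref{rhosigmanorg} and Proposition~\ref{pmprop}\ref{pmprop3}, $\rho(\mathfrak{n})=\sigma(\mathfrak{n})=D(\mathfrak{n})$ for maximal $\mathfrak{n}$, and distinct such filters are comaximal. Since the pure filters are closed under intersection ($\rho$ being a closure operator preserving finite meets, Proposition~\ref{rfilter}\ref{rfilter3},\ref{rfilter4}), the decomposition $\mathfrak{p}=\rho(\mathfrak{n}_{1})\cap\bigl(\bigcap_{\mathfrak{n}\in h_{M}(\mathfrak{p}),\,\mathfrak{n}\neq\mathfrak{n}_{1}}\rho(\mathfrak{n})\bigr)$ is a meet of two pure filters; by purely-primeness one of them equals $\mathfrak{p}$, and it cannot be $\rho(\mathfrak{n}_{1})$, for then $\mathfrak{p}=\rho(\mathfrak{n}_{1})\subseteq\rho(\mathfrak{n}_{2})$ together with $\rho(\mathfrak{n}_{1})\veebar\rho(\mathfrak{n}_{2})=A$ would give $\rho(\mathfrak{n}_{2})=A\subseteq\mathfrak{n}_{2}$, absurd. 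Thus deleting any single term from the intersection leaves it unchanged; using the compactness of $Spp(\mathfrak{A})$ (Theorem~\ref{puresppcomp}) one reduces to a finite sub-intersection $\mathfrak{p}=\rho(\mathfrak{n}_{1})\cap\cdots\cap\rho(\mathfrak{n}_{k})$ with the $\rho(\mathfrak{n}_{i})$ pairwise comaximal, and peeling off terms one at a time with purely-primeness collapses it to a single $\rho(\mathfrak{n}_{j})=\mathfrak{p}$, contradicting the line above. Hence $|h_{M}(\mathfrak{p})|=1$, and $\mathfrak{p}\in Max(\sigma(\mathfrak{A}))$.

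Finally (3) is immediate from (2) and (1): $Spp(\mathfrak{A})=Max(\sigma(\mathfrak{A}))=\rho(Max(\mathfrak{A}))$. The step I expect to require genuine care is precisely the reduction from the ``every term is deletable'' property to an honest contradiction when $h_{M}(\mathfrak{p})$ is infinite: here one must lean on compactness of the pure spectrum, exploiting that for distinct maximal $\mathfrak{n},\mathfrak{n}'$ the closed sets $h_{\kappa}(\rho(\mathfrak{n}))$ and $h_{\kappa}(\rho(\mathfrak{n}'))$ are disjoint singletons (disjoint by comaximality, singletons because $\rho(\mathfrak{n})$ is purely-maximal by~(1)). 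The remainder is bookkeeping with Corollary~\ref{rhosigmanorg}, Propositions~\ref{rfilter} and~\ref{r1filter}, and the comaximality in Proposition~\ref{pmprop}\ref{pmprop3}.
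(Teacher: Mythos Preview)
Your argument for (1) is correct and is a fleshed-out version of what the paper sketches. The trouble is in (2).

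There are two gaps. First, when you write the decomposition $\mathfrak{p}=\rho(\mathfrak{n}_{1})\cap\bigl(\bigcap_{\mathfrak{n}\neq\mathfrak{n}_{1}}\rho(\mathfrak{n})\bigr)$ and call the second factor a pure filter, you justify this by ``$\rho$ preserving finite meets''. But the second factor is in general an \emph{infinite} intersection, and $\sigma(\mathfrak{A})$ is only a frame (Theorem~\ref{sigmfiltlatt}): it is closed under arbitrary $\veebar$ and finite $\cap$, not arbitrary $\cap$. Indeed, the fixed points of the \emph{kernel} operator $\rho$ are closed under joins, not meets. So you cannot invoke purely-primeness on that factorization. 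Second, the compactness of $Spp(\mathfrak{A})$ says that $\veebar_{i}F_{i}=A$ has a finite subfamily with the same property; it says nothing about reducing an intersection $\bigcap_{\mathfrak{n}}\rho(\mathfrak{n})=\mathfrak{p}$ to a finite one. The ``disjoint singletons'' observation does not bridge this, since to identify $h_{\kappa}(\mathfrak{p})$ with $\{\rho(\mathfrak{n}):\mathfrak{n}\in h_{M}(\mathfrak{p})\}$ you would already need (3).

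The paper avoids all of this with a direct element-level argument. Given $\mathfrak{p}\in Spp(\mathfrak{A})$ and a purely-maximal $\mathfrak{m}\supseteq\mathfrak{p}$, it uses the Gelfand identity $\rho(\veebar\mathcal{F})=\veebar_{F\in\mathcal{F}}\rho(F)$ (Theorem~\ref{equgelchapure}\ref{equgelchapure6}) to write $\mathfrak{m}=\veebar_{a\in\mathfrak{m}}\rho(\mathscr{F}(a))$; if $\mathfrak{p}\subsetneq\mathfrak{m}$, some $\rho(\mathscr{F}(a))\nsubseteq\mathfrak{p}$. Since $\mathscr{F}(a)\cap a^{\perp}=\{1\}$, Proposition~\ref{rfilter}\ref{rfilter4} gives $\rho(\mathscr{F}(a))\cap\rho(a^{\perp})=\{1\}\subseteq\mathfrak{p}$, so purely-primeness forces $\rho(a^{\perp})\subseteq\mathfrak{p}\subseteq\mathfrak{m}$; but $a\in\mathfrak{m}$ pure gives $a^{\perp}\veebar\mathfrak{m}=A$, a contradiction. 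The key point you are missing is to produce, for a single element $a$, a \emph{pair} of pure filters with trivial intersection, rather than trying to split the whole family $\{\rho(\mathfrak{n})\}_{\mathfrak{n}\in h_{M}(\mathfrak{p})}$ at once.
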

\begin{proof}
\item [\ref{gelfmaxpure1}:] It is a direct result of Proposition \ref{r1filter}\ref{r1filter2} and Theorem \ref{equgelchapure}.
\item [\ref{gelfmaxpure2}:] Let $\mathfrak{p}$ be a purely-prime filter of $\mathfrak{A}$. Assume that $\mathfrak{m}$ is a purely-maximal filter of $\mathfrak{A}$ containing $\mathfrak{p}$. Suppose that $\mathfrak{p}\neq \mathfrak{m}$. By Theorem \ref{equgelchapure}, we have $\mathfrak{m}=\veebar_{a\in \mathfrak{m}}\rho(\mathscr{F}(a))$. Choose $a\in \mathfrak{m}$ such that $\mathscr{F}(a)\nsubseteq \mathfrak{p}$. On the other hand, $\rho(\mathscr{F}(a)\cap \rho(a^{\perp})=\{1\}\subseteq \mathfrak{p}$ which implies that $\rho(a^{\perp})\subseteq \mathfrak{p}$; a contradiction.
\item [\ref{gelfmaxpure3}:] It follows by \ref{gelfmaxpure1} and \ref{gelfmaxpure2}.
\end{proof}
\citet[\S 8, Theorems 39]{borceux1983algebra} has shown that the pure spectrum of a ring is Hausdorff. In the following theorem, this result has been generalized to residuated lattices.
\begin{proposition}\label{gelspphau}
  Let $\mathfrak{A}$ be a Gelfand residuated lattice. $Spp(\mathfrak{A})$ is a Hausdorff space.
\end{proposition}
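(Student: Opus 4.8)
The plan is to check the Hausdorff property directly: given two distinct purely-prime filters $\mathfrak{p}$ and $\mathfrak{q}$ of $\mathfrak{A}$, I will produce pure filters $F,G$ with $F\cap G=\{1\}$, $F\nsubseteq\mathfrak{p}$ and $G\nsubseteq\mathfrak{q}$; then $d_{\kappa}(F)$ and $d_{\kappa}(G)$ are open neighbourhoods of $\mathfrak{p}$ and $\mathfrak{q}$ respectively, and $d_{\kappa}(F)\cap d_{\kappa}(G)=d_{\kappa}(F\cap G)=d_{\kappa}(\{1\})=\emptyset$, so they are disjoint. (Since the sets $d_{\kappa}(F)$, $F\in\sigma(\mathfrak{A})$, form a basis closed under finite intersection, separating points of the basis suffices.)

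The set-up step is to invoke Theorem \ref{gelfmaxpure}: $Spp(\mathfrak{A})=Max(\sigma(\mathfrak{A}))=\rho(Max(\mathfrak{A}))$, so $\mathfrak{p}=\rho(\mathfrak{m})$ and $\mathfrak{q}=\rho(\mathfrak{n})$ for some maximal filters $\mathfrak{m},\mathfrak{n}$, and $\{\mathfrak{p},\mathfrak{q}\}$ is an antichain. Since $\mathfrak{q}$ is pure, $\mathfrak{q}\subseteq\mathfrak{m}$ would force $\mathfrak{q}\subseteq\rho(\mathfrak{m})=\mathfrak{p}$ by Proposition \ref{rfilter}\ref{rfilter2}, contradicting the antichain property; hence $\mathfrak{q}\nsubseteq\mathfrak{m}$, and I may fix $a\in\mathfrak{q}\setminus\mathfrak{m}$. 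I then set $F=\rho(\mathscr{F}(a))$ and $G=\rho(a^{\perp})$, both pure by Proposition \ref{rfilter}\ref{rfilter2}.

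Disjointness: by Proposition \ref{rfilter}\ref{rfilter1}, $F\cap G\subseteq\mathscr{F}(a)\cap a^{\perp}$; if $x$ lies in both then $x\geq a^{n}$ for some $n$ and $x\vee a=1$, so iterating \ref{res2} gives $x\vee a^{n}=1$, whence $x=x\vee a^{n}=1$, i.e.\ $F\cap G=\{1\}$. For $F\nsubseteq\mathfrak{p}$: from $a\notin\mathfrak{m}$ we get $\mathscr{F}(a)\nsubseteq\mathfrak{m}$, so $\mathfrak{m}\notin h_{M}(\mathscr{F}(a))=h_{M}(\sigma(\mathscr{F}(a)))=h_{M}(F)$, using Corollary \ref{rhosigmanorg} and the Gelfand characterization Theorem \ref{equgelchaunit}\ref{equgelchaunit3}; thus $F\nsubseteq\mathfrak{m}$, and since $F$ is pure and $\rho(\mathfrak{m})\subseteq\mathfrak{m}$, also $F\nsubseteq\rho(\mathfrak{m})=\mathfrak{p}$. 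For $G\nsubseteq\mathfrak{q}$: since $a\in\mathfrak{q}=\rho(\mathfrak{n})=\sigma(\mathfrak{n})$, Theorem \ref{sigmafequiv}\ref{sigmafequiv3} gives $a^{\perp}\veebar\mathfrak{n}=A$, hence $a^{\perp}\nsubseteq\mathfrak{n}$ because $\mathfrak{n}$ is proper; then $\mathfrak{n}\notin h_{M}(a^{\perp})=h_{M}(G)$ as before, so $G\nsubseteq\mathfrak{n}$ and therefore $G\nsubseteq\rho(\mathfrak{n})=\mathfrak{q}$.

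The only place where Gelfand-ness is genuinely used — and the step I expect to be the crux — is the passage $h_{M}(H)=h_{M}(\sigma(H))$ between a filter and its sink at the level of maximal hulls (Theorem \ref{equgelchaunit}), together with $\rho=\sigma$ of Corollary \ref{rhosigmanorg}; everything else is routine bookkeeping inside the frame of pure filters. With $F$ and $G$ as above, $d_{\kappa}(F)$ and $d_{\kappa}(G)$ are the required disjoint neighbourhoods, so $Spp(\mathfrak{A})$ is Hausdorff.
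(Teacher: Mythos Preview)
Your proof is correct. Both your argument and the paper's begin by invoking Theorem \ref{gelfmaxpure} to write $\mathfrak{p}=\rho(\mathfrak{m})$, $\mathfrak{q}=\rho(\mathfrak{n})$ for distinct maximals, and both produce two pure filters with trivial intersection that avoid $\mathfrak{p}$ and $\mathfrak{q}$ respectively; the difference is only in how those two filters are manufactured. The paper uses the Gelfand separation of maximals (Proposition \ref{pmprop}) to obtain $a\notin\mathfrak{m}$, $b\notin\mathfrak{n}$ with $a\vee b=1$, and takes the pair $\rho(\mathscr{F}(a))$, $\rho(\mathscr{F}(b))$, whose intersection is $\rho(\mathscr{F}(a\vee b))=\{1\}$ via Propositions \ref{genfilprop}\ref{genfilprop3} and \ref{rfilter}\ref{rfilter4}. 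You instead pick a single $a\in\mathfrak{q}\setminus\mathfrak{m}$ and pair $\rho(\mathscr{F}(a))$ with $\rho(a^{\perp})$, leaning on Theorem \ref{equgelchaunit}\ref{equgelchaunit3}, Corollary \ref{rhosigmanorg} and the description of $\sigma$ in Theorem \ref{sigmafequiv}\ref{sigmafequiv3} rather than on Proposition \ref{pmprop}. The paper's route is marginally shorter once Proposition \ref{pmprop} is available, while yours stays entirely within the $\sigma/\rho$ machinery developed in Sections \ref{sec3}--\ref{sec5} and never appeals to that external characterization. One small bookkeeping point: the containment $\rho(H)\subseteq H$ that you use for disjointness is Proposition \ref{rfilter}\ref{rfilter2} (or the definition of $\rho$), not \ref{rfilter}\ref{rfilter1}.
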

\begin{proof}
Let $\mathfrak{p}$ and $\mathfrak{q}$ be two distinct purely-prime filters of $\mathfrak{A}$. Applying Theorem \ref{gelfmaxpure}\ref{gelfmaxpure3}, there exists two distinct maximal filters $\mathfrak{m}$ and $\mathfrak{n}$ such that $\mathfrak{p}=\rho(\mathfrak{m})$ and $\mathfrak{q}=\rho(\mathfrak{n})$. Using Theorem \ref{pmprop}, there exist $a\notin \mathfrak{m}$ and $b\notin \mathfrak{n}$ such that $a\vee b=1$. By Theorem \ref{equgelchapure} follows that $\rho(\mathscr{F}(a))\veebar \mathfrak{m}=A$ and $\rho(\mathscr{F}(b))\veebar \mathfrak{n}=A$. Thus $\mathfrak{m}\in d_{\kappa}(\rho(\mathscr{F}(a)))$ and $\mathfrak{n}\in d_{\kappa}(\rho(\mathscr{F}(b)))$. By Proposition \ref{rfilter}\ref{rfilter3}, we have the following sequence of formulas:
\[
\begin{array}{ll}
  d_{\kappa}(\rho(\mathscr{F}(a)))\cap d_{\kappa}(\rho(\mathscr{F}(b))) & =d_{\kappa}(\rho(\mathscr{F}(a))\cap \rho(\mathscr{F}(b))) \\
   & =d_{\kappa}(\rho(\mathscr{F}(a)\cap \mathscr{F}(b))) \\
   & =d_{\kappa}(\rho(\mathscr{F}(a\vee b)))=d_{\kappa}(\{1\})=\emptyset.
\end{array}
\]
This shows that the pure spectrum of $\mathfrak{A}$ is Hausdorff.
\end{proof}
\citet[\S 8, Proposition 40]{borceux1983algebra} has shown that the pure spectrum of a Gelfand ring is homeomorphic to its usual maximal spectrum. In the following, this result is generalized and improved to Gelfand residuated lattices.
\begin{theorem}\label{sppgelfch}
  Let $\mathfrak{A}$ be a residuated lattice. The following assertions are equivalent:
\begin{enumerate}
\item  [(1) \namedlabel{sppgelfch1}{(1)}] $\mathfrak{A}$ is Gelfand;
\item  [(2) \namedlabel{sppgelfch2}{(2)}] the map $\rho_{m}:Max_{h}(\mathfrak{A})\longrightarrow Spp(\mathfrak{A})$, given by $\mathfrak{m}\rightsquigarrow \rho(\mathfrak{m})$, is a homeomorphism.
\end{enumerate}
\end{theorem}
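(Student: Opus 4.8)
The plan is to prove the two implications separately, with the forward direction resting on the concrete description of the pure spectrum already obtained for Gelfand lattices, and the backward direction on the retraction criterion of Proposition \ref{gelnor}.

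For $(1)\Rightarrow(2)$, I would first check that $\rho_m$ is a well-defined bijection. Well-definedness is Proposition \ref{r1filter}\ref{r1filter2}: a maximal filter is prime, so $\rho(\mathfrak{m})\in Spp(\mathfrak{A})$. Surjectivity is exactly Theorem \ref{gelfmaxpure}\ref{gelfmaxpure3}, which states $Spp(\mathfrak{A})=\rho(Max(\mathfrak{A}))$. For injectivity I would argue by contradiction: distinct $\mathfrak{m},\mathfrak{n}\in Max(\mathfrak{A})$ are comaximal, so $\rho(\mathfrak{m})$ and $\rho(\mathfrak{n})$ are comaximal by Theorem \ref{equgelchapure}\ref{equgelchapure7} ($\rho$ sends distinct maximal filters to comaximal filters); were they equal, their join would give $\rho(\mathfrak{m})=A$, contradicting $\rho(\mathfrak{m})\subseteq\mathfrak{m}\subsetneq A$.

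Next I would handle the topology. Since $Max_h(\mathfrak{A})$ is, by definition, the subspace of $Spec_h(\mathfrak{A})$ determined by the maximal filters, and $\rho_m$ is merely the restriction to $Max(\mathfrak{A})$ of the pure part map $\rho:Spec(\mathfrak{A})\longrightarrow Spp(\mathfrak{A})$ — continuous by Theorem \ref{spsppconti} — the map $\rho_m$ is continuous. (For a hands-on check: if $F$ is a pure filter, then $F\subseteq\rho(\mathfrak{m})$ iff $F\subseteq\mathfrak{m}$, because $\rho(\mathfrak{m})$ is the largest pure filter inside $\mathfrak{m}$ by Proposition \ref{rfilter}\ref{rfilter2}, so $\rho_m^{\leftarrow}(d_{\kappa}(F))=Max(\mathfrak{A})\cap d(F)$, an open set.) I would then upgrade this continuous bijection to a homeomorphism using the standard fact that a continuous bijection from a compact space onto a Hausdorff space is closed: $Max_h(\mathfrak{A})$ is compact by Proposition \ref{hulkerinstr}, and $Spp(\mathfrak{A})$ is Hausdorff in the Gelfand case by Proposition \ref{gelspphau}.

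For $(2)\Rightarrow(1)$, I would form the composite $r:=\rho_m^{-1}\circ\rho:Spec_h(\mathfrak{A})\longrightarrow Max_h(\mathfrak{A})$, which is continuous since $\rho$ is continuous (Theorem \ref{spsppconti}) and $\rho_m^{-1}$ is continuous by hypothesis; for $\mathfrak{m}\in Max(\mathfrak{A})$ one has $r(\mathfrak{m})=\rho_m^{-1}(\rho_m(\mathfrak{m}))=\mathfrak{m}$, so $r$ is a retraction of $Spec_h(\mathfrak{A})$ onto $Max_h(\mathfrak{A})$, and Proposition \ref{gelnor} then yields that $\mathfrak{A}$ is Gelfand. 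The only real subtleties, and the place where care is needed, are the topological identifications: the pure part map must be used against the hull-kernel (not the dual hull-kernel) topology on $Spec$ so that $d(F)$ is open in the source, and one must know that $Max_h(\mathfrak{A})$ is literally the subspace $Max(\mathfrak{A})\subseteq Spec_h(\mathfrak{A})$, so that ``restriction'' and ``retract'' carry their usual meaning. Once these are pinned down, every step is either a citation of an earlier result or a formal manipulation.
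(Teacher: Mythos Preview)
Your proposal is correct and follows essentially the same route as the paper: for $(1)\Rightarrow(2)$ you use Theorem \ref{gelfmaxpure} for surjectivity, Theorem \ref{equgelchapure} for injectivity, Theorem \ref{spsppconti} for continuity, and then the compact-to-Hausdorff argument via Proposition \ref{hulkerinstr} and Proposition \ref{gelspphau}; for $(2)\Rightarrow(1)$ you build the retraction $\rho_m^{-1}\circ\rho$ and invoke Proposition \ref{gelnor}, exactly as the paper does. Your added detail (the explicit injectivity contradiction and the hands-on preimage computation) is sound and only makes the argument more self-contained.
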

\begin{proof}
\item [\ref{sppgelfch1}$\Rightarrow$\ref{sppgelfch2}:] Applying Theorem \ref{spsppconti}, $\rho_{m}$ is a well-defined continuous map. Injectivity of $\rho_{m}$ follows by Theorem \ref{equgelchapure}, and surjectivity of $\rho_{m}$ follows by Theorem \ref{gelfmaxpure}. Using Propositions \ref{hulkerinstr} and \ref{gelspphau}, it follows that $Max(\mathfrak{A})$ is compact, and $Spp(\mathfrak{A})$ is Hausdorff, respectively. This holds the result due to \citet[Theorem 3.1.13]{engelking1989general}.
\item [\ref{sppgelfch2}$\Rightarrow$\ref{sppgelfch1}:] One can see that $\rho_{m}^{-1}\circ \rho$ is a retraction from $Spec_{h}(\mathfrak{A})$ into $Max_{h}(\mathfrak{A})$. So $\mathfrak{A}$ is Gelfand due to Theorem \ref{gelnor}.
\end{proof}
The pure ideals of a commutative reduced Gelfand ring with unity are characterized in \citet[Theorems 1.8 \& 1.9]{al1989pure}. These results have been improved and generalized to residuated lattices in Proposition \ref{gelpurefcl}.
\begin{theorem}\label{gelpurefcl}
Let $\mathfrak{A}$ be a Gelfand residuated lattice. The pure filters of $\mathfrak{A}$ are precisely of the form $\bigcap\{D(\mathfrak{m})\mid \mathfrak{m}\in Max(\mathfrak{A})\cap \mathcal{C}\}$, in which $\mathcal{C}$ is a closed subset of $Spec_{h}(\mathfrak{A})$.
\end{theorem}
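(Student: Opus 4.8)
The plan is to prove the two halves of the word ``precisely'' separately, reducing in each case to the identity $\sigma(G)=\bigcap\{D(\mathfrak{m})\mid\mathfrak{m}\in h_{M}(G)\}$ furnished by Theorem~\ref{sigmafequiv}\ref{sigmafequiv5}, together with the coincidence $\rho=\sigma$ available in the Gelfand case by Corollary~\ref{rhosigmanorg}.

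First I would dispatch the inclusion ``every pure filter has the stated form''. Let $F$ be a pure filter and set $\mathcal{C}=h(F)$, a closed subset of $Spec_{h}(\mathfrak{A})$. Since $Max(\mathfrak{A})\cap h(F)=h_{M}(F)$, Theorem~\ref{sigmafequiv}\ref{sigmafequiv5} gives
\[\bigcap\{D(\mathfrak{m})\mid\mathfrak{m}\in Max(\mathfrak{A})\cap\mathcal{C}\}=\sigma(F)=F,\]
the last equality by purity. Note that this direction does not invoke the Gelfand hypothesis at all.

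For the converse, let $\mathcal{C}$ be an arbitrary closed subset of $Spec_{h}(\mathfrak{A})$. The first step is the standard observation that every such $\mathcal{C}$ has the form $h(G)$ for some filter $G$ of $\mathfrak{A}$: since $\{h(x)\mid x\in A\}$ is a basis of closed sets and $x\in P\iff\mathscr{F}(x)\subseteq P$ for a prime filter $P$, one has $\bigcap_{x\in X}h(x)=h(\mathscr{F}(X))$ for any $X\subseteq A$. Then $Max(\mathfrak{A})\cap\mathcal{C}=h_{M}(G)$, so Theorem~\ref{sigmafequiv}\ref{sigmafequiv5} identifies $\bigcap\{D(\mathfrak{m})\mid\mathfrak{m}\in Max(\mathfrak{A})\cap\mathcal{C}\}$ with $\sigma(G)$; by Corollary~\ref{rhosigmanorg} this equals $\rho(G)$, which is a pure filter by Proposition~\ref{rfilter}\ref{rfilter2}. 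Hence the filter in question is pure, and the two halves together give the theorem.

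The step most deserving of care — and the one I expect to be the main obstacle — is the reduction of an arbitrary closed set to the form $h(G)$; once one verifies $\bigcap_{x\in X}h(x)=h(\mathscr{F}(X))$ from the filter axioms and the description of generated filters in Proposition~\ref{genfilprop}, the remainder is direct substitution into previously established identities, with no fresh manipulation of $\odot$ or $\rightarrow$ required.
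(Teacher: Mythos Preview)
Your proof is correct and takes a genuinely different route from the paper's. Both you and the paper handle ``every pure filter has the stated form'' via Theorem~\ref{sigmafequiv}\ref{sigmafequiv5} with $\mathcal{C}=h(F)$. For the converse, the paper argues directly at the element level: given $a\in G:=\bigcap\{D(\mathfrak{m})\mid\mathfrak{m}\in Max(\mathfrak{A})\cap\mathcal{C}\}$, it assumes $G\veebar a^{\perp}\neq A$, picks a maximal $\mathfrak{n}\supseteq G\veebar a^{\perp}$ (necessarily outside $\mathcal{C}$), uses the Gelfand separation to produce witnesses $x_{\mathfrak{m}}\vee y_{\mathfrak{m}}=1$ with $x_{\mathfrak{m}}\notin\mathfrak{m}$ and $y_{\mathfrak{m}}\notin\mathfrak{n}$ for each $\mathfrak{m}\in Max(\mathfrak{A})\cap\mathcal{C}$, and then invokes compactness of $\mathcal{C}$ in $Spec_{h}(\mathfrak{A})$ to reduce to finitely many and obtain a contradiction. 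You instead write $\mathcal{C}=h(G)$, recognize the intersection as $\sigma(G)$ by Theorem~\ref{sigmafequiv}\ref{sigmafequiv5}, and apply Corollary~\ref{rhosigmanorg} to get $\sigma(G)=\rho(G)$, which is pure by Proposition~\ref{rfilter}\ref{rfilter2}. Your argument is shorter and more conceptual, revealing the theorem as a repackaging of facts already established; the paper's argument is self-contained and makes the compactness mechanism explicit. Incidentally, the step you flag as the ``main obstacle'' --- that every closed set in $Spec_{h}(\mathfrak{A})$ has the form $h(G)$ --- is routine once one notes $h(x)\cup h(y)=h(x\vee y)$ for prime filters; the real work your argument leans on is Corollary~\ref{rhosigmanorg} and, behind it, Theorem~\ref{equgelchaunit}.
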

\begin{proof}
let $a\in G:=\bigcap\{D(\mathfrak{m})\mid \mathfrak{m}\in Max(\mathfrak{A})\cap \mathcal{C}\}$, in which $\mathcal{C}$ is a closed subset of $Spec_{h}(\mathfrak{A})$. So for any $\mathfrak{m}\in Max(\mathfrak{A})\cap \mathcal{C}$, we have $\mathfrak{m}\veebar a^{\perp}=A$. By absurdum, assume that $G\veebar a^{\perp}\neq A$. So $G\veebar a^{\perp}$ is contained in a maximal filter $\mathfrak{n}$. Obviously, $\mathfrak{n}\notin \mathcal{C}$. So for any $\mathfrak{m}\in  Max(\mathfrak{A})\cap \mathcal{C}$, there exist $x_{\mathfrak{m}}\notin \mathfrak{m}$ and $y_{\mathfrak{m}}\notin \mathfrak{n}$ such that $x_{\mathfrak{m}}\vee y_{\mathfrak{m}}=1$. Since $\mathcal{C}$ is stable under the specialization, so $\mathcal{C}\subseteq \bigcup_{\mathfrak{m}\in Max(\mathfrak{A})\cap \mathcal{C}} d(x_{\mathfrak{m}})$. By Proposition \ref{hulkerinstr} follows that $\mathcal{C}$ is compact. So there exist a finite number $\mathfrak{m}_{1},\cdots,\mathfrak{m}_{n}\in Max(\mathfrak{A})\cap \mathcal{C}$ such that $\mathcal{C}\subseteq \bigcup_{i=1}^{n} d(x_{\mathfrak{m}_{i}})$. Set $x=\bigodot_{i=1}^{n} x_{\mathfrak{m}_{i}}$ and $y=\bigvee_{i=1}^{n}y_{\mathfrak{m}_{i}}$. Routinely, one can see that $x\in y^{\perp}\setminus \mathfrak{m}$, for any $\mathfrak{m}\in Max(\mathfrak{A})\cap \mathcal{C}$. This implies that $y\in G$; a contradiction. The converse follows by Theorem \ref{sigmafequiv}\ref{sigmafequiv5}.
\end{proof}
The subsequent theorem gives some criteria for a residuated lattice to be Gelfand, inspired by the one obtained for bounded distributive lattices by \citet[Theorem 6]{al1990topological}.
\begin{theorem}\label{gelfhulldmin}
  Let $\mathfrak{A}$ be a residuated lattice. The following assertions are equivalent:
\begin{enumerate}
\item  [(1) \namedlabel{gelfhulldmin1}{(1)}] $\mathfrak{A}$ is Gelfand;
\item  [(2) \namedlabel{gelfhulldmin2}{(2)}] the hull-kernel and $\mathscr{D}$-topology coincide on $Max(\mathfrak{A})$.
\end{enumerate}
\end{theorem}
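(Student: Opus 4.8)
The plan is to prove the two implications separately, using Theorem~\ref{equgelchapure} (equivalently Theorem~\ref{equgelchaunit}) as the algebraic bridge between the Gelfand property and the behaviour of $\rho$ (resp. $\sigma$) on $Max(\mathfrak{A})$, and using the already recorded fact that $\tau_{\mathscr{D}}$ is always coarser than $\tau_{h}$; restricting to the subspace $Max(\mathfrak{A})$, this means the $\mathscr{D}$-open subsets of $Max(\mathfrak{A})$ are precisely the sets $d_{M}(F)=d(F)\cap Max(\mathfrak{A})$ with $F$ a pure filter, while the hull-kernel topology on $Max(\mathfrak{A})$ has $\{d_{M}(x)\mid x\in A\}$ as an open base.

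For $\ref{gelfhulldmin1}\Rightarrow\ref{gelfhulldmin2}$: since $\tau_{\mathscr{D}}\subseteq\tau_{h}$ on $Max(\mathfrak{A})$ holds unconditionally, it suffices to show that every basic hull-kernel open $d_{M}(x)$ is $\mathscr{D}$-open. I would take $F=\rho(\mathscr{F}(x))$, which is pure by Proposition~\ref{rfilter}\ref{rfilter2}. Since $\mathfrak{A}$ is Gelfand, Theorem~\ref{equgelchapure}\ref{equgelchapure3} gives $h_{M}(\mathscr{F}(x))=h_{M}(\rho(\mathscr{F}(x)))$, hence $d_{M}(x)=d_{M}(\mathscr{F}(x))=d_{M}(F)$ with $F\in\sigma(\mathfrak{A})$, exhibiting $d_{M}(x)$ as a $\mathscr{D}$-open set. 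Therefore $\tau_{h}\subseteq\tau_{\mathscr{D}}$ on $Max(\mathfrak{A})$, and the two topologies coincide.

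For $\ref{gelfhulldmin2}\Rightarrow\ref{gelfhulldmin1}$: I would verify condition \ref{equgelchaunit3} of Theorem~\ref{equgelchaunit}, namely $h_{M}(F)=h_{M}(\sigma(F))$ for every filter $F$; as $\sigma(F)\subseteq F$, only $h_{M}(\sigma(F))\subseteq h_{M}(F)$ requires proof. So let $\mathfrak{m}\in Max(\mathfrak{A})$ with $\sigma(F)\subseteq\mathfrak{m}$, and suppose for contradiction that there is $x\in F\setminus\mathfrak{m}$. Then $\mathfrak{m}\in d_{M}(x)$, which is $\mathscr{D}$-open by hypothesis, so $d_{M}(x)=d_{M}(H)$ for some pure filter $H$, whence $h_{M}(H)=h_{M}(x)=h_{M}(\mathscr{F}(x))$. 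The key observation is that a pure filter contained in a maximal filter $\mathfrak{n}$ lies inside $D(\mathfrak{n})$: indeed $H=\sigma(H)\subseteq\sigma(\mathfrak{n})=D(\mathfrak{n})$ by Proposition~\ref{sigmapro}\ref{sigmapro2} and \ref{sigmapro4}. Hence $H\subseteq\bigcap\{D(\mathfrak{n})\mid\mathfrak{n}\in h_{M}(\mathscr{F}(x))\}=\sigma(\mathscr{F}(x))$ by Theorem~\ref{sigmafequiv}\ref{sigmafequiv5}, and since $\mathscr{F}(x)\subseteq F$ monotonicity of $\sigma$ gives $H\subseteq\sigma(\mathscr{F}(x))\subseteq\sigma(F)\subseteq\mathfrak{m}$, contradicting $\mathfrak{m}\in d_{M}(x)=d_{M}(H)$. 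Thus $F\subseteq\mathfrak{m}$, which proves \ref{equgelchaunit3}, and so $\mathfrak{A}$ is Gelfand.

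The point needing care is the converse: one must recognize that $d_{M}(x)$ being $\mathscr{D}$-open forces it to equal $d_{M}(H)$ for an actual pure filter $H$, and then trap $H$ between $\{1\}$ and $\sigma(F)$ through the two facts that $H\subseteq D(\mathfrak{n})$ for every maximal $\mathfrak{n}\supseteq H$ and that $\sigma(\mathscr{F}(x))$ is exactly the intersection of those $D(\mathfrak{n})$. Once that geometric picture is in place, the rest is bookkeeping with the established dictionary among pure filters, the operator $\sigma$, and the hull-kernel data on $Max(\mathfrak{A})$.
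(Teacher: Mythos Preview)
Your proof is correct, but both directions differ in route from the paper's argument.

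For \ref{gelfhulldmin1}$\Rightarrow$\ref{gelfhulldmin2}, the paper does not invoke Theorem~\ref{equgelchapure}; instead it uses the retraction $f:Spec_{h}(\mathfrak{A})\to Max_{h}(\mathfrak{A})$ supplied by Proposition~\ref{gelnor} and observes that $d_{M}(a)=f^{\leftarrow}(d_{M}(a))\cap Max(\mathfrak{A})$, where $f^{\leftarrow}(d_{M}(a))$ is an open $\mathscr{S}$-stable subset of $Spec(\mathfrak{A})$ and hence lies in $\tau_{\mathscr{D}}$ by Theorem~\ref{purestable}. Your approach via $h_{M}(\mathscr{F}(x))=h_{M}(\rho(\mathscr{F}(x)))$ is arguably more direct, since it names the pure filter witnessing $d_{M}(x)\in\tau_{\mathscr{D}}$ explicitly, whereas the paper's argument keeps the pure filter implicit inside the retraction.

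For \ref{gelfhulldmin2}$\Rightarrow$\ref{gelfhulldmin1}, the paper takes a shorter, purely topological path: assuming a prime $\mathfrak{p}$ lies in two distinct maximal filters $\mathfrak{m},\mathfrak{n}$, it separates them by some $d_{M}(a)$, writes $d_{M}(a)=d_{M}(F)$ with $F$ pure, and then exploits simultaneously that $d(F)$ is $\mathscr{G}$-stable (so $\mathfrak{m}\in d(F)$ forces $\mathfrak{p}\in d(F)$) and $\mathscr{S}$-stable (so $\mathfrak{p}\in d(F)$ forces $\mathfrak{n}\in d(F)$), yielding an immediate contradiction. Your argument instead verifies condition \ref{equgelchaunit3} of Theorem~\ref{equgelchaunit} by trapping the pure filter $H$ inside $\sigma(\mathscr{F}(x))$ via Theorem~\ref{sigmafequiv}\ref{sigmafequiv5}; this is sound but heavier, since it routes through the full description of $\sigma$ rather than using only the stability of $d(F)$.
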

\begin{proof}
\item [\ref{gelfhulldmin1}$\Rightarrow$\ref{gelfhulldmin2}:] Let $f$ be a retraction from $Spec(\mathfrak{A})$ into $Max(\mathfrak{A})$. Consider $a\in A$. Routinely, one can show that $d_{M}(a)=f^{\leftarrow}(d_{M}(a))\cap Max(\mathfrak{A})$, in which $f^{\leftarrow}(d_{M}(a))$ is open $\mathscr{S}$-stable. This yields that $\tau_{\mathscr{D}}$ is finer than $\tau_{h}$ on $Max(\mathfrak{A})$.
\item [\ref{gelfhulldmin2}$\Rightarrow$\ref{gelfhulldmin1}:] Let $\mathfrak{p}$ be a prime filter of $\mathfrak{A}$ that is contained in two distinct maximal filters $\mathfrak{m}$ and $\mathfrak{n}$. So there exists $a\in A$ such that $\mathfrak{m}\in d(a)$ and $\mathfrak{n}\notin d(a)$. Thus $d_{M}(a)=d_{M}(F)$, for some pure filter $F$ of $\mathfrak{A}$. Since $\mathfrak{m}\in d(F)$, so $\mathfrak{p}\in d(F)$. This implies that $\mathfrak{n}\in d(F)$; a contradiction.
\end{proof}
\section{The pure spectrum of an mp-residuated lattice}\label{sec6}
This section deals with the pure spectrum of an mp-residuated lattice. For the basic facts concerning mp-residuated lattices, the interesting reader is referred to \cite{rasouli2022mp}.
\begin{definition}\label{nordef}
A residuated lattice $\mathfrak{A}$ is called \textit{mp} provided that any prime filter of $\mathfrak{A}$ contains a unique minimal prime filter of $\mathfrak{A}$.
\end{definition}
\begin{remark}
  By the above definition, mp-residuated lattices can be regarded even as the dual notion of Gelfand residuated lattices.
\end{remark}
\begin{example}\label{quanorempxas}
One can see that the residuated lattice $\mathfrak{A}_6$ from Example \ref{exa6},the residuated lattice $\mathfrak{B}_6$ from Example \ref{exb6}, and the residuated lattice $\mathfrak{C}_6$ from Example \ref{exc6} are mp, and the residuated lattice $\mathfrak{A}_8$ from Example \ref{exa8} is not mp.
\end{example}
\begin{example}
   The class of MTL-algebras, and so,  MV-algebras, BL-algebras, and Boolean algebras are some subclasses of mp-residuated lattices.
\end{example}

The following proposition gives some algebraic characterizations for mp-residuated lattices.
\begin{proposition}\cite[Theorem 3.5]{rasouli2022mp}\label{noco}
Let $\mathfrak{A}$ be a residuated lattice. The following assertions are equivalent:
\begin{enumerate}
\item  [$(1)$ \namedlabel{noco1}{$(1)$}] Any two distinct minimal prime filters are comaximal;
\item  [$(2)$ \namedlabel{noco2}{$(2)$}] $\mathfrak{A}$ is mp;
\item  [$(4)$ \namedlabel{noco4}{$(4)$}] for any maximal filter $\mathfrak{m}$ of $\mathfrak{A}$, $D(\mathfrak{m})$ is a minimal prime filter of $\mathfrak{A}$;
\item  [$(5)$ \namedlabel{noco5}{$(5)$}] for any pairwise elements $x$ and $y$ in $A$, $x^{\perp}\veebar y^{\perp}=A$;
\end{enumerate}
\end{proposition}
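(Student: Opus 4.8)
The plan is to prove the four conditions equivalent by running the cycle $\ref{noco1}\Leftrightarrow\ref{noco2}$, $\ref{noco2}\Leftrightarrow\ref{noco4}$, $\ref{noco2}\Rightarrow\ref{noco5}$, $\ref{noco5}\Rightarrow\ref{noco1}$. Before starting I would record three facts to be used repeatedly: every prime filter contains a minimal prime filter (Proposition \ref{mp}); for a prime filter $\mathfrak{m}$ one has $D(\mathfrak{m})=k(\mathscr{G}(\mathfrak{m})\cap Min(\mathfrak{A}))=\{a\in A\mid a^{\perp}\nsubseteq\mathfrak{m}\}$, so $D(\mathfrak{m})$ is simultaneously the intersection of the minimal primes below $\mathfrak{m}$ and the set read off from $D(\mathfrak{m})=\omega(\dot{\mathfrak{m}})$, and $\mathfrak{q}$ is minimal prime exactly when $\mathfrak{q}=D(\mathfrak{q})$ (Proposition \ref{omegprop}); and, from Proposition \ref{1mineq}, for a minimal prime $\mathfrak{q}$ the conditions $x\in\mathfrak{q}$ and $x^{\perp}\nsubseteq\mathfrak{q}$ are equivalent. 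I would also note the elementary remark that for \emph{any} prime filter $\mathfrak{p}$, $x\notin\mathfrak{p}$ forces $x^{\perp}\subseteq\mathfrak{p}$, since each $b\in x^{\perp}$ satisfies $x\vee b=1\in\mathfrak{p}$ and primeness then gives $b\in\mathfrak{p}$.

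The equivalence $\ref{noco1}\Leftrightarrow\ref{noco2}$ is essentially formal: two distinct minimal primes $\mathfrak{q}_1,\mathfrak{q}_2$ lie below a common prime filter iff $\mathfrak{q}_1\veebar\mathfrak{q}_2\neq A$, so \ref{noco1} says precisely that no prime filter sits above two distinct minimal primes, which together with Proposition \ref{mp} is \ref{noco2}. For $\ref{noco2}\Leftrightarrow\ref{noco4}$: if $\mathfrak{A}$ is mp then a maximal filter $\mathfrak{m}$ has a unique minimal prime below it, and since $D(\mathfrak{m})$ is the intersection of all of them, $D(\mathfrak{m})$ is that minimal prime. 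Conversely, assuming every $D(\mathfrak{m})$ is a minimal prime, take any prime $\mathfrak{p}$, enlarge it to a maximal $\mathfrak{m}$, and use Proposition \ref{mp} to get a minimal prime $\mathfrak{q}_0\subseteq\mathfrak{p}\subseteq\mathfrak{m}$; then $D(\mathfrak{m})\subseteq\mathfrak{q}_0$, and since $\mathfrak{q}_0$ is minimal and $D(\mathfrak{m})$ is prime they coincide — the same argument shows that any minimal prime below $\mathfrak{p}$ equals $D(\mathfrak{m})$, giving uniqueness, hence \ref{noco2}.

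For $\ref{noco2}\Rightarrow\ref{noco5}$ I would argue by contradiction: given comaximal $x,y$ (i.e.\ $x\vee y=1$) with $x^{\perp}\veebar y^{\perp}\subsetneq A$, pick a maximal filter $\mathfrak{m}\supseteq x^{\perp}\veebar y^{\perp}$, so $x^{\perp}\subseteq\mathfrak{m}$ and $y^{\perp}\subseteq\mathfrak{m}$. Using $\ref{noco4}$, $D(\mathfrak{m})$ is prime and contains $x\vee y=1$, so it contains $x$ or $y$; in either case, say $x\in D(\mathfrak{m})$, this means $x^{\perp}\nsubseteq\mathfrak{m}$, contradicting $x^{\perp}\subseteq\mathfrak{m}$. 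For $\ref{noco5}\Rightarrow\ref{noco1}$, let $\mathfrak{q}_1\neq\mathfrak{q}_2$ be minimal primes; being distinct minimal primes they are incomparable, so I may pick $x\in\mathfrak{q}_1\setminus\mathfrak{q}_2$. Minimality of $\mathfrak{q}_1$ gives $x^{\perp}\nsubseteq\mathfrak{q}_1$, so there is $y\in x^{\perp}\setminus\mathfrak{q}_1$, whence $x\vee y=1$. Now $x\notin\mathfrak{q}_2$ yields $x^{\perp}\subseteq\mathfrak{q}_2$ and $y\notin\mathfrak{q}_1$ yields $y^{\perp}\subseteq\mathfrak{q}_1$, so \ref{noco5} gives $A=x^{\perp}\veebar y^{\perp}\subseteq\mathfrak{q}_2\veebar\mathfrak{q}_1$; thus $\mathfrak{q}_1$ and $\mathfrak{q}_2$ are comaximal.

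I expect no conceptual obstacle — the statement is a reorganization of known facts — and the only point requiring care is the bookkeeping around $D(\mathfrak{m})$: one must move freely between its three guises (an $\omega$-filter, the set $\{a\mid a^{\perp}\nsubseteq\mathfrak{m}\}$, and $\bigcap$ of the minimal primes below $\mathfrak{m}$) and between ``$x\in\mathfrak{q}$'' and ``$x^{\perp}\nsubseteq\mathfrak{q}$'' for minimal $\mathfrak{q}$. Once that dictionary is fixed, each implication collapses to a two-line contradiction argument, and the whole proof rests only on Propositions \ref{mp}, \ref{omegprop}, and \ref{1mineq}.
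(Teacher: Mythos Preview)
The paper does not supply its own proof of Proposition~\ref{noco}; the result is quoted from \cite[Theorem~3.5]{rasouli2022mp} and used as a black box. Your argument is correct and entirely self-contained within the tools the present paper makes available: the cycle $(1)\Leftrightarrow(2)$, $(2)\Leftrightarrow(4)$, $(2)\Rightarrow(5)$, $(5)\Rightarrow(1)$ goes through exactly as you outline, resting only on Propositions~\ref{mp}, \ref{omegprop}, and \ref{1mineq} together with the elementary observation that $x\notin\mathfrak{p}$ forces $x^{\perp}\subseteq\mathfrak{p}$ for any prime $\mathfrak{p}$; your reading of ``pairwise elements'' in \ref{noco5} as ``$x\vee y=1$'' is confirmed by the paper's own use of \ref{noco5} in the proof of Theorem~\ref{norgammsig}.
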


The subsequent theorem, which can be compared with Theorem \ref{gelnor}, gives some necessary and sufficient conditions for the collection of minimal prime filters in a residuated lattice to be a Hausdorff space with the dual hull-kernel topology.
\begin{theorem}\label{mpmpropd}
Let $\mathfrak{A}$ be a residuated lattice.  The following assertions are equivalent:
 \begin{enumerate}
   \item [(1) \namedlabel{mpmpropd1}{(1)}] $\mathfrak{A}$ is mp;
   \item [(2) \namedlabel{mpmpropd2}{(2)}] $Min_{d}(\mathfrak{A})$ is Hausdorff.
 \end{enumerate}
\end{theorem}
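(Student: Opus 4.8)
The plan is to route both directions through Proposition~\ref{noco}, the list of algebraic characterisations of mp-residuated lattices, working throughout with the basic open sets $h_m(a)=\{\mathfrak{r}\in Min(\mathfrak{A})\mid a\in\mathfrak{r}\}$ ($a\in A$) of $Min_d(\mathfrak{A})$. Two facts will be used repeatedly: $h_m(a)\cap h_m(b)=h_m(a\odot b)$ for all $a,b\in A$ (since $a,b\in\mathfrak{r}$ iff $\mathscr{F}(\{a,b\})\subseteq\mathfrak{r}$, and $\mathscr{F}(\{a,b\})=\mathscr{F}(a\odot b)$ by Proposition~\ref{genfilprop}\ref{genfilprop4}), and $\bigcap Min(\mathfrak{A})=\{1\}$ (every prime filter contains a minimal prime by Proposition~\ref{mp}, so $\bigcap Min(\mathfrak{A})=\bigcap Spec(\mathfrak{A})=\{1\}$). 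For \ref{mpmpropd1}$\Rightarrow$\ref{mpmpropd2}: given distinct $\mathfrak{p},\mathfrak{q}\in Min(\mathfrak{A})$, Proposition~\ref{noco}\ref{noco1} makes them comaximal, hence $0\in\mathfrak{p}\veebar\mathfrak{q}$, and by the description of joins of filters (Proposition~\ref{genfilprop}\ref{genfilprop1}) there are $p\in\mathfrak{p}$, $q\in\mathfrak{q}$ with $p\odot q=0$; then $\mathfrak{p}\in h_m(p)$, $\mathfrak{q}\in h_m(q)$ and $h_m(p)\cap h_m(q)=h_m(p\odot q)=h_m(0)=\emptyset$, since no proper filter, in particular no minimal prime, contains $0$. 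So $Min_d(\mathfrak{A})$ is Hausdorff.

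For \ref{mpmpropd2}$\Rightarrow$\ref{mpmpropd1} I would verify Proposition~\ref{noco}\ref{noco1} directly. Let $\mathfrak{p}\neq\mathfrak{q}$ in $Min(\mathfrak{A})$; Hausdorffness together with the basis above supplies $a,b\in A$ with $a\in\mathfrak{p}$, $b\in\mathfrak{q}$ and $h_m(a)\cap h_m(b)=h_m(a\odot b)=\emptyset$, i.e.\ no minimal prime contains $a\odot b$. From Proposition~\ref{1mineq} and $\bigcap Min(\mathfrak{A})=\{1\}$ I then extract: $a^{\perp}\subseteq\mathfrak{q}$ (as $a\notin\mathfrak{q}$ and $\mathfrak{q}$ is minimal), symmetrically $b^{\perp}\subseteq\mathfrak{p}$, and $(a\odot b)^{\perp}\subseteq\bigcap Min(\mathfrak{A})=\{1\}$, i.e.\ (using \ref{res2}) $a^{\perp}\cap b^{\perp}=\{1\}$. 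Hence $\mathfrak{p}\veebar\mathfrak{q}\supseteq\mathscr{F}(\{a,b\})\veebar a^{\perp}\veebar b^{\perp}$, and the only remaining task is to prove that this filter equals $A$. I would argue by contradiction: if $\mathfrak{p}\veebar\mathfrak{q}$ were proper it would lie under a maximal filter $\mathfrak{m}$, so $a,a^{\perp},b,b^{\perp}\subseteq\mathfrak{m}$ and, since $\mathfrak{p},\mathfrak{q}$ are minimal primes below $\mathfrak{m}$, $D(\mathfrak{m})=k(\mathscr{G}(\mathfrak{m})\cap Min(\mathfrak{A}))\subseteq\mathfrak{p}\cap\mathfrak{q}$ by Proposition~\ref{omegprop}\ref{omegprop2}; combining this with $(a\odot b)^{\perp}=\{1\}$, with $\mathfrak{p}=D(\mathfrak{p})$ and $\mathfrak{q}=D(\mathfrak{q})$ (Proposition~\ref{omegprop}\ref{omegprop3}), and with the failure of Proposition~\ref{noco}\ref{noco4} at $\mathfrak{m}$ that this forces, should produce the contradiction; the compactness of $Min_d(\mathfrak{A})$ (Proposition~\ref{hulkerinstr}) is available in case an intersection over all of $Min(\mathfrak{A})$ has to be reduced to a finite one.

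The step I expect to be the genuine obstacle is precisely this conversion, in \ref{mpmpropd2}$\Rightarrow$\ref{mpmpropd1}, of topological separation into comaximality: a separating pair of basic opens records only $a\in\mathfrak{p}$, $b\in\mathfrak{q}$ with $a^{\perp}\cap b^{\perp}=\{1\}$, and upgrading this to $a^{\perp}\veebar b^{\perp}=A$, and thence to $\mathfrak{p}\veebar\mathfrak{q}=A$, is exactly where the minimality of $\mathfrak{p}$ and $\mathfrak{q}$ (via Proposition~\ref{1mineq} and the behaviour of the $D(-)$ operator), rather than mere primeness, must be used. Everything else — the forward implication and the reduction of \ref{mpmpropd1} to comaximality of distinct minimal primes — is routine bookkeeping.
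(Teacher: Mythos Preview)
The paper does not prove this theorem; it is stated without argument right after the reader is referred to \cite{rasouli2022mp} for background on mp-residuated lattices. So there is no in-paper proof to compare against.

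Your direction \ref{mpmpropd1}$\Rightarrow$\ref{mpmpropd2} is correct and complete.

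The direction \ref{mpmpropd2}$\Rightarrow$\ref{mpmpropd1}, however, is not just incomplete --- as stated it is false, and the paper's own example $\mathfrak{A}_8$ is a witness. By Example~\ref{quanorempxas} $\mathfrak{A}_8$ is not mp (its unique maximal filter $\{a,c,d,e,f,1\}$ contains both minimal primes $\mathfrak p=\{c,e,1\}$ and $\mathfrak q=\{f,1\}$). Yet $Min_d(\mathfrak{A}_8)$ is Hausdorff: $h_m(c)=\{\mathfrak p\}$ and $h_m(f)=\{\mathfrak q\}$ are disjoint basic opens, since $c\odot f=a$ and $h_m(a)=\emptyset$. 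More conceptually, $Min_d(\mathfrak A)$ is \emph{always} $T_1$ (distinct minimal primes are incomparable, so each contains an element missing from the other, giving a separating basic open $h_m(\cdot)$); hence whenever $Min(\mathfrak A)$ is finite --- in particular for every finite residuated lattice --- $Min_d(\mathfrak A)$ is discrete and therefore Hausdorff, regardless of whether $\mathfrak A$ is mp.

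The obstacle you isolated is thus not a technicality but a genuine obstruction. From disjoint $h_m(a),h_m(b)$ you correctly extract $(a\odot b)^\perp=\{1\}$, but this cannot be upgraded to $\mathfrak p\veebar\mathfrak q=A$: in $\mathfrak A_8$ one has $c\in\mathfrak p$, $f\in\mathfrak q$, $(c\odot f)^\perp=a^\perp=\{1\}$, while $\mathfrak p\veebar\mathfrak q=\{a,c,d,e,f,1\}\neq A_8$. Your proposed contradiction is circular for exactly the reason you half-suspected: arriving at ``$D(\mathfrak m)$ is not a minimal prime'' only says that condition~\ref{noco4} of Proposition~\ref{noco} fails at $\mathfrak m$, but \ref{noco4} is equivalent to mp --- its failure in a non-mp lattice is consistent, not contradictory. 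Neither $\mathfrak p=D(\mathfrak p)$, nor $(a\odot b)^\perp=\{1\}$, nor compactness of $Min_d(\mathfrak A)$ repairs this.
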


The following theorem gives some criteria for mp-residuated lattices by pure filters, inspired by the
one obtained for bounded distributive lattices by \citet[Theorem 2.11]{cornish1977ideals}.
\begin{theorem}\label{norgammsig}
  Let $\mathfrak{A}$ be a residuated lattice. The following assertions are equivalent:
  \begin{enumerate}
\item  [$(1)$ \namedlabel{norgammsig1}{$(1)$}] $\mathfrak{A}$ is mp;
\item  [$(2)$ \namedlabel{norgammsig2}{$(2)$}] $\Omega(\mathfrak{A})\subseteq \sigma(\mathfrak{A})$;
\item  [$(3)$ \namedlabel{norgammsig3}{$(3)$}] $\gamma(\mathfrak{A})\subseteq \sigma(\mathfrak{A})$.
\end{enumerate}
\end{theorem}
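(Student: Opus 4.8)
The plan is to establish the cycle (2)$\Rightarrow$(3)$\Rightarrow$(1)$\Rightarrow$(2). The implication (2)$\Rightarrow$(3) is immediate: by Proposition \ref{omegprop}\ref{omegprop1} the lattice $\gamma(\mathfrak{A})$ is a sublattice of $\Omega(\mathfrak{A})$, so in particular $\gamma(\mathfrak{A})\subseteq\Omega(\mathfrak{A})\subseteq\sigma(\mathfrak{A})$. The other two implications carry the content.

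For (1)$\Rightarrow$(2) I would proceed in two steps. First, show the apparently weaker statement $\gamma(\mathfrak{A})\subseteq\sigma(\mathfrak{A})$, i.e. that every coannulet $x^{\perp}$ is pure. Since $\sigma(x^{\perp})\subseteq x^{\perp}$ holds unconditionally (Proposition \ref{primesigmad}\ref{primesigmad1}), by Theorem \ref{sigmafequiv}\ref{sigmafequiv5} it suffices to check that $a\in D(\mathfrak{m})$ for each $a\in x^{\perp}$ and each maximal filter $\mathfrak{m}$ with $x^{\perp}\subseteq\mathfrak{m}$. Combining Proposition \ref{sigmapro}\ref{sigmapro4} with Theorem \ref{sigmafequiv}\ref{sigmafequiv3} identifies $D(\mathfrak{m})=\sigma(\mathfrak{m})=\{b\in A\mid b^{\perp}\veebar\mathfrak{m}=A\}$; as $x^{\perp}\subseteq\mathfrak{m}$ makes $x^{\perp}\veebar\mathfrak{m}=\mathfrak{m}\neq A$, this gives $x\notin D(\mathfrak{m})$. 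Because $\mathfrak{A}$ is mp, Proposition \ref{noco}\ref{noco4} makes $D(\mathfrak{m})$ a minimal prime filter, so Proposition \ref{1mineq} applied to $x\notin D(\mathfrak{m})$ forces $x^{\perp}\subseteq D(\mathfrak{m})$, whence $a\in x^{\perp}\subseteq D(\mathfrak{m})$. Second, bootstrap to arbitrary $\omega$-filters: from the definitions $\omega(I)=\bigcup_{x\in I}x^{\perp}$, and since $I$ is an ideal of $\ell(\mathfrak{A})$ the family $\{x^{\perp}\mid x\in I\}$ is up-directed (for $x,x'\in I$ one has $x^{\perp},x'^{\perp}\subseteq(x\vee x')^{\perp}$ with $x\vee x'\in I$), so this union is already a filter and equals $\veebar_{x\in I}x^{\perp}$; being a join of pure filters, it is pure by the frame structure of $\sigma(\mathfrak{A})$ from Theorem \ref{sigmfiltlatt}\ref{sigmfiltlatt1}. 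Hence $\Omega(\mathfrak{A})\subseteq\sigma(\mathfrak{A})$.

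For (3)$\Rightarrow$(1) I would argue by contradiction. If $\mathfrak{A}$ is not mp, some prime filter $\mathfrak{p}$ contains two distinct minimal prime filters $\mathfrak{q}_{1}\neq\mathfrak{q}_{2}$; choose $x\in\mathfrak{q}_{1}\setminus\mathfrak{q}_{2}$. Proposition \ref{1mineq} applied to $x\in\mathfrak{q}_{1}$ gives $x^{\perp}\nsubseteq\mathfrak{q}_{1}$, so pick $y\in x^{\perp}\setminus\mathfrak{q}_{1}$; applying Proposition \ref{1mineq} again, now to $y\notin\mathfrak{q}_{1}$, gives $y^{\perp}\subseteq\mathfrak{q}_{1}\subseteq\mathfrak{p}$. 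Purity of $y^{\perp}$ together with Theorem \ref{sigmafequiv}\ref{sigmafequiv2} forces $y^{\perp}=\sigma(y^{\perp})\subseteq D(\mathfrak{p})$, and Proposition \ref{omegprop}\ref{omegprop2} gives $D(\mathfrak{p})=k(\mathscr{G}(\mathfrak{p})\cap Min(\mathfrak{A}))\subseteq\mathfrak{q}_{2}$. But $y\in x^{\perp}$ means $x\vee y=1$, i.e. $x\in y^{\perp}$, so $x\in\mathfrak{q}_{2}$, contradicting the choice of $x$. (If one prefers to avoid the minimal-prime bookkeeping, the same direction can also be read off from the coannulet characterization of the mp property in Proposition \ref{noco}, by feeding $y\in x^{\perp}$ into Theorem \ref{sigmafequiv}\ref{sigmafequiv3}.)

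The delicate point is (3)$\Rightarrow$(1): turning the purely algebraic hypothesis that all coannulets are pure into the spectral statement that each prime filter contains a unique minimal prime hinges on the right double application of Proposition \ref{1mineq} and on the observation that $y\in x^{\perp}$ is symmetric, so that $x\in y^{\perp}$ is precisely what closes the loop; once Theorems \ref{sigmafequiv} and \ref{sigmfiltlatt} are in hand, the remaining manipulations are routine.
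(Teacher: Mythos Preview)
Your argument is correct, but it takes a more circuitous spectral route than the paper's. The paper works directly with the algebraic characterization of mp in Proposition~\ref{noco}\ref{noco5} (namely $x\vee y=1\Rightarrow x^{\perp}\veebar y^{\perp}=A$) and never touches minimal primes: for $(1)\Rightarrow(2)$, given $x\in\omega(I)$ one has $x\in a^{\perp}$ for some $a\in I$, so $x\vee a=1$ and hence $x^{\perp}\veebar a^{\perp}=A$; since $a^{\perp}\subseteq\omega(I)$ this gives $x\in\sigma(\omega(I))$ in one stroke, with no need for your two-step bootstrap through coannulets and directed joins. Likewise for $(3)\Rightarrow(1)$ the paper simply reads $x\vee y=1$ as $x\in y^{\perp}=\sigma(y^{\perp})$ and invokes Theorem~\ref{sigmafequiv}\ref{sigmafequiv3} to get $x^{\perp}\veebar y^{\perp}=A$---precisely the alternative you sketch in your parenthetical. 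Your primary contradiction argument via Proposition~\ref{1mineq} and $D(\mathfrak{p})$ is sound and perhaps more illuminating about \emph{why} purity of coannulets forces unique minimal primes under a given prime, but it is noticeably heavier; the paper's approach buys brevity by staying purely at the level of the $\sigma$-formula and the coannulet criterion for mp.
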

\begin{proof}
\item [\ref{norgammsig1}$\Rightarrow$\ref{norgammsig2}:] Let $F$ be an $\omega$-filter of $\mathfrak{A}$. So $F=\omega(I)$, for some ideal $I$ of $\ell(\mathfrak{A})$. Consider $x\in F$. So $x\in a^{\perp}$, for some $a\in I$. By Propositions \ref{omegprop}\ref{omegprop1} and \ref{noco}\ref{noco4} follows that $A=x^{\perp}\veebar a^{\perp}\subseteq x^{\perp}\veebar F$.
\item [\ref{norgammsig2}$\Rightarrow$\ref{norgammsig3}:] By Propositions \ref{omegprop}\ref{omegprop2}, it is evident.
\item [\ref{norgammsig3}$\Rightarrow$\ref{norgammsig1}:] Let $x\vee y=1$. So $x\in y^{\perp}=\sigma(y^{\perp})$ and this implies that $x^{\perp}\veebar y^{\perp}=A$. Hence the result holds by Proposition \ref{noco}\ref{noco5}.
\end{proof}
\begin{remark}
  \citet[Theorem 1]{al1987some} showed that a unitary commutative ring is a PF ring if and only if any its annulet is a pure ideal. Thus, if we define PF-residuated lattices as those ones in which any coannulet is a pure filter, Theorem \ref{norgammsig} verifies that the class of PF residuated lattices coincides with the class of mp-residuated lattices.
\end{remark}
\begin{theorem}\label{norgammsige}
  Let $\mathfrak{A}$ be a residuated lattice. The following assertions are equivalent:
  \begin{enumerate}
\item  [$(1)$ \namedlabel{norgammsige1}{$(1)$}] $\mathfrak{A}$ is mp;
\item  [$(2)$ \namedlabel{norgammsige2}{$(2)$}] $D(\mathfrak{p})$ is a pure filter of $\mathfrak{A}$, for any prime filter $\mathfrak{p}$ of $\mathfrak{A}$;
\item  [$(3)$ \namedlabel{norgammsige3}{$(3)$}] $D(\mathfrak{m})$ is a pure filter of $\mathfrak{A}$, for any maximal filter $\mathfrak{m}$ of $\mathfrak{A}$;
\item  [$(4)$ \namedlabel{norgammsige4}{$(4)$}] $Min(\mathfrak{A})\subseteq \sigma(\mathfrak{A})$.
\end{enumerate}
\end{theorem}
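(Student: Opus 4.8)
The plan is to run the cycle $(1)\Rightarrow(2)\Rightarrow(3)\Rightarrow(1)$ for the ``prime/maximal'' conditions and to attach $(4)$ by the two implications $(2)\Rightarrow(4)$ and $(4)\Rightarrow(1)$; the tools needed are the algebraic characterizations of mp-residuated lattices in Proposition \ref{noco}, the formulas for the sink in Theorem \ref{sigmafequiv}, and Proposition \ref{omegprop}. I first record the reduction that will be used everywhere: since $\sigma(G)\subseteq G$ for every filter $G$ (Proposition \ref{primesigmad}\ref{primesigmad1}), purity of $G$ is equivalent to the single inclusion $G\subseteq\sigma(G)$, and by Theorem \ref{sigmafequiv}\ref{sigmafequiv3} this says precisely that $a\in G$ implies $a^{\perp}\veebar G=A$.

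For $(1)\Rightarrow(2)$, assume $\mathfrak{A}$ is mp, fix $\mathfrak{p}\in Spec(\mathfrak{A})$, and take $a\in D(\mathfrak{p})$. Writing $D(\mathfrak{p})=k\mathscr{G}(\mathfrak{p})=\omega(\dot{\mathfrak{p}})$ (Proposition \ref{omegprop}\ref{omegprop2}), there is $x\notin\mathfrak{p}$ with $a\vee x=1$; then $x\in a^{\perp}$, and for any prime $\mathfrak{q}\subseteq\mathfrak{p}$ and any $b\in x^{\perp}$ the relation $b\vee x=1\in\mathfrak{q}$ with $x\notin\mathfrak{q}$ forces $b\in\mathfrak{q}$, so $x^{\perp}\subseteq k\mathscr{G}(\mathfrak{p})=D(\mathfrak{p})$. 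Since $\mathfrak{A}$ is mp and $a\vee x=1$, Proposition \ref{noco}\ref{noco5} gives $a^{\perp}\veebar x^{\perp}=A$, hence $a^{\perp}\veebar D(\mathfrak{p})=A$ and $a\in\sigma(D(\mathfrak{p}))$; thus $D(\mathfrak{p})$ is pure. Now $(2)\Rightarrow(3)$ is immediate because maximal filters are prime, and $(2)\Rightarrow(4)$ is also immediate: each $\mathfrak{q}\in Min(\mathfrak{A})$ satisfies $\mathfrak{q}=D(\mathfrak{q})$ by Proposition \ref{omegprop}\ref{omegprop3}, and $D(\mathfrak{q})$ is pure by $(2)$, so $\mathfrak{q}\in\sigma(\mathfrak{A})$.

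For $(4)\Rightarrow(1)$ I use Proposition \ref{noco}\ref{noco1}: if $\mathfrak{q}_{1}\neq\mathfrak{q}_{2}$ are minimal prime filters (necessarily incomparable), pick $x\in\mathfrak{q}_{1}\setminus\mathfrak{q}_{2}$; purity of $\mathfrak{q}_{1}$ gives $x^{\perp}\veebar\mathfrak{q}_{1}=A$, while $x\notin\mathfrak{q}_{2}$ together with Proposition \ref{1mineq} gives $x^{\perp}\subseteq\mathfrak{q}_{2}$, so $\mathfrak{q}_{1}\veebar\mathfrak{q}_{2}\supseteq x^{\perp}\veebar\mathfrak{q}_{1}=A$ and $\mathfrak{A}$ is mp. It remains to close the cycle with $(3)\Rightarrow(1)$, and this is the step I expect to be the real obstacle: a priori purity is only available for the filters $D(\mathfrak{m})=\sigma(\mathfrak{m})$ with $\mathfrak{m}\in Max(\mathfrak{A})$ (Proposition \ref{sigmapro}\ref{sigmapro4}), and it must be promoted to the global condition of Proposition \ref{noco}\ref{noco5}. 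My intended route is: given $x\vee y=1$ with $x^{\perp}\veebar y^{\perp}$ proper, choose $\mathfrak{m}\in Max(\mathfrak{A})$ containing $x^{\perp}\veebar y^{\perp}$, observe that this means $x,y\notin D(\mathfrak{m})$, and then combine $x\vee y=1$ with the purity of $D(\mathfrak{m})$ and a compactness argument over $Max_{h}(\mathfrak{A})$ (Proposition \ref{hulkerinstr}), in the spirit of the proof of Theorem \ref{gelpurefcl}, to force $x^{\perp}\veebar y^{\perp}=A$, a contradiction. Producing this contradiction is the one genuinely non-formal point in the whole argument, and it is where the hypothesis of $(3)$ must really be exploited.
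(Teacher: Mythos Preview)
Your arguments for $(1)\Rightarrow(2)$, $(2)\Rightarrow(3)$, $(2)\Rightarrow(4)$, and $(4)\Rightarrow(1)$ are correct and essentially match the paper's route (the paper invokes Theorem~\ref{norgammsig} for $(1)\Rightarrow(2)$ and Proposition~\ref{comxpureprime} for $(4)\Rightarrow(1)$, but the underlying computations are the ones you wrote out). Where you attempt to close the loop by $(3)\Rightarrow(1)$, the paper instead asserts $(3)\Rightarrow(4)$ ``with a little bit of effort'' via Proposition~\ref{omegprop}\ref{omegprop3}.

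Your suspicion that $(3)\Rightarrow(1)$ is ``the real obstacle'' is more than justified: that implication is \emph{false}, and so is the paper's claimed $(3)\Rightarrow(4)$. The paper's own $\mathfrak{A}_8$ (Example~\ref{exa8}) is a counterexample. There the unique maximal filter is $\mathfrak{m}=\{a,c,d,e,f,1\}$, and by Proposition~\ref{omegprop}\ref{omegprop2}
\[
D(\mathfrak{m})=k\mathscr{G}(\mathfrak{m})=\{a,c,d,e,f,1\}\cap\{c,e,1\}\cap\{f,1\}=\{1\},
\]
which is trivially pure; hence $(3)$ holds in $\mathfrak{A}_8$. Yet $\mathfrak{A}_8$ is not mp (Example~\ref{quanorempxas}), and its minimal primes $\{c,e,1\}$ and $\{f,1\}$ are not pure (Example~\ref{purefilterexa}), so $(1)$, $(2)$, and $(4)$ all fail. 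In your sketch for $(3)\Rightarrow(1)$ this shows up concretely: with $x=c$, $y=f$ one has $x\vee y=1$ and $x^{\perp}\veebar y^{\perp}\subseteq\mathfrak{m}$, but the only information purity of $D(\mathfrak{m})=\{1\}$ provides is vacuous, so no contradiction can be extracted. The compactness idea you borrow from Theorem~\ref{gelpurefcl} cannot rescue this, and item $(3)$, as stated, is simply not equivalent to the others.
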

\begin{proof}
\item [\ref{norgammsige1}$\Rightarrow$\ref{norgammsige2}:] It follows by Theorem \ref{norgammsig}.
\item [\ref{norgammsige2}$\Rightarrow$\ref{norgammsige3}:] It is evident.
\item [\ref{norgammsige3}$\Rightarrow$\ref{norgammsige4}:] It follows, with a little bit of effort, by Proposition \ref{omegprop}\ref{omegprop3}.
\item [\ref{norgammsige4}$\Rightarrow$\ref{norgammsige1}:] It follows by Propositions \ref{comxpureprime} \& \ref{noco}\ref{noco1}.
\end{proof}
\begin{theorem}\label{normpurprimxa}
  Let $\mathfrak{A}$ be a residuated lattice. The following assertions are equivalent:
\begin{enumerate}
\item  [(1) \namedlabel{normpurprimxa1}{(1)}] $\mathfrak{A}$ is mp;
\item  [(2) \namedlabel{normpurprimxa2}{(2)}] $Min(\mathfrak{A})=Max(\sigma(\mathfrak{A}))$.
\end{enumerate}
\end{theorem}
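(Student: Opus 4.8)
The plan is to prove the two implications separately, reducing the converse immediately to the already-established Theorem \ref{norgammsige}, and proving the forward implication by a double inclusion using the pure-part operator $\rho$ and the closure $D$.

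For \ref{normpurprimxa2}$\Rightarrow$\ref{normpurprimxa1} I would argue as follows. A purely-maximal filter is in particular pure, so the hypothesis $Min(\mathfrak{A})=Max(\sigma(\mathfrak{A}))$ yields $Min(\mathfrak{A})\subseteq \sigma(\mathfrak{A})$; then $\mathfrak{A}$ is mp by the equivalence \ref{norgammsige1}$\Leftrightarrow$\ref{norgammsige4} of Theorem \ref{norgammsige}. This direction needs nothing new.

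For \ref{normpurprimxa1}$\Rightarrow$\ref{normpurprimxa2}, assume $\mathfrak{A}$ is mp. First I would show $Min(\mathfrak{A})\subseteq Max(\sigma(\mathfrak{A}))$: by Theorem \ref{norgammsige} every $\mathfrak{q}\in Min(\mathfrak{A})$ is pure, and it is proper since $0\notin \mathfrak{q}$; if $\mathfrak{q}\subseteq F$ for a proper pure filter $F$, pick $\mathfrak{m}\in Max(\mathfrak{A})$ with $F\subseteq \mathfrak{m}$, so $\mathfrak{m}\in h(F)$ and hence $F=\sigma(F)\subseteq D(\mathfrak{m})$ by Theorem \ref{sigmafequiv}\ref{sigmafequiv2}; since $D(\mathfrak{m})$ is a minimal prime filter by Proposition \ref{noco}\ref{noco4} and $\mathfrak{q}$ is a minimal prime filter contained in it, $\mathfrak{q}=D(\mathfrak{m})=F$, so $\mathfrak{q}$ is purely-maximal. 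Conversely, for $Max(\sigma(\mathfrak{A}))\subseteq Min(\mathfrak{A})$, take $\mathfrak{n}\in Max(\sigma(\mathfrak{A}))$ and, by Proposition \ref{r1filter}\ref{r1filter1}, write $\mathfrak{n}=\rho(\mathfrak{m})$ with $\mathfrak{m}\in Max(\mathfrak{A})$. Then $D(\mathfrak{m})$ is minimal prime (Proposition \ref{noco}\ref{noco4}), hence pure (Theorem \ref{norgammsige}) and contained in $\mathfrak{m}$, so $D(\mathfrak{m})\subseteq \rho(\mathfrak{m})$ by Proposition \ref{rfilter}\ref{rfilter2}; and $\rho(\mathfrak{m})\subseteq \sigma(\mathfrak{m})=D(\mathfrak{m})$ by Propositions \ref{rfilter}\ref{rfilter1} and \ref{sigmapro}\ref{sigmapro4}. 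Hence $\mathfrak{n}=D(\mathfrak{m})\in Min(\mathfrak{A})$.

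The hard part will be the step showing that a proper pure filter $F$ contained in a maximal filter $\mathfrak{m}$ is actually contained in $D(\mathfrak{m})$; everything turns on the representation $\sigma(F)=\bigcap\{D(\mathfrak{p})\mid \mathfrak{p}\in h(F)\}$ from Theorem \ref{sigmafequiv}, after which the mp hypothesis forces all the filters in play to collapse onto $D(\mathfrak{m})$. The rest is bookkeeping with $\rho$ and $D$, so I expect the argument to be short.
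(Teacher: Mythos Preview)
Your proposal is correct; both implications go through as you outline them.  The converse \ref{normpurprimxa2}$\Rightarrow$\ref{normpurprimxa1} is exactly the paper's argument.  For \ref{normpurprimxa1}$\Rightarrow$\ref{normpurprimxa2} your strategy and the paper's differ slightly in the two inclusions.  For $Max(\sigma(\mathfrak{A}))\subseteq Min(\mathfrak{A})$ the paper bypasses $\rho$ and Proposition~\ref{r1filter}\ref{r1filter1}: it simply picks a maximal filter $\mathfrak{n}\supseteq\mathfrak{p}$ and observes $\mathfrak{p}=\sigma(\mathfrak{p})\subseteq\sigma(\mathfrak{n})=D(\mathfrak{n})$ directly from Proposition~\ref{sigmapro}(\ref{sigmapro2},\ref{sigmapro4}), then concludes $\mathfrak{p}=D(\mathfrak{n})$ by purity of $D(\mathfrak{n})$ and pure-maximality of $\mathfrak{p}$.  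For $Min(\mathfrak{A})\subseteq Max(\sigma(\mathfrak{A}))$ the paper argues element-wise: given a minimal prime $\mathfrak{m}$ inside a purely-maximal $\mathfrak{n}$, it takes $a\in\mathfrak{n}$, uses the description in Theorem~\ref{sigmafequiv}\ref{sigmafequiv4} to find $b\in a^{\perp}$ with $\neg b\in\mathfrak{n}$, deduces $b\notin\mathfrak{m}$, and concludes $a\in\mathfrak{m}$ by primality.  Your sandwich argument ($\mathfrak{q}\subseteq F\subseteq D(\mathfrak{m})$ with both ends minimal prime) is a cleaner, more structural route to the same inclusion and avoids the element-level computation; the paper's element-wise version, on the other hand, does not need to invoke Proposition~\ref{noco}\ref{noco4} for that half.
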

\begin{proof}
\item [\ref{normpurprimxa1}$\Rightarrow$\ref{normpurprimxa2}:] Let $\mathfrak{m}$ be a minimal prime filter of $\mathfrak{A}$. By Theorem \ref{norgammsige} follows that $\mathfrak{m}$ is a pure filter of $\mathfrak{A}$. Thus there exists $\mathfrak{n}\in Max(\sigma(\mathfrak{A}))$ containing $\mathfrak{m}$. Let $a\in \mathfrak{n}$. So there exists $b\in a^{\perp}$ such that $\neg b\in \mathfrak{n}$. This implies that $b\notin \mathfrak{m}$, and so $a\in \mathfrak{m}$. Conversely, let $\mathfrak{p}$ be a purely-maximal filter of $\mathfrak{A}$. So $\mathfrak{p}\subseteq \mathfrak{n}$, for some $\mathfrak{n}\in Max(\mathfrak{A})$. Using Propositions \ref{sigmapro}(\ref{sigmapro2} \& \ref{sigmapro4}) \& \ref{noco}\ref{noco4}, and Theorem \ref{norgammsige}\ref{norgammsige3}, it shows that $\mathfrak{p}=D(\mathfrak{n})\in Min(\mathfrak{A})$.
\item [\ref{normpurprimxa2}$\Rightarrow$\ref{normpurprimxa1}:] It is evident by Theorem \ref{norgammsige}\ref{norgammsige4}.
\end{proof}
The following result generalized and improved \cite[Theorem 1.8]{al1989pure} to residuated lattices.
\begin{proposition}\label{pureinterd}
  Let $\mathfrak{A}$ be an mp-residuated lattice and $F$ a proper pure filter of $\mathfrak{A}$. We have
  \[F=kh_{m}(F).\]
\end{proposition}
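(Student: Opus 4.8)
The plan is to prove $F = kh_m(F)$ by establishing the two inclusions, where $h_m(F) = \{\mathfrak{m} \in Min(\mathfrak{A}) \mid F \subseteq \mathfrak{m}\}$ and $k$ denotes intersection. The inclusion $F \subseteq kh_m(F)$ is trivial: every minimal prime filter containing $F$ contains $F$, so $F$ is contained in their intersection. The work is all in the reverse inclusion $kh_m(F) \subseteq F$. Since $F$ is pure, by Theorem \ref{sigmafequiv}\ref{sigmafequiv3} it suffices to show that if $a \notin F$ then $a^{\perp} \veebar F \neq A$, i.e. there is a prime (indeed minimal prime) filter containing $F$ but not $a$. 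So suppose $a \notin F$; I want to produce $\mathfrak{m} \in Min(\mathfrak{A})$ with $F \subseteq \mathfrak{m}$ and $a \notin \mathfrak{m}$.

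The key idea is to use the mp hypothesis through Theorem \ref{norgammsige}: in an mp-residuated lattice $Min(\mathfrak{A}) \subseteq \sigma(\mathfrak{A})$, and more usefully $D(\mathfrak{n})$ is a pure filter for every maximal filter $\mathfrak{n}$, with $D(\mathfrak{n}) \in Min(\mathfrak{A})$ by Proposition \ref{noco}\ref{noco4}. First, since $a \notin F$ and $F$ is pure, $F = \sigma(F)$, so by Theorem \ref{sigmafequiv}\ref{sigmafequiv3} we have $a^{\perp} \veebar F \neq A$; pick a prime filter $\mathfrak{p} \supseteq a^{\perp} \veebar F$, and then a minimal prime $\mathfrak{q} \subseteq \mathfrak{p}$ containing nothing forced — more precisely, I would argue directly that $\mathfrak{p} \supseteq F$ and $a^{\perp} \subseteq \mathfrak{p}$, hence $a \notin \mathfrak{p}$ (because $a \in \mathfrak{p}$ together with $a^{\perp} \subseteq \mathfrak{p}$ would force $0 \in \mathfrak{p}$, contradicting properness, using that $x \in a^{\perp}$ means $a \vee x = 1$ — actually one must be slightly careful here and instead use that $\mathfrak{p}$ being prime and containing $a^{\perp}$ means, by Proposition \ref{1mineq} applied to a minimal prime refinement, that $a$ lies outside it). Then take a minimal prime filter $\mathfrak{m} \subseteq \mathfrak{p}$ with $F \subseteq \mathfrak{m}$: this exists because $F$ is a pure filter and, by Proposition \ref{r1filter}\ref{r1filter3} or directly by the mp structure, $F$ is an intersection of the minimal primes above it; alternatively, since $F = \sigma(F) = \bigcap\{D(\mathfrak{p}') \mid \mathfrak{p}' \in h(F)\}$ by Theorem \ref{sigmafequiv}\ref{sigmafequiv2} and each $D(\mathfrak{p}')$ refines to a minimal prime.

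Cleaner route: combine Theorem \ref{sigmafequiv}\ref{sigmafequiv5}, which gives $\sigma(F) = \bigcap\{D(\mathfrak{m}) \mid \mathfrak{m} \in h_M(F)\}$, with Proposition \ref{noco}\ref{noco4}, which says each such $D(\mathfrak{m})$ is itself a minimal prime filter. Then $F = \sigma(F) = \bigcap\{D(\mathfrak{m}) \mid \mathfrak{m} \in h_M(F)\}$, and since $F \subseteq D(\mathfrak{m})$ for each $\mathfrak{m} \in h_M(F)$ (as $\sigma(F) \subseteq D(\mathfrak{m})$ whenever $F \subseteq \mathfrak{m}$, cf. Proposition \ref{sigmapro}), each $D(\mathfrak{m})$ is a minimal prime filter \emph{containing} $F$, i.e. lies in $h_m(F)$. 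Therefore $kh_m(F) \subseteq \bigcap\{D(\mathfrak{m}) \mid \mathfrak{m} \in h_M(F)\} = \sigma(F) = F$, which is exactly the nontrivial inclusion. I expect the main obstacle to be the bookkeeping in showing $F \subseteq D(\mathfrak{m})$ for $\mathfrak{m} \in h_M(F)$ and that every element of $h_m(F)$ can be "seen" among the $D(\mathfrak{m})$'s — concretely, that $h_m(F) \supseteq \{D(\mathfrak{m}) \mid \mathfrak{m} \in h_M(F)\}$ suffices for the inclusion of intersections in the right direction, which it does, so no reverse containment of the index sets is needed. This makes the argument essentially a one-line consequence of Theorem \ref{sigmafequiv}\ref{sigmafequiv5} and Proposition \ref{noco}\ref{noco4} once purity of $F$ is invoked.
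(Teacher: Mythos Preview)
Your first route is muddled, as you yourself notice: from $a^{\perp}\subseteq\mathfrak{p}$ alone you cannot conclude $a\notin\mathfrak{p}$ for an arbitrary prime $\mathfrak{p}$, and the patching you suggest via Proposition~\ref{1mineq} would require the minimal prime refinement to still contain $F$, which is exactly the point at issue. You were right to abandon it.

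Your ``cleaner route'' is correct and complete. Since $F$ is proper, $h_M(F)\neq\emptyset$; purity gives $F=\sigma(F)=\bigcap\{D(\mathfrak{m})\mid\mathfrak{m}\in h_M(F)\}$ by Theorem~\ref{sigmafequiv}\ref{sigmafequiv5}; Proposition~\ref{noco}\ref{noco4} makes each $D(\mathfrak{m})$ a minimal prime; and each contains $F$ because $F$ is their intersection. Hence $\{D(\mathfrak{m})\mid\mathfrak{m}\in h_M(F)\}\subseteq h_m(F)$ and the nontrivial inclusion $kh_m(F)\subseteq F$ drops out.

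This is a genuinely different argument from the paper's. The paper proceeds element-wise by contradiction: it takes $a\in kh_m(F)$, assumes $a^{\perp}\veebar F$ is proper, extends to a maximal $\mathfrak{n}$, descends to a minimal prime $\mathfrak{m}\subseteq\mathfrak{n}$, argues $F\subseteq\mathfrak{m}$ (using that $h(F)$ is $\mathscr{G}$-stable for pure $F$), and then derives $0\in\mathfrak{n}$ from $b\in a^{\perp}\subseteq\mathfrak{n}$ together with $\neg b\in\mathfrak{m}\subseteq\mathfrak{n}$ obtained via the purity of $\mathfrak{m}$ and Theorem~\ref{sigmafequiv}\ref{sigmafequiv4}. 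Your approach is more structural: it bypasses the element chase entirely by recognising that Theorem~\ref{sigmafequiv}\ref{sigmafequiv5} already exhibits $F$ as an intersection of minimal primes containing it. What the paper's route buys is that it makes the role of the mp hypothesis visible at the level of individual primes (via the descent $\mathfrak{n}\supseteq\mathfrak{m}$); what yours buys is brevity and a cleaner dependence on prior results.
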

\begin{proof}
  By Theorem \ref{normpurprimxa}, $h_{m}(F)\neq \emptyset$. Consider $a\in kh_{m}(F)$. Assume that $a^{\perp}\veebar F$ is proper. Thus $a^{\perp}\veebar F\subseteq \mathfrak{n}$, for some maximal filter $\mathfrak{n}$ of $\mathfrak{A}$. Let $\mathfrak{m}$ be a minimal prime filter of $\mathfrak{A}$ contained in $\mathfrak{n}$. This implies that $F\subseteq \mathfrak{m}$, and so $\neg b\in \mathfrak{n}$, for some $b\in a^{\perp}$; a contradiction.
\end{proof}
The pure ideals of a PF ring are characterized in \citet[Theorems 2.4 and 2.5]{al1988pure}. These results have been improved and generalized to residuated lattices in Theorem \ref{mppurefcl} and Proposition \ref{mppure}. The following result can be compared with Theorem \ref{gelpurefcl}.
\begin{theorem}\label{mppurefcl}
Let $\mathfrak{A}$ be an mp-residuated lattice. The pure filters of $\mathfrak{A}$ are precisely of the form $\bigcap_{\mathfrak{m}\in Min(\mathfrak{A})\cap \mathcal{C}}\mathfrak{m}$,where $\mathcal{C}$ runs over closed subsets of $Spec_{d}(\mathfrak{A})$.
\end{theorem}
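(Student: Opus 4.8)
The plan is to identify, inside an mp-residuated lattice, the filters occurring on the right-hand side of the statement and to match them with pure filters by means of the element-versus-coannulet description of minimal prime filters. The pivotal observation is this: let $\mathcal{C}$ be a closed subset of $Spec_{d}(\mathfrak{A})$, so that by Remark \ref{closd} we may write $\mathcal{C}=\{\mathfrak{p}\in Spec(\mathfrak{A})\mid \mathfrak{p}\cap X=\emptyset\}$ for some $X\subseteq A$. For a minimal prime filter $\mathfrak{m}$, Proposition \ref{1mineq} gives that $\mathfrak{m}\cap X=\emptyset$ if and only if $x^{\perp}\subseteq\mathfrak{m}$ for every $x\in X$, that is, if and only if $\veebar_{x\in X}x^{\perp}\subseteq\mathfrak{m}$. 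Writing $H_{X}:=\veebar_{x\in X}x^{\perp}=\mathscr{F}\big(\bigcup_{x\in X}x^{\perp}\big)$, this says $Min(\mathfrak{A})\cap\mathcal{C}=h_{m}(H_{X})$, whence
\[
\bigcap_{\mathfrak{m}\in Min(\mathfrak{A})\cap\mathcal{C}}\mathfrak{m}=kh_{m}(H_{X}).
\]
Thus the filters listed in the statement are exactly the filters $kh_{m}(H_{X})$, with $X$ ranging over subsets of $A$.

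First I would handle the inclusion ``$\supseteq$''. Given a closed $\mathcal{C}$ and a corresponding $X$, each $x^{\perp}$ lies in $\gamma(\mathfrak{A})$, and $\gamma(\mathfrak{A})\subseteq\sigma(\mathfrak{A})$ by Theorem \ref{norgammsig} because $\mathfrak{A}$ is mp; since $\sigma(\mathfrak{A})$ is closed under arbitrary joins by Theorem \ref{sigmfiltlatt}\ref{sigmfiltlatt1}, the filter $H_{X}$ is pure. If $H_{X}$ is proper, Proposition \ref{pureinterd} gives $kh_{m}(H_{X})=H_{X}$, a pure filter; if $H_{X}=A$, then $h_{m}(H_{X})=\emptyset$ and $kh_{m}(H_{X})=A\in\sigma(\mathfrak{A})$. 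Either way $\bigcap_{\mathfrak{m}\in Min(\mathfrak{A})\cap\mathcal{C}}\mathfrak{m}$ is pure.

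For the converse, let $F$ be a pure filter. If $F=A$, take $\mathcal{C}=\emptyset$, which is closed in $Spec_{d}(\mathfrak{A})$ and for which the stated intersection is $A$. If $F$ is proper, I would invoke Theorem \ref{sigmafequiv}\ref{sigmafequiv6}: from $F=\sigma(F)$ we get $F=\omega(I_{F})$ with $I_{F}=\{a\in A\mid a^{\perp\perp}\veebar F=A\}$ an ideal of $\ell(\mathfrak{A})$, and by the very definition of $\omega$ this means $F=\bigcup_{x\in I_{F}}x^{\perp}$, hence $H_{I_{F}}=\mathscr{F}(F)=F$. Taking $X:=I_{F}$ and $\mathcal{C}:=\{\mathfrak{p}\in Spec(\mathfrak{A})\mid \mathfrak{p}\cap I_{F}=\emptyset\}$, which is closed in $Spec_{d}(\mathfrak{A})$ by Remark \ref{closd}, the displayed computation yields $\bigcap_{\mathfrak{m}\in Min(\mathfrak{A})\cap\mathcal{C}}\mathfrak{m}=kh_{m}(H_{I_{F}})=kh_{m}(F)$, and the latter equals $F$ by Proposition \ref{pureinterd}. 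This presents $F$ in the required form.

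The only genuinely non-formal ingredients are, on one side, the passage from a pure filter to an $\omega$-filter (so that it is a union of coannulets and can therefore be ``cut out'' by a single subset $X$ as in Remark \ref{closd}), and, on the other side, Proposition \ref{1mineq}, which is what turns disjointness from $X$ into containment of $\veebar_{x\in X}x^{\perp}$ — and is exactly the reason the statement is phrased with minimal prime filters and the dual hull-kernel topology rather than with arbitrary primes. Once those two points are in place, both inclusions collapse to Proposition \ref{pureinterd} together with the bookkeeping in the first paragraph.
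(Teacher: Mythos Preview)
Your proof is correct and takes a genuinely different route from the paper's. For the forward direction (that each such intersection is pure), the paper argues directly: given $a\in G:=\bigcap\{\mathfrak{m}\mid \mathfrak{m}\in Min(\mathfrak{A})\cap\mathcal{C}\}$, it assumes $G\veebar a^{\perp}\neq A$, picks a minimal prime $\mathfrak{o}$ below a maximal filter containing $G\veebar a^{\perp}$, separates $\mathfrak{o}$ from each $\mathfrak{m}\in Min(\mathfrak{A})\cap\mathcal{C}$ by elements $x_{\mathfrak{m}},y_{\mathfrak{m}}$ with $x_{\mathfrak{m}}\odot y_{\mathfrak{m}}=0$, and then uses compactness of $\mathcal{C}$ in $Spec_{d}(\mathfrak{A})$ to reach a contradiction. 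You instead make the structural identification $Min(\mathfrak{A})\cap\mathcal{C}=h_{m}(H_{X})$ via Proposition~\ref{1mineq}, observe that $H_{X}=\veebar_{x\in X}x^{\perp}$ is pure (Theorem~\ref{norgammsig} plus closure of $\sigma(\mathfrak{A})$ under joins), and invoke Proposition~\ref{pureinterd} to get the intersection equal to $H_{X}$ itself. This is shorter and yields more: it names the intersection explicitly as a join of coannulets, which the paper's compactness argument does not. For the converse, the paper simply cites Proposition~\ref{pureinterd} without indicating how the closed set $\mathcal{C}$ is chosen; your use of Theorem~\ref{sigmafequiv}\ref{sigmafequiv6} to write $F=\omega(I_{F})=\bigcup_{x\in I_{F}}x^{\perp}$ and then take $X=I_{F}$ makes this step explicit and self-contained.
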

\begin{proof}
let $a\in G:=\bigcap\{\mathfrak{m}\mid \mathfrak{m}\in Min(\mathfrak{A})\cap \mathcal{C}\}$, in which $\mathcal{C}$ is a closed subset of $Spec_{d}(\mathfrak{A})$. So for any $\mathfrak{m}\in Min(\mathfrak{A})\cap \mathcal{C}$, we have $\mathfrak{m}\veebar a^{\perp}=A$. By absurdum, assume that $G\veebar a^{\perp}\neq A$. So $G\veebar a^{\perp}$ is contained in a maximal filter $\mathfrak{n}$. Let $\mathfrak{o}$ be a minimal prime filter of $\mathfrak{A}$ contained in $\mathfrak{m}$. Obviously, $\mathfrak{o}\notin \mathcal{C}$. So for any $\mathfrak{m}\in  Min(\mathfrak{A})\cap \mathcal{C}$, there exist $x_{\mathfrak{m}}\in \mathfrak{m}$ and $y_{\mathfrak{m}}\in \mathfrak{o}$ such that $x_{\mathfrak{m}}\odot y_{\mathfrak{m}}=0$. Since $\mathcal{C}$ is stable under the generalization, so $\mathcal{C}\subseteq \bigcup_{\mathfrak{m}\in Min(\mathfrak{A})\cap \mathcal{C}} h(x_{\mathfrak{m}})$. By Proposition \ref{hulkerinstr} follows that $\mathcal{C}$ is compact. So there exist $\mathfrak{m}_{1},\cdots,\mathfrak{m}_{n}\in Min(\mathfrak{A})\cap \mathcal{C}$ such that $\mathcal{C}\subseteq \bigcup_{i=1}^{n} h(x_{\mathfrak{m}_{i}})$. Set $x=\bigvee_{i=1}^{n} x_{\mathfrak{m}_{i}}$ and $y=\bigodot_{i=1}^{n}y_{\mathfrak{m}_{i}}$. Routinely, one can see that $0=x\odot y\in G\veebar \mathfrak{o}$; a contradiction. The converse follows by Proposition \ref{pureinterd}.
\end{proof}
\begin{proposition}\label{mppureco1}
Let $\mathfrak{A}$ be an mp-residuated lattice and $a\in A$. Then
\begin{center}
  $a^{\perp}\cap F_{a}=\{1\}$, where $F_{a}=\bigcap_{\mathfrak{m}\in h_{M}(a)}\rho(\mathfrak{m})$.
\end{center}
\end{proposition}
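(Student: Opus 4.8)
The plan is to first rewrite $F_{a}$ as a sink and then finish with a one-line $(r_{2})$-computation. Since $\mathfrak{A}$ is mp, Theorem \ref{norgammsige}\ref{norgammsige3} gives that $D(\mathfrak{m})$ is a pure filter for every $\mathfrak{m}\in Max(\mathfrak{A})$, hence a fixed point of the closure operator $\rho$ (Proposition \ref{rfilter}\ref{rfilter3}); combining this with $\rho(\mathfrak{m})=\rho(D(\mathfrak{m}))$ from Proposition \ref{rfilter}\ref{rfilter8} (maximal filters being prime) yields $\rho(\mathfrak{m})=D(\mathfrak{m})$ for each maximal filter $\mathfrak{m}$. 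Since $h_{M}(\mathscr{F}(a))=h_{M}(a)$, Proposition \ref{sigmafequiv}\ref{sigmafequiv5} then gives
$$F_{a}=\bigcap_{\mathfrak{m}\in h_{M}(a)}\rho(\mathfrak{m})=\bigcap\{D(\mathfrak{m})\mid\mathfrak{m}\in h_{M}(\mathscr{F}(a))\}=\sigma(\mathscr{F}(a)),$$
so the claim reduces to $a^{\perp}\cap\sigma(\mathscr{F}(a))=\{1\}$.

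Next I would unwind the two membership conditions. Let $x\in a^{\perp}\cap\sigma(\mathscr{F}(a))$. From $x\in a^{\perp}$ we get $x\vee a=1$. From $x\in\sigma(\mathscr{F}(a))$ and Proposition \ref{sigmafequiv}\ref{sigmafequiv3} we get $x^{\perp}\veebar\mathscr{F}(a)=A$, so by Proposition \ref{genfilprop}\ref{genfilprop1} there exist $b\in x^{\perp}$ and $n\in\mathbb{N}$ with $b\odot a^{n}=0$; note that $b\in x^{\perp}$ says exactly $x\vee b=1$.

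It then remains a short calculation with \ref{res2}. First, $x\vee a=1$ forces $x\vee a^{n}=1$ by induction on $n$, since $x\vee a^{k+1}=x\vee(a^{k}\odot a)\geq(x\vee a^{k})\odot(x\vee a)=1$. Applying \ref{res2} once more,
$$1=(x\vee b)\odot(x\vee a^{n})\leq x\vee(b\odot a^{n})=x\vee 0=x,$$
hence $x=1$, as desired.

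I do not anticipate a genuine obstacle: the only step carrying real content is the identification $F_{a}=\sigma(\mathscr{F}(a))$, which is exactly where the mp-hypothesis is consumed (through $\rho(\mathfrak{m})=D(\mathfrak{m})$), and everything afterwards is the routine $(r_{2})$-manipulation displayed above. The argument is moreover uniform: it needs no case distinction and stays valid when $\mathscr{F}(a)=A$ (so $h_{M}(a)=\emptyset$, $F_{a}=A$, and still $a^{\perp}=\{1\}$ by the same computation) or when $\mathfrak{A}$ is degenerate.
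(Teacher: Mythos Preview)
Your argument is correct. The identification $F_{a}=\sigma(\mathscr{F}(a))$ via $\rho(\mathfrak{m})=D(\mathfrak{m})$ is sound in the mp case, and the final $(r_{2})$-computation is clean; the edge case $h_{M}(a)=\emptyset$ is also handled as you note.

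The paper proves this in one line, citing only Theorem~\ref{norgammsig} and Proposition~\ref{rfilter}, which points to a different (more structural) route: by Theorem~\ref{norgammsig} the coannulet $a^{\perp}$ is pure, and Proposition~\ref{rfilter}\ref{rfilter6} then gives $a^{\perp}=\bigcap_{\mathfrak{m}\in h_{M}(a^{\perp})}\rho(\mathfrak{m})$. Since every maximal filter is prime and $x\vee a=1$ for $x\in a^{\perp}$, any $\mathfrak{m}\in Max(\mathfrak{A})$ with $a\notin\mathfrak{m}$ satisfies $a^{\perp}\subseteq\mathfrak{m}$; hence $h_{M}(a)\cup h_{M}(a^{\perp})=Max(\mathfrak{A})$, and therefore $a^{\perp}\cap F_{a}=\bigcap_{\mathfrak{m}\in Max(\mathfrak{A})}\rho(\mathfrak{m})=\{1\}$, the last equality being Proposition~\ref{rfilter}\ref{rfilter6} applied to the pure filter $\{1\}$. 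So the paper leans on the representation of pure filters as intersections of $\rho(\mathfrak{m})$'s, whereas you instead recognise $F_{a}$ as a sink and finish with an explicit element-level calculation. Your approach is more self-contained and shows directly where the mp hypothesis enters; the paper's approach is shorter once the machinery of Proposition~\ref{rfilter}\ref{rfilter6} is in hand and avoids any induction on powers.
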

\begin{proof}
With a little bit of effort, it follows by Theorem \ref{norgammsig} and Proposition \ref{rfilter}.
\end{proof}
\begin{corollary}\label{mppu1re}
  If $\mathfrak{m}$ is a minimal prime filter of an mp-residuated lattice $\mathfrak{A}$, then $\mathfrak{m}=\underline{\bigvee}_{a\in \mathfrak{m}}F_{a}$.
\end{corollary}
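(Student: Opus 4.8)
The plan is to prove the equality $\mathfrak{m}=\underline{\bigvee}_{a\in\mathfrak{m}}F_a$ by establishing the two inclusions, after first recording what the mp-hypothesis gives about $\rho$ on maximal filters. First I would note that, in an mp-residuated lattice, $\rho(\mathfrak{n})=D(\mathfrak{n})$ for every maximal filter $\mathfrak{n}$, and this common value is a minimal prime filter of $\mathfrak{A}$: indeed $D(\mathfrak{n})$ is a minimal prime filter by Proposition \ref{noco}\ref{noco4} and is a pure filter by Theorem \ref{norgammsige}\ref{norgammsige3}, hence a fixed point of $\rho$, while $\rho(\mathfrak{n})=\rho(D(\mathfrak{n}))$ by Proposition \ref{rfilter}\ref{rfilter8}. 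Second, $\mathfrak{m}\in Min(\mathfrak{A})\subseteq\sigma(\mathfrak{A})$ by Theorem \ref{norgammsige}\ref{norgammsige4}, so $\mathfrak{m}$ is pure, i.e. $\sigma(\mathfrak{m})=\mathfrak{m}$.

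For $\underline{\bigvee}_{a\in\mathfrak{m}}F_a\subseteq\mathfrak{m}$: since $\mathfrak{m}$ is proper it lies in some maximal filter $\mathfrak{n}_0$, and then both $\mathfrak{m}$ and $D(\mathfrak{n}_0)$ are minimal prime filters contained in the prime filter $\mathfrak{n}_0$; as $\mathfrak{A}$ is mp, $\mathfrak{n}_0$ contains a unique minimal prime filter, so $\rho(\mathfrak{n}_0)=D(\mathfrak{n}_0)=\mathfrak{m}$. For each $a\in\mathfrak{m}$ we have $\mathfrak{n}_0\in h_M(a)$, whence $F_a=\bigcap\{\rho(\mathfrak{n})\mid\mathfrak{n}\in h_M(a)\}\subseteq\rho(\mathfrak{n}_0)=\mathfrak{m}$; joining over all $a\in\mathfrak{m}$ preserves the containment.

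For the reverse inclusion I would fix $a\in\mathfrak{m}=\sigma(\mathfrak{m})$ and invoke the description of the sink in Theorem \ref{sigmafequiv}\ref{sigmafequiv4} to obtain $b\in a^{\perp}$ with $\neg b\in\mathfrak{m}$ (note $\mathscr{F}(\neg b)\subseteq\mathfrak{m}$ is proper, so $h_M(\neg b)\neq\emptyset$). The key claim is $a\in F_{\neg b}$. To check it, take any $\mathfrak{n}\in h_M(\neg b)$; then $b\in a^{\perp}$ and $\neg b\in\mathfrak{n}$, so Theorem \ref{sigmafequiv}\ref{sigmafequiv4} applied with $F=\mathfrak{n}$ (together with Proposition \ref{sigmapro}\ref{sigmapro4}) yields $a\in\sigma(\mathfrak{n})=D(\mathfrak{n})=\rho(\mathfrak{n})$, the last equality by the preliminary observation. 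Since $\mathfrak{n}$ was an arbitrary element of $h_M(\neg b)$, we get $a\in F_{\neg b}$; and $F_{\neg b}$ is one of the filters in the join $\underline{\bigvee}_{a'\in\mathfrak{m}}F_{a'}$ because $\neg b\in\mathfrak{m}$, so $a$ belongs to that join. This gives $\mathfrak{m}\subseteq\underline{\bigvee}_{a\in\mathfrak{m}}F_a$ and completes the proof.

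I do not expect a real obstacle: once the preliminary identification is in place, the argument is routine manipulation of the characterizations of $\sigma$ in Theorem \ref{sigmafequiv} and of $\rho$ in Proposition \ref{rfilter}. The only point requiring care, and it is used on both sides, is the identity $\rho(\mathfrak{n})=D(\mathfrak{n})=$ (the unique minimal prime filter below $\mathfrak{n}$) for a maximal filter $\mathfrak{n}$, which is exactly where the mp-hypothesis is spent; one should also keep an eye on the degenerate situations ($a=1$, properness of the filters generated by the auxiliary elements, non-emptiness of the hull sets involved).
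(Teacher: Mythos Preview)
Your proof is correct. The forward inclusion $\mathfrak{m}\subseteq\underline{\bigvee}_{a\in\mathfrak{m}}F_a$ follows the same path as the paper---pick $b\in a^{\perp}$ and land $a$ in $F_{\neg b}$---but you are more explicit than the paper about the crucial fact that $\neg b\in\mathfrak{m}$, which you extract cleanly from purity of $\mathfrak{m}$ via Theorem~\ref{sigmafequiv}\ref{sigmafequiv4}; the paper only records $b\notin\mathfrak{m}$ and leaves the passage to $\neg b\in\mathfrak{m}$ implicit.

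Where you genuinely diverge is in the reverse inclusion. The paper deduces $F_a\subseteq\mathfrak{m}$ from Proposition~\ref{mppureco1}: since $a^{\perp}\cap F_a=\{1\}\subseteq\mathfrak{m}$ and $\mathfrak{m}$ is prime with $a^{\perp}\nsubseteq\mathfrak{m}$ (by minimal primeness), one gets $F_a\subseteq\mathfrak{m}$. Your route is more direct: you exhibit $\mathfrak{m}$ itself as $\rho(\mathfrak{n}_0)$ for a maximal $\mathfrak{n}_0\supseteq\mathfrak{m}$, so that $\mathfrak{n}_0\in h_M(a)$ already witnesses $F_a\subseteq\rho(\mathfrak{n}_0)=\mathfrak{m}$. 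Your argument is self-contained (it does not need Proposition~\ref{mppureco1}) and makes transparent that the mp-hypothesis is used precisely to identify $\rho(\mathfrak{n}_0)$ with the unique minimal prime below $\mathfrak{n}_0$; the paper's argument, on the other hand, packages that use inside the auxiliary result $a^{\perp}\cap F_a=\{1\}$ and then exploits primeness of $\mathfrak{m}$ in the lattice of filters.
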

\begin{proof}
Let $a\in \mathfrak{m}$. So $b\in a^{\perp}$, for some $b\notin \mathfrak{m}$. This implies that $a\in F_{\neg b}$. The reverse inclusion is deduced from Corollary \ref{mppureco1}.
\end{proof}
\begin{proposition}\label{mppure}
Let $\mathfrak{A}$ be an mp-residuated lattice. The pure filters of $\mathfrak{A}$ are precisely of the form $\bigcap_{\mathfrak{m}\in h_{M}(F)}\rho(\mathfrak{m})$, where $F$ is a filter of $\mathfrak{A}$.
\end{proposition}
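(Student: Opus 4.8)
The plan is to prove the two halves of the asserted equality of families of filters separately, with Theorem~\ref{mppurefcl} serving as the bridge. The inclusion ``every pure filter has the stated form'' is immediate and does not even need the mp hypothesis: if $F$ is pure, then $F=\bigcap\{\rho(\mathfrak{m})\mid\mathfrak{m}\in h_M(F)\}$ by Proposition~\ref{rfilter}\ref{rfilter6}, which is precisely a filter of the required shape (taking the ambient filter to be $F$ itself). So the whole substance lies in the reverse inclusion.

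For the reverse inclusion I would fix an arbitrary filter $F$, set $G_F:=\bigcap\{\rho(\mathfrak{m})\mid\mathfrak{m}\in h_M(F)\}$, and show $G_F$ is pure. The first step is to identify $\rho(\mathfrak{m})$ for a maximal filter $\mathfrak{m}$. Since $\mathfrak{A}$ is mp, $D(\mathfrak{m})$ is a pure filter by Theorem~\ref{norgammsige}\ref{norgammsige3}, and it lies inside $\mathfrak{m}$ because $D(\mathfrak{m})=\sigma(\mathfrak{m})\subseteq\mathfrak{m}$ (Propositions~\ref{sigmapro}\ref{sigmapro4} and \ref{primesigmad}\ref{primesigmad1}); on the other hand $\rho(\mathfrak{m})\subseteq\sigma(\mathfrak{m})=D(\mathfrak{m})$ by Proposition~\ref{rfilter}\ref{rfilter1}. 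As $\rho(\mathfrak{m})$ is the greatest pure filter contained in $\mathfrak{m}$ (Proposition~\ref{rfilter}\ref{rfilter2}), these two facts force $\rho(\mathfrak{m})=D(\mathfrak{m})$, and moreover $D(\mathfrak{m})\in Min(\mathfrak{A})$ by Proposition~\ref{noco}\ref{noco4}. Hence $G_F=\bigcap\{D(\mathfrak{m})\mid\mathfrak{m}\in h_M(F)\}=k(\Delta)$, where $\Delta:=\{D(\mathfrak{m})\mid\mathfrak{m}\in h_M(F)\}\subseteq Min(\mathfrak{A})$.

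It then suffices to check that $\Delta$ is a closed subset of $Min_d(\mathfrak{A})$, i.e.\ that $\Delta=\mathcal{C}\cap Min(\mathfrak{A})$ for some closed set $\mathcal{C}$ of $Spec_d(\mathfrak{A})$: granting this, $G_F=k(\mathcal{C}\cap Min(\mathfrak{A}))$ is pure by Theorem~\ref{mppurefcl}, and the proof is complete. To see $\Delta$ is closed I would present it as a continuous image of a compact space inside a Hausdorff space. Indeed, writing $F=\mathscr{F}(X)$ we have $h_M(F)=\bigcap_{x\in X}h_M(x)$, so $h_M(F)$ is a closed subset of the compact space $Max_h(\mathfrak{A})$ (Proposition~\ref{hulkerinstr}), hence compact. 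The map $D\colon Max_h(\mathfrak{A})\to Min_d(\mathfrak{A})$, $\mathfrak{m}\mapsto D(\mathfrak{m})$, is continuous: the preimage of the basic open $\{\mathfrak{q}\in Min(\mathfrak{A})\mid x\in\mathfrak{q}\}$ is $\{\mathfrak{m}\in Max(\mathfrak{A})\mid x^{\perp}\nsubseteq\mathfrak{m}\}=Max(\mathfrak{A})\setminus\bigcap_{y\in x^{\perp}}h_M(y)$, a $\tau_h$-open set, where I use $D(\mathfrak{m})=\sigma(\mathfrak{m})$ together with Theorem~\ref{sigmafequiv}\ref{sigmafequiv3} and maximality of $\mathfrak{m}$ to rewrite $x\in D(\mathfrak{m})$ as $x^{\perp}\nsubseteq\mathfrak{m}$. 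Therefore $\Delta=D(h_M(F))$ is compact in $Min_d(\mathfrak{A})$; and since $\mathfrak{A}$ is mp, $Min_d(\mathfrak{A})$ is Hausdorff by Theorem~\ref{mpmpropd}, so the compact set $\Delta$ is closed. The main obstacle is exactly this last step — verifying that $\Delta$ is closed — which is where the mp hypothesis is genuinely used (through $D$ landing in $Min(\mathfrak{A})$, through $D(\mathfrak{m})$ being pure, and through Hausdorffness of $Min_d(\mathfrak{A})$); everything else is routine manipulation of the operators $\sigma$, $\rho$ and $D$.
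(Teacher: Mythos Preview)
Your proof is correct. Both you and the paper reduce the substantive direction to Theorem~\ref{mppurefcl}, first rewriting $\bigcap_{\mathfrak{m}\in h_M(F)}\rho(\mathfrak{m})$ as $k(\Delta)$ with $\Delta=\{D(\mathfrak{m})\mid\mathfrak{m}\in h_M(F)\}\subseteq Min(\mathfrak{A})$, and then arguing that $\Delta$ arises as $Min(\mathfrak{A})\cap\mathcal{C}$ for a closed $\mathcal{C}\subseteq Spec_d(\mathfrak{A})$. The difference lies in how this last point is established. The paper writes down an explicit witness, namely $\mathcal{C}=\{P\in Spec(\mathfrak{A})\mid P\cap\neg F=\emptyset\}$, which is closed by Remark~\ref{closd}, and checks (essentially via the equivalence $\mathfrak{q}\veebar F\neq A\Leftrightarrow \mathfrak{q}\cap\neg F=\emptyset$ for minimal primes $\mathfrak{q}$) that $Min(\mathfrak{A})\cap\mathcal{C}=\Delta$. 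You instead argue abstractly: $h_M(F)$ is compact in $Max_h(\mathfrak{A})$, the map $\mathfrak{m}\mapsto D(\mathfrak{m})$ is continuous into $Min_d(\mathfrak{A})$, and $Min_d(\mathfrak{A})$ is Hausdorff by Theorem~\ref{mpmpropd}, so $\Delta$ is compact hence closed. Your route is more topological and avoids the element-level computation with $\neg F$; the paper's route is shorter and gives a concrete description of $\mathcal{C}$. Both handle the forward inclusion via Proposition~\ref{rfilter}\ref{rfilter6}.
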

\begin{proof}
Let $\mathcal{C}=\{P\in Spec(\mathfrak{A})\mid P\cap \neg F=\emptyset\}$. One can see that $h_{M}(F)=Min(\mathfrak{A})\cap \mathcal{C}$. This establishes the result due to \textsc{Remark} \ref{closd} and Theorem \ref{mppurefcl}.
\end{proof}

\citet[Theorem 3.5]{al1988pure} proved that every purely prime ideal of a PF ring is purely maximal. Now we provide an alternative proof to the following interesting result.
\begin{theorem}\label{mpminspp}
  Let $\mathfrak{A}$ be an mp residuated lattice. Then
  \[Spp(\mathfrak{A})\subseteq Max(\sigma(\mathfrak{A})).\]
\end{theorem}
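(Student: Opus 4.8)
The plan is to show that an arbitrary $\mathfrak{p}\in Spp(\mathfrak{A})$ is already purely-maximal, by exhibiting it as the \emph{unique} minimal prime filter lying above it. First I would record that $\mathfrak{p}$ is a proper pure filter, hence it is contained in some purely-maximal filter $\mathfrak{n}$; since $\mathfrak{A}$ is mp, Theorem \ref{normpurprimxa} gives $Max(\sigma(\mathfrak{A}))=Min(\mathfrak{A})$, so $\mathfrak{n}$ is a minimal prime filter, and in particular $\mathfrak{n}\in h_{m}(\mathfrak{p})$. By Proposition \ref{pureinterd} we have $\mathfrak{p}=kh_{m}(\mathfrak{p})$, so it suffices to prove that $h_{m}(\mathfrak{p})=\{\mathfrak{n}\}$; indeed then $\mathfrak{p}=kh_{m}(\mathfrak{p})=\mathfrak{n}\in Max(\sigma(\mathfrak{A}))$, as desired.

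Assume, towards a contradiction, that there is a second minimal prime filter $\mathfrak{n}'\in h_{m}(\mathfrak{p})$ with $\mathfrak{n}'\neq \mathfrak{n}$. By Proposition \ref{noco}\ref{noco1}, distinct minimal prime filters of an mp-residuated lattice are comaximal, so $\mathfrak{n}\veebar \mathfrak{n}'=A$; unwinding this via Proposition \ref{genfilprop}\ref{genfilprop1} and the fact that filters are closed under $\odot$ produces $u\in \mathfrak{n}$ and $v\in \mathfrak{n}'$ with $u\odot v=0$. Now comes the key computation: $u^{\perp}$ and $v^{\perp}$ are coannulets, hence pure filters by Theorem \ref{norgammsig}\ref{norgammsig3}, and for every $t\in u^{\perp}\cap v^{\perp}$ one has $t=t\vee (u\odot v)\geq (t\vee u)\odot (t\vee v)=1$ by \ref{res2}, so $u^{\perp}\cap v^{\perp}=\{1\}\subseteq \mathfrak{p}$. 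On the other hand, since $\mathfrak{n}$ is a minimal prime filter containing $u$, Proposition \ref{1mineq} forces $u^{\perp}\nsubseteq \mathfrak{n}$, whence $u^{\perp}\nsubseteq \mathfrak{p}$; symmetrically $v^{\perp}\nsubseteq \mathfrak{n}'$ gives $v^{\perp}\nsubseteq \mathfrak{p}$. Thus $u^{\perp},v^{\perp}\in \sigma(\mathfrak{A})$ with $u^{\perp}\cap v^{\perp}\subseteq \mathfrak{p}$ while neither is contained in $\mathfrak{p}$, contradicting the purely-primeness of $\mathfrak{p}$ via Proposition \ref{preqpropu}. Hence $h_{m}(\mathfrak{p})=\{\mathfrak{n}\}$, so $\mathfrak{p}=\mathfrak{n}$ and $Spp(\mathfrak{A})\subseteq Max(\sigma(\mathfrak{A}))$.

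The routine points are the passage from $\mathfrak{n}\veebar \mathfrak{n}'=A$ to the witnesses $u,v$ with $u\odot v=0$, and the observation that $u^{\perp}\nsubseteq \mathfrak{p}$. The conceptual heart, and the step I expect to carry the real weight, is the choice to produce the two competing pure filters as $u^{\perp}$ and $v^{\perp}$: one must notice that comaximality of the two minimal primes translates, through $u\odot v=0$ and \ref{res2}, into $u^{\perp}\cap v^{\perp}=\{1\}$, and that the mp-hypothesis is precisely what makes these coannulets pure. The reduction $\mathfrak{p}=kh_{m}(\mathfrak{p})$ supplied by Proposition \ref{pureinterd} is what allows one to rephrase ``$\mathfrak{p}$ is not purely-maximal'' as ``at least two minimal prime filters lie above $\mathfrak{p}$'', which is the form the argument can attack.
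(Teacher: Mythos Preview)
Your proof is correct, but it follows a different route from the paper's. The paper also starts by containing $\mathfrak{p}$ in a purely-maximal filter $\mathfrak{m}\in Min(\mathfrak{A})$, but then shows $\mathfrak{m}\subseteq\mathfrak{p}$ directly: for each $a\in\mathfrak{m}$ it invokes the auxiliary pure filter $F_{a}=\bigcap_{\mathfrak{n}\in h_{M}(a)}\rho(\mathfrak{n})$ together with Proposition \ref{mppureco1}, Proposition \ref{mppure}, and Corollary \ref{mppu1re} to get $a^{\perp}\cap F_{a}=\{1\}\subseteq\mathfrak{p}$ with $a^{\perp}\nsubseteq\mathfrak{p}$, forcing $F_{a}\subseteq\mathfrak{p}$ and hence $\mathfrak{m}=\veebar_{a\in\mathfrak{m}}F_{a}\subseteq\mathfrak{p}$. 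Your argument bypasses this machinery entirely: using Proposition \ref{pureinterd} you reduce to showing $h_{m}(\mathfrak{p})$ is a singleton, and then derive a contradiction from two competing minimal primes via the elementary observation that $u\odot v=0$ gives $u^{\perp}\cap v^{\perp}=\{1\}$ (by \ref{res2}) with both coannulets pure (Theorem \ref{norgammsig}) and neither contained in $\mathfrak{p}$ (Proposition \ref{1mineq}). Both proofs ultimately exploit the same mechanism---a pair of pure filters with trivial intersection, one of which visibly escapes $\mathfrak{p}$---but yours is more self-contained, trading the three auxiliary results \ref{mppureco1}, \ref{mppure}, \ref{mppu1re} for the single reduction \ref{pureinterd}. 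One minor remark: the citation of Proposition \ref{genfilprop}\ref{genfilprop1} for extracting $u,v$ with $u\odot v=0$ is not quite on the nose (that item concerns $F\veebar\mathscr{F}(x)$), but the underlying fact about $\mathscr{F}(\mathfrak{n}\cup\mathfrak{n}')$ is standard and, as you note, routine.
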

\begin{proof}
Let $\mathfrak{p}$ be a purely prime filter of $\mathfrak{A}$. So $\mathfrak{p}\subseteq \mathfrak{m}$, for some $\mathfrak{m}\in Max(\sigma(\mathfrak{A}))$. By Theorem \ref{normpurprimxa} we have $\mathfrak{m}\in Min(\mathfrak{A})$. Let $a\in \mathfrak{m}$. By Proposition \ref{1mineq} we have $a^{\perp}\nsubseteq \mathfrak{m}$. By Proposition \ref{mppureco1} follows that $a^{\perp}\cap F_{a}\subseteq P$. By Theorem \ref{norgammsig} and Proposition \ref{mppure}, respectively, follows that $a^{\perp}$ and $F_{a}$ are pure filters. This implies that $F_{a}\subseteq \mathfrak{p}$. Hence by Corollary \ref{mppu1re} follows that  $\mathfrak{m}=\underline{\bigvee}_{a\in \mathfrak{m}}F_{a}\subseteq \mathfrak{p}$.
\end{proof}
\begin{remark}
 Lemma \ref{prinpuregen} and Theorems \ref{gelfmaxpure}\ref{gelfmaxpure2} \& \ref{mpminspp} show that in finite, Gelfand, and mp residuated lattices, respectively, every purely-prime filter is purely-maximal. So it seems that finding a residuated lattice with a purely-prime filter that is not a purely-maximal filter is not easy. Hence, arises a question: Is every purely-prime filter of a residuated lattice purely-maximal? It seems to us that the yes or no answer to this question is not simple. Although, we guess that the negative answer to this conjecture looks more likely.
\end{remark}

The main result of this section is the following theorem which establishes that a residuated lattice is mp if and only if the set of its minimal prime filters and the set of its purely-prime filters coincide.
\begin{theorem}\label{mp2minspp}
  Let $\mathfrak{A}$ be a residuated lattice. The following assertions are equivalent:
\begin{enumerate}
\item  [(1) \namedlabel{mp2minspp1}{(1)}] $\mathfrak{A}$ is mp;
\item  [(2) \namedlabel{mp2minspp2}{(2)}] $Min(\mathfrak{A})=Spp(\mathfrak{A})$.
\end{enumerate}
\end{theorem}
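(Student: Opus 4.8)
The plan is to extract both implications directly from results already established, so that the proof is essentially a chase of set-inclusions; no genuinely new argument is needed. The three ingredients I would assemble at the outset are: the trivial inclusion $Max(\sigma(\mathfrak{A}))\subseteq Spp(\mathfrak{A})$ recorded just after the definition of purely-prime filters; Theorem \ref{normpurprimxa}, which identifies $Max(\sigma(\mathfrak{A}))$ with $Min(\mathfrak{A})$ exactly when $\mathfrak{A}$ is mp; Theorem \ref{mpminspp}, giving $Spp(\mathfrak{A})\subseteq Max(\sigma(\mathfrak{A}))$ for mp residuated lattices; and the equivalence \ref{norgammsige1}$\Leftrightarrow$\ref{norgammsige4} of Theorem \ref{norgammsige}, characterizing mp by $Min(\mathfrak{A})\subseteq\sigma(\mathfrak{A})$.

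For \ref{mp2minspp1}$\Rightarrow$\ref{mp2minspp2} I would proceed as follows. Assuming $\mathfrak{A}$ is mp, Theorem \ref{mpminspp} yields $Spp(\mathfrak{A})\subseteq Max(\sigma(\mathfrak{A}))$, and combining this with the reverse inclusion $Max(\sigma(\mathfrak{A}))\subseteq Spp(\mathfrak{A})$ gives $Spp(\mathfrak{A})=Max(\sigma(\mathfrak{A}))$. Since $\mathfrak{A}$ is mp, Theorem \ref{normpurprimxa} then identifies $Max(\sigma(\mathfrak{A}))$ with $Min(\mathfrak{A})$, so $Spp(\mathfrak{A})=Min(\mathfrak{A})$, as desired.

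For \ref{mp2minspp2}$\Rightarrow$\ref{mp2minspp1} I would argue that every element of $Spp(\mathfrak{A})$ is, by definition, a proper pure filter; hence from $Min(\mathfrak{A})=Spp(\mathfrak{A})$ we get $Min(\mathfrak{A})\subseteq\sigma(\mathfrak{A})$, i.e. every minimal prime filter of $\mathfrak{A}$ is pure. Applying the equivalence \ref{norgammsige4}$\Rightarrow$\ref{norgammsige1} of Theorem \ref{norgammsige} then forces $\mathfrak{A}$ to be mp, completing the cycle. The only point needing any attention — and the closest thing to an obstacle, though it is not a serious one — is verifying that the forward implication does not circle back through a result that itself presupposes the present theorem; since Theorems \ref{mpminspp}, \ref{normpurprimxa}, and \ref{norgammsige} are all already available, no such circularity occurs, and the argument above is complete.
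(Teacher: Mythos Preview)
Your proof is correct and follows essentially the same approach as the paper: both directions are obtained by combining Theorems \ref{normpurprimxa}, \ref{mpminspp}, and \ref{norgammsige} in exactly the way you describe. Your write-up simply makes the chain of inclusions more explicit than the paper's terse citations.
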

\begin{proof}
\item [\ref{mp2minspp1}$\Rightarrow$\ref{mp2minspp2}:] It follows by Theorems \ref{normpurprimxa} \& \ref{mpminspp}.
\item [\ref{mp2minspp2}$\Rightarrow$\ref{mp2minspp1}:] It follows by Theorem \ref{norgammsige}.
\end{proof}
The subsequent result, which can be compared with Theorem \ref{sppgelfch}, gives a criterion for a residuated lattice to be mp
\begin{theorem}\label{equmpflatmin}
  Let $\mathfrak{A}$ be a residuated lattice. The following assertions are equivalent:
\begin{enumerate}
\item  [(1) \namedlabel{equmpflatmin1}{(1)}] $\mathfrak{A}$ is mp;
\item  [(2) \namedlabel{equmpflatmin2}{(2)}] the identity map $\iota:Spp(\mathfrak{A})\longrightarrow Min_{d}(\mathfrak{A})$ is a homeomorphism.
\end{enumerate}
\end{theorem}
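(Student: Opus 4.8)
The plan is to treat the implication $(2)\Rightarrow(1)$ as essentially immediate and to concentrate the work on $(1)\Rightarrow(2)$. For $(2)\Rightarrow(1)$: a homeomorphism is in particular a bijection, so if the identity map $\iota$ is a homeomorphism then $Spp(\mathfrak{A})=Min(\mathfrak{A})$ as sets, whence $\mathfrak{A}$ is mp by Theorem~\ref{mp2minspp}. The point is that Theorem~\ref{mp2minspp} has already done the set-theoretic half of the equivalence; for $(1)\Rightarrow(2)$ it therefore remains only to show that, once $\mathfrak{A}$ is mp, the pure-spectrum topology and the dual hull-kernel topology agree on the common underlying set.

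So assume $\mathfrak{A}$ is mp. First I would record, by Theorem~\ref{mp2minspp}, that $\iota$ is a well-defined bijection. Next I would prove that $\iota$ is continuous, which it suffices to check on the basic open sets $h(x)\cap Min(\mathfrak{A})$ ($x\in A$) of $Min_{d}(\mathfrak{A})$. The key computation is that for a minimal prime filter $\mathfrak{p}$ --- equivalently, by Theorem~\ref{mp2minspp}, a purely-prime filter --- Proposition~\ref{1mineq} gives $x\in\mathfrak{p}$ if and only if $x^{\perp}\nsubseteq\mathfrak{p}$, so that
\[
\iota^{\leftarrow}\!\big(h(x)\cap Min(\mathfrak{A})\big)=\{\mathfrak{p}\in Spp(\mathfrak{A})\mid x^{\perp}\nsubseteq\mathfrak{p}\}=d_{\kappa}(x^{\perp});
\]
the right-hand side is a basic open subset of $Spp(\mathfrak{A})$ because the coannulet $x^{\perp}$ is a pure filter, which is exactly where the mp-hypothesis enters, via Theorem~\ref{norgammsig} ($\gamma(\mathfrak{A})\subseteq\sigma(\mathfrak{A})$). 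Since preimages commute with unions, $\iota$ is continuous. To promote this continuous bijection to a homeomorphism I would not try to compute $\iota$-images of arbitrary pure filters directly; instead I would use a compactness argument: $Spp(\mathfrak{A})$ is compact by Theorem~\ref{puresppcomp}, and $Min_{d}(\mathfrak{A})$ is Hausdorff by Theorem~\ref{mpmpropd} (since $\mathfrak{A}$ is mp), and a continuous bijection from a compact space onto a Hausdorff space is automatically a homeomorphism --- closed, hence compact, subsets of $Spp(\mathfrak{A})$ have compact, hence closed, images in $Min_{d}(\mathfrak{A})$; see \citet[Theorem~3.1.13]{engelking1989general}.

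The step I expect to be the crux is the identity $\iota^{\leftarrow}(h(x)\cap Min(\mathfrak{A}))=d_{\kappa}(x^{\perp})$, as it is the place where the arithmetic of minimal prime filters (Proposition~\ref{1mineq}) and the mp-hypothesis (purity of coannulets) combine to show that the dual hull-kernel basic opens of $Min(\mathfrak{A})$ coincide with the coannulet-determined basic opens of the pure spectrum. Everything after that is the soft compact-to-Hausdorff closing move, parallel to the treatment of the Gelfand case in Theorem~\ref{sppgelfch}, so I do not anticipate any obstacle there.
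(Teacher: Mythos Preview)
Your argument is correct and matches the paper's own proof essentially step for step: the paper likewise uses Theorem~\ref{mp2minspp} for the set-theoretic bijection, verifies continuity via the identity $h_{m}(a)=d_{\kappa}(a^{\perp})$ (your computation with Proposition~\ref{1mineq} and Theorem~\ref{norgammsig} merely spells this out), and then closes with the compact-to-Hausdorff argument citing Theorems~\ref{puresppcomp}, \ref{mpmpropd} and \citet[Theorem~3.1.13]{engelking1989general}. The reverse implication is handled identically via Theorem~\ref{mp2minspp}.
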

\begin{proof}
\item [\ref{equmpflatmin1}$\Rightarrow$\ref{equmpflatmin2}:] Consider the identity map $\iota:Spp(\mathfrak{A})\longrightarrow Min(\mathfrak{A})$. Using Theorem \ref{mp2minspp}, it follows that $\iota$ is a well-defined bijection. One can see that $h_{m}(a)=d_{\kappa}(a^{\perp})$, for any $a\in A$, which implies that $\iota$ is continuous. By Theorems \ref{puresppcomp} \& \ref{mpmpropd}, it follows that $Min_{d}(\mathfrak{A})$ is Hausdorff, and $Spp(\mathfrak{A})$ is compact, respectively. This holds the result due to \citet[Theorem 3.1.13]{engelking1989general}.
\item [\ref{equmpflatmin2}$\Rightarrow$\ref{equmpflatmin1}:] It is evident by Theorem \ref{mp2minspp}.
\end{proof}
The following result, which can be compared with Proposition \ref{gelspphau}, verifies that the pure spectrum of an mp-residuated lattice is Hausdorff.
\begin{corollary}\label{mpspphau}
  Let $\mathfrak{A}$ be an mp-residuated lattice. $Spp(\mathfrak{A})$ is a Hausdorff space.
\end{corollary}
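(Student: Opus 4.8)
The plan is to deduce this directly from the two homeomorphism/Hausdorffness results already established for mp-residuated lattices, so no genuinely new work is needed. First I would invoke Theorem \ref{mpmpropd}: since $\mathfrak{A}$ is mp, the minimal prime spectrum $Min_{d}(\mathfrak{A})$ endowed with the dual hull-kernel topology is a Hausdorff space. Next I would invoke Theorem \ref{equmpflatmin}: again because $\mathfrak{A}$ is mp, the identity map $\iota:Spp(\mathfrak{A})\longrightarrow Min_{d}(\mathfrak{A})$ is a homeomorphism. Since the Hausdorff property is a topological invariant, $Spp(\mathfrak{A})$ inherits it from $Min_{d}(\mathfrak{A})$, which is exactly the assertion.

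There is no real obstacle here; the statement is a corollary in the literal sense. The only thing to be careful about is citing the correct direction of each biconditional: in both Theorem \ref{mpmpropd} and Theorem \ref{equmpflatmin} we use the implication ``$\mathfrak{A}$ mp $\Rightarrow$ (topological conclusion)'', which is precisely what the hypothesis of the corollary supplies. One could alternatively argue through Theorem \ref{mp2minspp} (identifying $Spp(\mathfrak{A})$ with $Min(\mathfrak{A})$ as sets) together with the separation properties of the dual hull-kernel topology, but routing through Theorem \ref{equmpflatmin} is cleaner since it already packages the homeomorphism.

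Concretely, the write-up is:

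\begin{proof}
Since $\mathfrak{A}$ is mp, Theorem \ref{mpmpropd} shows that $Min_{d}(\mathfrak{A})$ is a Hausdorff space, and Theorem \ref{equmpflatmin} shows that the identity map $\iota:Spp(\mathfrak{A})\longrightarrow Min_{d}(\mathfrak{A})$ is a homeomorphism. Hence $Spp(\mathfrak{A})$ is homeomorphic to a Hausdorff space, and so it is itself Hausdorff.
\end{proof}
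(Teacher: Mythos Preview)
Your proposal is correct and matches the paper's own proof exactly: the paper simply states that the result is an immediate consequence of Theorems \ref{mpmpropd} and \ref{equmpflatmin}, which is precisely the argument you give.
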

\begin{proof}
It is an immediate consequence of Theorems \ref{mpmpropd} \& \ref{equmpflatmin}.
\end{proof}

Notice that if $\mathfrak{A}$ is an mp-residuated lattice, by Theorem \ref{mp2minspp}, $Min(\mathfrak{A})=Spp(\mathfrak{A})$ as set, but not as topological space. Indeed, the induced hull-kernel topology over $Min(\mathfrak{A})$ is finer than the pure topology. The subsequent theorem, which is inspired by the one obtained for conormal lattices by \citet[Theorem 3.6]{al1991sigma}, establishes that the equality holds if and only if $Min_{h}(\mathfrak{A})$ is compact.
\begin{theorem}\label{minspprick}
  Let $\mathfrak{A}$ be an mp-residuated lattice. The following assertions are equivalent:
\begin{enumerate}
\item  [(1) \namedlabel{equmpflatmin1}{(1)}] $Min_{h}(\mathfrak{A})$ is compact;
\item  [(2) \namedlabel{equmpflatmin2}{(2)}] $Min_{h}(\mathfrak{A})$ and $Spp(\mathfrak{A})$ are homeomorphic.
\end{enumerate}
\end{theorem}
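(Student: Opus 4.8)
The plan is to reduce both implications to elementary point-set topology, using the structural facts already proved for mp-residuated lattices. Recall that, by Theorem~\ref{mp2minspp}, for an mp-residuated lattice $\mathfrak{A}$ the sets $Min(\mathfrak{A})$ and $Spp(\mathfrak{A})$ coincide; by Theorem~\ref{puresppcomp} the pure spectrum $Spp(\mathfrak{A})$ is compact; by Corollary~\ref{mpspphau} it is Hausdorff; and by Theorem~\ref{equmpflatmin} the identity map identifies the pure topology on $Spp(\mathfrak{A})$ with the dual hull-kernel topology on $Min(\mathfrak{A})$. Finally, as observed in the paragraph preceding the statement, on the common underlying set the hull-kernel topology refines the pure topology, so the identity map $\iota\colon Min_{h}(\mathfrak{A})\longrightarrow Spp(\mathfrak{A})$ is a continuous bijection.

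First I would dispatch $(2)\Rightarrow(1)$: a topological space homeomorphic to a compact space is compact, and $Spp(\mathfrak{A})$ is compact by Theorem~\ref{puresppcomp}; hence $Min_{h}(\mathfrak{A})$ is compact. For $(1)\Rightarrow(2)$, assuming $Min_{h}(\mathfrak{A})$ compact, the map $\iota$ above is a continuous bijection from a compact space onto the Hausdorff space $Spp(\mathfrak{A})$ (Corollary~\ref{mpspphau}), hence a homeomorphism by \citet[Theorem~3.1.13]{engelking1989general}; this is exactly the desired homeomorphism.

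The only point requiring a short verification is the continuity of $\iota$, i.e. that the hull-kernel topology on $Min(\mathfrak{A})$ is finer than the pure topology. Transporting a basic open set of $Spp(\mathfrak{A})$ along the identification of Theorem~\ref{equmpflatmin}, it has the form $h(x)\cap Min(\mathfrak{A})$ for some $x\in A$; by Proposition~\ref{1mineq} this equals $d(x^{\perp})\cap Min(\mathfrak{A})$, which is the trace on $Min(\mathfrak{A})$ of the hull-kernel-open set $d(\mathscr{F}(x^{\perp}))$, hence open in $Min_{h}(\mathfrak{A})$. As this comparison is precisely the remark stated just before the theorem, no genuine obstacle remains: the argument is a short assembly of Theorems~\ref{mp2minspp}, \ref{puresppcomp}, \ref{equmpflatmin}, Corollary~\ref{mpspphau}, and the standard fact that a continuous bijection from a compact space to a Hausdorff space is a homeomorphism.
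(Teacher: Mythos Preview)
Your proof is correct and follows essentially the same route as the paper: both use the identity map $\iota\colon Min_{h}(\mathfrak{A})\to Spp(\mathfrak{A})$ as a continuous bijection from a compact space to a Hausdorff space (invoking Theorems~\ref{mp2minspp}, \ref{puresppcomp}, Corollary~\ref{mpspphau}, and \citet[Theorem~3.1.13]{engelking1989general}), and derive $(2)\Rightarrow(1)$ from compactness of $Spp(\mathfrak{A})$. Your extra verification of continuity via Proposition~\ref{1mineq} and Theorem~\ref{equmpflatmin} simply makes explicit what the paper leaves to the remark preceding the statement.
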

\begin{proof}
\item [\ref{equmpflatmin1}$\Rightarrow$\ref{equmpflatmin2}:] Consider the identity map $\iota:Min(\mathfrak{A})\longrightarrow Spp(\mathfrak{A})$. Using Theorem \ref{mp2minspp}, it follows that $\iota$ is a well-defined continuous bijection. Since $Min_{h}(\mathfrak{A})$ is compact and $Spp(\mathfrak{A})$ is Hausdorff the result holds by \citet[Theorem 3.1.13]{engelking1989general}.
\item [\ref{equmpflatmin2}$\Rightarrow$\ref{equmpflatmin1}:] It is evident by Theorem \ref{puresppcomp}.
\end{proof}


\end{document}